\DeclareMathOperator*{\esssup}{ess\,sup}
\newcommand\independent{\protect\mathpalette{\protect\independenT}{\perp}}
\def\independenT#1#2{\mathrel{\rlap{$#1#2$}\mkern2mu{#1#2}}}
\renewcommand{\P}{\mathbb{P}}
\newcommand{\Unif}{\mathrm{Unif}}
\newcommand{\Bern}{\mathrm{Bern}}
\newcommand{\1}{\mathbbm{1}}
\newcommand{\E}{\mathbb{E}}
\newcommand{\R}{\mathbb{R}}
\newcommand{\G}{\mathbb{G}}
\newcommand{\bfphi}{\boldsymbol{\phi}}
\newcommand{\Rem}{\mathrm{Rem}}
\newcommand{\Bias}{\mathrm{Bias}}
\newcommand{\calF}{\mathcal{F}}
\newcommand{\sgn}{\mathrm{sgn}}
\newcommand{\calX}{\mathcal{X}}
\newcommand{\calV}{\mathcal{V}}
\newcommand{\calG}{\mathcal{G}}
\newcommand{\calI}{\mathcal{I}}
\newcommand{\calM}{\mathcal{M}}
\newcommand{\calZ}{\mathcal{Z}}
\newcommand{\calA}{\mathcal{A}}
\newcommand{\calN}{\mathcal{N}}
\newcommand{\calW}{\mathcal{W}}
\newcommand{\calO}{\mathcal{O}}
\newcommand{\fraks}{\mathfrak{s}}
\newcommand{\Cov}{\mathrm{Cov}}
\newcommand{\Var}{\mathrm{Var}}
\newcommand{\Cal}{\mathrm{Cal}}
\newcommand{\range}{\mathrm{range}}
\newcommand{\wh}[1]{\widehat{#1}}
\newcommand{\wt}[1]{\widetilde{#1}}
\newcommand{\wb}[1]{\overline{#1}}
\newcommand{\C}{\mathbb{C}}
\newcommand{\smax}{\mathbf{sm}}
\newcommand{\splus}{\mathbf{sp}}
\newcommand{\true}{\mathrm{true}}
\newcommand{\ass}{\mathrm{ass}}
\newcommand{\reg}{\mathrm{reg}}
\algrenewcommand\algorithmicrequire{\textbf{Input:}}
\algrenewcommand\algorithmicensure{\textbf{Output:}}
\newtheorem{theorem}{Theorem}
\numberwithin{theorem}{section}
\newtheorem{lemma}[theorem]{Lemma}
\newtheorem{informal}{Informal Theorem}
\newtheorem{prop}[theorem]{Proposition}
\newtheorem{corollary}[theorem]{Corollary}
\newtheorem{assumption}{Assumption}
\theoremstyle{definition}
\newtheorem{definition}[theorem]{Definition}
\theoremstyle{remark}
\newtheorem{rmk}[theorem]{Remark}
\title{Inference on Optimal Policy Values and Other Irregular Functionals via Softmax Smoothing\thanks{The present version of this article is a combination of the first version with an earlier work on inference on optimal policy values under semi-parametric restrictions~\citep{chen2023inference}.}}
\author[1]{Justin Whitehouse\thanks{Corresponding author: contact at \texttt{jwhiteho@stanford.edu}}}
\author[2]{Qizhao Chen}
\author[2]{Morgane Austern\thanks{MA was supported by NSF CAREER DMS 2441652.}}
\author[1]{Vasilis Syrgkanis\thanks{VS was supported by NSF Award IIS-2337916.}}
\affil[1]{Department of Management Science and Engineering, Stanford University}
\affil[2]{Department of Statistics, Harvard University}
\date{\today}
\begin{document}
\maketitle
\begin{abstract}
    Constructing confidence intervals for the value of an (unknown) optimal treatment policy is a fundamental problem in causal inference. Insight into the optimal policy value can guide the development of reward-maximizing, individualized treatment regimes. However, because the functional that defines the optimal value is non-differentiable, standard semi-parametric approaches for performing inference fail to be directly applicable.
    Many existing works circumvent non-differentiability by making the unrealistic assumption of zero probability of \textit{treatment non-response}, i.e.\ that every unit responds (either positively or negatively) to an assigned treatment.
    Further, works that don't circumvent this restriction rely on refitting nuisance models a number of times \textit{proportional to the sample size}.
    In this paper, we construct and analyze a simple, softmax smoothing-based estimator for the value of an optimal treatment policy. Our estimator applies in both static and dynamic treatment regimes, only requires fitting a constant number of nuisance models, and is statistically efficient when there is zero probability of non-response to treatment. Also, while our estimator does not require making semi-parametric restrictions, it can exploit them when they exist. We further show how our softmax smoothing approach can be used to estimate general parameters that are specified as a maximum of scores involving nuisance components, and look at conditional Balke and Pearl bounds and $L^1$ calibration error as salient examples. 
    
\end{abstract}
\newpage
\section{Introduction}
Individualized treatment policies, or mappings from a unit's observed characteristics to an assigned treatment, are deployed in a wide range of decision-making problems. In medicine, individualized policies may be used to maximize the expected outcome or well-being of a patient~\citep{qian2011performance, xu2022estimating}. In advertising, such policies may be employed for maximizing patron engagement or revenue~\citep{bottou2013counterfactual}. Motivated by such applications, a large literature has developed that focuses on using observational data to learn a treatment policy with a high \textit{value}, i.e.\ a large outcome averaged over a covariate distribution~\citep{athey2021policy,kitagawa2018should, kitagawa2022treatment, hirano2009asymptotics}.

Before investing resources in the construction and deployment of a highly individualized treatment rule,  one may first wish to quantify how much can be gained from policy learning. The natural estimand for assessing this is the \textit{value of the optimal treatment policy}. Let $Z = (X, A, Y)$ denote a observation, where $X$ represents a unit's covariates, $A \in [N]$ denotes a categorical treatment, $Y(a)$ a unit's potential outcome associated with treatment $a \in [N]$, and $Y = Y(A)$ the observed outcome.
Letting $\pi(X) \in [N]$ denote a arbitrary policy that assigns a unit to a treatment, the value of the optimal treatment policy is given by
\[
V^\ast := \max_\pi V(\pi), \;\; \text{where} \;\; V(\pi) := \E[Y(\pi(X))].
\]
Inference on $V^\ast$ can be used to determine the value of developing individualized policies in several ways. For instance, if an advertiser wanted to determine if some baseline treatment policy $\pi_0$ (perhaps a rule-based policy) could be improved through the use of machine learning (ML) algorithms, he could estimate  the policy value gap $V^\ast - V(\pi_0)$. Likewise, if a teacher wanted to determine if a single lesson plan was a good fit for all students, she could estimate the value of personalization, i.e.\ the quantity $V^\ast - \max_{a \in [N]}V(\pi_a)$, where $\pi_a(X) = a$ denotes a constant treatment policy. 
These examples motivate the need for simple, computationally efficient inferential procedures for $V^\ast$.

At first glance, because $V^\ast$ is just a scalar value, it may appear that estimating $V^\ast$ should be easier than learning a good individualized treatment policy $\pi(X)$. However, because the target estimand $V^\ast$ is \textit{irregular}, standard approaches for constructing confidence intervals fail to give valid coverage. In more detail, under standard causal assumptions, $V^\ast$ can be re-written in terms of the outcome regression (i.e.\ Q-function) $Q^\ast(a, x) = \E[Y \mid X = x, A = a]$ as
\begin{equation}
\label{eq:id_Q_func}
V^\ast = \E\left[\max_{\ell \in [N]}\{Q^\ast(\ell, X)\}\right]. 
\end{equation}
When the argument maximizing set $\arg\max_{\ell \in [N]}Q^\ast(\ell, X)$ is non-unique with positive probability, the functional $Q \mapsto \E\left[\max_{\ell} Q(\ell, X)\right]$ is not Gateaux/path-wise differentiable at $Q^\ast$. 
Classical doubly-robust and targeted maximum likelihood estimators require the Gateaux differentiability of the underlying functional, and are thus inapplicable when multiple treatments offer the same expected reward. Restricting ourselves to the case of binary treatments $A \in \{0, 1\}$ and letting $\tau^\ast(X) := Q^\ast(1, X) - Q^\ast(0, X)$ denote the conditional average treatment effect (i.e.\ CATE), these classical estimation approaches are inapplicable precisely when $\P(\tau^\ast(X) = 0) > 0$, i.e.\ when there is a positive probability treatment non-response in the population.\footnote{Throughout this paper, when we say ``non-response to treatment,'' we mean that multiple treatments yield the same expected outcome. A stronger notion of non-response would be that the potential outcomes are identical, i.e.\ that $Y(1) =  Y(0)$ with positive probability in the binary treatment setting.}

This irregularity has forced researchers to develop estimators that are tailor-made for estimating $V^\ast$. Some works assume that the optimal action is almost surely unique \citep{semenova2023aggregated} and others make parametric assumptions on the outcome regression (e.g.\ linearity) to enable inference~\citep{laber2014dynamic, goldberg2014comment, chakraborty2010inference}. While these assumptions may be reasonable in some settings, they can be violated in domains such as medicine or advertising where treatments can be entirely ineffective and response curves may be highly non-parametric. The most flexible estimation approaches, due to \citet{luedtke2016statistical} and \citet{shi2020breaking}, allow for a positive probability of non-response to treatment and also permit non-parametric outcome regressions. This flexibility comes at the cost of refitting a number of nuisance models that grows with the sample size. Refitting nuisances many times may be possible when nuisance models are parametric or datasets are small. However, when datasets are large or black-box learners such as neural networks are used, training a growing number of nuisance models may not be feasible. 
We discuss these approaches amongst others in greater detail in the related work section (Section~\ref{sec:intro:related}). There is thus a need for computationally efficient estimators that allow non-parametric outcome regressions and  non-response to treatment.


\subsection{Our Contributions and Overview}

In this paper, we show that careful smoothing can be used to construct asymptotically normal estimates for both the value of the optimal treatment policy and also a broader class of  maximum-type estimands. Defining the softmax function $\smax^\beta : \R^N \rightarrow \R$ by
\begin{equation}
\label{eq:softmax}
\smax^\beta_\ell\{u_\ell\} \equiv \smax^\beta\{u_1, \dots, u_N\} := \frac{\sum_{i = 1}^N u_i \exp\{\beta u_i\}}{\sum_{j = 1}^N \exp\{\beta u_j\}},
\end{equation}
our approach is to replace the target $V^\ast$ outlined in Equation~\eqref{eq:id_Q_func} with a smoothed surrogate:
\[
\underbrace{V^\ast := \E\left[\max_\ell\{Q^\ast(\ell, X)\}\right]}_{\text{Non-differentiable functional}} \quad\Longrightarrow\quad \underbrace{V^\beta := \E\left[\smax^\beta_\ell\{Q^\ast(\ell, X)\}\right]}_{\text{Smoothed approximation}}.
\]
The functional $Q \mapsto \E\left[\smax^\beta_\ell Q(\ell, X)\right]$ is twice Gateaux differentiable, thus enabling the use of semi-parametric de-biasing techniques for performing inference on $V^\beta$. 
We enumerate our contributions in greater detail below.
\begin{itemize}
\item In Section~\ref{sec:static}, we describe a de-biased estimator for $V^\ast$. Our estimator is based on a smoothed Neyman orthogonal score $\Psi^\beta(Z; Q, \alpha)$ that satisfies $V^\beta = \E\left[\Psi^\beta(Z; Q^\ast, \alpha^{\beta})\right]$, where $\alpha^\beta(a, x)$ is a nuisance function to be defined in the sequel. Under a mild assumption on the density of the sub-optimality gaps $\Delta_k := \max_\ell\{Q^\ast(\ell, X)\} - Q^\ast(k, X)$ near (but excluding) zero, we show that it is possible to obtain root-$n$ inference for the \textit{un-smoothed} value $V^\ast$ (Theorem~\ref{thm:normal_static}, described informally below).
\begin{informal}
\label{thm:informal}
Let $Z_1, \dots, Z_n$ be i.i.d.\ and let $\wh{Q}, \wh{\alpha}$ be nuisance estimates independent from the sample. Suppose each $\Delta_k$ has density bounded by $C t^{\delta - 1}$ for all $0 < t \leq t_0$ and some $\delta > 0$. Then, if $\beta_n = \omega\left(n^{\frac{1}{2(1 + \delta)}}\right)$ and $\|\wh{Q} - Q^\ast\|_{L^2} = o_\P(\beta_n^{-1/2}n^{-1/4})$ (alongside other standard assumptions), the estimator $\wh{V}_n := \frac{1}{n}\sum_{i = 1}^n \Psi^\beta(Z_i; \wh{Q}, \wh{\alpha})$ is asymptotically linear
\[
\sqrt{n}(\wh{V}_n - V^\ast) = \frac{1}{\sqrt{n}}\sum_{i = 1}\rho_V(Z_i) + o_\P(1),
\]
Further, $\rho_V(Z)$ is the efficient influence function when the optimal treatment is unique almost surely.
\end{informal}
Our estimator is computationally efficient, only requiring a \textit{constant} number of nuisance fits. Further, while our main results are stated in terms of sample-splitting, they can be easily extended to cross-fitting (see Remark~\ref{rmk:cross-fit} for a general discussion).
We also present analogous results under semi-parametric restrictions on the blip effects (Section~\ref{sec:static:param}) and for dynamic treatment regimes (Appendix~\ref{app:dynamic}). We empirically evalute the coverage of our estimator in Section~\ref{sec:experiments}.

\item In Section~\ref{sec:irregular}, we generalize our smoothing estimator to irregular estimands of the form
\[
V^\ast := \E\left[\max_{\ell }\psi_\ell(X, g_\ell^\ast)\right],
\]
where $\psi_\ell(x; g_\ell)$ are general affine scores and the true nuisance $g_\ell^\ast$ is a regression.
In this setting, we similarly show how to construct Neyman orthogonal moments and choose smoothing parameters. As examples, we revisit the problem of performing inference on conditional Balke and Pearl bounds on the average treatment effect~\citep{levis2023covariate} and provide an estimator for the $L^1$ calibration error of a regression model \citep{gupta2022post}. 

\end{itemize}


Two theoretical insights drive our analysis. Our first insight is that the particular choice of smoothing function is critically important in the estimation of irregular parameters. Several works have historically leveraged the \textit{softplus} function $\splus^\beta\{u_1, \dots, u_N\} := \frac{1}{\beta}\log\left(\sum_{i = 1}^N e^{\beta u_i}\right)$ to construct smooth surrogates for non-differentiable functions~\citep{levis2023covariate, goldberg2014comment}. This smoother possesses a notable defect: when $u_1, \dots, u_N$ are equal, the soft-plus function differs from the true maximum by exactly $\log(N)/\beta$. In general, the magnitude of this bias is too large to permit asymptotically normal inference around the un-smoothed policy value $V^\ast$.
In contrast, the softmax exhibits \textit{zero bias} in presence of ties, which will enable inference even when argument maximizing set contains multiple elements.

Our second insight is that the softmax function exhibits fast bias decay under mild distributional assumptions on the sub-optimality gaps $
\Delta_k$.
If these gaps admit a density bounded in the sense of Informal Theorem~\ref{thm:informal}, the bias of softmax smoothing will decay as $|V^\ast - V^\beta| = O\left(\beta^{-(1 + \delta)}\right)$ (Lemma~\ref{lem:margin}). 
The provides a  window in which one can select $\beta_n$ to obtain valid inference.
This density assumption is analogous to the \textit{soft-margin} condition considered in \citet{luedtke2016statistical} and \citet{shi2020breaking}, and thus allows for an arbitrary large probability of treatment non-response.\footnote{The classical soft-margin condition states that $\P(0 < \Delta_k \leq t) \lesssim t^{\delta}$ for $t \in (0, t_0]$ for some $\delta, t_0 > 0$.} When $\delta = 1$ and there are binary treatments, this is equivalent to the CATE have bounded density near zero.

Lastly, we note that the softmax function described above is commonly used in the machine learning literature for tasks such as soft Q-learning~\citep{schulman2017equivalence, nachum2017bridging, haarnoja2017reinforcement,haarnoja2018soft}. However, the goal in this literature is not inference, but rather \textit{regret minimization}, for which a naive bias bound of $O(1/\beta)$ actually suffices. For inferential tasks, naive bias bounds generally lead to pessimistic confidence intervals for smoothed estimates~\citep{levis2023covariate, zhang2024winners}. Our work indicates that, with appropriate care, tools that are commonly leveraged in machine learning tasks may be applicable to broad classes of statistical problems.

\subsection{Related Work}
\label{sec:intro:related}
\paragraph{Inference on Optimal Policy Values:} Many works have focused on performing inference on optimal policy values under parametric restrictions on outcome regressions/Q-functions. \citet{laber2014dynamic} assume linear Q-functions and propose estimating the optimal policy value in a dynamic regime using Q-learning. 
Instead of smoothing, they  construct pessimistic confidence intervals in regions of non-regularity (i.e.\ regions of non-responders) via a union bound and bootstrapping methods. 
\citet{goldberg2014comment} consider the same parametric restrictions and show how to use a softplus function to approximate $\max\{a, b\}$. These authors also require conservative estimation when the probability of non-response is positive. We emphasize that our estimator is asymptotically normal even when non-response to treatment occurs, and that our results hold in both non-parametric and semi-parametric settings. \citet{chakraborty2013inference} propose a method based on $m$-out-of-$n$ bootstrapping to build confidence intervals on structural parameters, but assume Q-functions are linear and achieve width $m$ confidence intervals instead of width $n$ (here, $m = o(n)$).

Other authors have considered settings where the Q-functions/conditional average treatment effects (CATEs) are allowed to be fully non-parametric. For instance, \citet{semenova2023aggregated} analyze a first-order de-biased estimator estimator for the optimal policy value. However, this estimator requires that nuisances be estimated at $o(n^{-1/4})$ rates in the $L^\infty$ norm, which may be restrictive when off-the-shelf ML estimators like gradient boosted trees or neural networks are used. Further, they require the reward maximizing action to be unique almost surely. We also mention the work of \citet{park2024debiased}\footnote{This cited paper was released after the first draft of this paper~\citep{chen2023inference}}, who use smoothing but obtain slower, $\sqrt{n/\beta_n^2}$ rates of convergence, where $\beta_n$ grows to infinity (Proposition 1 of their work). In contrast, our Theorem~\ref{thm:normal_static} establishes $\sqrt{n}$ asymptotic linearity of the softmax smoothed estimator, which results in tighter confidence intervals.
Our results also apply to other irregular functionals of interest.

The two works most closely related to the present paper in are those of \citet{shi2020breaking} and \citet{luedtke2016statistical}, which are discussed above.
These works are very general, describing estimators that allow for non-parametric outcome regressions, dynamic treatment regimes, and positive probabilities of treatment non-response. However, both approaches require training a number of nuisance models that \textit{grows} with the number of samples. Our estimator requires only fitting a fixed number of nuisances, but at the cost of selecting a smoothing parameter. There is thus a fundamental tradeoff between our estimator and the two mentioned above.
Because these methods fit a growing number of nuisance models, they require stronger nuisance error control than our estimator. In our setting, it suffices that the $L^2$ nuisance errors vanish in probability, which is the standard in semi-parametric literature~\citep{chernozhukov2018double}. On the other hand, these works require either $L^2$ convergence of the nuisance estimates or high probability bounds on the $L^2$ error (see \citet{bibaut2020sufficient} for a discussion of this in the context of the estimator of \citet{luedtke2016statistical}).

\paragraph{Policy Learning:}

In the causal inference and reinforcement learning (RL) literature, a large body of work has developed related to learning an optimal or near-optimal individualized treatment policy. These include approaches based on Q-learning~\citep{zhang2012robust, watkins1992q, watkins1989learning, qian2011performance,zhao2009reinforcement, zhao2011reinforcement, moodie2012q}, A-learning (or advantage learning)~\citep{shi2024statistically, murphy2003optimal, moodie2007demystifying, robins2004optimal}, empirical risk minimization~\citep{athey2021policy, luedtke2020performance, kallus2018confounding}, and more. We particularly emphasize works relating to empirical risk minimization~\citep{athey2021policy, kitagawa2018should, kitagawa2022treatment}, which describe how to learn policies with low \textit{regret}, which is defined as the difference in value between the best policy in the class and the learned policy~\citep{manski2004statistical, hirano2009asymptotics, luedtke2020performance}. We also note works on optimal policy learning in partially-identified settings~\citep{olea2023decision, kitagawa2023treatment, pu2021estimating}, which may be of particular importance when one does not believe commonly-made conditional exogeneity assumptions. We emphasize that the algorithms in the works listed above do not discuss inference on the value of the optimal treatment policy. Also relevant to our work are algorithms from the RL literature based on soft-Q learning~\citep{garg2021iq, nachum2017bridging, schulman2017equivalence}, which use entropy regularization to learn a near-optimal Boltzmann policy. Again, these algorithms do not provide a means for a learner to perform inference on optimal policy values.

\paragraph{Causal Inference:}

Lastly, we touch on tools from the causal inference and semi-parametric statistics that are relevant to our work, including work on semi-parametric efficiency and estimation~\citep{kosorok2008introduction, newey1994asymptotic, van2000asymptotic,robins1995analysis,robins1995semiparametric,bickel1993efficient, levit1976efficiency}, missing data and doubly-robust estimation~\citep{robins2004optimal,bang2005doubly,tsiatis2006semiparametric}, and targeted maximum likelihood estimation~\citep{van2006targeted, van2011targeted}. Most closely related to our work is the literature of double/de-biased machine learning~\citep{bang2005doubly, chernozhukov2018double}, which advocates performing generic first-order corrections and cross-fitting to reduce bias in estimation problems involving nuisance components. These approaches are commonly leveraged in both causal inference~\citep{wager2018estimation, athey2019generalized,runge2023causal}, causal machine learning~\citep{foster2023orthogonal, whitehouse2024orthogonal, lan2025meta, van2007super}, and automatic de-biased learning methods~\citep{hirshberg2021augmented, chernozhukov2022automatic, chernozhukov2022riesznet, chernozhukov2022nested}. 

\subsection{Notation}
\label{sec:back:notation}


We let $P_V$ be distribution of a arbitrary random variable $V$.  We let $\E[\cdots]$ and $\P(\cdots)$ denote expectation and probability respectively over all sources of randomness.  For arbitrary independent random variables $U$ and $V$ with distributions $P_U$ and $P_V$ and an arbitrary function $m(u, v)$, we let $\E_V[m(U, V)] := \int m(U, v)P_V(d v) = \E_{U,V}[m(U, V) \mid U]$. In this case $\E[m(U, V)] = \E_{U, V}[m(U, V)] = \int\int m(u, v) P_U(du) P_V(dv)$. Given samples $Z_1, \dots, Z_n$, we let $\P_n := \frac{1}{n}\sum_{m = 1}^n \delta_{Z_m}$ denote the corresponding empirical distribution, $\P_n f(Z) := \frac{1}{n}\sum_m f(Z_m)$, and $\G_n := \sqrt{n}(\P_n - \E_Z)$.

For a given integer $N > 0$, we let $[N] := \{1, 2, \dots, N\}$. For $p \in [1, \infty)$, a random variable $V \in \calV$ with distribution $P_V$, and a function $f : \calV \rightarrow \R^d$, we let $\|f\|_{L^p(P_V)} := \left(\E_V\|f(V)\|_p^p\right)^{1/p}$, where $\|a\|_p := \left(\sum_i |a_i|^p\right)^{1/p}$ for $a \in \R^d$.  For $p = \infty$, we let $\|f\|_{L^\infty(P_V)} := \esssup |f(V)|$. We define the $L^p$ space of functions by $L^p(P_V) := \left\{f : \calV \rightarrow \R^d \text{ s.t. } \|f\|_{L^p(P_V)} < \infty\right\}$. The dimension $d$ of the range of functions will either be clear from context or made explicit via the notation $L^p(P_V; \R^d)$. 

Given a Banach space $B$,  $T : B \rightarrow \R^d$, and  $f^\ast, g \in B$, we define the Gateaux derivative of $T$ at $f^\ast$ in the direction $g$ as $D_f T(f^\ast)(g) := \frac{\partial}{\partial t}T(f^\ast + tg) \vert_{t = 0}$. We  define the second Gateaux derivative as $D_{f}^2T(f^\ast)(g) := \frac{\partial^2}{\partial t^2}T(f^\ast + tg)\vert_{t = 0}$. If we have  a bi-variate map $T : B\times B \rightarrow \R^d$ and $f_1^\ast, f_2^\ast, g_1, g_2 \in B$, we let the cross Gateaux derivative be defined as $D_{f_1, f_2}T(f_1^\ast, f_2^\ast)(g_1, g_2) := \frac{\partial^2}{\partial s \partial t}T(f_1^\ast + tg_1, f_2^\ast + s g_2)\vert{t = s = 0}$. For $f : \R^d \rightarrow \R$, we let $\partial_{x_i} f(x)$ denote the $i$th partial derivative, $\nabla_x f(x)$ the gradient, and $\nabla_x^2 f(x)$ the Hessian, when said objects exist. If $f : \R^d \rightarrow \R^m$, we will let $\partial_x f(x)$ denote the Jacobian. When denote the derivative of composed functions with respect to the argument of the outer function, we use the notation $\partial_i f(g(x)) := \partial_{u_i} f(u) \vert_{u = g(x)}$, $\nabla_u f(g(x)) := \nabla_u f(u)\vert_{u = g(x)}$, and $\nabla_u^2 f(g(x)) = \nabla_u^2 f(u)\vert_{u = g(x)}$ for convenience unless otherwise noted. Lastly, for functions $f, g : \calV \rightarrow \R^{d}$, we define the bracket between $f$ and $g$ as  $[f, g] := \{h : \calZ \rightarrow \R : \exists \lambda : \calX \rightarrow [0, 1] \text{ s.t. } h(z) = \lambda(z)f(z) + (1 - \lambda(z))g(z)\}$.

\section{Problem Formulation and Preliminaries}
\label{sec:back}

In this section, we formalize the problem of performing inference on the value of the optimal treatment policy.  We present our assumptions on the underlying data-generating process and identify the optimal value both in terms of outcome regressions and blip effects. The latter identification will be used for performing inference under semi-parametric restrictions (see Section~\ref{sec:static:param}). In the main body of the paper we just focus on static treatment regimes, i.e.\ settings where there is a single round of treatment. We  generalize this discussion to dynamic (namely, two-stage) treatment regimes in Appendix~\ref{app:dynamic}. We conclude by discussing softmax smoothing, the main density assumption made throughout this work, and the bias introduced by smoothing. To streamline our presentation, we defer discussion of the case of general irregular functionals to Section~\ref{sec:irregular}.

\subsection{Assumptions and Identification for Static Treatment Regimes}
\label{sec:back:static}

We assume observations are of the form $Z = (X, A, Y)$, where $X \in \calX$ represents an individual's  covariates, $A \in [N] := \{1, 2, \dots, N\}$ represents a treatment coming from a fixed set of options, and $Y \in \R$ represents an observed outcome. We make the following assumptions on the data generating process.

\begin{assumption}
\label{ass:static}
We assume the learner has access to i.i.d.\ samples $Z = (X, A, Y)$ drawn from some distribution $P_Z$ satisfying the following conditions:
\begin{enumerate}
    \item \textit{(Consistency)} There are potential outcomes $(Y(a) : a \in [N])$ such that $Y = Y(A)$.
    \item \textit{(Conditional Ignorability)} We assume that the potential outcomes $(Y(a) : a \in [N])$ are conditionally independent of treatment $A$ given covariates $X$, i.e.\ $Y(a) \independent A \mid X$ for each $a \in [N]$.
    \item \textit{(Strong Positivity)} Let $p^\ast(a \mid x) := \P(A = a \mid X = x)$ represent the propensity score. We assume $p^\ast$ is strictly bounded away from $0$ --- i.e.\ there is $\eta \in (0, 1/2]$ such that, for any action $\ell \in [N]$ and $x \in \calX$,
    \[
    \eta \leq p^\ast(\ell \mid x).
    \]
\end{enumerate}
\end{assumption}

In what follows, we do not assume that the propensity $p^\ast(a \mid x)$ is known. By a treatment policy, we mean a  mapping $\pi : \calX \rightarrow [N]$ taking an individual's covariates to a categorical treatment. Under the above assumption, straightforward manipulation yields that $V^\ast := \max_{\pi}\E[Y(\pi(X))]$ is identified as $V^\ast = \E\left[\max_\ell\{Q^\ast(\ell, X)\}\right]$, which is precisely the characterization provided in Equation~\eqref{eq:id_Q_func}.

We can also characterize the value of the optimal treatment policy via \textit{blip effects}. The blip effect $\gamma^\ast(a, x) := Q^\ast(a, x)  - Q^\ast(a^\ast, x)$ gives the expected value of an action $a \in [N]$ for an individual with covariates $x \in \calX$ relative to some baseline or control action $a^\ast \in [N]$. Given that one can write the value of a policy $\pi$ relative to the observational policy as
\begin{align*}
\E[Y(\pi(X)) - Y(A)]  = \E[Y(\pi(X)) - Y(a^\ast) - (Y(A) - Y(a^\ast))] = \E[\gamma^\ast(\pi(X), X) - \gamma^\ast(A, X)],
\end{align*}
it follows that the value of an optimal treatment policy can also be characterized as
\begin{equation}
\label{eq:id_blip_eff}
V^\ast = \E\left[\max_\ell \gamma^\ast(\ell, X) - \gamma^\ast(A, X) + Y\right].
\end{equation}
We use this identification in Section~\ref{sec:static:param}, in which we discuss the estimation of the optimal policy value under semi-parametric restrictions on the blip effects.


\subsection{Density Assumption and Smoothing}
\label{sec:back:smooth}
As noted in the introduction, the mapping $Q \mapsto \E\left[\max_{\ell}Q(\ell, X)\right]$ is not generally Gateaux differentiable at $Q^\ast$ when there are non-responders to treatment, i.e.\ when
\[
\P\left(\left|\arg\max_{\ell \in [N]}Q^\ast(\ell, X)\right| > 1\right) > 0.
\]
In this setting, such laws are generally termed to be \textit{irregular} or \textit{nonregular}~\citep{shi2020breaking, luedtke2016statistical}. When the law is irregular, \citet{hirano2012impossibility} show that regular asymptotically linear (RAL) estimators fail to exist in general. Consequently, standard notions of semi-parametric efficiency may fail to be applicable. Likewise, de-biased or one-step estimators~\citep{chernozhukov2018double, van2006targeted} rely on the Gateaux differentiability of the underlying functional, and thus cannot be used.

To work around this irregularity, we defined the smoothed surrogate $V^\beta := \E\left[\smax^\beta_\ell\{Q^\ast(\ell, X)\}\right]$. Since the softmax function $\smax^\beta\{u\}$ is twice continuously differentiable with respect to $u$, the above surrogate is twice Gateaux differentiable with respect to $Q$ for any fixed $\beta > 0$. This enables the estimation of $V^\beta$ via standard de-biased estimators~\citep{chernozhukov2018double, van2006targeted}.
We ultimately want to obtain root-$n$ inference for $V^\ast$ through a de-biased estimator for $V^\beta$. To do this, we need to be able to select the smoothing parameter $\beta$ as a function of the sample size in a way to ensure the bias decays like $|V^\beta - V^\ast| = o(n^{-1/2})$. The bias of softmax smoothing is governed by the distribution of the sub-optimality gaps 
\[
\Delta_k := \max_\ell\{Q^\ast(\ell, X)\} - Q^\ast(k, X),
\]
which are small when action $k$ is close to optimal, and large otherwise. Without further assumptions on these gaps, the worst-case bias of smoothing decays as $|V^\beta - V^\ast| = \Theta\left(\tfrac{1}{\beta}\right)$. In this regime, to ensure $|V^\beta - V^\ast| = o(n^{-1/2})$, one would need to take $\beta = \omega(n^{1/2})$. In the sequel, we will see that taking $\beta$ to be this large would require the learner to estimate the unknown regression $Q^\ast$ at \textit{super-parametric} $o_\P(n^{-1/2})$ rates, which is generally impossible.

Our insight is that, under a mild assumption of the density of the gaps $\Delta_k$, the bias will shrink rapidly as the smoothing parameter grows. Our main assumption formally codifies this assumption on the sub-optimality gaps, and is analogous to the assumptions considered in \citet{luedtke2016statistical} and \citet{shi2020breaking} (which are instead stated in terms of the CDF of the sub-optimality gaps).

\begin{assumption}
\label{ass:margin}
We say a non-negative random variable $\Delta \in \R_{\geq 0}$ satisfies the $\delta$-polynomial density assumption if there exists $c, H > 0$ such that the random variable $\Delta\mathbbm{1}\{\Delta \in (0, c)\}$ admits a Lebesgue density $f$ satisfying $f(t) \leq Ht^{\delta - 1}$.
\end{assumption}

\begin{figure}[htbp]
    \centering
    
    \begin{subfigure}[t]{0.32\textwidth}
        \centering
        \includegraphics[width=\linewidth]{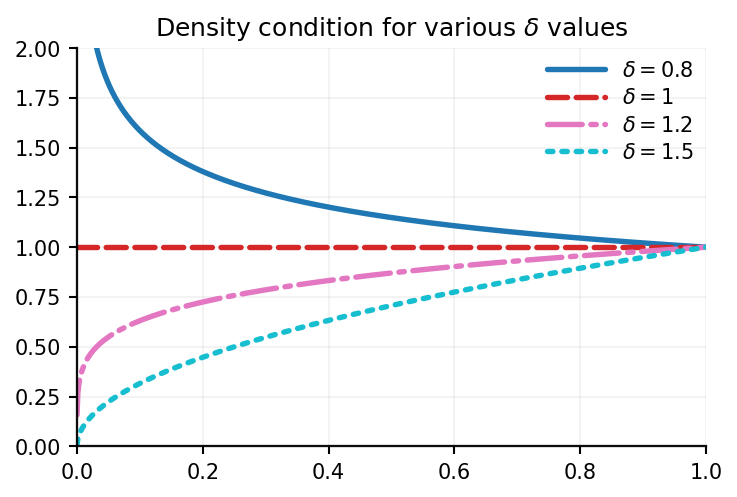}
        \caption{Visualization of Assumption~\ref{ass:margin} for various $\delta$.}
        \label{fig:combined:margin}
    \end{subfigure}
    \hfill
    \begin{subfigure}[t]{0.32\textwidth}
        \centering
        \includegraphics[width=\linewidth]{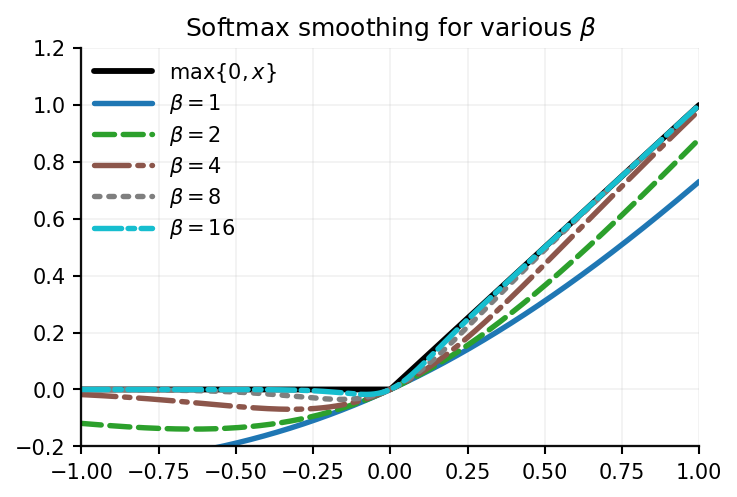}
        \caption{Softmax smoothing.}
        \label{fig:combined:softmax}
    \end{subfigure}
    \hfill
    \begin{subfigure}[t]{0.32\textwidth}
        \centering
        \includegraphics[width=\linewidth]{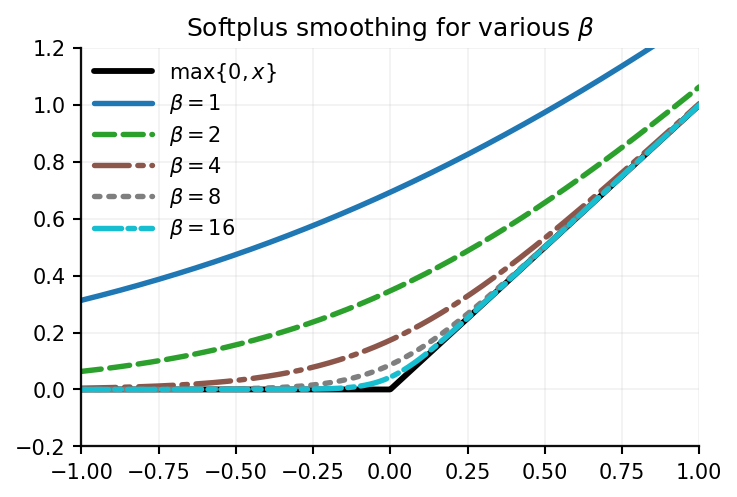}
        \caption{Softplus smoothing.}
        \label{fig:combined:softplus}
    \end{subfigure}
    
    \caption{An illustration of the density condition outlined in Assumption~\ref{ass:margin} alongside a comparison of softmax and softplus approximations for the map $x \mapsto \max\{0,x\}$. Panel (a) illustrates that, when $\delta < 1$, the density may diverge near zero. Panels (b) and (c) illustrate the bias of softplus smoothing when $x = 0$.}
    \label{fig:margin-and-smoothing}
\end{figure}

When Assumption~\ref{ass:margin} is applied to the $\Delta_k$, it allows the maximizing index of $Q^\ast(X) = (Q^\ast(1, X), \dots, Q^\ast(N, X))$ to be non-unique with positive probability. The parameter $\delta > 0$ governs the density of near ties: $\delta > 1$ indicates the density of $\Delta_k$ is small near zero, $\delta = 1$ corresponds to a bounded density, and $\delta \in (0, 1)$ indicates the $\Delta_k$ may be heavy-tailed. These cases are illustrated in Figure~\ref{fig:margin-and-smoothing}. 
Under this assumption, we can prove a bound on the bias of softmax smoothing, which we prove in Appendix~\ref{app:softmax}.
\begin{lemma}
\label{lem:margin}
Let $\beta > 0$ be a temperature and let $U_1, \dots, U_N$ be random variables. Define the random differences $\Delta_k := \max_\ell U_\ell - U_k \geq 0$. First, one always has
\[
\E\left[\max_\ell U_\ell - \smax_\ell^\beta U_\ell\right] \leq \sum_{\ell = 1}^N \E\left[\Delta_\ell \exp\left\{-\beta \Delta_\ell\right\}\right].
\]
Second, if each $\Delta_k$ satisfies Assumption~\ref{ass:margin} with constants $c, H, \delta > 0$, then there exist absolute constants $C, \beta_\ast > 0$ only depending on $c, H,$ and $\delta$ such that
\[
\E\left[\Delta_k \exp\left\{-\beta \Delta_k\right\}\right] \leq C\left(\frac{1}{\beta}\right)^{1 + \delta}\;\;\text{for all } \beta \geq \beta_\ast.
\]
\end{lemma}

To gain visual intuition for the above lemma, we can look at Panel (b) of Figure~\ref{fig:margin-and-smoothing}. This panel illustrates a softmax smoothing approximation for the function $\max\{0, x\}$. From the figure, it is clear that the bias from smoothing is most pronounced near (but not at) zero. Further, as $\beta$ grows, the region in which the bias is pronounced shrinks. Thus, if the gaps are unlikely to be very close to the origin (in the sense of Assumption~\ref{ass:margin}), we should be able to get faster than the naive $O\left(\tfrac{1}{\beta}\right)$ bias control.

\section{Inference on the Value of the Optimal Treatment Policy}
\label{sec:static}

In this section, we propose a de-biased estimator for the value of the optimal treatment policy based on softmax smoothing. Since $Q \mapsto \E\left[\max_\ell\{Q(\ell, X)\}\right]$ was not generally Gateaux differentiable, we proposed replacing it by a smoothed alternative $V^\beta := \E\big[\smax^\beta_\ell\{Q^\ast(\ell, X)\}\big]$. Our first goal is to provide a \textit{de-biased} or \textit{Neyman orthogonal} score for $V^\beta$~\citep{chernozhukov2018double}.

\begin{definition}
\label{def:neyman}
A score $m(Z; g, \theta)$ depending on an observation $Z$, nuisance function $g$, and finite-dimensional parameter $\theta$ is said to be \textit{Neyman orthogonal} at a pair $(\theta_0, g^\ast)$ if 
\[
D_g \E\left[m(Z; \theta_0, g^\ast)\right](\omega) \;\; \text{for all } \omega \in \calG,
\]
 where $\calG$ is some space of perturbation functions.
\end{definition}

Typically, $\theta_0$ is the unique solution to the moment equation $0 = \E\left[m(Z; \theta_0, g^\ast)\right]$, and one regularly has $g^\ast, \omega \in \calG := L^2(P_W; \R^p)$, where $p \geq 1$ and $W \subset Z$ denotes a subset of features. We will always assume this to be the case unless otherwise stated. If we are interested in a parameter $\theta_0 = \E[m(Z; g^\ast)]$ instead, orthogonality reduces to the condition that $D_g\E[m(Z; g^\ast)](\omega) = 0$. Heuristically, a score is Neyman orthogonal if any errors in nuisance estimation are only second order in nature. A standard example of a Neyman orthogonal score is the doubly-robust pseudo-outcome leveraged in the estimation of average treatment effects~\citep{kennedy2016semiparametric, kennedy2024semiparametric}. The following proposition describes a Neyman orthogonal score for $V^\beta$ for any fixed $\beta > 0$. We provide a proof in Appendix~\ref{app:treatment}.

\begin{prop}
\label{prop:ortho_static}
For any $\beta > 0$, consider the score $\Psi^\beta$ defined as
\[
\Psi^\beta(Z; Q, \alpha) := \smax^\beta_{\ell}Q(\ell, X) + \sum_{k = 1}^N\alpha_k(A, X)(Y - Q(A, X)),
\]
where the true nuisances are $Q^\ast(a, x) = \E[Y \mid X = x, A =a]$ and $\alpha^\beta(a, x) = (\alpha^\beta_1(a, x), \dots, \alpha^\beta_N(a, x))$. Here, $\alpha^\beta$ has $k$th coordinate
\[
\alpha^\beta_k(a, x) := \partial_k\smax^\beta_\ell\{Q^\ast(\ell, x)\}\frac{\mathbbm{1}\{a = k\}}{p^\ast(k \mid x)},
\]
where the partial derivative of the softmax function is given by
\begin{equation}
\label{eq:softmax_partial}
\partial_k \smax^\beta_\ell\{u\} := \frac{e^{\beta u_k}}{\sum_j e^{\beta u_j}}\left[1 + \beta \left\{u_k - \smax^\beta_\ell u\right\}\right]
\end{equation}
for any $u \in \R^N$. Then, $\Psi^\beta$ is Neyman orthogonal at $(Q^\ast, \alpha^\beta)$, and $V^\beta = \E\left[\Psi^\beta(Z; Q^\ast, \alpha^\beta)\right]$.
\end{prop}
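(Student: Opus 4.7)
The plan is to verify the two conclusions of the proposition --- the integral representation and Neyman orthogonality --- separately, leaning on the observation that the true outcome regression makes the residual $Y - Q^\ast(A, X)$ conditionally mean zero given $(A, X)$, together with the propensity-inverse identity $\E[\mathbbm{1}\{A=\ell\}/p^\ast(\ell \mid X)\, f(A,X) \mid X] = f(\ell, X)$.

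\textbf{Representation.} By the tower property,
\[
\E\Big[\sum_{\ell}\alpha^\beta_\ell(A, X)(Y - Q^\ast(A, X))\Big] = \E\Big[\sum_{\ell}\alpha^\beta_\ell(A, X)\, \E[Y - Q^\ast(A, X)\mid A, X]\Big] = 0,
\]
so the correction vanishes at the truth and $\E[\Psi^\beta(Z; Q^\ast, \alpha^\beta)] = \E[\smax^\beta_\ell Q^\ast(\ell, X)] = V^\beta$. The explicit formula for $\partial_k \smax^\beta$ in \eqref{eq:softmax_partial} I would verify by a direct quotient-rule calculation on $\smax^\beta(u) = (\sum_i u_i e^{\beta u_i})/(\sum_j e^{\beta u_j})$, using $\partial_k(\sum_i u_i e^{\beta u_i}) = e^{\beta u_k}(1 + \beta u_k)$ and $\partial_k \sum_j e^{\beta u_j} = \beta e^{\beta u_k}$, then factoring.

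\textbf{Neyman orthogonality.} The nuisance $\alpha$ enters $\Psi^\beta$ only linearly through the correction term, so differentiating under the expectation gives, for any $\omega_\alpha \in L^2(P_W)$, a term proportional to $\E[\sum_\ell \omega_{\alpha, \ell}(A, X)\,\E[Y - Q^\ast(A, X)\mid A, X]] = 0$. The substantive case is a perturbation $\omega_Q$ of $Q$. Since $\smax^\beta$ is $C^\infty$ with derivatives uniformly bounded on bounded sets of $u$, one can interchange $\partial_t|_{t=0}$ and $\E$ to obtain
\[
D_Q \E[\smax^\beta_\ell Q^\ast(\ell, X)](\omega_Q) = \E\Big[\sum_k \partial_k \smax^\beta\{Q^\ast(\cdot, X)\}\, \omega_Q(k, X)\Big].
\]
The correction term contributes, with the sign needed for cancellation, $-\E[\sum_\ell \alpha^\beta_\ell(A, X)\, \omega_Q(A, X)]$. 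Plugging in $\alpha^\beta_\ell(A, X) = \partial_\ell \smax^\beta\{Q^\ast(\cdot, X)\}\,\mathbbm{1}\{A = \ell\}/p^\ast(\ell \mid X)$, conditioning on $X$, and invoking the propensity-inverse identity converts this to $-\sum_\ell \E[\partial_\ell \smax^\beta\{Q^\ast(\cdot, X)\}\, \omega_Q(\ell, X)]$, which cancels the first contribution exactly. Hence $D_Q \E[\Psi^\beta](\omega_Q) = 0$.

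\textbf{Obstacles.} The proposition is essentially orthogonality-by-construction, so no deep step is required --- the argument is a clean exercise in iterated conditioning and the chain rule. The two delicate pieces are pure bookkeeping: (i) matching the sign of the correction term in $\Psi^\beta$ so that the two derivative contributions cancel rather than add (the sign used in $\Psi^\beta$ must agree with the one in the introduction's display for $\Psi^\ast$); and (ii) justifying the interchange of differentiation and expectation, which is routine because $\partial_k \smax^\beta$ is bounded by a polynomial in $\beta$ on any bounded region of $u$, and $Q^\ast$, $\omega_Q$, and $1/p^\ast$ can all be taken bounded under Assumption~\ref{ass:static}.
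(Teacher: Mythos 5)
Your proof is correct, but it takes a different route from the paper's. The paper proves this proposition by reduction: it checks that the optimal-policy-value scores $\psi_k(x; g_k) = g_k(k,x)$ satisfy Assumption~\ref{ass:constituent_scores} (the $U_k = Y$ regression form, the Riesz representation via $\zeta_k^\ast(a,x) = \mathbbm{1}\{a=k\}/p^\ast(k\mid x)$, and the dominated-derivative condition, which is trivial here since $\psi_k$ is affine in $Q$), and then invokes the general Proposition~\ref{prop:general_score}. You instead give a direct, self-contained computation, which is essentially the specialization of the paper's proof of Proposition~\ref{prop:general_score} to this instance; both arguments hinge on the same two ingredients you identify (the tower-rule vanishing of the residual term at $Q^\ast$ and the propensity-inverse identity collapsing $\E[\alpha^\beta_\ell(A,X)\omega(A,X)]$ to $\E[\partial_\ell\smax^\beta\{Q^\ast(\cdot,X)\}\omega(\ell,X)]$). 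Your route is more elementary and transparent for this special case; the paper's route buys reuse of the abstract machinery and avoids re-deriving the exchange-of-limits lemma. You are also right to flag the sign: the proposition's display writes $\smax^\beta_\ell Q(\ell,x) - \sum_\ell \alpha_\ell(a,x)(y - Q(a,x))$, which contradicts the introduction, Equation~\eqref{eq:limit}, and Proposition~\ref{prop:general_score}, all of which use a plus sign. With the minus sign the two Gateaux-derivative contributions in $Q$ add rather than cancel, and orthogonality fails, so the minus sign is a typo; your proof correctly proceeds with the plus sign.
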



In general, the solution $V^\beta$  to the smoothed, de-biased score is not the same as the optimal policy value $V^\ast$. However, by appropriately selecting an increasing sequence of smoothing parameters $(\beta_n)_{n \geq 1}$ and exploiting the bias properties outlined in Lemma~\ref{lem:margin}, one can use the above score to obtain asymptotically linear (and hence normal) estimates of $V^\ast$. The following theorem describes an estimator for $V^\ast$ in which nuisance estimates are independent of the data. This sample-splitting approach is just for ease of exposition, and one can show that a natural cross-fitting variant of Theorem~\ref{thm:normal_static} is true as well (see Remark~\ref{rmk:cross-fit} below). 

\begin{theorem}
\label{thm:normal_static}
Let $\beta_n$ be a  smoothing parameter. Let $Z_1, \dots, Z_n$ be i.i.d.\ random variables satisfying Assumption~\ref{ass:static}, and let $\wh{Q}_n$ and $\wh{\alpha}_n$ be nuisance estimates for $Q^\ast$ and $\alpha^{\beta_n}$ that are independent of $Z_1, \dots, Z_n$. Suppose the following assumptions hold.
\begin{enumerate}
    \item \textit{(Controlled Density Near Zero)} For each $k \in [N]$, the random variable $\Delta_k := \max_\ell\{Q^\ast(\ell, X)\} - Q^\ast(k, X)$ satisfies Assumption~\ref{ass:margin} with constants $c, H, \delta > 0$.
    \item \textit{(Growing Smoothing Parameters)} The smoothing parameter $\beta_n$ satisfies $\beta_n = \omega\left(n^{\frac{1}{2(1 + \delta)}}\right)$.
    \item \textit{(Nuisance Convergence)} The following nuisance estimation rates hold:
    \[
    \|\wh{Q}_n - Q^\ast\|_{L^2(P_W)} = o_\P(n^{-1/4}\beta_n^{-1/2}) \quad \text{and} \quad \|\wh{\alpha}_{n} - \alpha^{\beta_n}\|_{L^2(P_W)}\times\|\wh{Q}_{n} - Q^\ast\|_{L^2(P_W)} = o_\P(n^{-1/2}).
    \]
    Further, we assume all nuisance estimates are consistent in $L^2(P_W)$, i.e.\ $\|\wh{Q}_n - Q^\ast\|_{L^2(P_W)} = o_\P(1)$ and $\|\wh{\alpha}_n - \alpha^{\beta_n}\|_{L^2(P_W)} = o_\P(1)$.
    \item \textit{(Boundedness)} There is some universal constant $G > 0$ such that:
    \[
    \|Q^\ast\|_{L^\infty(P_W)}, \|\wh{Q}_n\|_{L^\infty(P_W)}, \|\alpha^{\beta_n}\|_{L^\infty(P_W)}, \|\wh{\alpha}_n\|_{L^\infty(P_W)}, \|Y\|_{L^\infty(P_Z)} \leq G.
    \]
\end{enumerate}
Then, the estimate $\wh{V}_n := \P_n \Psi^{\beta_n}(Z; \wh{Q}_n, \wh{\alpha}_n)$ is asymptotically linear, i.e.\
\[
\sqrt{n}(\wh{V}_n  - V^\ast) := \frac{1}{\sqrt{n}}\sum_{i =1 }^n \rho_V(Z_i) + o_\P(1),
\]
where the influence function $\rho_V$ is given by
\begin{equation}
\label{eq:influence-function}
\rho_V(Z) := \max_{\ell}\{Q^\ast(\ell, X)\} + \sum_{k = 1}^N \fraks_k(Q^\ast(X))\frac{\1\{A = k\}}{p^\ast(k \mid X)}\left\{Y - Q^\ast(A, X)\right\} - V^\ast
\end{equation}
and $\fraks_k(u) \equiv \fraks_k(u_1, \dots, u_N) := \frac{\mathbbm{1}\big\{k \in \arg\max_\ell u\big\}}{|\arg\max_\ell(u)|}$. Consequently, we have $
\sqrt{n}(\wh{V}_n - V^\ast) \Rightarrow \calN\left(0, \Sigma^\ast\right)$ 
when $\Sigma^\ast := \Var[\rho_V(Z)] > 0$.
\end{theorem}

The above theorem provides an asymptotically valid range in which the learner can select a smoothing parameter. On one hand, we must smooth sufficiently aggressively to make the bias negligible. This is captured through the requirement that $\beta_n = \omega\left(n^{\frac{1}{2(1 + \delta)}}\right)$. On the other hand, $\beta_n$ must grow sufficiently slowly to ensure the estimating the unknown regression is  feasible. This emerges through the condition that $\|\wh{Q} - Q^\ast\|_{L^2(P_W)}  = o_\P(\beta_n^{-1/2}n^{-1/4})$. This places an implicit upper bound of $\beta_n = o(n^{1/2})$ on the smoothing parameter: if $\beta_n = \Omega(n^{1/2})$, then we would need $\|\wh{Q} - Q^\ast\|_{L^2(P_W)} = o_\P(n^{-1/2})$, which is generally not possible. We note that such ``faster than $n^{-1/4}$'' rates on the regression $Q^\ast$ are also required in \citet{luedtke2016statistical} and \citet{shi2020breaking} --- see Remark~\ref{rmk:nuisance_rates} below.

We can interpret the asymptotic behavior of our estimator in the context of the results of \citet{luedtke2016statistical} and \citet{van2014targeted}. When the maximizing coordinate of $Q^\ast(X) := (Q^\ast(1, X), \dots, Q^\ast(N, X))$ is unique almost surely, \citet{van2014targeted} remarkably show that performing inference on $V^\ast$ is asymptotically equivalent to being directly given the (almost surely unique) optimal policy $\pi^\ast(x) := \arg\max_\ell Q^\ast(\ell, X)$ and performing policy evaluation. More formally, they show that the efficient influence function for $V^\ast$ is given by
\[
\rho^{\reg}_V(Z) := \underbrace{\max_\ell\{Q^\ast(\ell, X)\}}_{= Q^\ast(\pi^\ast(X), X)} + \frac{\1\{A = \pi^\ast(X)\}}{p^\ast(\pi^\ast(X) \mid X)} \left\{Y - Q^\ast(A, X)\right\} - V^\ast.
\]
This exactly aligns with our influence function $\rho_V(Z)$ defined in Equation~\eqref{eq:influence-function}, since in the regular setting $\fraks_k(Q^\ast(X)) = 1$ if and only if $k = \arg\max_\ell Q^\ast(\ell, X)$. In particular, this shows our estimator is semi-parametric efficient in this setting.

When the maximizing coordinate of $Q^\ast = (Q^\ast(1, X), \dots, Q^\ast(N, X))$ is non-unique with positive probability, efficiency results in the standard semi-parametric sense are not possible~\citep{hirano2012impossibility}. Further, there is no longer an almost surely unique optimal treatment policy. Any (potentially randomized) treatment policy $\pi$ satisfying $\pi(X) \in \arg\max_\ell Q^\ast(\ell, X)$ almost surely will attain the optimal policy value. Note that for any fixed $u \in \R^N$, the vector $(\fraks_1(u), \dots, \fraks_N(u))$ defines a probability distribution on $[N]$ that samples an action from the maximizing set $\arg\max_\ell\{u_1, \dots, u_N\}$ uniformly at random. Thus, the influence function $\rho_V(Z)$ corresponds to doing policy evaluation on the randomized policy that selects an optimal action uniformly at random when there are ties.


We do not prove Theorem~\ref{thm:normal_static} in the main body of the paper. We provide our proof in Appendix~\ref{app:treatment}, which follows as a special case of our result on more general irregular functionals (Section~\ref{sec:irregular}). At a high level, the proof follows from a careful balancing of the smoothing bias and the second order nuisance estimation errors. In our arguments, we leverage both Lemma~\ref{lem:margin} and various analytical properties of the softmax function, which are noted in Lemma~\ref{lem:softmax}. 
We now provide a very brief proof sketch illustrating how one should select $\beta_n$ to ensure asymptotic normality.

\begin{proof}[Proof Sketch for Theorem~\ref{thm:normal_static}]For any $\beta > 0$, let $\Bias(\beta) := \left|V^\beta - V^\ast\right|$ denote the bias incurred through smoothing and let $\Rem(\beta)$ denote the second order error obtained from Taylor expanding $\E\left[\smax^\beta_\ell\{\wh{Q}_n(\ell, X)\}\right]$ around $V^\beta = \E\left[\smax^{\beta}_\ell\{Q^\ast(\ell, X)\}\right]$, which is formally given by
\[
\Rem(\beta) := \frac{1}{2}D_Q^2 \E\left[ \smax^\beta_\ell\{\wb{Q}(\ell, X)\}\right](\wh{Q}_n - Q^\ast).
\]
Here, $\wb{Q} = \wb{t}Q^\ast + (1 - \wb{t})\wh{Q}_n$ for some $\wb{t} \in [0, 1]$.
After some manipulation of terms and under standard $o_\P(n^{-1/2})$ nuisance estimation rates on the product $\|\wh{Q} - Q^\ast\|_{L^2(P_W)}\|\wh{\alpha} - \alpha^{\beta_n}\|_{L^2(P_W)}$, one arrives at the decomposition
\begin{align*}
\sqrt{n}(\wh{V} - V^\ast) &= \sqrt{n}\Bias(\beta_n) + \underbrace{\frac{1}{\sqrt{n}}\sum_{m = 1}^n \left\{\Psi^{\beta_n}(Z_m; Q^\ast, \alpha^{\beta_n}) - V^{\beta_n}\right\}}_{\text{Asymptotically Normal Term}} + \sqrt{n}\Rem(\beta_n) + o_\P(1).
\end{align*}

The ``asymptotically normal term'', as the name would suggest, will converge to a normal distribution via Theorem~\ref{thm:smooth_clt} in Appendix~\ref{app:clt}, a central limit theorem for generic, smoothed functionals. To obtain overall asymptotic linearity/normality, one just needs to show that $\Bias(\beta_n) = o(n^{-1/2})$ and $\Rem(\beta_n) = o_\P(n^{-1/2})$. Leveraging bounds on the operator norm of the softmax Hessian (namely, Lemma~\ref{lem:softmax} Point~\ref{pt:softmax_hess}), one can show $\Rem(\beta_n) \lesssim \beta_n \|\wh{Q} - Q^\ast\|_{L^2(P_W)}^2$. Likewise, under the Assumption~\ref{ass:margin}, Lemma~\ref{lem:margin} yields that $\Bias(\beta_n) = O\left(\beta^{-(1 + \delta}\right)$. Rearranging, one finds that, so long as $\beta_n = \omega\left(n^{\frac{1}{2(1 + \delta)}}\right)$ and $\|\wh{Q} - Q^\ast\|_{L^2(P_W)} = o_\P(n^{-1/4}\beta_n^{-1/2})$, asymptotic linearity (and hence normality) is obtained. 

\end{proof}

The following corollary allows us to build confidence intervals using the quantiles of the standard normal distribution. Again, we prove said corollary in Appendix~\ref{app:treatment}. The proof largely boils down to showing the consistency of the plug-in variance estimate and then applying the continuous mapping theorem.

\begin{corollary}
\label{cor:normal_static}
Under the conditions of Theorem~\ref{thm:normal_static}, letting $\wh{\Sigma}_n := \frac{1}{n}\sum_{m = 1}^n\left\{ \Psi^{\beta_n}(Z_m; \wh{Q}_n, \wh{\alpha}_n) - \wh{V}_n\right\}^2$, we have 
\[
\sqrt{n}\wh{\Sigma}_n^{-1/2}(\wh{V}_n - V^\ast) \Rightarrow \calN(0, 1).
\]
Consequently, for any $\delta \in (0, 1)$, the set
\[
C_{1 -\delta} := \left[\wh{V}_n - \frac{\wh{\Sigma}_n^{1/2}}{\sqrt{n}}z_{1 -\delta/2}, \wh{V}_n + \frac{\wh{\Sigma}_n^{1/2}}{\sqrt{n}}z_{1- \delta/2}\right]
\]
forms an asymptotic $1 - \delta$ confidence interval for $V^\ast$, where $z_{1-\delta/2}$ is the $(1 - \delta/2)$th quantile of the standard normal distribution.
\end{corollary}



\begin{rmk}
\label{rmk:nuisance_rates}
We now parse the density assumption in the context of work by \citet{luedtke2016statistical}. First, suppose we were able to exactly take $\beta_n = \Theta\left(n^{\frac{1}{2(1 + \delta)}}\right)$. Then, we would need a regression estimation rate of $\|\wh{Q}_n - Q^\ast\|_{L^2(P_W)} = o_\P\left(n^{-\frac{1}{4}\frac{2 + \delta}{1 + \delta}}\right)$, which exhibits the same sample size dependence as noted in Theorem 8 of \citet{luedtke2016statistical}. We consider several different density assumptions on $\tau^\ast$ below.
\begin{enumerate}
    \item \textit{(Bounded Density)} The case $\delta = 1$ corresponds to the density of the sub-optimality gaps $\Delta_i$ being bounded near, but not at zero. In this case, the learner must estimate the Q-functions  at a $o_\P(n^{-3/8})$ rate in the $L^2(P_W)$ norm. This departs from the traditional $o_\P(n^{-1/4})$ rates required on nuisance estimates when the score is twice differentiable and has bounded Hessian. It is still slower than the super-parametric rate of $o_\P(n^{-1/2})$ that would be required from non-orthogonal scores.
    \item \textit{(Unbounded Density near Zero)} When $\delta = 1/2$, the density of the gaps approaches infinity near zero. In this case, we need a rate of $o_\P(n^{-5/12})$ for regression estimates, which is naturally faster than what was required in the setting of a bounded density. 
    \item \textit{(Quadratic Density near Zero)} As a final example, when $\delta = 2$, the density of $\tau^\ast(X)$ puts very little mass near the origin. Because of this, a learner can get away with slower $o_\P(n^{-1/3})$, which is slower than the case of bounded density. More generally, as $\delta \rightarrow \infty$, the closer we get to being able to assume $o_\P(n^{-1/4})$ estimation rates on the regression.
\end{enumerate}

One subtlety in our setting is that we cannot actually take $\beta_n$ to be on the order of $n^{\frac{1}{2(1 + \delta)}}$ --- rather we must have $\beta_n = \omega\left(n^{\frac{1}{2(1 + \delta)}}\right)$. Thus, we will need to estimate nuisances at slightly faster rates than those described above. However, we view this as minor, as we can simply take an arbitrarily slowly growing function $f(n)$ (say $f(n) := \log^\ast(n)$)\footnote{The iterated logarithm function $\log^\ast(n)$, which commonly appears in the analysis of algorithms in the theoretical computer science literature, is given by the recurrence $\log^\ast(n) = 0$ for $n \leq 1$, and $\log^\ast(n) := 1 + \log^\ast(\log(n))$ otherwise.} and set $\beta_n := n^{\frac{1}{2(1 + \delta)}}f(n)$. Then, we would need a nuisance estimation rate of $\|\wh{Q} - Q^\ast\|_{L^2(P_W)} = o_\P\left(n^{-\frac{1}{4}\frac{2 + \delta}{1 + \delta}}f(n)^{-1/2}\right)$. In our experiments in Section~\ref{sec:experiments}, we take $f(n) = \log\log(n)$, for instance. We note that this same interpretation of necessary regression estimation rates holds in Section~\ref{sec:irregular} below as well.
\end{rmk}

\subsection{Inference Under Semi-parametric Restrictions}
\label{sec:static:param}

In the previous subsection, we considered performing inference on the value of the optimal treatment policy in a setting where we made no structural assumptions on the outcome regressions/Q-functions. As noted in Remark~\ref{rmk:nuisance_rates}, there is a price for smoothing: the learner needed to estimate the regressions at slightly faster than the typical $o_\P(n^{-1/4})$ required in causal inference. While existing works such as \citet{shi2020breaking} and \citet{luedtke2016statistical} also suffer from this problem, this requirement may be unpalatable when one believes the regressions are particularly hard to estimate. 

To allow for more flexible nuisance estimation rates, we now consider a setting where we place semi-parametric restrictions on the underlying outcome regressions, namely through the assumption of linearity of the \textit{blip effects} in a known feature embedding. Importantly, this is strictly weaker than assuming the Q-function $Q^\ast(a, x)$ is itself linear (which was the assumption made in works such as \citet{chakraborty2010inference, goldberg2014comment, laber2014dynamic}). This modeling allows the expected baseline response to some fixed treatment to be arbitrarily complicated. 

\begin{assumption}
\label{ass:static_param}
We assume that the blip effect $\gamma^\ast(a, x) := Q^\ast(a, x) - Q^\ast(a^\ast, x)$ is linear in a known feature embedding of $(a, x)$, i.e.\ that
\[
\gamma^\ast(A, X) = \theta_0^\top \phi(A, X) \textit{ for some (unknown) } \theta_0 \in \R^d
\]
where $a^\ast \in [N]$ is a known baseline treatment (e.g.\ control) and $\phi : [N] \times \calX \rightarrow \R^d$ is known.
\end{assumption}

Recalling the identification formula for $V^\ast$ in terms of blip effects from Section~\ref{sec:back:static}, we see that under Assumption~\ref{ass:static_param} the value of the optimal treatment policy can be written as
\begin{equation}
\label{eq:opt_val_param}
V^\ast = \E\left[\Phi^\ast(Z; \theta_0)\right] \quad \text{where } \Phi^\ast(Z; \theta_0) := \max_\ell\{\theta_0^\top \phi(\ell, X)\} - \theta_0^\top \phi(A, X) + Y.
\end{equation}
If we are able to obtain a $\sqrt{n}$-consistent estimate $\wh{\theta}$ of the structural parameter $\theta_0$, we can follow the softmax smoothing strategy from the previous section to perform inference on $V^\ast$ via a plug-in estimate.
Thus, we start by by identifying $\theta_0$ in terms of a Neyman orthogonal score. 




\begin{prop}
\label{prop:ortho_static_param}
Consider the score $\Psi$ defined as
\[
\Psi(Z; \theta, \mu, r) := \left\{(Y - \mu(X)) - \theta^\top(\phi(A, X) - r(X))\right\}\left(\phi(A, X) - r(X)\right),
\]
where the true nuisances are $\mu^\ast(x) := \E[Y \mid X = x]$ and $r^\ast(x) := \E[\phi(A, X) \mid X = x]$. Then, $\Psi$ is Neyman orthogonal at $(\theta_0, \mu^\ast, r^\ast)$. Further, when $\E[\Cov(\phi(A, X) \mid X)] \succ 0$, $\theta_0$ is the unique solution to the estimating equation $0 = \E[\Psi(Z; \theta_0, \mu^\ast, r^\ast)]$.
\end{prop}

Under standard convergence and regularity assumptions, we can use the score $\Psi$ to perform inference on the structural parameter $\theta_0$. Notably, we will be able to obtain asymptotically linear estimates for $\theta_0$ only assuming the regression $\wh{\mu}$ converges to to $\mu^\ast$ in $L^2$ at $o_\P(n^{-1/4})$ rates. This is uniformly slower than the rates required on the outcome regression in the previous section. The proof of the following theorem can be found in Appendix~\ref{app:treatment}

\begin{theorem}[Normality of Structural Parameter]
\label{thm:normal_static_param}
Let $Z_1, \dots, Z_n$ be i.i.d.\ random variables satisfying Assumptions~\ref{ass:static} and \ref{ass:static_param}, and let $\wh{\mu}_n$ and $\wh{r}_n$ be nuisance estimates for $\mu^\ast$ and $r^\ast$ that are independent of $Z_1, \dots, Z_n$. Suppose the following assumptions hold.
\begin{enumerate}
    \item \textit{(Nuisance Convergence)} We assume the nuisance estimates converge at sufficiently fast rates, in the particular that
    \[
    \|\wh{r}_{n} - r^\ast\|_{L^2(P_X)} = o_\P(n^{-1/4}) \qquad \text{and} \qquad \|\wh{r}_{n} - r^\ast\|_{L^2(P_X)}\times\|\wh{\mu}_n - \mu^\ast\|_{L^2(P_X)} = o_\P(n^{-1/2}).
    \]
    \item \textit{(Boundedness)} There is some universal constant $G > 0$ such that:
    \[
    \max_{k}\|\phi(k, X)\|_{L^\infty(P_X)}, \|\mu^\ast\|_{L^\infty(P_X)}, \|\wh{\mu}_n\|_{L^\infty(P_X)}, \|r^\ast\|_{L^\infty(P_X)}, \|\wh{r}_n\|_{L^\infty(P_X)}, \|Y\|_{L^\infty(P_Z)} \leq G.
    \]
    Additionally, the structural parameter $\theta_0$ satisfies $\|\theta_0\|_2 \leq G$.
    \item \textit{(Invertible Covariance)} The expected conditional covariance of the feature embedding is invertible, i.e.\ 
    \[
    \E[\Cov(\phi(A, X) \mid X)] \succ 0.
    \]
\end{enumerate}
Then, letting $\wh{\theta}_n$ be the solution to the empirical estimating equation $0 := \P_n \Psi(Z; \wh{\theta}_n,\wh{\mu}_n, \wh{r}_n)$, we have the following asymptotic linearity result:
\[
\sqrt{n}(\wh{\theta}_n  - \theta_0) := \frac{1}{\sqrt{n}}\sum_{i =1 }^n \rho_\theta(Z_i)+ o_\P(1),
\]
where $\rho_\theta(z) := \E[\Cov(\phi(A, X) \mid X)]^{-1}\Psi(Z; \theta_0, \mu^\ast, r^\ast)$. Consequently, asymptotic normality also holds: 
\[
\sqrt{n}(\wh{\theta}_n - \theta_0) \Rightarrow \calN\left(0, \Sigma^\ast\right)
\]
whenever $\Sigma^\ast := \Cov(\rho_\theta(Z_i)) \succ 0$.
\end{theorem}

Using Theorem~\ref{thm:normal_static_param}, we can directly obtain asymptotically linear estimates for the value $V^\ast$ as well. In particular, because the estimates for $\theta_0$ provided by the aforementioned theorem are asymptotically linear, we do not need to construct a Neyman orthogonal score for the value of the optimal treatment policy. We can simply plug-in our estimates and appropriately account for the additional introduced variance. This is detailed in the following corollary.

\begin{corollary}[Normality of Policy Value Estimate]
\label{cor:value_param_static}
For each $\beta > 0$, define the score $\Phi^\beta(Z; \theta)$ by
\[
\Phi^\beta(Z; \theta) := \smax^\beta_\ell\theta^\top \phi(\ell, X) - \theta^\top \phi(A, X) + Y.
\]
Let $\beta_n$ be a smoothing parameter and let $Z_1, \dots, Z_n$ and $\wh{\theta}_n$ be as in Theorem~\ref{thm:normal_static_param}.
Suppose the following conditions hold.
\begin{enumerate}
    \item \textit{(Controlled Density Near Zero)} For each $k \in [N]$, the random variable $\Delta_k := \max_\ell \theta_0^\top \phi(\ell, X)  - \theta_0^\top \phi(k, X)$ satisfies Assumption~\ref{ass:margin} with constants $c, H, \delta > 0$.
    \item \textit{(Growing Smoothing Parameters)} The smoothing parameter $\beta_n$ satisfies $\beta_n = \omega\left(n^{\frac{1}{2(1 + \delta)}}\right)$ and $\beta_n = o(n^{1/2})$.
    \item \textit{(Asymptotic Linearity of Structural Parameter)} The nuisance  $\wh{\theta}_n$ is asymptotically linear about $\theta_0$ with influence function $\rho_\theta(z)$, i.e.
    \[
    \sqrt{n}(\wh{\theta}_n - \theta_0) = \frac{1}{\sqrt{n}}\sum_{i = 1}^n \rho_\theta(Z_i) + o_\P(1).
    \]
    \item \textit{(Boundedness)} There is some universal constant $G > 0$ such that:
    \[
    \|Y\|_{L^\infty(P_Z)},, \|\wh{\theta}_n\|_2, \|\theta_0\|_2 \leq G.
    \]
\end{enumerate}
Then, the estimate $\wh{V}_n := \P_n \Phi^{\beta_n}(Z; \wh{\theta}_n)$ is asymptotically linear:
\[
\sqrt{n}(\wh{V}_n - V^\ast) = \frac{1}{\sqrt{n}}\sum_{i = 1}^n \rho_V(Z_i) + o_\P(1).
\]
Here, $\rho_V(z) := \Phi^\ast(z; \theta_0) - V^\ast + \E[\phi^\infty(X) - \phi(A, X)]^\top \rho_\theta(z)$ and $\phi^\infty(x) := \sum_{k = 1}^N \fraks_k\big(\theta_0^\top \phi(k, X)\big)\phi(k, X)$. This further implies $\sqrt{n}(\wh{V}_n - V^\ast) \Rightarrow \calN(0, \Sigma^\ast)$ whenever $\Sigma^\ast = \Var[\rho_V(Z)] > 0$.
\end{corollary}


We prove Corollary~\ref{cor:value_param_static} in Appendix~\ref{app:treatment} by checking the conditions of Theorem~\ref{thm:smooth_clt_generic}, our smoothed central limit theorem proved in Appendix~\ref{app:clt}. Further, we extend our estimator of the value of the optimal treatment policy under semi-parametric assumptions on the blip effect to dynamic settings in Appendix~\ref{app:dynamic}.

\section{Softmax Smoothing for Irregular Functionals}
\label{sec:irregular}

Up to this point, we have focused on performing inference on the value of an optimal treatment policy. We now show that softmax smoothing can be applied more broadly to classes of irregular functionals that are specified as the expected point-wise maximum of a collection of sufficiently regular scores. This framework will not only subsume the results in the previous section, but will also be applicable in many other settings as well. Throughout this section, we assume that observations take the generic form $Z$ and that we have $X \subset W \subset Z$ for a generalized set of covariates $X$ and an extended set of features $W$. Often, we will have $W = (X, A)$. Here, by  $X \subset W \subset Z$, we mean that $X$ consists of a subset of the coordinates of a vector $W$, and likewise $W$ consists of a sbuset of the coordinates of $Z$. We let $\calX$, $\calW$, and $\calZ$ denote the measurable spaces in which $X$, $W$, and $Z$ take their values. The goal is to perform inference on the maximum of the scores: 
\begin{equation}
\label{eq:gen_score}
V^\ast := \E\left[\max_{\ell}\psi_\ell(X; g_\ell^\ast)\right]
\end{equation}
Here, for each $k \in [N]$, $\psi_k$ is a score depending on the data vector $X$ and a nuisance component $g_k : \calW \rightarrow \R^{d_k}$, and $g_k^\ast(w) \in L^2(P_{W}; \R^{d_k})$ denotes some true, unknown nuisance function. We let $\psi(X; g) := (\psi_1(X; g_1), \dots, \psi_N(X; g_N))$ for convenience throughout, where $g = (g_1, \dots, g_N)$.
In the example of the value of an optimal treatment policy, for each $k \in [N]$, we have $W = (A, X)$, $g_k^\ast(a, x) = Q^\ast(a, x)= \E[Y \mid X =x, A = a]$, and $\psi_k(x; g_k) = g_k(k, x)$. We describe how to instantiate this framework with respect to two other examples, conditional Balke and Pearl bounds~\citep{levis2023covariate} and $L^1$ calibration error \citep{gupta2022post}, in the sequel. Throughout this section, we make the following assumption on the constituent scores $\psi_1(x; g_1), \dots, \psi_N(x; g_N)$.

\begin{assumption}
\label{ass:constituent_scores}
For any $k \in [N]$, we assume that the score $\psi_k(x; g_k)$ satisfies the following.
\begin{enumerate}
    \item There is a $Z$-measurable, square-integrable random variable $U_k \in \R^{d_k}$ such that $g_k^\ast(w) = \E[U_k \mid W = w]$.
    \item The score $\psi_k$ is affine in $g_k$. Further, for any $\nu \in L^2(P_W; \R^{d_k})$, defining $\ell_{X, \nu}^k : \R \rightarrow \R$ by $\ell_{X, \nu}^k(\epsilon) := \psi_k(X; g_k^\ast + \epsilon \nu)$, there is $F_\nu^k : \calX \rightarrow \R_{\geq 0}$ such that
    \[
    \sup_{|\epsilon| \leq 1}\left|\partial_\epsilon\ell_{\nu, X}^k(\epsilon)\right| \leq F_\nu^k(X)\;\;\text{a.s., and } F_{\nu}^k(X) \in L^1(P_X).
    \]
    \item The first Gateaux derivative of $\psi_k$ admits a conditional Riesz representer, i.e.\ there is a square-integrable function $\zeta_k^\ast : \calW \rightarrow \R^{d_k}$ such that, for all $\omega_k \in L^2(P_{W}; \R^{d_k})$,
    \[
    D_{g_k}\psi_k(X; g_k^\ast)(\omega_k) = \E[\zeta_k^\ast(W)^\top \omega_k(W)\mid X] \;\;\text{almost surely}
    \]

\end{enumerate}
\end{assumption}

In words, the first assumption just states that each $g_k^\ast$ is a regression of some random variables $U_k$ onto the extended set of covariates $W$. As noted above, this assumption is trivially satisfied for the value of the optimal treatment policy. The first part of the second assumption says that each score is affine in the nuisance component $g_k$. Given $\psi_k(x; Q) = Q(k, x)$, this is trivially satisfied for the value of the optimal treatment policy. The second part ensures that each constituent score is sufficiently regular to exchange derivatives with expectations. In particular, it will hold whenever 
\begin{equation}
\label{eq:affine-score}
\psi_k(X; g_k) = c_{k, 0} + \sum_{p = 1}^{P_k}c_{k, p}g_{k, i_p}(a_p, X) 
\end{equation}
for an integer $P_k$, constants $c_{k, 0}, \dots, c_{k, P_k} \in \R$, indices $i_1, \dots, i_p \in [d_k]$, and values $a_1, \dots, a_{P_k} \in \calA$. This is the case for the optimal policy value and all examples in the sequel. Finally, the third assumption states that the scores are sufficiently regular to admit a Riesz representer conditional on $X$. We have $\zeta_k^\ast(a, x) = \frac{\mathbbm{1}\{a = k\}}{p^\ast(k \mid x)}$ for the optimal policy value. We derive the representers for conditional Balke and Pearl bounds and the $L^1$ calibration error in the sequel. 

While estimands such as the one outlined in Equation~\eqref{eq:gen_score} have been considered by \citet{semenova2023aggregated}, the existing results have several key limitations. First, all convergence rates of nuisance estimates must be $o(n^{-1/4})$ in the $L^\infty(P_Z)$ norm, which can be a strong requirement when ML learners are used.
Second, the argument maximizer is assumed to be \textit{unique almost surely}, e.g.\ in the case covered in the previous sections the probability of non-response to treatment is assumed to be zero. In this section, we use softmax smoothing to construct confidence intervals (a) only assuming in probability rates of convergence for nuisance estimates and (b) allowing for an arbitrarily large probability of the argument maximizer being non-unique.  
Once again, the first step in our recipe is to construct a softmax approximation to the main objective. To this end, we define
\[
V^\beta := \E\left[\smax^\beta_\ell \psi_\ell(X; g_\ell^\ast)\right].
\]
As before, the quantity $\E[\smax^\beta_\ell \psi_\ell(X; g_\ell)]$ is, without modification, highly sensitive to misestimation in the nuisance components $g_1, \dots, g_N$. Thus, we should subtract a first order correction/de-biasing term to arrive at a Neyman orthogonal score. The following proposition accomplishes precisely this --- we provide a proof in Appendix~\ref{app:irregular}.
\begin{prop}
\label{prop:general_score}
For any $\beta > 0$, consider the score 
\begin{equation}
\label{eq:smooth_score_irregular}
\Psi^\beta(Z; g, \alpha) := \smax^\beta_\ell \psi_\ell(X; g_\ell) + \sum_{\ell = 1}^N \alpha_\ell(W)^\top(U_\ell - g_\ell(W))
\end{equation}
where the $k$th additional nuisance $\alpha_k^\beta : \calW \rightarrow \R^{d_k}$ is given by
\[
\alpha_k^\beta(w) := \left(\partial_k \smax^\beta_\ell\psi_\ell(x; g_\ell^\ast)\right)  \zeta_k^\ast(w),
\]
and $\partial_k\smax_\ell^\beta\{u\}$ is as given in Equation~\eqref{eq:softmax_partial}. Then, $\Psi^\beta$ is Neyman orthogonal, and we have $V^\beta = \E\left[\Psi^\beta(Z; g^\ast, \alpha^\beta)\right]$.


\end{prop}

Using the above Neyman orthogonal score, we can arrive at a general asymptotic linearity/normality result that allows one to construct confidence intervals for broad classes of irregular statistical parameters. Of particular importance will be the following, limiting score
\[
\Psi^\ast(Z; g, \alpha) = \max_\ell\{\psi_\ell(X; g_\ell)\}  + \sum_{\ell = 1}^N \alpha_\ell(W)^\top(U_\ell - g_\ell(W)),
\]
where the limiting $k$th nuisance $\alpha^\ast_k$ is given as the point-wise limit $\alpha^\ast_k(W) := \lim_{\beta \rightarrow \infty}\alpha^\beta_k(W)$. This limit is precisely the quantity
\[
\alpha^\ast_k(W) = \fraks_k(\psi_\ell(X; g_\ell^\ast))\zeta^\ast_k(W) 
\]
where, as before, $\fraks_k(u) := \frac{\mathbbm{1}\{k \in \arg\max_\ell\{u_1, \dots, u_N\}\}}{|\arg\max_\ell\{u_1, \dots, u_N\}|}$. This score is reminiscent of the influence function discussed following  Theorem~\ref{thm:normal_static} (up to the centering at $V^\ast$), just generalized to a broader class of irregular parameters. In particular, the same asymptotic interpretation holds for $\Psi^\ast$  --- in the limit of large $\beta$, the estimator breaks ties via uniform randomization. We now state our main result, which we prove in Appendix~\ref{app:irregular}

\begin{theorem}
\label{thm:normal_irregular}
Let $\beta_n > 0$ be a smoothing parameter and define $d := d_1 + \cdots + d_N$. Let $Z_1, \dots, Z_n$ be i.i.d.\ random variables, and let $\wh{g}_n, \wh{\alpha}_n : \calW \rightarrow \R^{d}$  be estimates for $g^\ast := (g_\ell^\ast : \ell \in [N])$ and $\alpha^{\beta_n} := (\alpha_\ell^{\beta_n} : \ell \in [N])$ that are independent of $Z_1, \dots, Z_n$. Suppose the following conditions hold.
\begin{enumerate}
    \item \textit{(Controlled Density Near Zero)} Let $M := \max_\ell \psi_\ell(X; g_\ell^\ast)$ and for each $k$ let $\Delta_k := M - \psi_k(X; g_k^\ast)$. We assume for each $k$ that $\Delta_k$ satisfies Assumption~\ref{ass:margin} with constants $c, H, \delta > 0$.
    \item \textit{(Growing Smoothing Parameters)} The smoothing parameter $\beta_n$ satisfies $\beta_n = \omega\left(n^{\frac{1}{2(1 + \delta)}}\right)$.
    \item \textit{(Score Continuity)} Each constituent score $\psi_k(Z; g_k)$ is Lipschitz continuous in $g_k$, i.e.
    \[
    \|\psi_k(X; g_k) - \psi_k(X; g'_k)\|_{L^2(P_X)} \lesssim \|g_k - g_k'\|_{L^2(P_{W})}.
    \]
    \item \textit{(Nuisance Rates)} We have for each $k \in [N]$ that 
    \[
    \|\wh{g}_{n, k} - g^\ast_k\|_{L^2(P_W)} = o_\P(n^{-1/4}\beta_n^{-1/2}) \quad \text{and} \quad \|\wh{\alpha}_{n, k} - \alpha^{\beta_n}_k\|_{L^2(P_W)}\times\|\wh{g}_{n, k} - g^\ast_k\|_{L^2(P_W)} = o_\P(n^{-1/2}).
    \]
    Further, we assume nuisance consistency in $L^2$, i.e.\ $\|\wh{g}_n - g^\ast\|_{L^2(P_W)} = o_\P(1)$ and $\|\wh{\alpha}_n - \alpha^{\beta_n}\|_{L^2(P_W)} = o_\P(1)$.\footnote{Here, we are viewing $g^\ast$ and $\wh{g}_n$ are functions in $L^2(P_W; \R^d)$.}
    \item \textit{(Boundedness)} There is a constant $G > 0$ such that
    \[
    \|g^\ast\|_{L^\infty(P_W)}, \|\wh{g}_n\|_{L^\infty(P_W)}, \|\alpha^{\beta_n}\|_{L^\infty(P_W)}, \|\wh{\alpha}_n\|_{L^\infty(P_W)}, \|U_k\|_{L^\infty(P_Z)} \leq G.
    \]
    \end{enumerate}
Then, the estimate $\wh{V}_n := \P_n \Psi^{\beta_n}(Z; \wh{g}_n, \wh{\alpha}_n)$ is asymptotically linear, i.e.
\[
\sqrt{n}(\wh{V}_n - V^\ast) = \frac{1}{\sqrt{n}}\sum_{i = 1}^n\rho_V(Z_i) + o_\P(1),
\]
where $\rho_V(Z) := \Psi^\ast(Z; g^\ast, \alpha^\ast) - V^\ast$ and $\Psi^\ast$ and $\alpha^\ast$ are as defined above. Consequently, asymptotic normality also holds
\[
\sqrt{n}(\wh{V}_n - V^\ast)\Rightarrow \calN(0, \Sigma^\ast),
\]
whenever $\Sigma^\ast := \Var[\rho_V(Z)] > 0$.

\end{theorem}

Most of the assumptions above are commonly made in the causal inference literature. The most peculiar of the assumptions is that of score continuity (Assumption 3), as the quantity on the left-hand side only involves the random covariates $X$, whereas the quantity of the right involves the extended set of features $W$. In the context of Equation~\eqref{eq:affine-score}, this assumption ensures that each value $a_{k, p} \in \calA$ in  the score $\psi_k$ occurs with conditional probability bounded away from zero given $X$. In the example of the value of the optimal treatment policy, we have $\psi_k(X; Q) = Q(k, X)$, and so mean-squared continuity is implied by the assumption of strong positivity. A similar positivity assumption will ensure mean-squared continuity in the example of conditional Balke and Pearl bounds discussed below. For the other example, that of $L^1$ calibration error, we will have $W = X$, and so the assumption will follow trivially. 

In the above, we have chosen to assume the $\psi_k(X; g_k)$ are affine for ease of exposition. In principle, one could generalize our result to non-affine scores that are sufficiently regular. 
More specifically, one would need to assume the $\ell_{\nu, x}^k(\epsilon)$ as defined above are twice continuously differentiable and that there is another integrable function $G_\nu^{k}(X)$ such that $\sup_{|\epsilon|\leq 1}\|\partial_\epsilon^2 \ell^k_{\nu, X}(\epsilon)\|_2 \leq G_{\nu}^k(X)$. The first assumption allows for twice Gateaux differentiability, and the second envelope enables the exchange of second derivatives with expectations. One would also need to assume conditions analogous to mean-squared continuity for the first and second Gateaux derivatives of $\psi_k(X; g)$. Given that the examples discussed in the paper all fall into the setting of affine scores, we avoid introducing additional, technical assumptions.

As before, the above theorem yields a corollary that allows for the practical construction of asymptotically valid confidence intervals using a plug-in estimate of the variance. 
\begin{corollary}
\label{cor:normal_irregular}
Under the conditions of Theorem~\ref{thm:normal_irregular}, letting $\wh{\Sigma}_n := \frac{1}{n}\sum_{m = 1}^n\left\{ \Psi^{\beta_n}(Z_m; \wh{g}_n, \wh{\alpha}_n) - \wh{V}_n\right\}^2$, we have 
\[
\sqrt{n}\wh{\Sigma}_n^{-1/2}(\wh{V}_n - V^\ast) \Rightarrow \calN(0, 1).
\]
Consequently, for any $\delta \in (0, 1)$, the set
\[
C_{1-\delta} := \left[\wh{V}_n - \frac{\wh{\Sigma}_n^{1/2}}{\sqrt{n}}z_{1 -\delta/2}, \wh{V}_n + \frac{\wh{\Sigma}_n^{1/2}}{\sqrt{n}}z_{1 -\delta/2}\right]
\]
forms an asymptotic $1 - \delta$ confidence interval for $V^\ast$, where again $z_{1 -\delta/2}$ is the $(1 - \delta/2)$th quantile of the standard normal distribution.
\end{corollary}

\begin{rmk}
\label{rmk:cross-fit}
In the statements of both of our main theorems (Theorem~\ref{thm:normal_static} and Theorem~\ref{thm:normal_irregular}), we have assumed that the nuisance estimates are constructed on an independent sample of data. In practice, one will typically want to use cross-fitting \citep{chernozhukov2018double} to produce the estimate $\wh{V}_n$, as this will make the most efficient use of data. In short, cross-fitting works as follows. Let $K$ be  a fixed number of folds, which we assume for simplicity divides $n$.\footnote{If the sample size is not divisible by $k$, one can still guarantee the size of all folds are within one sample of each other, which will not impact asymptotic analysis.} 
\begin{enumerate}
    \item Given a sample $Z_1, \dots, Z_n$, splits the indices $[n]$ into $K$ disjoint sets (or ``folds'') $\calI_1, \dots, \calI_K$ of size $n/K$. 
    \item Letting $\calI_{-k} := [n]\setminus \calI_k$, we fit nuisance estimates $\wh{g}^{(-k)}_n, \wh{\alpha}^{(-k)}_n$ of $g^\ast$ and $\alpha^{\beta_n}$ using ML methods on samples with indices in $\calI_{-k}$.
    \item We return the estimate $\wh{V}_n$ given by
    \[
    \wh{V}_n := \frac{1}{n}\sum_{k = 1}^K\sum_{i \in \calI_k}\Psi^{\beta_n}\left(Z_i; \wh{g}^{(-k)}_n, \wh{\alpha}^{(-k)}_n\right).
    \]
\end{enumerate}
If, for each $k \in [K]$, the nuisance estimates $\wh{g}^{(-k)}_n$ and $\wh{\alpha}^{(-k)}_n$ satisfy the assumptions of Theorem~\ref{thm:normal_irregular}, then it is simple to show that
\[
\sqrt{n}(\wh{V}_n - V^\ast) = \frac{1}{\sqrt{n}}\sum_{m = 1}^n \rho_V(Z_m) + o_\P(1).
\]
Further, one will have
\[
\sqrt{n}\wh{\Sigma}_n^{-1/2}(\wh{V}_n - V^\ast) \Rightarrow \calN(0, 1) \quad \text{where} \quad \wh{\Sigma}_n := \frac{1}{n}\sum_{k = 1}^K \sum_{i \in \calI_k}\left\{\Psi^{\beta_n}(Z_i; \wh{g}^{(-k)}_n, \wh{\alpha}^{(-k)}_n) - \wh{V}_n\right\}^2,
\]
a fact with which one can construct confidence intervals in the same manner as outlined in Corollary~\ref{cor:normal_irregular}. We provide a brief proof of this in Appendix~\ref{app:irregular} in Corollary~\ref{cor:cross_fit}.

\end{rmk}

\subsection{An Application to Conditional Balke and Pearl Bounds}
\label{sec:irregular:balke}
Throughout this work, we have focused on $V^\ast := \E[\max_\ell\{Q^\ast(\ell,X)\}]$ as an example of an irregular parameter. We now discuss another example focused on the partial identification of the average treatment effect (ATE) in instrumental variable settings. We largely follow the exposition given in \citet{levis2023covariate}, and recommend \citet{balke1994counterfactual, balke1995probabilistic, balke1997bounds} for further exposition and history. In general, one can consider both upper and lower bounds on the average treatment effect --- we solely discuss lower bounds as results for upper bounds are exactly analogous. The main contributions of this application in comparison to the results of \citet{levis2023covariate} are that (a) our main theorem allows one to select $(\beta_n)_{n \geq 1}$  to guarantee asymptotic normality around the un-smoothed, population Balke and Pearl bounds (defined below) as opposed to smoothed analogues, and (b) the theorem holds even when the argument maximizer used in defining the Balke and Pearl bounds is non-unique with arbitrarily large probability. 

For this example, observations are of the form $Z = (X, D, V, Y)$, $X \in \calX$ represent covariates, $D \in \{0, 1\}$ represents a binary treatment, $V \in \{0, 1\}$ a binary instrument, and $Y \in \{0, 1\}$ a binary outcome. For simplicity, we let $W = (X, V)$ denote the tuple consisting of covariates and instrument. We make the following assumptions on the vector of observations $Z$.

\begin{assumption}
\label{ass:balke}
We assume the following hold for the observations $Z = (X, D, V, Y)$ outlined above.
\begin{enumerate}
    \item \textit{(Consistency)} There exist potential outcomes $D(v)$ and $Y(d, v)$ for $d, v \in \{0, 1\}$ such that $D = D(V)$ and $Y = Y(D, V)$.
    \item \textit{(Exclusion Restriction)} The potential outcomes only depend on $v$ through $d$, i.e.\ we have $Y(0, d) = Y(1, d)$. 
    \item \textit{(Conditional Ignorability)} The instrument is conditionally independent of the potential outcomes for treatment and outcome given covariates, i.e.\ $V \independent D(v), Y(v, D(v)) \mid X$.
    \item \textit{(Strong Positivity)} Let $p^\ast(v \mid x) := \P(V = v \mid X = x)$. We assume $\eta \leq p^\ast(1\mid X) \leq 1 - \eta$ almost surely for some $\eta \in (0, 1/2]$.
\end{enumerate}

\end{assumption}

Because of the second assumption, it makes sense to define the natural potential outcomes $Y(d) := Y(0, d) = Y(1, d)$. The goal is to provide a lower bound on the average treatment effect $\E[Y(1) - Y(0)]$, which in general is not point identified. We use a the construction of \citet{levis2023covariate} to fit this example into the framework of general irregular functionals. 

For any $d, y \in \{0, 1\}$, define the nuisance function $q_{yd}^\ast(x, v) := \P(Y = y, D = d \mid X = x, V = v)$. Next, define the scores $\psi_1(x; q), \dots, \psi_8(x; q)$ as follows:
\begin{align*}
\psi_1(x; q) &= q_{11}(x, 1) + q_{00}(x, 0) - 1 & \psi_2(x; q) &= q_{11}(x, 0) + q_{00}(x, 1) - 1 \\
\psi_3(x; q) &= -q_{01}(x, 1) - q_{10}(x, 1) & \psi_4(x; q) &= -q_{01}(x, 0) - q_{10}(x, 0) \\
\psi_5(x; q) &= q_{11}(x, 0) - q_{11}(x, 1) - q_{10}(x, 1) - q_{01}(x, 0) - q_{10}(x, 0) \\
\psi_6(x; q) &= q_{11}(x, 1) - q_{11}(x, 0) - q_{10}(x, 0) - q_{01}(x, 1) - q_{10}(x, 1)\\
\psi_7(x; q) &= q_{00}(x, 1) - q_{01}(x, 1) - q_{10}(x, 1) - q_{01}(x, 0) - q_{00}(x, 0) \\
\psi_8(x; q) &= q_{00}(x, 0) - q_{01}(x, 0) - q_{10}(x, 0) - q_{01}(x, 1) - q_{00}(x, 1).
\end{align*}
In the above, we write $q := (q_{yd} : (y, d) \in \{0, 1\}^2)$ and let $q^\ast$ denote the full-vector of true, unknown regressions. While each score only depends on a subset of the regressions, we let the scores depend on the full vector for simplicity.
\citet{levis2023covariate} show that one has the following lower bound on the ATE:
\[
V^\ast := \E[\max_{\ell = 1}^8 \psi_\ell(X; q^\ast)] \leq \E[Y(1) - Y(0)].
\]
Our goal is to show that our smoothing approach can be used to perform inference on the lower Balke and Pearl bound $V^\ast$. In the following corollary, for each $\ell \in [8]$, we leverage the representer $\zeta^\ast_\ell(V,X) \in \R^4$ whose coordinates are defined as
\begin{equation}
\label{eq:balke-pearl-representer}
\left(\zeta_\ell^\ast(V, X)\right)_{y, d} := \sgn_\ell(y, d, 0) \frac{\1\{V = 0\}}{p^\ast(0 \mid X)} + \sgn_\ell(y, d, 1)\frac{\1\{V = 1\}}{p^\ast(1 \mid X)},
\end{equation}
where $\sgn_\ell(y, d, v) = 1$ if $q_{yd}(x, v)$ appears in the definition of $\psi_\ell$ with positive sign, $\sgn_\ell(y, d, v) = -1$ if it appears with negative sign, and $\sgn_\ell(y, d, v) =0$ otherwise.
\begin{corollary}
\label{cor:balke-and-pearl}
Let $Z_1, \dots, Z_n$ be i.i.d.\ random variables satisfying Assumption~\ref{ass:balke}. Suppose the random variables $\Delta_k := \max_{\ell = 1}^8 \psi_\ell(X; q^\ast) - \psi_k(X; q^\ast)$ satisfy Condition 1 of Theorem~\ref{thm:normal_irregular} with constant $\delta > 0$ and that $\beta_n$ satisfies Condition 2 of the same theorem. For any $k \in [8]$, define the smoothed nuisance function $\alpha^\beta_k$ by
\[
\alpha^\beta_k(X, V) := \left(\partial_k \smax^\beta_\ell\psi_\ell(X; q^\ast)\right)  \zeta_k^\ast(V, X),
\]
where $\zeta_k^\ast$ is as outlined in Equation~\eqref{eq:balke-pearl-representer}. Further, for $\ell \in [8]$, let  $\wh{q}_{n}$ and $\wh{\alpha}_{n, \ell}$ be nuisance estimates of $q^\ast$ and $\alpha^{\beta_n}_\ell$ that are independent of $Z_1, \dots, Z_n$ and satisfy Conditions 4 and 5 of Theorem~\ref{thm:normal_irregular}. Then, letting $\Psi^\beta$ be as defined in Equation~\eqref{eq:smooth_score_irregular} and $\wh{V}_n := \P_n \Psi^{\beta_n}(Z; \wh{q}_n, \wh{\alpha}_n)$, we have
\[
\sqrt{n}(\wh{V}_n - V^\ast) \Rightarrow \calN(0, \Sigma^\ast),
\]
when $\Sigma^\ast := \Var[\rho_V(Z)] > 0$. Here, $\rho_V(Z) = \Psi^\ast(Z; q^\ast, \alpha^\ast) - V^\ast$ and the limiting score $\Psi^\ast$ is given by
\[
\Psi^\ast(Z; q^\ast, \alpha^\ast) = \max_\ell \psi_\ell(X; q^\ast) + \sum_{\ell = 1}^8 \fraks_\ell(\psi_k(X; q_k^\ast))\cdot\left(\sum_{(y, d, v)}\sgn_\ell(y, d, v)\frac{\mathbbm{1}\{V = v\}}{p^\ast(v \mid X)}\left\{U_{yd} - q_{yd}(X, V)\right\}\right).
\]
where $U_{yd} := \1\{D = d, Y = y\}$.
Further, per the results of \citet{levis2023covariate}, when $\arg\max_k \psi_k(X; q^\ast)$ is a singleton almost surely, $\rho_V(Z)$ is precisely the efficient influence function.
\end{corollary}

We prove Corollary~\ref{cor:balke-and-pearl} by first showing that, under Assumption~\ref{ass:balke}, the constituent scores satisfy Assumption~\ref{ass:constituent_scores}. We then check the relevant conditions of Theorem~\ref{thm:normal_irregular}. 
We provide a proof in Appendix~\ref{app:irregular:app}.

\subsection{An Application to Estimating Calibration Error}

We now consider an application of Theorem~\ref{thm:normal_irregular} related to the calibration literature. Calibration is a notion of consistency for an estimator $\theta$ that establishes the trustworthiness of its predictions~\citep{kumar2019verified, gupta2022post}. Formally, perfect calibration is defined as follows. 
\begin{definition}
\label{def:calibration}
An estimator $\theta : \calO \rightarrow \R$ of a regression $\theta_0(o) := \E[Y \mid O = o]$ is said to be perfectly calibrated if 
\[
\E[Y \mid \theta(O)] = \theta(O) \text{ almost surely}.
\]
\end{definition}
We use $O$ in this section to represent observed covariates to prevent notational collision with $X$ as used in Theorem~\ref{thm:normal_irregular} and Assumption~\ref{ass:constituent_scores}. 
In words, $\theta$ is perfectly calibrated if, on average, whenever it makes a prediction $\theta(O) = x$, the true outcome is also $x$. Given most estimators will not be perfectly calibrated in practice, various notions of approximate calibration have been developed in the calibration literature. Perhaps the most commonly used approximation is the $L^1$ calibration error, which aims to capture the ``distance'' from perfect calibration by measuring the absolute deviations from perfect calibration on each level set of $\theta$~\citep{naeini2015obtaining, gupta2022post}. 
Other analogous notions, such as $L^2$ calibration error \citep{kumar2019verified, whitehouse2024orthogonal, lee2023t, van2023causal}, have been proposed in the literature as well, but we do not define those in this paper.
The $L^1$ calibration error is defined as follows.

\begin{definition}
\label{def:cal_error}
The $L^1$ calibration error of an estimator $\theta : \calO \rightarrow \R$ of some regression $\theta_0(o) := \E[Y \mid O = o]$ is defined as 
\[
\Cal_1(\theta) := \E\left|\theta(O) - \chi^\ast(\theta(O))\right|,
\]
where $\chi^\ast(x) := \E[Y \mid \theta(O) = x]$ for any $x \in \range(\theta)$.
\end{definition}

The goal of this subsection is to use Theorem~\ref{thm:normal_irregular} to construct asymptotically-valid confidence intervals for the $L^1$ calibration error of a fixed (i.e.\ non-random) model $\theta : \calO \rightarrow \R$. Estimating the $L^1$ calibration error is understood to be a hard problem in the calibration literature, and \citet{gupta2022post} note that it is impossible to construct unbiased estimates in general. It is also non-trivial to construct asymptotically-normal estimate for $\Cal_1(\theta)$. First, 
the calibration function $\chi^\ast$, which is the projection of $Y$ onto the space of $\theta(O)$-measurable functions, is an unknown nuisance component and thus must be estimated from data. Second, the $L^1$ calibration error is defined as the expected absolute deviation between $\theta$ and $\chi^\ast$, and is thus not differentiable when both are equal to zero. This could happen when the estimator is perfectly calibrated, or when it is calibrated on certain subsegments of the population. In sum, performing inference on $\Cal_1(\theta)$ is a semi-parametric problem that falls into the scope of Theorem~\ref{thm:normal_irregular}. To prove our result, we just make the following  assumptions.

\begin{assumption}
\label{ass:calibration}
We assume the learner observes i.i.d.\ samples of the form $Z = (O, Y)$ as outlined above and that there is an absolute constant $G > 0$ such that $|Y| \leq G$ and $|\theta(O)| \leq G$ almost surely. 
\end{assumption}

With the above assumption, we can now present the main result of this subsection. For this example, we take $X = W := \theta(O)$, and take the outcome to be simply $Y$. The constituent scores are $\psi_1(x; \chi), \psi_2(x; \chi)$, defined respectively as
\[
\psi_1(x; \chi) := \chi(x) - x \quad \text{and} \quad \psi_2(x; \chi) := x - \chi(x).
\]
One can check the identity
\[
\Cal_1(\theta) := \E|\theta(O) - \chi^\ast(\theta(O))| = \E\big[\max\{\psi_1(\theta(O); \chi^\ast), \psi_2(\theta(O); \chi^\ast)\}\big].
\]
The following corollary follows in a straightforward manner from Theorem~\ref{thm:normal_irregular}. We prove the following result in Appendix~\ref{app:irregular:app}.
\begin{corollary}
\label{cor:l1-calibration}
Let $Z_1, \dots, Z_n$ be i.i.d.\ random variables satisfying Assumption~\ref{ass:calibration}. Suppose $\Delta := |\theta(O) - \chi^\ast(\theta(O))|$ satisfies the first condition of Theorem~\ref{thm:normal_irregular} with constant $\delta > 0$ and that $\beta_n$  satisfies the second condition of the same theorem. For any $\beta > 0$, we define the representers $\alpha^\beta_1$ and $\alpha^\beta_2$ respectively as 
\[
\alpha^\beta_1(X) := \partial_1\smax^\beta\{\psi_1(X; \chi^\ast), \psi_2(X; \chi^\ast)\} \quad \text{and} \quad \alpha^\beta_2(X) := - \partial_2 \smax^\beta\{\psi_1(X; \chi^\ast), \psi_2(X; \chi^\ast)\},
\]
where again $X := \theta(O)$. Let  $\wh{\chi}_n$ and $\wh{\alpha}_{n}$ be estimates of $\chi^\ast$ and $\alpha^{\beta_n}$ that are independent of $Z_1, \dots, Z_n$ and satisfy Assumptions 4 and 5 of Theorem~\ref{thm:normal_irregular}. Then, letting $\Psi^\beta$ be as defined in Equation~\eqref{eq:smooth_score_irregular} and $\wh{\Cal}_{1, n}(\theta) := \P_n \Psi^{\beta_n}(Z; \wh{\chi}_n, \wh{\alpha}_n)$, we have
\[
\sqrt{n}(\wh{\Cal}_{1, n}(\theta) - \Cal_1(\theta)) \Rightarrow \calN(0, \Sigma^\ast),
\]
when $\Sigma^\ast := \Var[\rho_{\Cal}(Z)] > 0$. Here $\rho_{\Cal}(Z) = \Psi^\ast(Z; \chi^\ast, \alpha^\ast) - \Cal_1(\theta)$, $\Psi^\ast$ is given by
\begin{align*}
\Psi^\ast(Z; \chi^\ast, \alpha^\ast) &= \left|\chi^\ast(\theta(O)) - \theta(O)\right| + \sgn(\chi^\ast(\theta(O)) - \theta(O))\left\{Y - \chi^\ast(\theta(O))\right\} \\
&= \sgn(\chi^\ast(\theta(O)) - \theta(O))\left\{Y - \theta(O)\right\},
\end{align*}
and $\sgn(x) := \1\{x > 0\} - \1\{x < 0\}$. 

\end{corollary}


\section{Validation of Coverage}
\label{sec:experiments}
We now evaluate the coverage of our softmax smoothing methods from Section~\ref{sec:static} in both non-parametric and semi-parametric settings. We also also measure the sensitivity of coverage to the choice of smoothing parameter $\beta$. Analyzing this sensitivity is important for practical applications of our estimator, as the exponent $\delta$ that governs the density of the sub-optimality gaps may not be known to the user.

Throughout this section, we study a setting with a binary treatment for simplicity. To align with the majority of the causal inference literature, we let our action set in this case be denoted by $\{0, 1\}$ and let $\tau^\ast(x) := \E[Y(1) - Y(0) \mid X = x] = Q^\ast(1, x) - Q^\ast(0, x)$ denote the CATE. We start by describing data generating processes in both the non-parametric and semi-parametric settings which ensure Assumption~\ref{ass:static} holds.

\subsection{Data Generating Processes}

\paragraph{Non-Parametric Data Generating Process}

We describe how a single sample $Z := (X, A, Y)$ is generated in the non-parametric setting. We first define the abstract data-generating process, and then specify the parameter values used in our experiments. Let
\begin{align*}
&C_1, C_2 \sim \Unif([0,1]), \\
&W = (W_1,\dots,W_d) \sim \calN(0,I_d), \\
&X := (C_1, C_2, W),
\end{align*}
where all random variables are mutually independent. Fix parameters $\pi_0 \in (0,1)$, $\rho \in (0,1)$, $\delta > 0$, and $t_0 > 0$. In our applications, we take $d = 3, \pi_0 = 0.2, \rho = 0.3$, and $t_0 = 0.3$. Further, define
\[
U := \frac{C_1 - \pi_0}{1-\pi_0}.
\]
We let the baseline regression under control be given by
\[
Q^\ast(0,X) := 1.0 + 0.5W_1,
\]
and define the CATE $\tau^\ast(X)$ by
\begin{align*}
\tau^\ast(X) := \begin{cases}
0 & \text{if } C_1 \leq \pi_0,\\[4pt]
\sgn(C_2 - 0.5)\, t_0\left(\frac{U}{\rho}\right)^{1/\delta} & \text{if } C_1 > \pi_0 \text{ and } U \leq \rho,\\[8pt]
\sgn(C_2 - 0.5)\left(t_0 + h(W_1)\right) & \text{if } C_1 > \pi_0 \text{ and } U > \rho,
\end{cases}
\end{align*}
where $h:\R \to \R_{\geq 0}$ is a non-negative function, which we take to be $h(W_1) := 0.5|W_1|$.  We then define
\[
Q^\ast(1,X) := Q^\ast(0,X) + \tau^\ast(X).
\]
Treatments are sampled in the following manner:
\[
A \mid X \sim \Bern(p^\ast(X)),\;\; \text{where } p^\ast(X) = \frac{1}{1 + \exp\{-0.75(-0.2 + W_1)\}},
\]
and outcomes are generated as
\[
Y := Q^\ast(0,X) + A\cdot \tau^\ast(X) + \sigma\epsilon,
\qquad
\epsilon \sim \calN(0,1),
\]
where $\epsilon$ is independent of $(X,A)$ and we  take $\sigma = 0.3$ in our settings.

First, because $\tau^\ast(X) = 0$ whenever $C_1 \leq \pi_0$, we have $\P(\tau^\ast(X)=0)=\pi_0>0$, and thus there is a positive probability of non-response to treatment. Second, one can check that the random variable $|\tau^\ast(X)|$ admits a Lebesgue density $f_\tau(u)$ on the interval $(0, t_0)$ that satisfies $f_\tau(u) \lesssim u^{\delta - 1}$. That is, the data-generating process satisfies the density condition from Assumption~\ref{ass:margin} with parameter $\delta$.

\paragraph{Semi-Parametric Setting}
For the semi-parametric setting, a single sample $Z := (X, A, Y)$ is generated as follows. 
\begin{align*}
&X = (X_1, X_2, \dots, X_5) \sim \calN(0, I_5)\\
&\epsilon \sim \calN(0, 1)\\
&A \sim \Bern\left(\frac{1.0}{1.0 + \exp\{X_1\}}\right) \\
&Y = X_1 + \theta_0 A (X_2 + 1.0) + \epsilon
\end{align*}
where we set the structural parameter as $\theta_0 = 1.0$. In this setting, there is zero-probability of non-response, and because the blip effect using $A = 0$ as a reference is precisely the CATE, we see that $\tau^\ast(X) = \theta_0(X_2 + 1.0)$ and thus Assumption~\ref{ass:margin} is satisfied with $\delta = 1.0$ (corresponding to a bounded density).

\subsection{Method of Evaluation}

\begin{table}[t]
  \centering
  \begin{tabular}{c|c|c|c|c}
    \toprule
    $\delta^{\ass}$
      & {$\delta^{\true} = 0.75$}
      & {$\delta^{\true} = 1.0$}
      & {$\delta^{\true} = 1.25$}
      & {$\delta^{\true} = 1.0$} \\
    & $n = 5,000$ (NP)
    & $n = 5,000$ (NP)
    & $n = 5,000$ (NP)
    & $n = 5,000$ (SP)\\
    \midrule
    $\delta^{\true} - 0.1$ 
      & $0.932\in(0.894, 0.957)$ 
      & $0.932\in(0.894, 0.957)$ 
      & $0.928\in(0.890, 0.954)$
      & $0.944\in(0.908, 0.966)$ \\
    $\delta^{\true}$ 
      & $0.932\in(0.894, 0.957)$ 
      & $0.932\in(0.894, 0.957)$ 
      & $0.920\in(0.880, 0.948)$
      & $0.944\in(0.908, 0.966)$\\
    $\delta^{\true} + 0.1$ 
    & $0.932\in(0.894, 0.957)$ 
      & $0.928\in(0.890, 0.954)$
      & $0.916\in(0.875, 0.944)$
      & $0.944\in(0.908, 0.966)$ \\
    \bottomrule
  \end{tabular}
  \caption{Empirical coverage (mean and Wilson 95\% CI) for the softmax smoothing estimator with different settings of $\delta^{\true}$ with $n = 5,000$ samples computed over 250 runs. The first three columns (labeled NP) correspond to the non-parametric DGP and estimator. The last column (labeled SP) corresponds to the semi-parametric setting. Each row corresponds to a different setting of $\delta^{\ass}$ relative to $\delta^{\true}$. The temperature $\beta$ is chosen according to Equation~\eqref{eq:beta_trans}.}
  \label{tbl:coverage}
\end{table}

\paragraph{Evaluation of Non-Parametric Estimator}
We start by describing how we measure coverage in a single run of our method. We then discuss how we aggregate our experimental results over independent runs to obtain estimates of coverage with corresponding confidence intervals. 
We generate $n = 5,000$ independent samples from the data generating process, where we let $\delta^{\true} \equiv \delta \in \{0.75, 1.0, 1.25\}$. We then compute the estimator outlined in Theorem~\ref{thm:normal_static} as follows. We use $K$-fold cross-fitting (see Remark~\ref{rmk:cross-fit}) with $K = 5$ and stratified splitting based on the treatment $A$. Using the data outside of each fold $k \in \{1, 2, \dots, 5\}$, we separately train nuisance models $\wh{Q}^{(-k)}(0, X), \wh{Q}^{(-k)}(1, X)$, and $\wh{p}^{(-k)}(X)$ predicting $Q^\ast(0, X), Q^\ast(1, X),$ and $p^\ast(X)$ respectively. In each run, we estimate the Q-functions $Q^\ast(0, X)$ and $Q^\ast(1, X)$ using the LGBMRegressor class from the module LightGBM~\citep{ke2017lightgbm} and we estimate the propensity $p^\ast(X)$ using the LogisticRegression class from the Scikit-learn module~\citep{pedregosa2011scikit} with $\ell_2$ regularization.\footnote{We use the following parameter settings for LGBMRegressor \texttt{
    \{`metric' : `rmse',
    `learning\_rate': 0.05,
    `num\_leaves' : 2**3,
    `n\_estimators' : 250,
    `max\_depth' : 3,
    `verbose' : -1,
    `random\_state' : 123\}
}.  For LogisticRegression, we use the parameter settings \texttt{`random\_state' : 123} and otherwise use default settings.} For each $\delta^{\true}$, we consider a grid of smoothing parameters to measure sensitivity of the coverage of the smoothing estimator. For each assumed parameter 
\[
\delta^{\ass} \in \{\delta^{\true} - 0.5, \delta^{\true} - 0.25, \delta^{\true} - 0.1, \delta^{\true}, \delta^{\true} + 0.1, \delta^{\true} + 0.25, \delta^{\true} + 0.5\}
\] we compute the softmax-smoothing estimator with
\begin{equation}
\label{eq:beta_trans}
\beta_n := 1.5 \cdot n^{\frac{1}{2(1 + \delta^{\ass})}}\log\log(n).
\end{equation}
We produce
corresponding confidence intervals using Corollary~\ref{cor:normal_static}. We multiply by $\log\log(n)$ to ensure $\beta_n = \omega\left(n^{\frac{1}{2(1 + \delta^{\ass})}}\right)$, which is needed when $\delta^{\ass} = \delta^{\true}$ per Theorem~\ref{thm:normal_static} to maintain coverage. For each setting of parameters $(n, \delta^{\true}, \delta^{\ass})$ (which implicitly defines $\beta_n$) outlined above, we repeat the measurement process $M = 250$ times and report 95\% Wilson Binomial confidence intervals (CIs). 

\paragraph{Evaluation of Semi-Parametric Estimator}
In the semi-parametric setting, we estimate the structural parameter $\theta_0 = 1.0$ using the moment condition estimator outlined in Theorem~\ref{thm:normal_static_param}, again using $K$-fold cross-fitting with stratified splitting and $K = 5$. All nuisances are estimated using the LGBMRegressor class with the same hyperparameter settings as above. We then use the estimate of the structural parameter to compute an estimate of the optimal policy value using the estimator in Corollary~\ref{cor:value_param_static}. In this setting, we consider the grid $\delta^{\ass} \in \{0.5, 0.75, 0.9, 1.0, 1.1, 1.25, 1.5\}$ and compute an estimator for each corresponding value of $\beta_n$ as noted in Equation~\eqref{eq:beta_trans}. Again, for each setting of $\delta^{\ass}$, we repeat the measurement process $M = 250$ times and produce Wilson CIs as above.

\subsection{Findings}
\paragraph{Non-Parametric Estimator}
We display a summary of the coverage of our estimator in the first three columns of Table~\ref{tbl:coverage} and more extensive measurements of sensitivity in Figure~\ref{fig:coverage}. Table~\ref{tbl:coverage}, which assume $\delta^{\ass} \in \{\delta^{\true} - 0.1, \delta^{\true}, \delta^{\true} + 0.1\}$, shows our estimator obtains close to 95\% coverage across settings of $\delta^{\true}$ even when $\delta^{\ass}$ is incorrect. Figure~\ref{fig:coverage}, which displays a wider range of $\delta^{\ass}$, seems to indicate that having $\delta^{\ass} \gg \delta^{\true}$ can lead to incorrect coverage. This makes sense in the context of our theoretical results, as when $\delta^{\ass} > \delta^{\true}$ (over-smoothing), the selected $\beta_n$ will satisfy $\beta_n = o\left(n^{\frac{1}{2(1 + \delta^{\true})}}\right)$, and thus our asymptotic normality results will fail to hold.
We note that in settings where $\delta^{\ass} \leq \delta^{\true}$ (under-smoothing) we generally observe valid coverage as well. Overall, our experimental results seem to indicate that it is advantageous to under-smooth (i.e.\ to select $\beta_n \gg n^{\frac{1}{2(1 + \delta)}}$) if there is uncertainty about the behavior of the density of the CATE near zero. This indicates that the bias from smoothing is likely the limiting factor in maintaining valid coverage. While under-smoothing requires the $L^2$ error of the regression estimate to decay as $o_\P(\beta_n^{-1/2}n^{-1/4})$, flexible learners may readily be able to obtain such rates for reasonable DGPs.

\begin{figure}
    \centering
    \begin{subfigure}{0.48\textwidth}
    \centering
    \includegraphics[width=\linewidth]{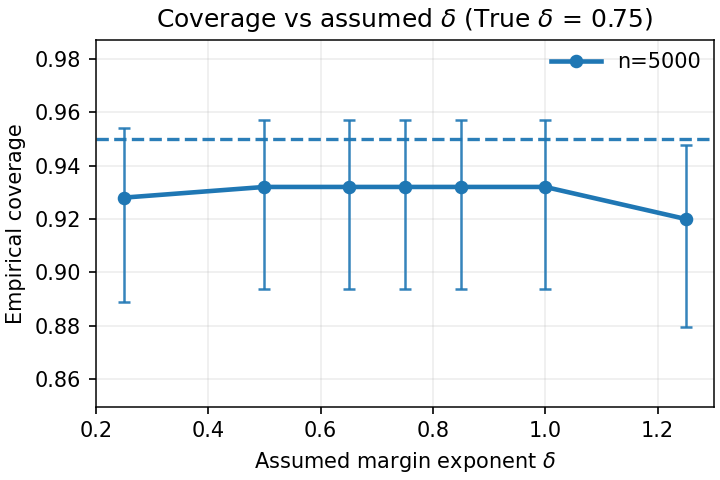}
    \caption{$\delta^{\true} = 0.75$, Non-Parametric}
    \end{subfigure}
    \begin{subfigure}{0.48\textwidth}
    \centering
    \includegraphics[width=\linewidth]{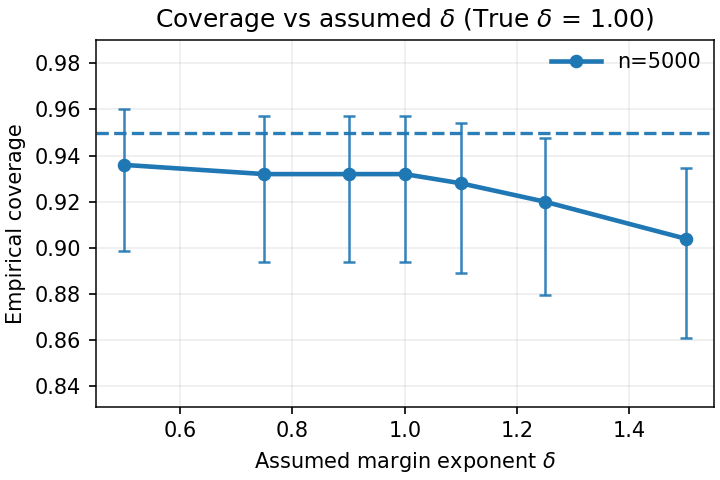}
    \caption{$\delta^{\true} = 1.0$, Non-Parametric}
    \end{subfigure}
    \begin{subfigure}{0.48\textwidth}
        \centering
    \includegraphics[width=\linewidth]{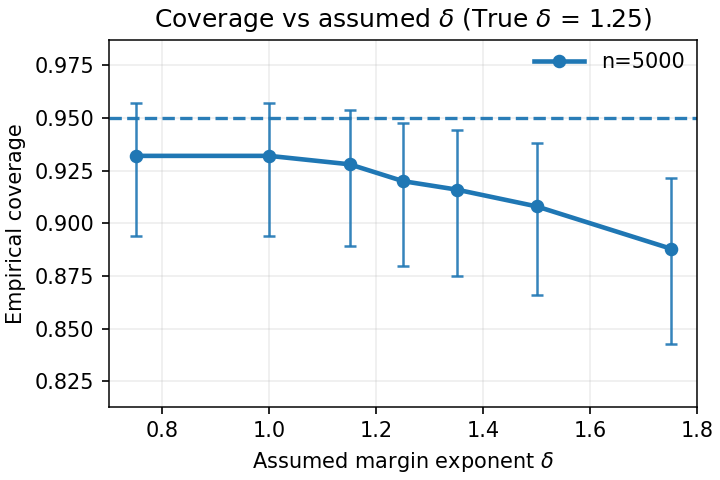}
    \caption{$\delta^{\true} = 1.25$, Non-Parametric}
    \end{subfigure}
    \begin{subfigure}{0.48\textwidth}
    \centering
    \includegraphics[width=\linewidth]{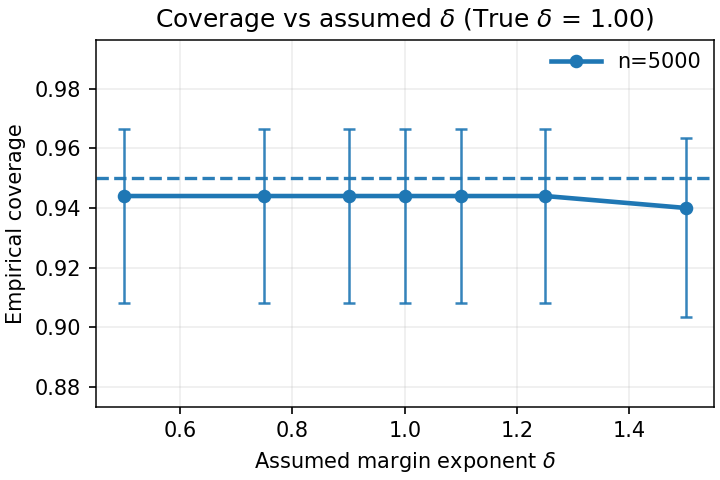}
    \caption{$\delta^{\true} = 1.0$, Semi-Parametric}
    \end{subfigure}
    \caption{We plot the empirical coverage of our softmax smoothing estimator under each of the data-generating processes (DGPs) described above. The first three figures correspond to a setting of $\delta^{\true}$ in the Non-Parametric DGP. The final figure corresponds to the semi-parametric DGP, in which $\delta^{\true} = 1.0$. The $x$-axis shows the various values of $\delta^{\ass}$ assumed by the estimator, which is plugged into Equation~\eqref{eq:beta_trans}. The  $y$-axis denotes coverage, and the dashed horizontal line corresponds to the target coverage of $95\%$. We plot the mean coverage with $n = 5,000$ samples, and include point-wise valid 95\% Wilson confidence intervals.}
    \label{fig:coverage}
\end{figure}

\paragraph{Semi-Parametric Estimator}
In the semi-parametric setting, due to the assumed linearity of the blip effects, there is no real trade-off between first-order biases and second order nuisance errors. That is, the learner just needs to select $\beta_n = o(n^{1/2})$ and $\beta_n = \omega\left(n^{\frac{1}{2(1 + \delta^{\true})}}\right)$ in order to obtain valid asymptotic coverage. Our coverage findings, illustrated in the final column of Table~\ref{tbl:coverage} and in the final image of Figure~\ref{fig:coverage}, indicate a general lack of sensitivity to the choice of the assumed density parameter $\delta^{\ass}$. Interestingly, even under-smoothing does not seem to result in invalid coverage.

\section{Conclusion}

In this paper, we considered the problem of constructing confidence intervals for the value of the optimal treatment policy. This is known to be a difficult problem in semi-parametric inference, as the non-Gateaux differentiability of the target functional prevents the application of standard, de-biased estimators~\citep{luedtke2016statistical, shi2020breaking, laber2014dynamic, chakraborty2013inference, semenova2023aggregated}. 
Our approach is intuitive and computationally lightweight, replacing the non-differentiable maximum function with a smoothed analogue defined in terms of the softmax function. Through a careful analysis of this smoother, we show one can construct estimators for the optimal policy value that allow for an arbitrarily large probability of non-response to a treatment. This estimator naturally generalizes to other important examples of irregular parameters, such as conditional Balke and Pearl bounds on the average treatment effect~\citep{levis2023covariate} and the $L^1$ calibration error of an ML model~\citep{gupta2022post}. 
In sum, our work shows that the careful choice and analysis of a smoothing function can turn irregular problems into ones that can again be handled by first-order de-biasing methods.

While the results presented in this paper are quite general, many important open questions remain. First, throughout our work we assume that the sequence  $(\beta_n)_{n \geq 1}$ is fixed and chosen according to the behavior of the sub-optimality gaps in a small neighborhood of zero. Although our experiments demonstrated some lack of sensitivity to the choice of smoothing parameter, it would be of great theoretical and practical interest to develop data-adaptive methods for selecting $\beta_n$, if this is possible. Second, while we established efficiency in the setting where the optimal treatment is almost surely unique, it would be interesting to reason about efficiency when non-response occurs with positive probability. In this setting, \citet{hirano2012impossibility} show that regular asymptotically linear estimators do not generally exist for such target parameters. Since semi-parametric efficiency is most often described in terms of such estimators, one would need to develop an analogue of efficiency theory for this irregular setting. We view this as an important direction as it would facilitate the direct comparison of estimators in irregular regimes.

\bibliographystyle{plainnat}
\bibliography{bib}

\appendix

\section{Properties of the Softmax Operator}
\label{app:softmax}
We now prove a variety of properties for the softmax operator $\smax^\beta$, which we recall, for a vector $u = (u_1, \dots, u_N) \in \R^N$, is defined as
\[
\smax^\beta_\ell(u) := \frac{\sum_{i = 1}^N u_i \exp\{\beta u_i\}}{\sum_{j = 1}^N \exp\{\beta u_j\}}.
\]

We start by proving a number of basic analytical properties of the softmax function that will be useful throughout our work. In the sequel, for simplicity, we define the $i$th softmax weight $w^\beta_i(u)$ to be
\[
w^\beta_i(u) := \frac{e^{\beta u_i}}{\sum_j e^{\beta u_j}}
\]

\begin{lemma}
\label{lem:softmax}
For any $\beta > 0$, consider the softmax function $\smax^\beta$ as defined above. The following hold.
\begin{enumerate}
    \item The $i$th partial derivative of $\smax^\beta(u)$, which is given by
    \[
    \partial_{u_i}\smax^\beta(u) \equiv (\nabla_u \smax^\beta(u))_i = w^\beta_i(u)\left[1 + \beta(u_i - \smax^\beta_\ell(u))\right],
    \]
    satisfies $\sup_u\|\nabla_u \smax^\beta(u)\|_\infty \leq 1 + \log(N)$.\label{pt:softmax_grad}
    \item $\smax^\beta(u)$ is $L$-Lipschitz with $L := \sqrt{N}(1  + \log(N))$.\label{pt:softmax_lip}
    \item The Hessian of $\smax^\beta(u)$
    satisfies $\|\nabla_u^2 \smax^\beta(u)\|_{op} \leq 6\beta  + 4\beta\log(N)$.\label{pt:softmax_hess}
    \item The softmax function obeys the following limiting properties:
    \[
    \smax^\beta_\ell(u_\ell) \xrightarrow[\beta \rightarrow \infty]{} \max_\ell\{u_\ell\}.
    \]\label{pt:softmax_lim}
    \item The $k$th partial derivative the softmax function obeys the following limiting properties:
    \[
    \partial_{u_k}\smax^\beta_\ell(u_\ell) \xrightarrow[\beta \rightarrow \infty]{} \fraks_k(u),
    \]
    where $\fraks_k(u) := \frac{\mathbbm{1}\{k \in \arg\max_\ell(u_1, \dots, u_N)\}}{|\arg\max_\ell(u_1, \dots, u_N)|}.$\label{pt:softmax_grad_lim}
\end{enumerate}
\end{lemma}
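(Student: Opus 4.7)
The plan is to treat the five parts in order, unified by two tools: the scaling identity $\smax^\beta(u) = \beta^{-1}\smax^1(\beta u)$, which yields $\nabla_u\smax^\beta(u) = \nabla\smax^1(\beta u)$ and $\nabla_u^2 \smax^\beta(u) = \beta\nabla^2 \smax^1(\beta u)$ and thereby reduces all uniform-in-$u$ bounds (parts 1 and 3) to bounds on $\smax^1$; and the entropy identity $\smax^1(v) = L(v) - H(w(v))$, where $L(v) = \log\sum_j e^{v_j}$ and $H$ is Shannon entropy, which follows from $v_k = \log w_k(v) + L(v)$ and yields the convenient form $\partial_k \smax^1(v) = w_k(1 + \log w_k + H(w))$.

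For part 1, I would derive the partial derivative formula by the quotient rule applied to $\smax^\beta = (\sum_i u_i e^{\beta u_i})/(\sum_j e^{\beta u_j})$. The $L^\infty$ bound then reduces by scaling to $|\partial_k \smax^1(v)| \leq 1 + \log N$, which I establish by case analysis on the sign of $1 + \log w_k + H(w)$: in the non-negative case, $\log w_k \leq 0$ gives $\partial_k \smax^1 \leq w_k(1 + H(w)) \leq 1 + \log N$; in the negative case, $|\partial_k \smax^1| \leq -w_k \log w_k \leq 1/e$. Part 2 then follows immediately from $\|\nabla \smax^\beta\|_2 \leq \sqrt{N}\|\nabla \smax^\beta\|_\infty$ and the mean value theorem.

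For part 3, I would differentiate $\partial_k \smax^\beta = w_k^\beta[1 + \beta(u_k - \smax^\beta)]$ once more, using $\partial_j w_k^\beta = \beta w_k^\beta(\delta_{jk} - w_j^\beta)$, and collect terms to obtain the decomposition
\[
\nabla^2 \smax^\beta(u) = \beta\,\diag\bigl(\nabla\smax^\beta + w^\beta\bigr) - \beta\bigl(w^\beta(\nabla\smax^\beta)^\top + (\nabla\smax^\beta)(w^\beta)^\top\bigr).
\]
The diagonal piece has operator norm at most $\beta(2 + \log N)$ by part 1, and the rank-two symmetric piece is bounded by $2\beta\|w^\beta\|_2\|\nabla\smax^\beta\|_2$ via the standard inequality $\|ab^\top + ba^\top\|_{op} \leq 2\|a\|_2\|b\|_2$. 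The main obstacle is controlling $\|\nabla\smax^\beta\|_2$ by $O(\log N)$ rather than the crude $\sqrt{N}(1 + \log N)$ from part 2; using $(1 + \log w_k + H(w))^2 \leq 3(1 + \log^2 w_k + (\log N)^2)$ together with $\sum_k w_k^2 \leq 1$ and the elementary bound $\max_{w \in (0,1]}w\log^2 w = 4/e^2$ gives $\|\nabla\smax^1\|_2^2 = O(1 + (\log N)^2)$, from which the claimed bound $\|\nabla^2\smax^\beta\|_{op} \leq 6\beta + 4\beta\log N$ follows after tracking constants.

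For parts 4 and 5, I would set $u^* = \max_\ell u_\ell$, $d_k = u^* - u_k \geq 0$, and $\Delta$ the smallest strictly positive $d_k$. Writing $u^* - \smax^\beta(u) = (\sum_k d_k e^{-\beta d_k})/(\sum_k e^{-\beta d_k})$ and noting the denominator is at least $|\arg\max u|$ while the numerator tends to $0$ (each $d_k e^{-\beta d_k}$ is either $0$ or decays exponentially), part 4 follows. For part 5, I split cases using $\partial_k \smax^\beta = w_k^\beta[1 + \beta(u_k - \smax^\beta)]$: if $k \in \arg\max u$, then $w_k^\beta \to 1/|\arg\max u|$ and $\beta|u_k - \smax^\beta| = O(\beta e^{-\beta\Delta}) \to 0$, giving the limit $\fraks_k(u)$; if $k \notin \arg\max u$, then $w_k^\beta$ decays exponentially and dominates the $O(\beta)$ correction, so $\partial_k \smax^\beta \to 0 = \fraks_k(u)$.
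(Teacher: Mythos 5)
Your Parts 1, 2, 4, 5 are correct and follow essentially the same route as the paper's proof (the paper establishes the $\ell_\infty$ bound on $\nabla \smax^\beta$ by a direct case split on the sign of $u_k - \smax^\beta(u)$ rather than via the entropy identity $\partial_k\smax^1(v) = w_k(1 + \log w_k + H(w))$, but the two are algebraically interchangeable; the limit arguments in 4--5 are the same reparametrization through the gaps $d_k = u^* - u_k$). Your Part 3 is genuinely different and, in my view, cleaner: the paper computes the raw second partials, bounds the $\ell_1$ norm of each row through a term-by-term estimate, and invokes the row-sum bound for the operator norm, whereas you observe the clean structural decomposition $\nabla^2\smax^\beta = \beta\,\diag(\nabla\smax^\beta + w^\beta) - \beta\bigl(w^\beta(\nabla\smax^\beta)^\top + (\nabla\smax^\beta)(w^\beta)^\top\bigr)$ and bound the two pieces by their operator norms. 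This makes the scaling structure transparent and makes it obvious why the Hessian is $O(\beta)$ with a logarithmic constant. However, the specific inequality $(1 + \log w_k + H(w))^2 \le 3(1 + \log^2 w_k + (\log N)^2)$ that you use to bound $\|\nabla\smax^1\|_2$ yields $\|\nabla\smax^1\|_2 \le \sqrt{3}\log N + O(1)$, and after adding the diagonal contribution $\beta(2 + \log N)$ the leading constant is $1 + 2\sqrt{3} \approx 4.46$ on $\beta\log N$, which exceeds the stated $4$. To recover the exact constant $6\beta + 4\beta\log N$, replace that step with the interpolation bound $\|g\|_2 \le \sqrt{\|g\|_1\|g\|_\infty}$: one has $\|g\|_\infty \le 1 + \log N$ by Part 1, and since $\sum_k g_k = 1$ (translation invariance of $\smax^\beta$) while $\sum_{k : g_k < 0}|g_k| \le \sum_k (-w_k\log w_k) = H(w) \le \log N$, one gets $\|g\|_1 \le 1 + 2\log N$, whence $\|g\|_2 \le \sqrt{(1 + \log N)(1 + 2\log N)}$. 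A short calculation then checks $(2 + \log N) + 2\sqrt{(1 + \log N)(1 + 2\log N)} \le 6 + 4\log N$ for all $N \ge 2$, closing the constant-tracking gap.
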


\begin{proof}
We prove each of the above points in succession.
\paragraph{Proof of Point~\ref{pt:softmax_grad}}
We start by proving the first point. It suffices to compute the $k$th partial derivative. Let $f(u) := \sum_i u_i \exp\{\beta u_i\}$ and $g(u) := \sum_j \exp\{\beta u_j\}$. Computing the $k$th partial for each yields:
\begin{align*}
\partial_k f(u) &= e^{\beta u_k} + \beta u_k e^{\beta u_k} \\
\partial_k g(u) &= \beta e^{\beta u_k}.
\end{align*}
Now, we can compute the overall partial derivative via the quotient rule:
\begin{align*}
\partial_k \frac{\sum_i u_ie^{\beta u_i}}{\sum_j e^{\beta u_j}} &= \frac{g(u)\partial_k f(u) - f(u)\partial_k g(u)}{g(u)^2} \\
&= \frac{e^{\beta u_k}(1 + \beta u_k)\sum_j e^{\beta u_j}  - \beta e^{\beta u_k}\sum_i u_i e^{\beta u_i}}{\left(\sum_j e^{\beta u_j}\right)^2} \\
&= \frac{e^{\beta u_k}}{\sum_j e^{\beta u_j}}\left[(1 + \beta u_k) - \beta \smax^\beta(u)\right] \\
&= w^\beta_k(u)\left[1 + \beta[u_k - \smax^\beta(u)]\right],
\end{align*}
which proves the desired formula for the gradient of the softmax. Now, we can bound the magnitude of any individual coordinate of the gradient vector. We consider two cases. First, suppose $u_k \leq \smax^\beta(u)$. Then, we can bound $|w_k(u)\beta(u_k - \smax^\beta(u))|$ as follows. Let $\wt{\Delta}_i := \smax^\beta(u) - u_i$ for $i \in [N]$. We have 
\begin{align*}
|w^\beta_k(u)\beta(u_k - \smax^\beta(u))| &= \beta \wt{\Delta}_k\frac{e^{\beta u_k}}{\sum_j e^{\beta u_j}} \\
&= \beta \wt{\Delta}_k \frac{e^{-\beta \smax^\beta(u)}}{e^{-\beta \smax^\beta(u)}}\frac{e^{\beta u_k}}{\sum_j e^{\beta u_j}} \\
&= \beta \wt{\Delta}_k \frac{e^{-\beta \wt{\Delta}_k}}{\sum_j e^{-\beta \wt{\Delta}_j}} \\
&\leq \sup_{u \geq 0} u e^{-u} = \frac{1}{e}.
\end{align*}
In the above, the final inequality follows from the fact that $\sum_j e^{-\beta \wt{\Delta}_j} \geq  1$, since there is always some index $i^\ast \in [N]$ such that $\wt{\Delta}_{i^\ast} =\smax^\beta_\ell(u) - u_{i^\ast} \leq 0$ and hence $e^{-\beta \wt{\Delta}_{i^\ast}} \geq 1$. The final equality follows from the fact that the function $u \mapsto ue^{-u}$ attains its maximum of $\frac{1}{e}$ when $u = 1$.
Thus, we can bound the $k$th partial in this setting by
\[
|\partial_k \smax^\beta(u)| \leq w^\beta_k(u) + w^\beta_k(u)|\beta\wt{\Delta}_k| \leq 1 + \frac{1}{e}.
\]

Next, we consider the case where $u_k \geq \smax^\beta(u)$. Here, we have the bound
\[
|w_k^\beta(u)[1 + \beta(u_k - \smax^\beta(u))]| \leq 1 + \beta(M - \smax^\beta(u)),
\]
where $M := \max_i u_i$ for convenience. Using a standard softmax trick, we have the bound:
\begin{align*}
M - \smax^\beta(u) &= \sum_{j}w^{\beta}_j(u)(M - u_j) \\
&\leq \frac{1}{\beta}\log\left(\sum_j w^{\beta}_j(u) e^{\beta (M - u_j)}\right) \\
&= \frac{1}{\beta}\log\left(\frac{e^{\beta M}}{\sum_j e^{\beta u_j}}\sum_j 1\right) \\
&\leq \frac{1}{\beta}\log(N).
\end{align*}
The first inequality above follows from Jensen's inequality, as for any random variables $X$ and parameter $\lambda > 0$,
\[
\E[X] \leq \frac{1}{\lambda}\log \E \exp\{\lambda X\}.
\]
Plugging this into the bound obtained in the previous display yields:
\[
w_k^\beta(u)|1 + \beta(u_k - \smax^\beta(u))| \leq 1 + \log(N).
\]
Since we have shown the component-wise upper bound in all cases and since $e^{-1} < \log(N)$ when $N \geq 2$, we are done.

\paragraph{Proof of Point~\ref{pt:softmax_lip}}

The proof of Lipschitz continuity is straightforward. In particular, let $u, u' \in \R^N$. We have via the mean value theorem that for some $\wb{u} \in [u, u']$ that
\begin{align*}
|\smax^\beta_\ell(u) - \smax^\beta_\ell(u')| &= |\langle \nabla_u \smax^\beta_\ell(\wb{u}), u - u'\rangle| \\
&\leq \sup_{\wb{u}} \|\nabla_u \smax^\beta_\ell(\wb{u})\|_2 \|u - u'\|_2 \\
&\leq \sqrt{N} \sup_{\wb{u}}\|\nabla_u \smax^\beta_\ell(u)\|_\infty \|u - u'\|_2 \\
&\leq \sqrt{N}(1 +\log(N)) \|u - u'\|_2,
\end{align*}
which proves the desired result.

\paragraph{Proof of Point~\ref{pt:softmax_hess}}
We now go about bounding the Hessian. First, we compute the $(k, \ell)$th coordinate of $\nabla_u^2 \smax^\beta(u)$. Computation similar to that performed in the previous paragraphs yields
\[
\partial_{k, \ell}\smax^\beta(u) = \beta w^\beta_k(u)\left[(\delta_{k, \ell} - w^\beta_\ell(u))(1 + \beta(u_k - \smax^\beta(u))) + (\delta_{k, \ell} - w^\beta_\ell(u)) - \beta w_\ell^\beta(u)(u_\ell - \smax^\beta(u))\right].
\]
We now bound the $\ell_1$-norm of any row (equivalently, any column) of the Hessian. By applying the triangle inequality, we have for any $k$,
\begin{align*}
\|(\nabla^2_u \smax^\beta(u))_k\|_{1} &\leq \underbrace{\beta w_k^\beta(u)|1 + \beta(u_k - \smax^\beta(u))|\sum_{\ell}|\delta_{\ell, k} - w^\beta_\ell(u)|}_{T_1} \\
&\qquad + \underbrace{\beta w_k^\beta(u)\sum_\ell w_\ell^\beta(u)\beta|u_\ell - \smax^\beta(u)|}_{T_2} \\
&\qquad + \underbrace{\beta w_k^\beta(u) \sum_{\ell}|\delta_{k, \ell} - w^\beta_\ell(u)|}_{T_3}.
\end{align*}

Our argument used in the proof of the previous part shows that $w_k^\beta(u)|1 + \beta(u_k - \smax^\beta(u))| \leq 1 + \log(N)$, and so we have 
\begin{align*}
T_1 \leq \beta(1 + \log(N))\sum_{\ell}|\delta_{\ell,k} - w_\ell^\beta(u)| \leq 2\beta(1 + \log(N)),
\end{align*}
where the final inequality follows from the fact
\[
\sum_\ell|\delta_{\ell, k} - w_\ell^\beta(u)| \leq \sum_\ell \delta_{\ell, k} + \sum_\ell w_\ell^\beta(u) = 2.
\]
Next we bound $T_2$. First,  note that we have $\beta u_\ell = \log w^\beta_\ell(u) + \log\left(\sum_j e^{\beta u_j}\right)$ and $\beta \smax^\beta_\ell(u) = \sum_j w^\beta_j(u)\log w^\beta_j (u) + \log\left(\sum_j e^{\beta u_j}\right)$. Thus we have the bound
\[
\beta|u_\ell - \smax^\beta_\ell(u)| = \left|\log w_\ell^\beta(u) - \sum_j w_j^\beta(u) \log w_j^\beta(u)\right| = \left|\log w_\ell^\beta(u)  + H(u; \beta)\right|
\]
where we have defined $H(u; \beta) := - \sum_j w_j^\beta(u)\log w_j^\beta(u)$ as the entropy of the softmax weights. Consequently, we have
\begin{align*}
T_2 &= \beta w_k^\beta(u) \sum_\ell w_\ell^\beta(u)\beta|u_\ell - \smax^\beta(u)| \\
&= \beta w_k^\beta(u) \sum_\ell w_\ell^\beta(u)  \left|\log w_\ell^\beta(u) + H(u; \beta)\right| \\
&\leq \beta w_k^\beta(u) \sum_\ell w_\ell^\beta(u)\left|\log w_\ell^\beta(u)\right| + \beta w_k^\beta(u) \sum_\ell w_\ell^\beta(u) H(u; \beta)\\
&= 2 \beta w_k^\beta(u) H(u; \beta) \leq 2 \beta \log(N)
\end{align*}
where the final equality follows from $|w_k^\beta(u)| \leq 1$ and $0 \leq H(u; \beta) \leq \log(N)$ since the weights $w_1^\beta(u),\dots,w_N^\beta(u)$ describe a distribution on $N$ elements.

Now, we lastly bound $T_3$. In particular we have
\begin{align*}
T_3 = \beta w_k^\beta(u)\sum_\ell |\delta_{k, \ell} - w_\ell^\beta(u)| \leq 2\beta .
\end{align*}
Putting these together yields a bound of $\|(\nabla_u^2 \smax^\beta(u))_k\|_1 \leq 6\beta + 4\beta\log(N)$. 
We conclude by noting that
\[
\|\nabla_u^2 \smax^\beta(u)\|_{op} \leq \max_k \|(\nabla_u^2 \smax^\beta(u))_k\|_1 \leq 6\beta  + 4\beta\log(N).
\]
In the above, the first inequality follows from Corollary 6.1.5 of \citet{horn2012matrix}, which states that for any complex-valued square matrix $A \in \C^{n \times n}$, one has 
\[
\|A\|_{op} \leq \min\left\{\max_i \sum_{j = 1}^n |a_{i, j}|, \max_j \sum_{i = 1}^n |a_{i, j}|\right\},
\]
where $|z| = \sqrt{z\wb{z}}$ for any $z \in \C$. This proves the desired result.

\paragraph{Proof of Point~\ref{pt:softmax_lim}}

The proof here is standard, but we include it for completeness. Letting again $M := \max_i u_i$, $\Delta_i := M - u_i$, and now defining $A := |\arg\max\{u_1, \dots, u_N\}|$, we have 
\begin{align*}
\lim_{\beta \rightarrow \infty} \smax^\beta_\ell(u) &= \lim_{\beta \rightarrow \infty}\frac{\sum_i u_i\exp\{\beta u_i\}}{\sum_j \exp\{\beta u_j\}} \\
&= \lim_{\beta \rightarrow \infty}\frac{\sum_i u_i\exp\{-\beta \Delta_i\}}{\sum_j \exp\{-\beta \Delta_j\}} \\
&= \lim_{\beta \rightarrow \infty} \frac{\sum_{i : \Delta_i = 0} u_i\exp\{-\beta \Delta_i\}}{\sum_j \exp\{-\beta \Delta_j\}} + \lim_{\beta \rightarrow \infty}\frac{\sum_{i : \Delta_i > 0}u_i \exp\{-\beta \Delta_i\}}{\sum_j \exp\{-\beta \Delta_j\}} \\
& = M\lim_{\beta \rightarrow \infty}\frac{A}{A + \sum_{j : \Delta_j > 0}\exp\{-\beta \Delta_j\}} + \lim_{\beta \rightarrow \infty}\frac{\sum_{i: \Delta_i > 0}u_i\exp\{-\beta \Delta_i\}}{A + \sum_{j : \Delta_j >0}\exp\{-\beta \Delta_j\}} \\
&= M.
\end{align*}

\paragraph{Proof of Point~\ref{pt:softmax_grad_lim}}
Recall that we have the identity
\[
(\nabla_u \smax^\beta(u))_i = \partial_i \smax^\beta(u) = \frac{e^{\beta u_i}}{\sum_j e^{\beta u_j}}\left(1 + \beta(u_i - \smax^\beta(u))\right).
\]
We consider two cases. First, assume that $i \in \arg\max\{u_1, \dots, u_N\}$. We start by showing that
\[
\beta (u_i - \smax^\beta(u)) \xrightarrow[\beta \rightarrow \infty]{} 0.
\]
To see this, let $\calA := \arg\max\{u_1, \dots, u_N\}$ and $A = |\calA|$. Observe that we have 
\begin{align*}
\lim_{\beta \rightarrow \infty}\beta (u_i - \smax^\beta(u)) &= \frac{\sum_j \beta (u_i - u_j)e^{\beta u_j}}{\sum_j e^{\beta u_j}} \\
&= \lim_{\beta \rightarrow \infty}\frac{\sum_{j \notin \calA} \beta(u_i - u_j)e^{\beta u_j}}{\sum_{j \in \calA}e^{\beta u_j} + \sum_{j \notin \calA}e^{\beta u_j}} \\
&= \lim_{\beta \rightarrow \infty}\frac{\sum_{j \notin \calA}\beta(u_i - u_j)e^{\beta(u_i - u_j)}}{\sum_{j \in \calA}e^{\beta(u_i - u_j)} + \sum_{j \notin \calA}e^{\beta (u_i - u_j)}} \\
&= \lim_{\beta \rightarrow \infty}\frac{\sum_{j \notin \calA}\beta \Delta_j e^{-\beta \Delta_j}}{A + \sum_{j \notin \calA}e^{-\beta \Delta_j}} \\
&= \frac{\lim_{\beta \rightarrow \infty}\sum_{j \notin \calA} \beta \Delta_j e^{-\beta \Delta_j}}{A + \lim_{\beta \rightarrow \infty}\sum_{j \notin \calA}e^{-\beta \Delta_j}} \\
&=0 
\end{align*}
where the final equality follows from the fact that $\lim_{u \rightarrow \infty} ue^{-u} = 0$ and $\lim_{u \rightarrow \infty}e^{-u} = 0$. Further, using a similar argument, we see that we have
\begin{align*}
\frac{e^{\beta u_i}}{\sum_j e^{\beta u_j}} &= \frac{e^{-\beta \Delta_i}}{\sum_j e^{-\beta \Delta_j}} \\
&= \frac{1}{A + \sum_{j \notin \calA}e^{-\beta \Delta_j}} \\
&\xrightarrow[\beta \rightarrow \infty]{} \frac{1}{A}.
\end{align*}
Consequently, when $i \in \calA$, we have
\[
\lim_{\beta \rightarrow \infty} \frac{e^{\beta u_i}}{\sum_j e^{\beta u_j}}\left[1 + \beta(u - \smax^\beta(u))\right] = \frac{1}{A}.
\]
Now, we consider the case where $i \notin \calA$. We already know that $\frac{e^{\beta u_i}}{\sum_j e^{\beta u_j}} \rightarrow 0$ as $\beta \rightarrow \infty$, and further we see we have
\begin{align*}
\lim_{\beta \rightarrow \infty} \frac{\beta e^{\beta u_i}}{\sum_j e^{\beta u_j}} &= \lim_{\beta \rightarrow \infty} \frac{\beta e^{-\beta \Delta_i}}{\sum_j e^{-\beta \Delta_j}} \\
&= \frac{1}{\Delta_i}\lim_{\beta \rightarrow \infty} \frac{\beta \Delta_ie^{-\beta \Delta_i}}{\sum_{j \in \calA} e^{-\beta \Delta_i} + \sum_{j \notin \calA}e^{-\beta \Delta_j}} \\
&= \frac{1}{\Delta_i}\cdot\frac{\lim_{\beta \rightarrow \infty} \beta \Delta_i e^{-\beta \Delta_i}}{A + \lim_{\beta \rightarrow \infty} e^{-\beta \Delta_j}} \\
&= 0
\end{align*}
where the first equality comes from multiplying the top and bottom by $e^{-\beta M}$ where $M := \max\{u_1, \dots, u_N\}$, and the final equality again comes from the fact that $\lim_{u \rightarrow \infty}ue^{-u} = 0$. Putting the pieces together, we have that
\[
\lim_{\beta\rightarrow \infty}\frac{e^{\beta u_i}}{\sum_{j} e^{\beta u_j}}\left[1 + \beta(u_i - \smax^\beta(u))\right] = \left(\lim_{\beta \rightarrow \infty}\frac{\beta e^{\beta u_i}}{\sum_j e^{\beta u_j}}\right)\cdot \left(\lim_{\beta \rightarrow \infty}\{u_i - \smax^{\beta}(u)\}\right) = 0 \cdot \Delta_i = 0,
\]
thus completing the proof.

\end{proof}

We now prove Lemma~\ref{lem:margin}.


\begin{figure}[H]
    \centering
    \includegraphics[scale=.5]{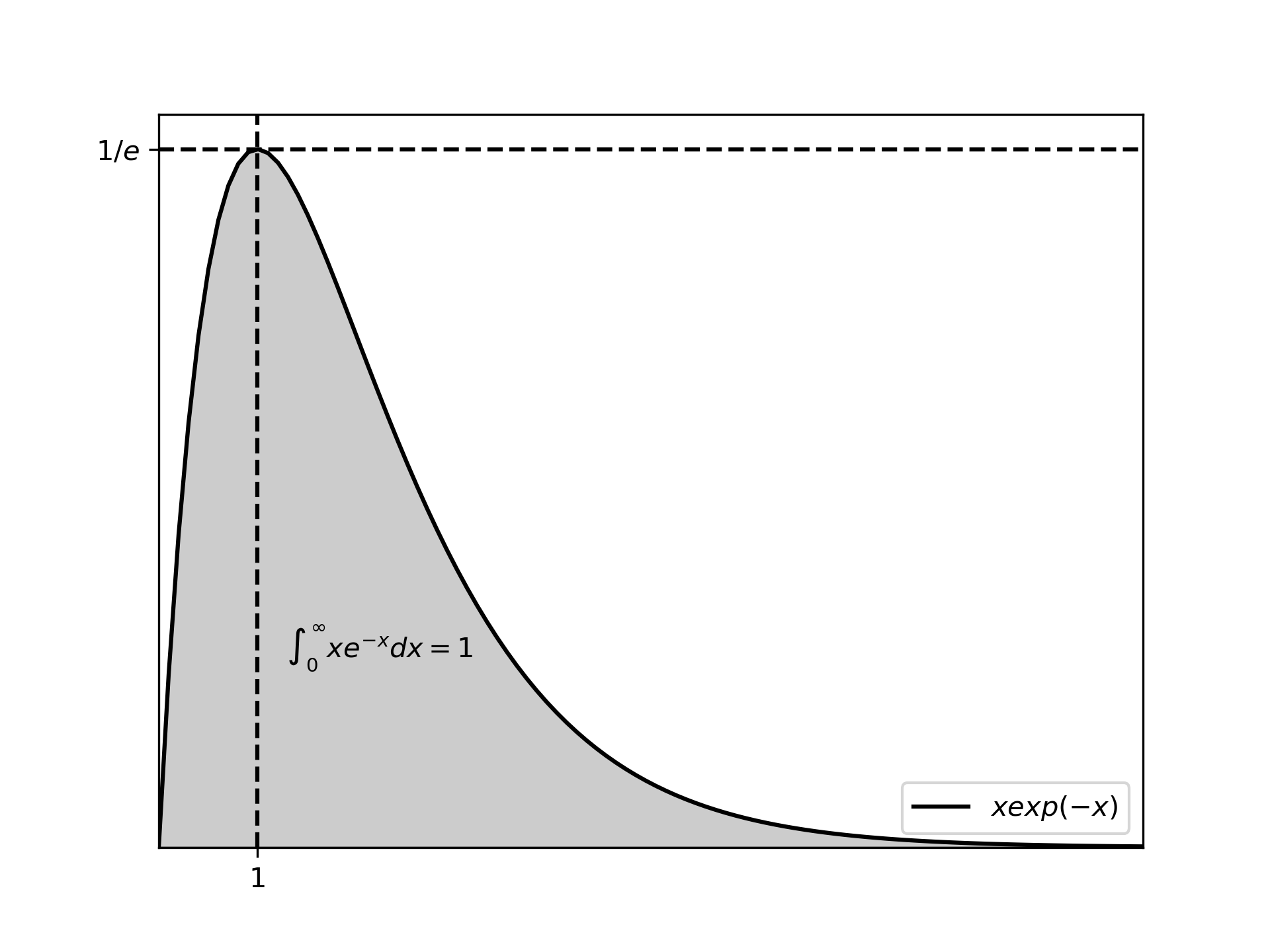}
    \caption{Behavior of function $x \exp\{-x\}$.}
    \label{fig:graph}
\end{figure}

\begin{proof}[Proof of Lemma~\ref{lem:margin}]
We start by proving the first claim. Let $M := \max_i U_i$ and recall  $w_i^\beta(u) := \frac{e^{\beta U_i}}{\sum_{j = 1}^N e^{\beta U_j}}$. First, observe that the weight $w_i^\beta(u)$ can be bounded as
\[
w_i^\beta(u) = \frac{e^{-\beta M}}{e^{-\beta M}}\frac{e^{\beta U_i}}{\sum_j e^{\beta U_j}} = \frac{e^{-\beta \Delta_i}}{\sum_j e^{-\beta \Delta_j}} \leq e^{-\beta \Delta_i},
\]
where the final inequality follows since the sum in the denominator contains $e^{-\beta \Delta_{i^\ast}} = 1$, where $i^\ast = \arg\max(u)$. From this, we can deduce the result via the following reasoning.
\begin{align*}
M - \smax_i^\beta U_i &= \sum_{i = 1}^N w_i^\beta(U)\Delta_i \leq \sum_{i = 1}^N \Delta_i e^{-\beta \Delta_i}.
\end{align*}
Taking expectations furnishes the desired result.


    

Now, we prove the second claim. Let $U$ be a random variable satisfying Assumption~\ref{ass:margin}. We first show that for any $\delta > 0$, we have
\begin{equation}
\label{eq:bias_bd_1}
\E[U\exp(-\beta U)] \leq H \left(\frac{1}{\beta}\right)^{1 + \delta}\Gamma(\delta + 1) + \frac{(1 + \epsilon)\log(\beta)}{\beta^{2 + \epsilon}}
\end{equation}
where $H >0$ and $\epsilon > 0$ are arbitrary and $\beta > 0$ is sufficiently large such that $ \alpha := (1 + \epsilon)\log(\beta)/\beta < c$ and $(1 + \epsilon)\log(\beta) > 1$. We can decompose the left-hand side of the above expression as
\begin{align*}
\E\left[U\exp\left(-\beta U\right)\right]
=~& \underbrace{\E\left[U\exp\left(-\beta U\right)\mathbbm{1}\left\{0<U\leq \alpha\right\}\right]}_{A_\beta} 
+ \underbrace{\E\left[U\exp\left(-\beta U\right)\mathbbm{1}\left\{U> \alpha\right\}\right]}_{B_\beta}
\end{align*}
We now bound each of the terms $A_{\beta}$ and $B_{\beta}$. First, since $\frac{(1+\epsilon)\log(\beta)}{\beta}\leq c$, we know that  the density of $U$ is upper bounded by $H/x^{1-\delta}$ for any $x\in (0, \alpha)$. Then the following holds
\begin{align*}
A_\beta :=~& \E\left[U\exp\left(-\beta U\right)\mathbbm{1}\left\{0 < U\leq \alpha\right\}\right]= \int
_{0}^{\alpha}u
\exp(-\beta u) f(u) du\\
\leq~& H\cdot\int
_{0}^{\alpha}u
\exp(-\beta u) \frac{1}{u^{1-\delta}} du\\
=~& H\beta^{-2}\int_{0}^{{(1+\epsilon)\log\beta} }v\exp(-v) \frac{\beta^{1-\delta}}{v^{1-\delta}} dv
\tag{by a change of variable, $\beta u=v$ and $dv = {\beta} du$}\\
\le~& H \left(\frac{1}{\beta}\right)^{1 + \delta}\int_{0}^{\infty}v^{\delta}\exp(-v) dv \leq H \left(\frac{1}{\beta}\right)^{1 + \delta}\Gamma(\delta + 1).
\end{align*}
We now bound $B_{\beta}$. Note that when $U > \alpha := \frac{(1+\epsilon)\log\beta}{\beta},$ we have
\begin{align*}
    U\exp\left(-\beta U\right)\overset{(a)}{ \le }\alpha\exp\left(-\beta\alpha\right)
    = \beta^{-2 - \epsilon}\cdot (1+\epsilon) \log\beta,
\end{align*}
where to obtain (a) we used the fact that the function $x\mapsto x\exp(-x)$ is decreasing for $x > 1$ (see Figure~\ref{fig:graph}) and we assumed that $\beta$ was large enough such that $(1+\epsilon) \log(\beta) > 1$. Combining these bounds yields the bound in Equation~\eqref{eq:bias_bd_1}.

We conclude by noting that if we simply select $\epsilon_\ast > \delta - 1$, there exists $\beta_\ast > 0$ such that the two stated conditions on $\beta$ hold and such that $\frac{(1 + \epsilon_\ast)\log(\beta)}{\beta^{2 + \epsilon_\ast}} \leq \left(\frac{1}{\beta}\right)^{1 + \delta}$. Thus, the claimed bound holds with $C = H\Gamma(1 + \delta) + 1$.
\end{proof}

\section{Proofs from Section~\ref{sec:static}}
\label{app:treatment}
The proofs from Section~\ref{sec:static} largely follow as corollaries of our more general results from Section~\ref{sec:irregular}, namely Proposition~\ref{prop:general_score} and Theorem~\ref{thm:normal_irregular}. Nonetheless, we provide short proofs for each result. We start with arguing the orthogonality of the smoothed score for the value of the optimal treatment policy.

\begin{proof}[Proof of Proposition~\ref{prop:ortho_static}]
To prove this result, we just need to check that the conditions of Assumption~\ref{ass:constituent_scores} from Section~\ref{sec:irregular} follow from Assumption~\ref{ass:static}. Then, the orthogonality of the described score will follow from Proposition~\ref{prop:general_score}. First, note that in this setting, the $k$th nuisance function $g_k^\ast(a, x)$ is just $g_k^\ast(a, x) := Q^\ast(a, x) := \E[Y \mid X = x, A =a]$, so the first assumption is satisfied with $U_k = Y$. Next, observe that in this setting we have
$\psi_k(X; Q) := Q(k, X)$, and so the second condition is trivially satisfied as well. Finally, note that because of  positivity, conditional ignorability, and Consistency/SUTVA (Assumption~\ref{ass:static}), we have
\[
D_Q\psi_k(X; Q)(\omega) = \omega(k, X) = \E\left[\frac{\1\{A = k\}}{p^\ast(k \mid X)}\omega(k, X) \mid X\right] = \E\left[\zeta_k^\ast(A, X)\omega(A, X) \mid X\right]
\]
where $\zeta_k(A, X) := \frac{\1\{A = k\}}{p^\ast(k \mid X)}$ is bounded (hence square-integrable) by strong positivity. Thus, Proposition~\ref{prop:general_score} implies the score
\[
\Psi^\beta(Z; Q, \alpha) :=\smax^\beta_\ell Q(\ell, X)  + \left(\sum_{\ell = 1}^N \alpha_\ell(A, X)\right)\left\{Y - Q(A, X)\right\}
\]
is Neyman orthogonal with respect to $Q$ and $\alpha = (\alpha_1, \dots, \alpha_N)$, where the $\ell$th true represent is given by $\alpha^\beta_\ell(A, X) = \partial_\ell\smax^\beta_k\{Q^\ast(k, X)\} \cdot \frac{\1\{A = \ell\}}{p^\ast(\ell \mid X)}$. 
\end{proof}

Next, we need to check that the assumptions of Theorem~\ref{thm:normal_static} and Assumption~\ref{ass:static} imply those of the more general result presented in Theorem~\ref{thm:normal_irregular}. 

\begin{proof}[Proof of Theorem~\ref{thm:normal_static}]
We iterate through each of the five assumptions of Theorem~\ref{thm:normal_irregular}. The first and second assumptions of Theorem~\ref{thm:normal_irregular} are the same as those of Theorem~\ref{thm:normal_static}. Next, we verify that strong positivity implies mean-squared continuity. In particular, letting $\eta$ be such that $\eta \leq p^\ast(\ell | X)$ for all $\ell \in [N]$, one checks that we have
\begin{align*}
\|\psi_\ell(X; g_\ell) - \psi_\ell(X; g_\ell')\|_{L^2(P_X)} &= \|Q(\ell, X) - Q'(\ell, X)\|_{L^2(P_X)} \\
&= \E\left[(Q(\ell, X) - Q'(\ell, X))^2\right]^{1/2} \\
&= \E\left[\frac{\mathbbm{1}\{A = \ell\}}{p^\ast(\ell \mid X)}\left(Q(\ell, X) - Q'(\ell, X)\right)^2\right]^{1/2} \\
&\leq \frac{1}{\sqrt{\eta}}\E[(Q(A, X) - Q'(A, X))^2]^{1/2} \\
&\lesssim \|Q - Q'\|_{L^2(P_{W})}.
\end{align*}

We note that the assumptions on the nuisance convergence rates and nuisance consistency are identical (the fourth condition in Theorem~\ref{thm:normal_irregular} and the third condition in Theorem~\ref{thm:normal_static}) and likewise noting that $U_k = Y$, the assumptions on boundedness (the fifth condition) are identical. Thus, the assumptions of Theorem~\ref{thm:normal_irregular} are satisfied, implying the estimate $\wh{V}$ is asymptotically linear with the noted influence function. This completes the proof.


\end{proof}


\subsection{Proofs from Section~\ref{sec:static:param}}
We now prove convergence results under semi-parametric restrictions on the blip functions $\gamma^\ast(a, x) := Q^\ast(a, x) - Q^\ast(a^\ast, x)$. We start with the proof of orthogonality for the score targeting the structural parameter.

\begin{proof}[Proof of Proposition~\ref{prop:ortho_static_param}]
First, we check orthogonality with respect to $\mu$. For any fixed $\Delta \in L^2(P_X)$, we have 
\begin{align*}
D_\mu\E\left[\Psi(Z; \theta_0, \mu^\ast, r^\ast)\right](\Delta) &= -\E\left[\Delta(X)(\phi(A, X) - r^\ast(X))\right] \\
&= - \E[\Delta(X)\E\left(\phi(A, X) - r^\ast(X) \mid X\right)] \\
&= 0.
\end{align*}
Next, we check orthogonality with respect to $r$. We have that 
\begin{align*}
D_r \E\left[\Psi(Z; \theta_0, \mu^\ast, r^\ast)\right](\Delta) &= - \E\left[\Delta(X)(Y - \mu^\ast(X))\right] + \theta_0^\top\E\left[\Delta(X)(\phi(A, X) - r^\ast(X))\right] \\
&\qquad + \theta_0^\top\E\left[(\phi(A, X) - r^\ast(X)) \Delta(X)\right] \\
&= -\E\left[\Delta(X)\E\left(Y - \mu^\ast(X) \mid X\right)\right]  + \theta_0^\top\E\left[\Delta(X)\E(\phi(A, X) - r^\ast(X) \mid X)\right]\\
&\qquad + \theta_0^\top\E\left[\E(\phi(A, X) - r^\ast(X) \mid X) \Delta(X))\right]\\
&= 0,
\end{align*}
where the third equality follows from the tower rule and the definitions of $\mu^\ast(X)$ and $r^\ast(X)$. Now, we show that $\theta_0$ is a solution to the population estimating equation. In particular, recalling $Q^\ast(a, x) := \E[Y \mid A = a, X= x]$, we have
\begin{align*}
&\E\left[\Psi(Z; \theta_0, \mu^\ast, r^\ast)\right] = \E\left[\left\{(Y - \mu^\ast(X)) - \theta_0^\top (\phi(A, X) - r^\ast(X))\right\}(\phi(A, X) - r^\ast(X))\right] \\
&\qquad = \E\left[\left\{(Q^\ast(A, X) - \E(Q^\ast(A, X) \mid X)) - (\gamma^\ast(A, X) - \E\left(\gamma^\ast(A, X) \mid X\right))\right\}(\phi(A, X) - r^\ast(X))\right] \\
&\qquad = \E\left[\left\{(Q^\ast(A, X) - \E(Q^\ast(A, X) \mid X)) - (Q^\ast(A, X) - \E\left(Q^\ast(A, X) \mid X\right))\right\}(\phi(A, X) - r^\ast(X))\right] \\
&\qquad =0,
\end{align*}
where the second equality follows from the tower rule after taking conditional expectations given $A$ and $X$ and the third equality follows because $\gamma^\ast(A, X) - \E(\gamma^\ast(A, X) \mid X) = Q^\ast(A, X) - \E(Q^\ast(A, X) \mid X)$. To see uniqueness, note that for any $\theta$, we have (letting $x^{\otimes 2} = x x^\top$ for any vector $x$)
\begin{align*}
\E\left[\Psi(Z; \theta, \mu^\ast, r^\ast)\right] &= (\theta_0 - \theta)^\top \E\left[(\phi(A, X) - r^\ast(X))^{\otimes 2}\right] \neq 0.
\end{align*}
unless $\E\left[(\phi(A, X) - r^\ast(X))^{\otimes 2}\right] = \E\left[\Cov(\phi(A, X) \mid X)\right] = 0$ (which we have assumed is not the case) or $\theta = \theta_0$, thus completing the proof.

\end{proof}

Next, we show that the de-biased estimate $\wh{\theta}_n$ for $\theta_0$ is asymptotically linear with the specified influence function.

\begin{proof}[Proof of Theorem~\ref{thm:normal_static_param}]
To show asymptotic normality, we check the appropriate conditions of Theorem~\ref{thm:smooth_clt_generic} for the score
\[
\Psi(Z; \theta, \mu, r) := \underbrace{-(\phi(A, X) - r(X))(\phi(A, X) - r(X))^\top}_{=:b(z; r)}\theta + \underbrace{(Y - \mu(X))(\phi(A, X) - r(X))}_{=:\nu(Z; \mu, r)}.
\]
We let $B(r) := \E[b(Z; r)]$ and $\calV(\mu, r) := \E[\nu(Z; \mu, r)]$ for notational ease. Since $\Psi$ depends on neither a finite-dimensional nuisance component nor the smoothing parameter $\beta > 0$, we do not need to check Conditions~\ref{cond:score} and \ref{cond:h_conditions} of Theorem~\ref{thm:smooth_clt_generic}.

\paragraph{Checking Theorem~\ref{thm:smooth_clt_generic}, Condition~\ref{cond:neyman}}
Neyman orthogonality was shown in Proposition~\ref{prop:ortho_static_param}.

\paragraph{Checking Theorem~\ref{thm:smooth_clt_generic}, Condition~\ref{cond:hessian}}
We let $g^\ast = (\mu^\ast, r^\ast)$, $\wh{g} = (\wh{\mu}, \wh{r})$, and $\wb{g} \in [g^\ast, \wh{g}]$ for notational convenience.  We have
\begin{align*}
D_g^2 \E_Z\left[\Psi(Z; \theta, \wb{g})\right](\wh{g} - g^\ast) &= 2D_{\mu, r}\E_Z\left[\wb{\mu}(X)\wb{r}(X)\right](\wh{\mu} - \mu^\ast, \wh{r} - r^\ast) \\
&\qquad - D_{r}^2\E_Z\left[\wb{r}(X)\wb{r}(X)^\top\theta_0\right](\wh{r} - r^\ast).
\end{align*}

Letting $r_k$ denote the $k$th coordinate of a vector-valued function $r : \calX \rightarrow \R^d$, we can bound the cross Gateaux derivative as
\begin{align*}
\left\|D_{\mu, r}\E_Z[\wb{\mu}(X)\wb{r}(X)](\wh{\mu} - \mu^\ast, \wh{r} - r^\ast)\right\|_2 &= \left\|\E_Z\left[(\wh{\mu} - \mu^\ast)(X)(\wh{r} - r^\ast)(X)\right]\right\|_2 \\
&= \sqrt{\sum_{k = 1}^d \E_Z\left[(\wh{\mu} - \mu^\ast)(X)(\wh{r}_k - r_{k}^\ast)(X)\right]^2}\\
&\leq \sqrt{\sum_{k = 1}^d \|\wh{\mu} - \mu^\ast\|_{L^2(P_{X})}^2 \|\wh{r}_{k} - r^\ast_k \|_{L^2(P_{X})}^2} \\
&= \|\wh{\mu} - \mu^\ast\|_{L^2(P_{X})}\cdot\|\wh{r} - r^\ast\|_{L^2(P_{X})} \\
&= o_\P(n^{-1/2}),
\end{align*}
where the  first inequality follows from apply the Cauchy-Schwarz inequality to the expectation inside the sum, the second to last equality follows since $\|r\|_{L^2(P_X)} := \left(\int_\calX \|r(x)\|_2^2 P_X(dx)\right)^{1/2}$ for a vector-valued function $r$, and the final equality follows from the assumption on nuisance estimation rates.

The second Gateaux derivative with respect to $r$ can be computed similarly. Letting $\wh{\Delta} := \wh{r} - r^\ast$, we have
\begin{align*}
D_{r}^2\E_Z\left[\wb{r}(X)\wb{r}(X)^\top\right](\wh{\Delta}) &= \frac{\partial^2}{\partial t^2}\E_Z\left[(\wb{r} + t \wh{\Delta})(X)(\wb{r} + t\wh{\Delta})(X)^\top\right]\Big|_{t = 0} \\
&= \E_Z\left[\frac{\partial^2}{\partial t^2}(\wb{r} + t \wh{\Delta})(X)(\wb{r} + t \wh{\Delta})(X)^\top\big|_{t = 0}\right] \\
&= 2\E_Z\left[\wh{\Delta}(X)\wh{\Delta}(X)^\top\right],
\end{align*}
Where the exchanging of the second partial derivative and expectation is justified by the fact that $\wh{\Delta}$ and $\wb{r}$ are both bounded. Thus, we have 
\begin{align*}
\left\|D_{r}^2\E_Z\left[(\theta_0)^\top \wb{r}(X)\wb{r}(X)\right](\wh{\Delta})\right\|_2 &= 2\left\|\E_Z\left[\wh{\Delta}(X)\wh{\Delta}(X)^\top\right]\theta_0\right\|_2 \\
&\leq 2 \left\|\E\left[\wh{\Delta}(X)\wh{\Delta}(X)^\top\right]\right\|_{op}\left\|\theta_0\right\|_2 \\
&= 2\sup_{\|u\| = 1}\E_Z\left[u^\top \wh{\Delta}(X)\wh{\Delta}(X)^\top u\right] \|\theta_0\|_2 \\
&\lesssim \E_Z\|\wh{\Delta}(X)\|_2^2 \\
&= \|\wh{r} - r^\ast\|_{L^2(P_{X})}^2 \\
&= o_\P(n^{-1/2}),
\end{align*}
where the first inequality follows from the fact that $\|A x\|_2 \leq \|A\|_{op}\|x\|_2$, the second equality follows from the definition of the operator norm, the second inequality follows from pushing the supremum inside the expectation and noting that $\|\theta_0\|_2$ is bounded by a constant by assumption, and the final equality follows from the assumption of nuisance estimation rates. This completes checking Condition~\ref{cond:hessian}.

\paragraph{Checking Theorem~\ref{thm:smooth_clt_generic}, Condition~\ref{cond:equi}}
 Using an argument exactly analogous to the one used to show stochastic equicontinuity in the proof of Theorem~\ref{thm:normal_irregular} (see the display containing Tag~\eqref{eq:equi-bd}), we have
\begin{align*}
\E\left\|\G_n b(Z; r) - \G_n b(Z; r')\right\|_F^2 \leq \E\|b(Z; r) - b(Z; r')\|_F^2.
\end{align*}
We have, for arbitrary $r, r' : \calX \rightarrow \R^d$ that are bounded in $L^\infty$ norm
\begin{align*}
&\E\left\|b(Z; r) - b(Z; r')\right\|_F^2  = \E\left\|(\phi(A, X) - r(X))^{\otimes 2} -(\phi(A, X) - r'(X))^{\otimes 2}\right\|_F^2 &(\text{Definition of } b)\\
&\qquad = \E\big\|(\phi(A, X) - r(X))^{\otimes 2} \pm (\phi(A, X) - r(X))(\phi(A, X) \\
&\qquad \qquad - r'(X))^\top - (\phi(A, X) - r'(X))^{\otimes 2}\big\|_F^2 \\
&\qquad \lesssim \E\left\|(\phi(A, X) - r(X))(r(X) - r'(X))^\top\right\|_F^2 \\
&\qquad \qquad + \E\left\|(\phi(A, X) - r'(X))(r(X) - r'(X))^\top\right\|_F^2 &(\text{Parallelogram inequality}) \\
&\qquad =\E\Big[\big\{\|\phi(A, X) - r(X)\|_2^2 + \|\phi(A, X) - r'(X)\|_2^2\big\}\|r(X) - r'(X)\|_2^2 \Big] &(\|ab^\top\|_F^2 = \|a\|_2^2\|b\|_2^2)\\
&\qquad \lesssim \E\left\|r(X) - r'(X)\right\|_2^2. 
\end{align*}
The second equality above follows from adding and subtracting $(\phi(A, X) - r(X))(\phi(A, X) - r'(X))^\top$ and the final inequality follows from the fact that $r, r',$ and $\phi$ have bounded $L^\infty$ norms (and hence their point-wise $\ell_2$ norms are almost surely bounded by an absolute constant as well).
The above reasoning allows us to conclude
\[
\E_{Z_1, \dots, Z_n}\|\G_n b(Z; \wh{r}) - \G_n b(Z; r^\ast)\|_F^2 \lesssim \|\wh{r} - r^\ast\|_{L^2(P_{X})}^2 = o_\P(1),
\]
where the last equality follows by nuisance consistency. As in the proof of Theorem~\ref{thm:normal_irregular}, conditionally applying Chebyshev's inequality (as in Tag~\eqref{eq:cond-chebyshev}) and applying Proposition~\ref{prop:vitali} (Vitali's Theorem) yields $\|\G_n b(Z; \wh{r}) - \G_n b(Z; r^\ast)\|_F = o_\P(1)$. An exactly analogous argument (instead adding and subtracting $(y - \mu(x))(\phi(a, x) - r'(x))$ ) yields that
\begin{align*}
\E\left\|\G_n \nu(Z; \mu, r) - \G_n \nu(Z; \mu', r')\right\|_2^2 \leq \E\left\|\nu(Z; \mu, r) - \nu(Z; \mu', r')\right\|_2^2 \lesssim \E\left|(\mu - \mu')(X)\right|^2 + \E\left\|(r - r')(X)\right\|_2^2.
\end{align*}
This in particular implies that
\begin{align*}
\E_{Z_1, \dots, Z_n}\left\|\G_n \nu(Z; \wh{\mu}, \wh{r}) - \G_n \nu(Z; \mu^\ast, r^\ast)\right\|_2^2 \lesssim \left\|\wh{\mu} - \mu^\ast\right\|_{L^2(P_X)}^2 + \left\|\wh{r} - r^\ast\right\|_{L^2(P_X)}^2 = o_\P(1),
\end{align*}
again by nuisance consistency. Another conditional application of Chebyshev's inequality thus yields 
\[
\|\G_n \nu(Z; \wh{\mu}, \wh{r}) - \G_n \nu(Z; \mu^\ast, r^\ast)\|_2 = o_\P(1),
\]
thus completing the proof of stochastic equicontinuity.

\paragraph{Checking Theorem~\ref{thm:smooth_clt_generic}, Condition~\ref{cond:regularity}}
First, we note that 
\[
-B(r^\ast) := \E\left[(\phi(A, X) - r^\ast(X))^{\otimes 2}\right]=\E\left[\Cov(\phi(A, X) \mid X)\right] \succ 0
\]
by assumption, and hence $B(r^\ast)^{-1}$ exists. Next, by again adding and subtracting $(\phi(A, X) - r(X))(\phi(A, X) - r'(X))$, we have 
\begin{align*}
&\left\|B(\wh{r}) - B(r^\ast)\right\|_{op} \leq \left\|B(\wh{r}) - B(r^\ast)\right\|_F \\
&\qquad \leq \left\|\E_Z\left[(\phi(A, X) - \wh{r}(X))(\wh{r}(X) - r^\ast(X))^\top\right]\right\|_F + \left\|\E_Z\left[(\phi(A, X) - r^\ast(X))(\wh{r}(X) - r^\ast(X))^\top\right]\right\|_F \\
&\qquad \leq \E_Z\left\|(\phi(A, X) - \wh{r}(X))(\wh{r}(X) - r^\ast(X))^\top\right\|_F + \E_Z\left\|(\phi(A, X) - r^\ast(X))(\wh{r}(X) - r^\ast(X))^\top\right\|_F \\
&\qquad = \E_Z\left\|\phi(A, X) - \wh{r}(X)\right\|_2\left\|(\wh{r} - r^\ast)(X)\right\|_2 + \E_Z\left\|\phi(A, X) - r^\ast(X)\right\|_2\left\|(\wh{r} - r^\ast)(X)\right\|_2 \\
&\qquad \lesssim \E_Z\left\|(\wh{r} - r^\ast)(X)\right\|_2 \\
&\qquad \leq \left\|\wh{r} - r^\ast\right\|_{L^2(P_X)},
\end{align*}

where the first inequality follows as $\|A\|_{op} \leq \|A\|_F$, the second inequality follows from the triangle inequality, the third follows from Jensen's inequality, the first equality follows from the equality $\|a b^\top\|_F = \|a\|_2 \|b\|_2$, and the final line follows from an application of Jensen's inequality again. This shows mean-squared continuity, and thus completes the proof.
\end{proof}

Lastly, we conclude this appendix by proving the asymptotic normality of the estimate for the value of the optimal policy value.

\begin{proof}[Proof of Corollary~\ref{cor:value_param_static}]
We supress dependence on the sample size $n$ in the sequel to streamline notation. First, observe that we again have
\[
\sqrt{n}(\wh{V}_n - V^\ast) = \sqrt{n}(\wh{V}_n - V^{\beta}) + \sqrt{n}(V^{\beta} - V^\ast) = \sqrt{n}(\wh{V} - V^{\beta}) + o_\P(1).
\]
The second inequality follows from applying Lemma~\ref{lem:margin} as in the Proof of Theorem~\ref{thm:normal_irregular} (which leverages assumptions 1 and 2 of Corollary~\ref{cor:value_param_static}). Asymptotic linearity will now follow if we check the relevant conditions of Theorem~\ref{thm:smooth_clt_generic}. Since there is no infinite-dimensional nuisance component in this setting, we do not need to check Conditions~\ref{cond:neyman}, \ref{cond:hessian}, \ref{cond:equi}, the first part of Condition~\ref{cond:regularity}, and Condition~\ref{cond:h_cont}.

First, note that we can rewrite the score as 
\[
m^\beta(z; V, \theta) = V - \Phi^\beta(z; \theta) \quad \text{and} \quad m^\ast(z; V, \theta) = V - \Phi^\ast(z; \theta),
\]
which instantiates Theorem~\ref{thm:smooth_clt_generic} with $b^\beta \equiv 1$, $h = \theta$, and $\nu^\beta(Z; h) = -\Phi^\beta(Z; \theta)$. Note we use $b^\beta$ here instead of $a^\beta$ to avoid collision with the notation $a \in [N]$, which generally represents treatment. 

\paragraph{Checking Theorem~\ref{thm:smooth_clt_generic}, Condition~\ref{cond:score}} Since we know that
$|V^\beta - V^\ast| = o(1)$, it follows that
\begin{align*}
\lim_{n \rightarrow \infty}\left|m^\beta(Z; V^\beta, \theta_0) - m^\ast(Z; V^\ast, \theta_0)\right| &\leq \lim_{n \rightarrow \infty}\left|V^\beta - V^\ast\right| + \lim_{n \rightarrow \infty}\left|\smax^\beta_{\ell \in [N]} \theta_0^\top \phi(\ell, X) - \max_{\ell \in [N]} \theta_0^\top \phi(\ell, X)\right| \\
&= \lim_{n \rightarrow \infty}\left|\smax^\beta_{\ell \in [N]} \theta_0^\top \phi(\ell, X) - \max_{\ell \in [N]} \theta_0^\top \phi(\ell, X)\right| \\
&= 0
\end{align*}
where the final equality follows from the point-wise convergence of the softmax function to the maximum function (Lemma~\ref{lem:softmax} Point~\ref{pt:softmax_lim}). This yields almost sure convergence, which is stronger than the required convergence in probability. Further, since $B^\beta = B^\ast = 1$, there is nothing to check for convergence here.
\paragraph{Checking Theorem~\ref{thm:smooth_clt_generic}, Condition~\ref{cond:regularity}}
Since we have $B^\ast = B^\beta = 1$, invertibility is immediate. Further, since there is no infinite-dimensional nuisance component, there is nothing to check for mean-squared continuity. Lastly,  since $V^\ast$ and $\Phi^\beta(Z; \theta_0)$ are both bounded by assumption (in $\ell^2$ and $L^\infty(P_Z)$ norm, respectively), all necessary boundedness conditions hold.

\paragraph{Checking Theorem~\ref{thm:smooth_clt_generic}, Condition~\ref{cond:h_conditions}}

We conclude by checking the relevant conditions outlined in Condition~\ref{cond:h_conditions}, which pertain to the continuity/regularity of $m^\beta(Z; V, \theta)$ with respect to the finite-dimensional nuisance $\theta$. 
\begin{enumerate}
    \item \textbf{(Condition~\ref{cond:h_diff})} Note that for any fixed $z \in \calZ$, we have
    \[
    \nu^\beta(z; \theta) =  -\Phi^\beta(z; \theta) = -\smax^\beta_{\ell \in [N]} \theta^\top \phi(\ell, x) + \theta^\top \phi(a, x) - y.
    \]
    This implies $\nu^\beta$ is infinitely-differentiable in $\theta$.
    \item \textbf{(Condition~\ref{cond:h_jacobian})} We claim that this condition holds with $J^\ast$ given by
    \[
    J^\ast = -\E\left[\sum_{k = 1}^N \fraks_k(\theta_0^\top \phi(\ell, X))\phi(k, X) - \phi(A, X)\right] = -\E\left[\phi^\infty(X) - \phi(A, X)\right],
    \]
    where $\phi^\infty(X)$ is as defined in the statement of Corollary~\ref{cor:value_param_static}.
    To see this, note that we can use the boundedness of $\phi(\ell, X)$ and $\theta_0$ to interchange the derivative with expectation, which allows us to write the Jacobian (i.e.\ gradient since $\Phi$ is real-valued) $\nabla_\theta \Phi(Z; \theta_0)$ of $\Phi(Z; \theta_0)$ as 
    \begin{align*}
    \nabla_\theta \Phi(Z;\theta_0) &= \E\left[\nabla_\theta \smax^\beta_\ell \left\{\theta_0^\top \phi(\ell, X)\right\} - \partial_\theta \theta_0^\top \phi(A, X)\right] \\
    &= \E\left[\sum_{k = 1}^N \left(\nabla_u \smax^\beta_\ell\left\{ \theta_0^\top \phi(\ell, X)\right\}\right)_k \phi(k, X)  - \phi(A, X)\right],
    \end{align*}
    where the final line follows from the chain rule and we recall we have let $\nabla_u \smax^\beta_\ell\left\{ \theta_0^\top \phi(\ell, X)\right\} = \nabla_u \smax^\beta_\ell\{u_\ell\}\vert_{u = \theta_0^\top \phi(\ell, X)}$ for convenience. We now show that
    \[
    \E\left[\left\{\nabla_u \smax^\beta_\ell\left\{ \theta_0^\top \phi(\ell, X)\right\}_k - \fraks_k(\theta_0^\top \phi(\ell, X))\right\}\phi(k, X)\right] = o(1)
    \]
    for each $k \in [N]$,
    which shows the desired convergence result. Observe that since we have assumed boundedness of $\phi(k, X)$, we have 
    \begin{align*}
    &\lim_{\beta \rightarrow \infty}\left\|\E\left[\left\{\nabla_u \smax^\beta_\ell\left\{ \theta_0^\top \phi(\ell, X)\right\}_k - \fraks_k(\theta_0^\top \phi(\ell, X))\right\}\phi(k, X)\right]\right\|_2 \\
    &\qquad \leq \lim_{\beta \rightarrow \infty}\E\left\|\left\{\nabla_u \smax^\beta_\ell\left\{ \theta_0^\top \phi(\ell, X)\right\}_k - \fraks_k(\theta_0^\top \phi(\ell, X))\right\}\phi(k,X) \right\|_2\\
    &\qquad \lesssim \lim_{\beta \rightarrow \infty}\E\left|\nabla_u \smax^\beta_\ell\left\{ \theta_0^\top \phi(\ell, X)\right\}_k - \fraks_k(\theta_0^\top \phi(\ell, X))\right| \\
    &\qquad = \E\left[\lim_{\beta \rightarrow \infty} \left|\nabla_u \smax^\beta_\ell\left\{ \theta_0^\top \phi(\ell, X)\right\}_k - \fraks_k(\theta_0^\top \phi(\ell, X))\right|\right] \\
    &\qquad = 0,
    \end{align*}
    where the first inequality follows from Jensen's inequality, the second equality follows from the bounded convergence theorem (since we know both $\fraks_k$ and $\nabla_u\smax^\beta_\ell\{u\}$ are uniformly bounded, by Lemma~\ref{lem:softmax}) and the final line follows from the point-wise convergence of $(\nabla_u\smax^\beta_\ell\{u\})_k$ to $\fraks_k(u)$ (also from Lemma~\ref{lem:softmax}).
    \item \textbf{(Condition~\ref{cond:h_hessian})} First, we check the almost sure boundedness of each component of the Jacobian/gradient of $\Phi^\beta$. In particular, for any $j \in [d]$, we have 
    \begin{align*}
    \left|\left(\partial_\theta \Phi^\beta(Z; \theta_0)\right)_j\right| &= \left|\sum_{k = 1}^N\phi_j(k, X)\left(\nabla_u\smax^\beta_\ell\left\{\theta_0^\top \phi(\ell, X)\right\}\right)_k - \phi_j(A, X)\right| \\
    &\leq N\max_k|\phi_j(k, X)|\cdot\left|\left(\nabla_u\smax^\beta_\ell\left\{\theta_0^\top \phi(\ell, X)\right\}\right)_k\right| + \left|\phi_j(A , X)\right|   \\
    &= O(1)
    \end{align*}
    since we have assumed $\phi(k, X)$ is almost surely bounded for each $k$ and since the partial derivatives of the softmax function are bounded by an absolute constant that doesn't depend on $\beta$ (Lemma~\ref{lem:softmax}, Point~\ref{pt:softmax_grad}).
    We now check the rate of growth of the second derivative/Hessian with respect to $\theta$. Again by the chain rule, we have for any $\theta \in [\wh{\theta}, \theta_0]$
    \begin{align*}
    \left\|\partial_\theta^2 \Phi^\beta(Z; \theta)\right\|_{op} &= \left\|\sum_{j, k = 1}^N\left(\nabla_u^2 \smax^\beta_\ell\left\{\theta^\top \phi(\ell, X)\right\}\right)_{k, j}\phi(k, X)\phi(j, X)^\top\right\|_{op} \\
    &\lesssim \max_k \|\phi(k, X)\|^2_2 \left\|\nabla_u^2 \smax^\beta_\ell\{\theta^\top \phi(\ell, X)\}\right\|_{op} \\
    &\lesssim \beta_n.
    \end{align*}
    where the final inequality follows from the boundedness of $\phi(k, X)$ in $L^\infty$ norm and the fact that the operator norm of the Hessian of the softmax function grows as $O(\beta)$ (Lemma~\ref{lem:softmax}, Point~\ref{pt:softmax_hess}). Since we have assumed $\beta_n = o(n^{1/2})$, this condition of the theorem holds.
    \item \textbf{(Condition~\ref{cond:h_linearity})} The asymptotic linearity of $\wh{\theta}_n$ around $\theta_0$ with mean zero influence function $\rho_\theta(z)$ is an assumption of Corollary~\ref{cor:value_param_static}, so there is nothing to prove.
    
\end{enumerate}
\end{proof}
\section{Proofs from Section~\ref{sec:irregular}}
\label{app:irregular}

We now prove our results related to inference on point-wise maximum of general smooth scores. We start by proving Proposition~\ref{prop:general_score}, which provides a general de-biased smoothed score for the parameters specified as an appropriate maximum of scores. 
\begin{proof}[Proof of Proposition~\ref{prop:general_score}]


First, we note that the fact that $V^\beta = \E\left[\Psi^\beta(Z; g^\ast, \alpha^\beta)\right]$ is immediate, as for any $k \in [N]$ we have via the tower rule for conditional expectations that
\[
\E\left[\alpha_k^\beta(W)^\top (U_k - g_k^\ast(W))\right] = \E\left[\alpha_k^\beta(W)^\top(\E[U_k \mid W] - g_k^\ast(W))\right] = 0.
\]
Next, we check Neyman orthogonality. Before proceeding, we argue that we can exchange Gateaux derivatives with expectation. Let $\omega = (\omega_1, \dots, \omega_N) \in L^2(P_W; \R^d)$ be arbitrary, where again $d = d_1 + \cdots + d_N$.
We have, for any $k \in [N]$, letting $g^\ast + t\omega_k := (g^\ast_1, \dots, g^\ast_k + t \omega_k, \dots, g^\ast_N)$ for notational ease,
\begin{align*}
D_{g_k}\E\left[\smax^\beta_\ell \psi(X; g_\ell^\ast)\right](\omega_k) &= \partial_t \E\left[\smax^\beta_\ell \psi(X; g^\ast + t\omega_k)\right]\Big|_{t = 0} \\
&= \lim_{\substack{t \rightarrow 0 \\ |t| \leq 1}}\E\left[\frac{\smax^\beta \psi(X; g^\ast + t \omega_k) - \smax^\beta \psi(X; g^\ast)}{t}\right] &(\text{Definition of partial derivative})\\
&= \lim_{\substack{t \rightarrow 0 \\ |t| \leq 1}}\E\left[\frac{D_{g_k} \smax^{\beta}\psi(X; \wt{g}_t)(t\omega_k)}{t}\right] &(\text{Mean value theorem point-wise}) \\
&= \lim_{\substack{t \rightarrow 0 \\ |t| \leq 1}}\E\left[D_{g_k}\smax^\beta\psi(X; \wt{g}_t)(\omega_k)\right] &(\text{Linearity}) \\
&= \lim_{\substack{t \rightarrow 0 \\ |t| \leq 1}}\E\left[ \partial_k\smax^\beta \psi(X; \wt{g}_t)\cdot D_{g_k}\psi_k(X; \wt{g}_{k,t})(\omega_k)\right] &(\text{Chain rule})
\end{align*}
In the above, $\wt{g}_t = (g_1^\ast, \dots, \wt{g}_{k, t}, \dots, g^\ast_N)$, where there is some $\lambda_t \in [0, 1]$ such that $\wt{g}_{k, t}  = \lambda_t g_k^\ast + (1 - \lambda_t)t\omega_k \in [g_k^\ast, g_k^\ast + t \omega_k]$. 

Observe that, by Lemma~\ref{lem:softmax} Part~\ref{pt:softmax_grad}, $\left|\partial_k \smax^\beta\{u\}\right| \leq C$ for some absolute constant $C > 0$ that does not depend on $\beta$. Further, we have $|D_{g_k}\psi_k\big(X; \wt{g}_{k, t}\big)(\omega_k)| \leq F_{\omega_k}^k(X)$ for some $F_{\omega_k}^k(X) \in L^1(P_X)$ by the second part of Assumption~\ref{ass:constituent_scores}. 
Noting that $\wt{g}_t \xrightarrow[t \rightarrow 0]{a.s.} g^\ast$, we see that $\partial_k\smax^\beta\psi(X; \wt{g}_t)\cdot D_{g_k}\psi_k(X; \wt{g}_{k, t})(\omega_k) \xrightarrow[t \rightarrow 0]{a.s.} \partial_k\smax^\beta\psi(X; g^\ast) \cdot D_{g_k}\psi_k(X; g_k^\ast)(\omega_k)$ as well. Hence, by the dominated convergence theorem (see Chapter 1 of \citet{durrett2019probability}), we can exchange limits with expectations. In particular, we have
\begin{align*}
D_{g_k}\E\left[\smax^\beta \psi(X; g^\ast)\right](\omega_k) &= \lim_{\substack{t \rightarrow 0 \\ |t| \leq 1}}\E\left[ \partial_k\smax^\beta \psi(X; \wt{g}_t)\cdot D_{g_k}\psi_k(X; \wt{g}_{k,t})(\omega_k)\right]\\
&= \E\left[\lim_{\substack{t \rightarrow 0 \\ |t| \leq 1}}\partial_k^\beta \smax^\beta\psi(X; \wt{g}_t) \cdot D_{g_k}\psi_k(X; \wt{g}_{k, t})(\omega_k)\right] \\
&= \E\left[\partial_k\smax^\beta\psi(X; g^\ast) \cdot D_{g_k}\psi_k(X; g_k^\ast)(\omega_k)\right] \\
&= \E\left[D_{g_k}\smax^\beta \psi(X; g^\ast)(\omega_k)\right]
\end{align*}

Now, we can compute the Gateaux derivative of $\E\left[\smax^\beta \psi(X; g^\ast)\right]$ with respect to $g_k$ as follows:
\begin{align*}
D_{g_k}\E\left[\smax_\ell^\beta \psi(X; g_\ell^\ast)\right](\omega_k) &= \E\left[D_{g_k}\smax^\beta_\ell \psi_\ell(X; g_\ell^\ast) (\omega_k)\right] &(\text{Interchange of limits})\\
&= \E\left[\partial_k \smax^\beta_\ell \psi_\ell(X; g_\ell^\ast) D_{g_k}\psi_k(X; g_k^\ast)(\omega_k)\right] &(\text{Chain rule})\\
&= \E\left[\partial_k \smax^\beta_\ell \psi_\ell(X; g_\ell^\ast)\E\left(\zeta_k^\ast(W)^\top\omega_k(W)\mid X\right)\right] &(\text{Assumption~\ref{ass:constituent_scores} Part 3})\\
&= \E\left[\partial_k \smax^\beta_\ell \psi_\ell(X; g_\ell^\ast)\zeta_k^\ast(W)^\top\omega_k(W)\right] &(\text{Tower rule})\\
&= \E[\alpha_k^\beta(W)^\top \omega_k(W)] &(\text{Definition of $\alpha_k^\beta$}).
\end{align*}

Now, we see that
\begin{align*}
D_{g_k}\E\left[\alpha_k^\beta(W)^\top(U_k - g_k^\ast(W))\right](\omega_k) &= -D_{g_k}\E\left[\alpha_k^\beta(W)^\top g_k^\ast(W)\right](\omega_k) \\
&= -\partial_t \E\left[\alpha_k^\beta(W)^\top(g_k^\ast(W) + t\omega_k(W))\right]\Big|_{t = 0} \\
&= -\partial_t t\E\left[\alpha^\beta_k(W)^\top \omega_k(W)\right]\Big|_{t = 0}\\
&=  -\E[\alpha_k^\beta(W)^\top \omega_k(W)]
\end{align*}
and so the Gateaux derivative of $\E[\Psi^\beta(Z; g^\ast, \alpha^\beta)]$ with respect to each $g_k$ vanishes. Checking the Gateaux derivative with respect to $\alpha_k$ is similar, as we have
\begin{align*}
D_{\alpha_k}\E\left[\Psi^\beta(Z; g^\ast, \alpha^\beta)\right](\omega_k) &= D_{\alpha_k}\E\left[\alpha_k^\beta(W)^\top(U_k - g_k^\ast(W))\right](\omega_k) \\
&= \E\left[\omega_k(W)^\top(U_k - g_k^\ast(W))\right] \\
&= \E\left[\omega_k(W)^\top(\E(U_k \mid W) - g_k^\ast(W))\right] \\
&=0.
\end{align*}
Thus, we have shown that the score is Neyman orthogonal.

\end{proof}

Next, we check our main asymptotic linearity/normality result for smoothed estimates of irregular fucntionals. This boils down to checking the conditions of the smoothed central limit theorem presented in Theorem~\ref{thm:smooth_clt} of Appendix~\ref{app:clt}.

\begin{proof}[Proof of Theorem~\ref{thm:normal_irregular}]
Throughout this proof, we again suppress dependence on the sample size $n$ to make notation less cluttered. First, we show we have
\[
\sqrt{n}(\wh{V} - V^\ast) = \sqrt{n}(\wh{V} - V^{\beta}) + \sqrt{n}\underbrace{(V^{\beta} - V^\ast)}_{\Bias(\beta)} = \sqrt{n}(\wh{V} - V^{\beta}) + o_\P(1).
\]
To show this, it is sufficient to show that $\Bias(\beta) = o(n^{-1/2})$. Under the choice of sequence of $\beta_n$, applying Lemma~\ref{lem:margin} yields, for $\beta$ sufficiently large,
\begin{align*}
\Bias(\beta) &= \left|\E\left[\smax^\beta_\ell(\psi_\ell(X; g_\ell^\ast)) - \max_\ell\{\psi_\ell(X; g_\ell^\ast)\}\right]\right| \\
&\leq \sum_{k = 1}^N\E\left[\Delta_k \exp\left\{-\beta\Delta_k\right\}\right] &(\text{Lemma~\ref{lem:margin}, First Part}) \\
&\lesssim \left(\frac{1}{\beta}\right)^{1 + \delta}  &(\text{Assumption 1 + Lemma~\ref{lem:margin}, Second Part})\\
&= o(n^{-1/2}) &\left(\text{Since }\beta = \omega\left(n^{\frac{1}{2(1 + \delta)}}\right)\right).
\end{align*}
Thus, we just need to show asymptotic linearity off $\sqrt{n}(\wh{V} - V^{\beta})$ to prove the result. It suffices to check the conditions of Theorem~\ref{thm:smooth_clt}.

\paragraph{Checking Theorem~\ref{thm:smooth_clt}, Condition~\ref{cond:simp_neyman}}
The first condition is satisfied as $\Psi^\beta(Z; g, \alpha)$ is Neyman orthogonal by construction (see Proposition~\ref{prop:general_score}).

\paragraph{Checking Theorem~\ref{thm:smooth_clt}, Condition~\ref{cond:simp_hessian}}

Now, let $h := (g, \alpha)$, $\wh{h} := (\wh{g}, \wh{\alpha})$, $h^\beta := (g^\ast, \alpha^\beta)$, and $\wb{h} := (\wb{g}, \wb{\alpha})$ where $\wb{g} \in [\wh{g}, g^\ast]$ and $\wb{\alpha} \in [\wh{\alpha}, \alpha^{\beta}]$. One can compute that
\begin{align*}
D_h^2\E_Z\Psi^{\beta}(Z; \wb{g}, \wb{\alpha})(\wh{h} - h^{\beta}) &= D_{g}^2 \E_X\left[\smax^{\beta}\psi(X; \wb{g}))\right](\wh{g} - g^\ast) \\
&\qquad + 2D_{g, \alpha}\E_Z\left[\sum_{k = 1}^N \wb{\alpha}_k(W)^\top(U_k - \wb{g}_k(W))\right](\wh{h} - h^{\beta}) \\
&=\E_X\left[D_g^2 \smax^\beta \psi(X; \wb{g})(\wh{g} - g^\ast)\right]  \\
&\qquad + 2D_{g, \alpha}\E_Z\left[\sum_{k = 1}^N \wb{\alpha}_k(W)^\top(U_k - \wb{g}_k(W))\right](\wh{h} - h^{\beta}),
\end{align*}
where the interchange of second derivative with expectation above can be justified by an analogous argument to the one used in the proof of Proposition~\ref{prop:general_score}. Thus, it suffices to show that each of the terms on the right hand side are of order $o_\P(n^{-1/2})$. In the sequel, we leverage the following chain rule for second Gateaux derivatives\footnote{Here and throughout, by $\nabla_u^2 F(H(g))$ and $\nabla_u F(G(g))$ we actually mean $\nabla_u^2 F(u) \vert_{u = H(g)}$ and $\nabla_u F(u) \vert_{u = H(g)}$ respectively.}: given a twice differentiable function $F : \R^N \rightarrow \R$ and a twice Gateaux differentiable function $H : L^2(P_W) \rightarrow \R^N$, one has,
\begin{equation}
\label{eq:hessian}
D^2_g (F \circ H)(g)(\omega) = D_g H(g)(\omega)^\top \nabla^2_u F(H(g)) D_g H(g)(\omega) + \nabla_u F(H(g))^\top D_g^2 H(g) (\omega),
\end{equation}
which follows from the same argument used to prove the multivariate chain rule (see \citet{skorski2019chain}). Analyzing the first term, we see we have
\begin{align*}
&D_{g}^2 \E_X\left[\smax^{\beta}_\ell(\psi_\ell(X; \wb{g}_\ell))\right](\wh{g} - g^\ast) = \E_X\left[D_g^2 \smax^{\beta}_\ell \psi_\ell(X; \wb{g}_\ell)(\wh{g} - g^\ast)\right] &(\text{Interchangeability of limits})\\
&\qquad =\underbrace{\E_X\left[D_g \psi(X; \wb{g})(\wh{g} - g^\ast)^\top \nabla^2\smax^{\beta}_\ell \psi_\ell(X; \wb{g}_\ell) D_g \psi(X; \wb{g})(\wh{g} - g^\ast)\right]}_{T_1}  \\
&\qquad\qquad  + \underbrace{ \E_X\left[\nabla_u\smax^{\beta} \psi_\ell(X; \wb{g}_\ell)^\top D_g^2 \psi(X; \wb{g})(\wh{g} - g^\ast)\right]}_{T_2} &(\text{Using Equation~\eqref{eq:hessian}}) \\
&= \underbrace{\E_X\left[D_g \psi(X; \wb{g})(\wh{g} - g^\ast)^\top \nabla^2\smax^{\beta}_\ell \psi_\ell(X; \wb{g}_\ell) D_g \psi(X; \wb{g})(\wh{g} - g^\ast)\right]}_{T_1},
\end{align*}
where the fact that $T_2 = 0$ follows from the assumption that each $\psi_k(X; g_k)$ is affine in $g_k$, which implies $D_{g_k}^2\psi_k(X; \wb{g}_k)(\wh{g}_k - g^\ast) = 0$ for each $k \in [N]$. We now argue that  $T_1 = o_\P(n^{-1/2})$ under the nuisance estimation rates assumed in the theorem. We have:
\begin{align*} 
T_1 &\leq \sup_{u \in \R^N} \|\nabla^2_u \smax^{\beta}_\ell(u)\|_{op}\cdot \E_X\|D_g\psi(X; \wb{g})(\wh{g} - g^\ast)\|_2^2 &(\text{Since } |x^\top A x| \leq \|A\|_{op}\|x\|_2^2)\\
&= \sup_{u \in \R^N}\|\nabla^2_u \smax^\beta_\ell(u)\|_{op} \sum_{k = 1}^N \|D_{g_k}\psi_k(X; \wb{g}_k)(\wh{g}_k - g_k^\ast)\|_{L^2(P_X)}^2\\
&= \sup_{u \in \R^N}\|\nabla_u^2\smax^\beta_\ell(u)\|_{op}\sum_{k = 1}^N \left\|\psi_k(X; \wh{g}_k) - \psi_k(X; g_k^\ast)\right\|_{L^2(P_X)}^2  &(\text{Affinity of $\psi_k$}) \\
&\lesssim \sup_{u \in \R^N}\|\nabla_u^2 \smax^\beta_\ell(u)\|_{op} \sum_{k = 1}^N \|\wh{g}_{k} - g^\ast_k\|_{L^2(P_W)}^2 &(\text{Mean-squared continuity}) \\
&\lesssim \beta\sum_{k = 1}^N\|\wh{g}_{k} - g^\ast_k\|_{L^2(P_W)}^2  &(\text{By Lemma~\ref{lem:softmax} Point 3})\\
&= o_\P(n^{-1/2})  &(\|\wh{g}_{n, k} - g^\ast_k\|_{L^2(P_W)} = o_\P(\beta^{-1/2} n^{-1/4})).
\end{align*}

Likewise, we can check the cross-derivative term. Let $\Delta_{g, k} := \wh{g}_{k} - g_k^\ast$ and $\Delta_{\alpha, k} := \wh{\alpha}_{ k} - \alpha^{\beta}_k$. Analyzing each term in the finite sum yields:
\begin{align*}
&D_{g_k, \alpha_k}\E_Z\left[\wb{\alpha}_k(W)^\top (U_k - \wb{g}_k(W))\right](\Delta_{g, k}, \Delta_{\alpha, k}) \\
&\qquad = -\frac{\partial^2}{\partial s \partial t}\E_Z\left[(\wb{\alpha}_k + t \Delta_{\alpha, k})^\top(\wb{g}_k + s\Delta_{g, k})\right]\Big|_{s,t = 0} \\
&\qquad =-\lim_{s, t \rightarrow 0}\E_Z\left[\frac{(\wb{\alpha}_k + t\Delta_{\alpha, k})^\top(\wb{g}_k + s\Delta_{g, k}) - \wb{\alpha}_k^\top(\wb{g}_k + s\Delta_{g, k}) - (\wb{\alpha}_k + t\Delta_{\alpha, k})^\top \wb{g}_k + \wb{\alpha}_k^\top \wb{g}_k}{st}\right] \\
&\qquad = -\lim_{s, t \rightarrow 0}\E_Z\left[\frac{s t \Delta_{\alpha, k}^\top\Delta_{g, k}}{s t}\right] = -\E_Z[\Delta_{\alpha, k}^\top\Delta_{g, k}],
\end{align*}
where outside of the first line above we have suppressed dependence on $W$ for notational ease. Leveraging this identity, we have that:
\begin{align*}
\left|D_{g_k, \alpha_k}\E_Z\left[\wb{\alpha}_k(W)^\top (U_k - \wb{g}_k(W))\right](\Delta_{g, k}, \Delta_{\alpha, k})\right| &= \left|\langle \wh{\alpha}_{k} - \alpha^{\beta}_k, \wh{g}_{k} - g_k^\ast\rangle_{L^2(P_W)}\right| \\
&\leq \|\wh{\alpha}_{k} - \alpha_{k}^{\beta}\|_{L^2(P_W)}\cdot \|\wh{g}_{k} - g_k^\ast\|_{L^2(P_W)} \\
&= o_\P(n^{-1/2}),
\end{align*}
where the final line follows from the assumed nuisance estimation rates. Thus, we have show that, under the assumptions of Theorem~\ref{thm:normal_irregular}, second order contributions vanish at sufficiently fast rates.

\paragraph{Checking Theorem~\ref{thm:smooth_clt}, Condition~\ref{cond:simp_score}}

We show the stronger result that $\Psi^{\beta}(Z; g^\ast, \alpha^{\beta}) \xrightarrow[n \rightarrow \infty]{} \Psi^\ast(Z; g^\ast, \alpha^\ast)$ almost surely, which in turns proves convergence in probability. First, applying Point~\ref{pt:softmax_lim} of Lemma~\ref{lem:softmax}, we see that
\begin{align*}
\lim_{n \rightarrow \infty}\smax^{\beta}_\ell \psi_\ell(X; g_\ell^\ast) = \lim_{\beta \rightarrow \infty}\smax^\beta_\ell \psi_\ell(X; g_\ell^\ast) = \max_\ell(\psi_\ell(X; g_\ell^\ast)).
\end{align*}
Next, Point~\ref{pt:softmax_grad_lim} of the same lemma implies that, for any $k \in [N]$,
\begin{align*}
\lim_{n \rightarrow \infty}\alpha_k^{\beta}(W)^\top(U_k - g_k^\ast(W)) &= \left(\lim_{\beta \rightarrow \infty}\partial_k \smax^\beta_\ell\psi(X; g_\ell^\ast)\right)\zeta_k^\ast(W)^\top(U_k - g_k^\ast(W)) \\
&= \fraks_k(\psi(X; g^\ast))\zeta_k^\ast(W)^\top(U_k - g_k^\ast(W)).
\end{align*}
The result readily follows from this.

\paragraph{Checking Theorem~\ref{thm:smooth_clt},  Condition~\ref{cond:simp_equi}}

We now check stochastic equi-continuity. In particular, we aim to show:
\[
\left|\G_n \Psi^{\beta}(Z; \wh{g}, \wh{\alpha}) - \G_n \Psi^{\beta}(Z; g^\ast, \alpha^{\beta})\right| = o_\P(1).
\]
To do this, we again let $g, g', \alpha, \alpha'$ be functions that have their $L^\infty(P_W)$ norm bounded by the constant $G$ and we again set $h := (g, \alpha)$ and $h' := (g', \alpha')$. We first show
\[
\E\left[\left(\G_n \Psi^{\beta}(Z; h) - \G_n \Psi^{\beta}(Z; h')\right)^2\right]^{1/2} \lesssim \sum_\ell\|g_\ell - g'_\ell\|_{L^2(P_W)} + \sum_\ell\|\alpha_\ell - \alpha'_\ell\|_{L^2(P_W)}.
\]
For ease of notation, define $f(Z; h, h') := \Psi^{\beta}(Z; h) - \Psi^{\beta}(Z; h')$. First, observe that we have
\begin{align*}
\E\left[\left(\G_n \Psi^{\beta}(Z; h) - \G_n \Psi^{\beta}(Z; h')\right)^2\right] &= \frac{1}{n}\E\left[\left(\sum_{i = 1}^n\{f(Z_i; h, h') - \E f(Z; h, h')\}\right)^2\right] \tag{A}\label{eq:equi-bd}\\
&= \frac{1}{n}\Var\left[\sum_{i = 1}^n f(Z_i; h, h')\right]  \\
&= \Var[f(Z; h, h')] \qquad (\text{Independence of the } Z_i)\\
&\leq \E\left[f(Z; h, h')^2 \right] \qquad (\text{Since } \Var[X] \leq \E[X^2])\\
&= \|\Psi^{\beta}(Z; h) - \Psi^{\beta}(Z; h')\|_{L^2(P_Z)}^2. 
\end{align*}
Next, a repeated application of the parallelogram inequality yields that:
\begin{align*}
\|\Psi^{\beta}(Z; h') - \Psi^{\beta}(Z; h)\|_{L^2(P_Z)}^2 &\lesssim \|\smax^\beta_\ell\psi_\ell(X; g_\ell) - \smax^\beta_\ell \psi_\ell(X; g_\ell')\|_{L^2(P_X)}^2\\
&+\sum_\ell\|\alpha_\ell(W)^\top(U_\ell - g_\ell(W)) -\alpha_\ell'(W)^\top(U_\ell - g_{\ell}'(W)\|_{L^2(P_Z)}^2.  
\end{align*}

\noindent Recalling that we have defined $\psi(X; g) := (\psi_1(X; g_1), \dots, \psi_N(X; g_N))$, the first term can be bounded above by
\begin{align*}
&\|\smax^\beta_\ell \psi_\ell(X; g_\ell) - \smax^\beta_\ell \psi_\ell(X; g_\ell')\|_{L^2(P_X)}^2 \leq L^2\|\psi(X; g) - \psi(X; g')\|_{L^2(P_X)}^2 \\
&\qquad= L^2\sum_\ell \|\psi_\ell(X; g_\ell) - \psi_\ell(X; g_\ell')\|_{L^2(P_X)}^2\\
&\qquad \lesssim \sum_\ell \|g_\ell - g_\ell'\|_{L^2(P_W)} &(\text{Mean-squared continuity}).
\end{align*}
where $L$ is the $\beta$-independent Lipschitz constant for $\smax^\beta$ outlined in Lemma~\ref{lem:softmax}.
An arbitrary index in the second term can likewise be bounded as:
\begin{align*}
&\|\alpha_\ell(W)^\top(U_\ell - g_\ell(W)) - \alpha'_\ell(W)(U_\ell - g_\ell'(W))\|_{L^2(P_Z)}^2 \\
&\qquad = \|\alpha_\ell(W)^\top(U_\ell - g_\ell(W)) \pm \alpha_\ell(W)^\top(U_\ell - g_\ell'(W)) - \alpha'_\ell(W)^\top(U_\ell - g_\ell'(W))\|_{L^2(P_Z)}^2 \\
&\qquad \lesssim \|\alpha_\ell(W)^\top(g_\ell - g_\ell')(W)\|_{L^2(P_W)}^2 + \|(\alpha_\ell - \alpha_\ell')(W)^\top(U_\ell  - g_\ell'(W))\|_{L^2(P_Z)}^2 \\
&\qquad \lesssim \|\alpha_\ell - \alpha_\ell'\|_{L^2(P_W)}^2 + \|g_\ell - g_\ell'\|_{L^2(P_W)}^2,
\end{align*}
where the second to last inequality follows from parallelogram inequality and the final inequality follows from the boundedness assumption on $\alpha, \alpha', g, g',$ and $U_\ell$. This in total yields:
\begin{equation}
\label{eq:nuisance_gap}
\|\Psi^{\beta}(Z; h) - \Psi^{\beta}(Z; h')\|_{L^2(P_Z)}^2 \lesssim \underbrace{\sum_\ell \|g_\ell - g_\ell'\|_{L^2(P_W)}^2 + \sum_\ell \|\alpha_\ell - \alpha_\ell'\|_{L^2(P_W)}^2}_{\Delta(h, h')^2}.
\end{equation}

Now, note that, by the assumption of bounded true nuisances and nuisance estimates (Assumption 5 of the theorem) that $|\Delta(\wh{h}, h^{\beta})^2| \leq B'$ for any $n \geq 1$ where $B' > 0$ is some absolute constant. Consequently, the process $(\Delta(\wh{h}, h^{\beta})^2)_{n \geq 1}$ is uniformly integrable. Since, by the assumption of nuisance consistency, we have $\Delta(\wh{h}, h^{\beta})^2 = o_\P(1)$, we have that $\lim_{n \rightarrow \infty} \E\left[\Delta(\wh{h}, h^{\beta})^2\right] = 0$ by Vitali's Theorem (Proposition~\ref{prop:vitali}). Thus, applying Chebyshev's inequality conditionally on the independent nuisance estimates $(\wh{g}, \wh{\alpha})$ yields that, for any $\delta \in (0, 1)$:
\begin{align*}
\lim_{n 
\rightarrow \infty}\P\left(|\G_n \Psi^{\beta}(Z; \wh{h}) - \G_n \Psi^{\beta}(Z; h^{\beta})| \geq \delta \right) &= \lim_{n \rightarrow \infty}\E\left[\P_Z\left(|\G_n \Psi^{\beta}(Z; \wh{h}) - \G_n \Psi^{\beta}(Z; h^{\beta})| \geq \delta\right)\right] \tag{B}\label{eq:cond-chebyshev} \\
& \lesssim \lim_{n \rightarrow \infty}\frac{1}{\delta^2}\E\left[\sum_\ell \|\wh{g}_\ell - g_\ell^\ast\|_{L^2(P_W)}^2 + \sum_\ell \|\wh{\alpha}_\ell - \alpha_\ell^\beta\|_{L^2(P_W)}^2  \right] \\
& = \lim_{n \rightarrow \infty} \frac{1}{\delta^2}\E\left[\Delta(\wh{h}, h^{\beta})^2\right] = 0.
\end{align*}
This proves the desired stochastic equi-continuity result.

\paragraph{Checking Theorem~\ref{thm:smooth_clt_generic}, Condition~\ref{cond:simp_reg}}
Regularity conditions are straightforward to check. First, observe that we have
\begin{align*}
|V^\beta| \leq \E|\smax^{\beta}_\ell \psi_\ell(X; g_\ell^\ast)| \leq \E|\max_\ell \psi_\ell(X; g_\ell^\ast)| = O(1)
\end{align*}
by assumption. Next, similar reasoning yields that, for any $\epsilon > 0$,
\begin{align*}
\|\Psi^{\beta}(X; g^\ast, \alpha^{\beta})\|_{L^{2 + \epsilon}(P_Z)} &\leq \|\Psi^{\beta}(Z; g^\ast, \alpha^{\beta})\|_{L^\infty(P_Z)} \\
&= \left\|\smax^{\beta} \psi(X; g^\ast) + \sum_\ell \alpha_\ell^\beta(W)^\top(U_\ell - g_\ell^\ast(W))\right\|_{L^\infty(P_Z)} \\
&\leq \left\|\smax^\beta \psi(X; g^\ast)\right\|_{L^\infty(P_Z)} + \sum_\ell \left\|\alpha_\ell^\beta(U_k - g_k^\ast(W))\right\|_{L^\infty(P_Z)} \\
& = O(1) 
\end{align*}
This completes the proof.
\end{proof}

We now prove Corollary~\ref{cor:normal_irregular} by showing an appropriate level of consistency for the plug-in estimate $\wh{\Sigma}_n$ of the population variance $\Sigma^\ast$. In the following proof, we need the following weak law of large numbers for triangular arrays. The following appears as Theorem 2.2.6 in \citep{durrett2019probability}, wherein a proof can be found.

\begin{prop}
\label{prop:wlln}
For each $n \geq 1$, let $X_1^{n}, \dots, X_n^n$ be a sequence of random variables, $S_n := X_1^n + \cdots + X_n^n$, $\sigma_n^2 := \Var[S_n]$, $\mu_n := \E S_n$, and $b_n$ a sequence of constants. Then, if $\sigma^2_n/b_n^2 \rightarrow 0$ as $n \rightarrow \infty$, we have
\[
\frac{S_n - \mu_n}{b_n} \xrightarrow[]{\P} 0.
\]
\end{prop}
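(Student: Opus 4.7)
The plan is to prove this by a direct single application of Chebyshev's inequality to the centered and scaled sum. Define $Y_n := (S_n - \mu_n)/b_n$. Since translation does not affect variance and scaling by a deterministic constant scales variance by the square of that constant, we immediately get $\E Y_n = 0$ and $\Var(Y_n) = \sigma_n^2/b_n^2$. For any fixed $\epsilon > 0$, Chebyshev's inequality then yields
\[
\P\left(\left|\frac{S_n - \mu_n}{b_n}\right| \geq \epsilon\right) = \P(|Y_n| \geq \epsilon) \leq \frac{\Var(Y_n)}{\epsilon^2} = \frac{\sigma_n^2}{\epsilon^2 b_n^2}.
\]
By the hypothesis $\sigma_n^2/b_n^2 \to 0$, the right-hand side tends to $0$ as $n \to \infty$ for each fixed $\epsilon$, which is precisely the definition of $(S_n - \mu_n)/b_n \xrightarrow{\P} 0$.

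One conceptual point worth flagging in the write-up: no independence (or uncorrelatedness) assumption on $X_1^n, \dots, X_n^n$ within a given row is imposed. All within-row dependence is absorbed into the single scalar $\sigma_n^2 = \Var[S_n]$, and Chebyshev's inequality only sees this scalar. This is why the hypothesis is stated in terms of $\Var[S_n]/b_n^2$ rather than the (in general stronger) condition $\sum_i \Var[X_i^n]/b_n^2 \to 0$, which would exclude settings with substantial positive correlations where the latter sum may fail to vanish even when $\Var[S_n]/b_n^2$ does.

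There is no real obstacle: the result is classical and the proof is genuinely a two-line computation. The only thing one needs to be careful about is citing Chebyshev in its second-moment form (requiring $Y_n \in L^2$), which is automatic here because the hypothesis $\sigma_n^2/b_n^2 \to 0$ in particular forces $\sigma_n^2 < \infty$ eventually, and rows with $\sigma_n^2 = \infty$ can be handled trivially since they necessarily violate the hypothesis for large $n$.
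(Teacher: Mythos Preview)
Your proof is correct. The paper does not actually supply its own proof of this proposition; it simply cites Theorem~2.2.6 of \citet{durrett2019probability}, and your Chebyshev argument is precisely the standard textbook proof one finds there.
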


\begin{proof}[Proof of Corollary~\ref{cor:normal_irregular}]

Again, we omit dependence on sample size $n \geq 1$ when referring to smoothing parameters $\beta$ or nuisance estimates $\wh{g}$ and $\wh{\alpha}$ in order to streamline notation. Observe that, letting $h = (g, \alpha)$ as above, we can equivalently write
\[
\wh{\Sigma} := \frac{1}{n}\sum_{m}\left\{\Psi^{\beta}(Z; \wh{h}) - \wh{V}\right\}^2 = \frac{1}{n}\sum_m \Psi^{\beta}(Z; \wh{h})^2 - \wh{V}^2.
\]
The conclusion of Theorem~\ref{thm:normal_irregular} immediately implies that $\wh{V}^2 = (V^\ast)^2 + o_\P(1)$ by the continuous mapping theorem, so it suffices to check that
\[
\frac{1}{n}\sum_m \Psi^{\beta}(Z; \wh{h})^2 - \E\left[\Psi^\ast(Z; h^\ast)^2\right] = o_\P(1).
\]
We rewrite the left-hand side of the above expression as
\begin{align*}
\frac{1}{n}\sum_m \Psi^{\beta}(Z; \wh{h})^2 - \E\left[\Psi^\ast(Z; h^\ast)^2\right] &= \underbrace{\frac{1}{n}\sum_m\left\{\Psi^{\beta}(Z; \wh{h})^2 - \Psi^{\beta}(Z; h^{\beta})^2\right\}}_{T_1} + \underbrace{\frac{1}{n}\sum_m \left\{\Psi^{\beta}(Z; h^{\beta})^2 - \Psi^\ast(Z; h^\ast)^2\right\}}_{T_2} \\
&+ \underbrace{\frac{1}{n}\sum_m \left\{\Psi^\ast(Z; h^\ast)^2 - \E\left[\Psi^\ast(Z; h^\ast)^2\right]\right\}}_{T_3}.
\end{align*}
We start by analyzing $T_2$ and $T_3$, as they are both simpler than $T_1$. First, since $T_3$ is an average of i.i.d.\ bounded mean-zero random variables $\Psi^\ast(Z; h^\ast)^2 - \E\left[\Psi^\ast(Z; h^\ast)^2\right]$, the weak law of large numbers implies that $T_3 = o_\P(1)$. 

Next, we use a weak law of large numbers for triangular arrays to argue that $T_2 = o_\P(1)$. Observe that, by the assumption of boundedness, $\sup_{n \geq 1}\Var\left[\Psi^{\beta}(Z; h^{\beta})^2 - \Psi^\ast(Z; h^\ast)^2\right] \leq C$ for some absolute constant $C > 0$, and hence $\Var[T_2] \leq C/n$. Applying the weak law of large numbers for triangular arrays with $S_n = \sum_m \left\{\Psi^{\beta}(Z; h^{\beta})^2 - \Psi^\ast(Z; h^\ast)^2\right\}$ and $b_n = n$ (Proposition~\ref{prop:wlln}) yields that
\[
T_2 - \E\left[\Psi^{\beta}(Z; h^{\beta})^2 - \Psi^\ast(Z; h^\ast)^2\right] = o_\P(1),
\]
so it suffices to show that the above expectation goes towards zero as $n$ grows large. In particular, the bounded convergence theorem alongside the fact that $\Psi^{\beta}(Z; h^{\beta}) \xrightarrow[n \rightarrow \infty]{a.s.} \Psi^\ast(Z; h^\ast)$ yields that
\[
\lim_{n \rightarrow \infty}\E\left[\Psi^{\beta}(Z; h^{\beta})^2 - \Psi^\ast(Z; h^\ast)^2\right] = \E\left[\lim_{n \rightarrow \infty}\Psi^{\beta}(Z; h^{\beta})^2 - \Psi^\ast(Z; h^\ast)^2\right] = 0,
\]
which thus yields that $T_2 = o_\P(1)$.

Now, we argue that $T_1 = o_\P(1)$. Observe that we can rewrite $T_1$ as follows.
\begin{align*}
T_1 &= \frac{1}{n}\sum_m \left\{\Psi^{\beta}(Z; \wh{h}) \pm \Psi^{\beta}(Z; h^{\beta})\right\}^2 - \frac{1}{n}\sum_m \Psi^{\beta}(Z; h^{\beta})^2 \\
&= \frac{1}{n}\sum_m \left\{\Psi^{\beta}(Z; \wh{h}) - \Psi^{\beta}(Z; h^{\beta})\right\}^2 + \frac{2}{n}\sum_m \left\{\Psi^{\beta}(Z; \wh{h}) - \Psi^{\beta}(Z; h^{\beta})\right\}\cdot \Psi^{\beta}(Z; h^{\beta}).
\end{align*}
Thus, applying the Cauchy-Schwarz inequality, we see that we have 
\begin{align*}
|T_1| &\leq \frac{1}{n}\sum_m \left\{\Psi^{\beta}(Z; \wh{h}) - \Psi^{\beta}(Z; h^{\beta})\right\}^2 \\
&\qquad + 2 \left(\frac{1}{n}\sum_m \left\{\Psi^{\beta}(Z; \wh{h}) - \Psi^{\beta}(Z; h^{\beta})\right\}^2\right)^{1/2}\left(\frac{1}{n}\sum_m \Psi^{\beta}(Z; h^{\beta})^2\right)^{1/2}.
\end{align*}
Since $\frac{1}{n}\sum_m \Psi^{\beta}(Z; h^{\beta})^2 = O_\P(1)$ by our analysis of $T_2$, it suffices to show that $n^{-1}\sum_m \left\{\Psi^{\beta}(Z_m; \wh{h}) - \Psi^{\beta}(Z_m; h^{\beta})\right\}^2 = o_\P(1)$ to show $T_1 = o_\P(1)$. First, we note that, in the proof of Theorem~\ref{thm:normal_irregular} above, we showed that
\[
\E_Z\left[\left(\Psi^{\beta}(Z; \wh{h}) - \Psi^{\beta}(Z; h^{\beta})\right)^2 \right] \lesssim \Delta(\wh{h}, h^{\beta})^2 = o_\P(1),
\]
where $\Delta(h, h')$ is as defined in Equation~\eqref{eq:nuisance_gap}. We further argued using uniform integrability that $\lim_{n \rightarrow \infty}\E\left[\Delta(\wh{h}, h^{\beta})^2\right] = 0$. Consequently, applying Markov's inequality, we get, for any $\delta > 0$,
\begin{align*}
\lim_{n \rightarrow \infty}\P\left(\left|\frac{1}{n}\sum_m \left\{\Psi^{\beta}(Z; \wh{h}) - \Psi^{\beta}(Z; h^{\beta})\right\}^2\right| \geq \delta\right) &\leq \frac{1}{\delta}\lim_{n \rightarrow \infty}\E\left[\E_Z\left((\Psi^{\beta}(Z; \wh{h}) - \Psi^{\beta}(Z; h^{\beta}))^2 \right)\right] \\
&\lesssim \frac{1}{\delta} \lim_{n \rightarrow \infty}\E\left[\Delta(\wh{h}, h^{\beta})^2\right] = 0.
\end{align*}
This thus shows that $T_1 = o_\P(1)$, and hence $\wh{\Sigma}_n \xrightarrow[n \rightarrow \infty]{\P} \Sigma^\ast$.

We now show how this suffices to conclude the proof. Observe first that the asymptotic linearity result of Theorem~\ref{thm:normal_irregular} can equivalently be written as 
\[
\sqrt{n}(\Sigma^\ast)^{-1/2}\left(\wh{V} - V^\ast\right) = \frac{1}{\sqrt{n}}\sum_{m} (\Sigma^{\ast})^{-1/2}\left\{\Psi^\ast(Z_m; h^\ast) - V^\ast\right\}+ o_\P(1),
\]
and the normal sum on the right-hand side converges in distribution to a standard normal random variable. To complete the proof, it suffices to show that
\[
\sqrt{n}(\wh{\Sigma}^{-1/2} - (\Sigma^\ast)^{-1/2})(\wh{V} - V^\ast) = o_\P(1).
\]
Observe that $\sqrt{n}(\wh{V} - V^\ast) = O_\P(1)$ by the conclusion of Theorem~\ref{thm:normal_irregular}, and since $\Sigma^\ast > 0$, the continuous mapping theorem alongside the consistency of $\wh{\Sigma}_n$ imply $\wh{\Sigma}_n^{-1/2} - (\Sigma^\ast)^{-1/2} = o_\P(1)$, thus proving the desired result.

\end{proof}

We briefly state a cross-fitting variant of Theorem~\ref{thm:normal_irregular}, which we have alluded to in Remark~\ref{rmk:cross-fit}. The proof of the following corollary is straightforward, but we nonetheless prove it for completeness.

\begin{corollary}
\label{cor:cross_fit}
Suppose the assumptions of Theorem~\ref{thm:normal_irregular} hold alongside the assumptions in Remark~\ref{rmk:cross-fit}. In particular, suppose the nuisance convergence assumptions (Assumptions 4 and 5) hold for all nuisance estimates $\{\wh{g}^{(-k)}_n : k \in [K]\}$ and $\{\wh{\alpha}^{(-k)}_n : k \in [K]\}$. Define the cross-fit estimator $\wh{V}$ as
\[
    \wh{V} := \frac{1}{n}\sum_{k = 1}^K\sum_{i \in \calI_k}\Psi^{\beta}\left(Z_i; \wh{g}^{(-k)}, \wh{\alpha}^{(-k)}\right).
\]
We have the following asymptotic linearity result:
\[
\sqrt{n}(\wh{V} - V^\ast) = \frac{1}{\sqrt{n}}\sum_{m = 1}^n \left\{\Psi^\ast(Z_m; g^\ast, \alpha^\ast) - V^\ast\right\} + o_\P(1).
\]
Consequently, $\sqrt{n}(\wh{V} - V^\ast) \Rightarrow \calN(0, \Sigma^\ast)$ where $\Sigma^\ast = \Var[\Psi^\ast(Z; g^\ast, \alpha^\ast)]$ and $\sqrt{n}\wh{\Sigma}_n^{-1/2}(\wh{V} - V^\ast) \Rightarrow \calN(0, 1)$, where 
\[
\wh{\Sigma}_n := \frac{1}{n}\sum_{k = 1}^K \sum_{i \in \calI_k}\left\{\Psi^{\beta}(Z_i; \wh{g}^{(-k)}, \wh{\alpha}^{(-k)}) - \wh{V}\right\}^2
\].

\end{corollary}

\begin{proof}
By construction $\wh{g}^{(-k)}$ and $\wh{\alpha}^{(-k)}$ are independent of $(Z_i : i \in \calI_k)$. We thus have from the conclusion of Theorem~\ref{thm:normal_irregular} that
\begin{align*}
\frac{1}{\sqrt{n}}\sum_{i \in \calI_k}\Psi^{\beta}\left(Z_i; \wh{g}^{(-k)}, \wh{\alpha}^{(-k)}\right) 
&= \frac{1}{\sqrt{n}}\sum_{i \in \calI_k}  \Psi^{\ast}\left(Z_i; g^\ast, \alpha^\ast\right) + o_\P(1).
\end{align*}
From this it follows that we have 
\[
\sqrt{n}(\wh{V} - V^\ast) = \frac{1}{\sqrt{n}}\sum_{k = 1}^K \sum_{i \in \calI_k}\left\{\Psi^{\ast}\left(Z_i; g^\ast, \alpha^\ast\right) - V^\ast\right\} + o_\P(1),
\]
proving the desired result. Similarly, for showing consistency of the plug-in variance estimate, one has
\begin{align*}
\wh{\Sigma} - \Sigma^\ast = \frac{1}{n}\sum_{k = 1}^K \frac{n}{K}(\wh{\Sigma}^{(k)} - \Sigma^\ast) = \frac{1}{K}\sum_{k = 1}^K o_\P(1) = o_\P(1),
\end{align*}
where we have defined $\wh{\Sigma}^{(k)} := \frac{K}{n}\sum_{i \in \calI_k} \left\{\Psi^{\beta}(Z_i; \wh{g}^{(-k)}, \wh{\alpha}^{(-k)}) - \wh{V}\right\}^2$ and note that $\wh{\Sigma}^{(k)} - \Sigma^\ast = o_\P(1)$ follows from an analogous argument to the one used in the proof of Corollary~\ref{cor:normal_irregular}, since $\wh{V} \xrightarrow[n \rightarrow \infty]{\P} V^\ast$. It thus follows that
\[
\sqrt{n}\wh{\Sigma}^{-1/2}(\wh{V} - V^\ast) = \frac{1}{\sqrt{n}}\sum_{m = 1}^n (\Sigma^{\ast})^{-1/2}\left\{\Psi^\ast(Z_m; g^\ast, \alpha^\ast) - V^\ast\right\} + o_\P(1).
\]
Both asymptotic normality results now readily follow.
\end{proof}

\subsection{Proofs for Applications of Theorem~\ref{thm:normal_irregular}}
\label{app:irregular:app}

We now turn to the proofs relating to the applications of our main theorem on de-biased inference on the point-wise maximum of general affine scores. We start with the problem of constructing asymptotically valid confidence intervals for lower conditional Balke and Pearl bounds.

\begin{proof}[Proof of Corollary~\ref{cor:balke-and-pearl}]

First, we show that the constituent scores satisfy Assumption~\ref{ass:constituent_scores}. Observe that, for any pair $(y, d) \in \{0, 1\}^2$, we have the identity 
\[
q_{yd}^\ast(x, v) = \E[\mathbbm{1}\{Y = y, D = d\} \mid X = x, V = v].
\]
Thus, letting $U_{yd} := \mathbbm{1}\{Y = y, D = d\}$ and $U := (U_{yd} : (y, d) \in \{0, 1\}^2)$, it is clear that we have $q^\ast(x, v) = \E[U \mid X = x, V= v] \in \R^4$. Thus the first part of Assumption~\ref{ass:constituent_scores} is satisfied. The second part of Assumption~\ref{ass:constituent_scores}, which states that each $\psi_\ell$ is affine in its nuisance, here $q^\ast$, is also trivially satisfied. 

Finally, we check the existence of a conditional representer for each score. For any pair $(y, d) \in \{0, 1\}^2$, we have under Assumption~\ref{ass:balke} that
\[
D q_{yd}(X, v)(\omega) = \omega_{yd}(X, v) = \E\left[\frac{\1\{V = v\}}{p^\ast(v \mid X)} \omega_{yd}(X, V) \mid X\right].
\]
Consequently, for $\ell \in [8]$, letting $\zeta^\ast_\ell(X, V)$ be as in Equation~\eqref{eq:balke-pearl-representer}, we have
\begin{align*}
D\psi_\ell(X; q)(\omega) &= \sum_{y, d, v}\sgn_\ell(y, d, v)D q_{yd}(X, v)(\omega) = \sum_{y, d, v}\sgn_\ell(y, d, v)\E\left[\frac{\1\{V = v\}}{p^\ast(v \mid X)} \omega_{yd}(X, V)\mid X \right] \\
&= \sum_{y, d}\left\{\sgn_\ell(y, d, 0)\E\left[\frac{\1\{V = 0\}}{p^\ast(0 \mid X)} \omega_{yd}(X, V)\mid X\right] + \sgn_\ell(y, d, 1)\E\left[\frac{\1\{V = 1\}}{p^\ast(1 \mid X)}\omega_{yd}(X, V) \mid X\right]\right\} \\
&= \E\left[\sum_{y, d}\left\{\sgn_\ell(y, d, 0)\frac{\1\{V = 0\}}{p^\ast(0 \mid X)} \omega_{yd}(X, V) + \sgn_\ell(y, d, 1)\frac{\1\{V = 1\}}{p^\ast(1 \mid X)}\omega_{yd}(X, V) \right\}\mid X\right] \\
&= \E\left[\zeta_\ell^\ast(X, V)^\top \omega(X, V) \mid X\right].
\end{align*}
Thus, Assumption~\ref{ass:balke} is fully satisfied.
    
We now check that the assumptions presented in Assumption~\ref{ass:balke} and the conditions of Corollary~\ref{cor:balke-and-pearl} imply the conditions of Theorem~\ref{thm:normal_irregular}. Conditions 1, 2, and 4 of Theorem~\ref{thm:normal_irregular} are satisfied directly by assumption. Condition 3 (mean-squared continuity) for any $\ell \in [8]$ follows from Assumption~\ref{ass:balke}. In particular, letting $q, q'$ be square-integrable, we have
\begin{align*}
&\left\|\psi_\ell(X; q) - \psi_\ell(X; q')\right\|_{L^2(P_X)}^2 = \E\left[\left(\sum_{(y, d, v)}\sgn_\ell(y, d, v)\left\{q_{yd}(x, v) - q_{yd}'(x, v)\right\} \right)^2\right] \\
&\qquad\lesssim \sum_{\substack{(y, d, v) \\ |\sgn_\ell(y, d, v)| = 1}} \E\left[\left(q_{yd}(X, v) - q_{yd}'(X, v)\right)^2\right]  \qquad\qquad \qquad \qquad \quad(\text{Parallelogram Law})\\ 
&\qquad \lesssim \sum_{\substack{(y, d, v) \\ |\sgn_\ell(y, d, v)| = 1}}\E\left[\frac{\mathbbm{1}\{V = v\}}{p^\ast(v \mid X)}\left\{q_{yd}(X, V) - q_{yd}'(X, V)\right\}^2\right] \quad(\text{Positivity +  Ignorability of } V)\\
&\qquad\lesssim \sum_{\substack{(y, d, v) \\ |\sgn_\ell(y, d, v)| = 1}}\E\left[\left\{q_{yd}(X, V) - q_{yd}'(X, V)\right\}^2\right]  \quad(\text{Strong Positivity})\\
&\qquad\lesssim \|q - q'\|_{L^2(P_W)}^2.
\end{align*}
Analogously, Condition 5 holds as $U_k$ and $q^\ast$ are bounded by definition, positivity ensures $\zeta_\ell^\ast$ and thus $\alpha_\ell^{\beta}$ are bounded, and the boundedness of $\wh{q}$ and $\wh{\alpha}$ are assumed. This completes the proof.
\end{proof}

Next, we prove the asymptotic normality of the smoothed estimate of the $L^1$ calibration error of some estimator $\theta : \calO \rightarrow \R$.

\begin{proof}[Proof of Corollary~\ref{cor:l1-calibration}]

We start by checking that each of the conditions in Assumption~\ref{ass:constituent_scores}. While we note that the scores $\psi_1(X; \chi) := \chi(X) - X$ and $\psi_2(X; \chi) := X - \chi(X)$ don't strictly fall into the affine structure proposed in the assumption, defining $\eta^\ast(X) := \chi^\ast(X) - X$ and $\wh{\eta}(X) := \wh{\chi}(X) - X$, we can instead work with the scores $\psi'_1(X; \eta) := \eta(X)$ and $\psi_2'(X; \eta) := - \eta(X)$ and translate back in the end. First, clearly we have $\eta^\ast(X) = \E[U \mid X]$ where $U := Y - X = Y - \theta(O)$, so the first condition is satisfied. Next, clearly $\psi_1'$ and $\psi_2'$ are affine in the sense of the second condition. Finally, we have the trivial representation $D_\eta \psi_1'(X; \eta)(\omega) = \omega(X)$ and $D_\eta \psi_2'(X; \eta)(\omega) = -\omega(X)$. Thus, Assumption~\ref{ass:constituent_scores} is satisfied.

Now we check that the assumptions of Corollary~\ref{cor:l1-calibration} imply the assumptions of Theorem~\ref{thm:normal_irregular}. 
Again, Conditions 1, 2, and 4 are again implied by the assumptions of Corollary~\ref{cor:l1-calibration}, since any estimation rates on $\chi^\ast$ are equivalent to the same rates on $\eta^\ast$. In the remainder of this section, we just check conditions for the score $\psi_1'$ as the analysis for $\psi_2'$ follows by symmetry. Mean squared continuity (Condition 3) is trivially satisfied with equality. Boundedness (Condition 5) is implied by Assumption~\ref{ass:calibration}, the boundedness assumptions on the nuisances made in Corollary~\ref{cor:l1-calibration}, and the following inequality:
\[
|\chi^\ast(\theta(O))| = |\E[Y \mid \theta(O)]| \leq \E\left[|Y| \mid \theta(O)\right] \leq \|Y\|_{L^\infty(P_Z)}.
\]
Thus, the conditions of the theorem are satisfied, which yields asymptotic normality of $\sqrt{n}(\wh{\Cal}_1(\theta) - \Cal_1(\theta))$ with limiting influence function
\begin{align*}
\Psi^\ast(Z; \eta^\ast, \alpha^\ast) &= \max\{\eta^\ast(\theta(O)), -\eta^\ast(\theta(O))\} + \fraks_1\big(\eta^\ast(\theta(O)), -\eta^\ast(\theta(O))\big)\{Y - \theta(O)\}  \\
&\qquad - \fraks_2\big(\eta^\ast(\theta(O)), -\eta^\ast(\theta(O))\big)\{Y - \theta(O)\} \\
&= |\eta^\ast(\theta(O))| + \sgn\big(\eta^\ast(\theta(O))\big)\{Y - \theta(O)\} \\
&= |\chi^\ast(\theta(O)) - \theta(O)| + \sgn\big(\chi^\ast(\theta(O)) - \theta(O)\big)\{Y - \theta(O)\} \\
&= \sgn\big(\chi^\ast(\theta(O)) - \theta(O)\big)\{Y - \theta(O)\},
\end{align*}
thus proving the desired result.

\end{proof}
\section{Estimating the Optimal Value in Dynamic Treatment Regimes}
\label{app:dynamic}

In this appendix, we consider the more intricate problem of estimating the value of the optimal treatment policy in \textit{dynamic treatment regimes}, i.e.\ when units are sequentially administered treatments. We consider a two stage treatment regime in this section for simplicity, but our results naturally extend to $T$-stage regimes as well. As in Section~\ref{sec:static}, we first focus on a fully non-parametric setting where we perform identification in terms of first and second stage $Q$-functions. In particular, we allow the Q-functions to be arbitrary, bounded functions. We derive a novel Neyman orthogonal score by combining our softmax smoothing strategy with de-biasing results for nested regressions~\citep{chernozhukov2022nested}. Next, we focus on a semi-parametric setting where we assume that the blip effects are linear in a known feature embedding. We provide an identification of the optimal policy value in terms of these blip effects, which ultimately allows us to avoid de-biasing.




We now describe our assumptions on the data generating process. We assume access to $n$ i.i.d.\ observations of the form $Z = (X_1, A_1, X_2, A_2, Y)$ which obey the Markovian causal directed acyclic graph (DAG) outlined in Figure~\ref{fig:cg}. Here, $X_1$ and $X_2$ respectively represent first and second stage covariates, $A_1$ and $A_2$ represent first and second stage categorical treatments (taking  values in the discrete set $[N] = \{1, 2, \dots, N\}$), and $Y$ is an observed outcome. While our Markovian assumption on the data-generating process implies the distribution of $Y$ only depends on the second state and treatment $X_2$ and $A_2$, this is without loss of generality. Namely, we can assume $(X_1, A_1) \subset X_2$ if we desire dependence on the first stage treatment and covariates, i.e.\ that $X_2$ contains all information about the covariates and treatment in the first round. We codify our assumptions on the data in Assumption~\ref{ass:seq} below. As opposed to the static setting, we find it easier to express the assumptions in this section in terms of a structural equation model.

\begin{figure}[H]
\centering
\begin{subfigure}[t]{0.4\textwidth}
    \centering
    \begin{tikzpicture} \large
    \node (S) at (0,0) {$X_1$};
    \node (T1) at (1.5, 1.5) {$A_1$};
    \node (X) at (3,0) {$X_2$};
    \node (T2) at (4.5,1.5) {$A_2$};
    \node  (Y) at (6,0) {$Y$};
    \draw[thick, ->] (S) -- (T1);
    \draw[thick, ->] (S) -- (X);
    \draw[thick, ->] (T1) -- (X);
    \draw[thick, ->] (X) -- (Y);
    \draw[thick, ->] (X) -- (T2);
    \draw[thick, ->] (T2) -- (Y);
    \end{tikzpicture}
    \caption{Causal graph for observed data}
    \label{fig:cg:cg}
\end{subfigure}
\ \ \ \ \ \ 
\begin{subfigure}[t]{0.4\textwidth}
    \centering
    \begin{tikzpicture} \large
    \node (S) at (0,0) {$X_1$};
    \node (T1) at (1, 1.5) {$A_1$};
    \node (pi1) at (2.5, 1.5) {$\pi_1(X_1)$};
    \node (X) at (3.5,0) {$X_2^{(\pi)}$};
    \node (T2) at (4.5,1.5) {$A_2$};
    \node (pi2) at (6,1.5) {$\pi_2(X^{(\pi)}_2)$};
    \node  (Y) at (6.5,0) {$Y^{(\pi)}$};
    \draw[thick, ->] (S) -- (T1);
    \draw[thick, ->] (S) -- (X);
    \draw[thick, ->, dashed] (S) to[bend right] (pi1);
    \draw[thick, ->] (pi1) -- (X);
    \draw[thick, ->] (X) -- (Y);
    \draw[thick, ->] (X) -- (T2);
    \draw[thick, ->, dashed] (X) to[bend right] (pi2);
    \draw[thick, ->] (pi2) -- (Y);
    \end{tikzpicture}
    \caption{Intervention graph (SWIG) of counterfactual outcome under alternative adaptive policy $\pi$}
    \label{fig:cg:swig}
\end{subfigure}
\caption{Causal graphs describing observed data and counterfactual outcomes under alternative policy $\pi$}
\label{fig:cg}
\end{figure}

\begin{assumption}
\label{ass:seq}
We assume access to i.i.d.\ samples $Z \sim P_Z$ of the form $Z = (X_1, A_1, X_2, A_2, Y)$ that are generated according to the following structural equation model.
\begin{align*}
Y &= f_Y(A_2, X_2, \xi_Y) \\
A_2 &= f_{A_2}(X_2, \xi_{A_2}) \\
X_2 &= f_{X_2}(A_1, X_1, \xi_{X_2}) \\
A_1 &= f_{A_1}(X_1, \xi_{A_1}) \\
X_1 &= f_{X_1}(\xi_{X_1}),
\end{align*}
where $\xi_Y, \xi_{A_1}, \xi_{A_2}, \xi_{X_1}, \xi_{X_2}$ are exogenous random variables. Further, defining the propensities $p_1^\ast(a_1 \mid x_1) := \P(A_1 = a_1 \mid X_1 = x_1)$ and $p_2^\ast(a_2 \mid x_2) := \P(A_2 = a_2 \mid X_2 = x_2)$, we assume strong positivity, i.e.\ that there is some $\eta > 0$ such that for any $a_1, a_2 \in [N]$, $x_1 \in \calX_1$, and $x_2 \in \calX_2$, we have
\[
\eta \leq p_1^\ast(a_1 \mid x_1), p_2^\ast(a_2 \mid x_2)
\]
\end{assumption}



In this setting, a policy is a tuple $\pi = (\pi_1, \pi_2)$ where $\pi_1 : \calX_1 \rightarrow [N]$ and $\pi_2 : \calX_2 \rightarrow [N]$, where $N$ denotes the number of actions/treatments. The value of a policy $\pi$ is given by $V^\pi = \E[Y^{(\pi)}]$, which is illustrated in the single world intervention graph (SWIG) also present in Figure~\ref{fig:cg}. In terms of the above structural equation model, this corresponds to replacing $A_1$ in the equation for $X_2$ by $\pi_1(X_1)$ and $A_2$ in the equation for $Y$ by $\pi_2(X_2^{(\pi)})$. Our target of inference is again the value of the optimal dynamic treatment policy, i.e. the quantity
\[
V^\ast := \max_\pi \E[Y^{(\pi)}].
\]

\subsection{Inference in Non-parametric Settings}

We first consider the non-parametric setting, where the Q-functions (which describe the optimal expected future rewards associated with taking any given action in the current state)  are allowed to be arbitrary functions. In the two stage setting, the second and first stage Q-functions are respectively given by
\begin{align*}
Q_2^\ast(a_2, x_2) &:= \E[Y \mid X_2 = x_2, A_2 = a_2] \\
Q_1^\ast(a_1, x_1) &:= \E\left[\max_{\ell \in [N]}Q_2^\ast(\ell, X_2) \mid X_1 = x_1, A_1 = a_1\right].
\end{align*}
In words, the quantity $Q_2^\ast(a_2, x_2)$ gives the expected reward/outcome from playing action $a_2 \in [N]$ in state $x_2 \in \calX_2$, and $Q_1^\ast(a_1, x_1)$ gives the expected reward from playing action $a_1 \in [N]$ in state $x_1 \in \calX_1$ and then playing optimally to maximize the expected outcome after. Under Assumption~\ref{ass:seq}, the value of the optimal dynamic treatment policy is identified in terms of Q-functions as follows:
\begin{equation}
\label{eq:ident_q_func}
V^\ast := \E\left[\max_{\ell\in [N]}Q_1^\ast(\ell, X_1)\right].
\end{equation}
This identification appears to be nearly identical to the setting of a single treatment (Section~\ref{sec:static}), modulo the fact that the first round Q-function $Q_1^\ast$ appears instead of the outcome regression. Naturally, we should be inclined to pursue a similar softmax smoothing and debiasing approach. However, we quickly run into trouble as the first stage Q-function $Q_1^\ast$ is itself defined as a maximum over actions of $Q_2^\ast$, preventing direct debiasing. Thus, we actually need to apply smoothing in two places. First, we need to construct a smooth approximation for the first-stage Q-function in terms of softmax smoothing, i.e.\ we define
\begin{equation}
Q^\beta_1(a_1, x_1) := \E\left[\smax^\beta_{\ell}Q^\ast_2(\ell, X_2) \mid X_1 = x_1, A_1 = a_1\right]
\end{equation}
With the smoothed approximation $Q_1^\beta$ of $Q_1^\ast$, we can then define the smoothed approximation to the value of the optimal treatment policy:
\begin{equation}
V^\beta := \E\left[\smax^\beta_{\ell}Q_1^\beta(\ell, X_1)\right]
\end{equation}
The follow proposition gives a Neyman orthogonal score for the smoothed policy value $V^\beta$.

\begin{prop}
\label{prop:ortho_dynamic}
For any $\beta >0$, consider the score $\Psi^\beta$ defined as 
\begin{align*}
\Psi^\beta(Z; Q_1, Q_2, \alpha_1, \alpha_2) &:= \smax^\beta_k Q_1(k, X_1) + \alpha_1(A_1, X_1)\left(\smax^\beta_{k}Q_2(k, X_2) - Q_1(A_1, X_1)\right) \\
&\qquad + \alpha_2(A_2, X_2)(Y - Q_2(A_2, X_2)),
\end{align*}
where the true nuisances are $Q_2^\ast$ and $Q_1^\beta$ as defined above, and the Riesz representers are $\alpha^\beta_1, \alpha^\beta_2$ given by 
\begin{align*}
\alpha^\beta_1(a_1, x_1) &:= \sum_{\ell = 1}^N  \partial_\ell \smax^\beta_k Q^\beta_1(k, x_1)\cdot \frac{\mathbbm{1}\left\{a_1 = \ell\right\}}{p_1^\ast(\ell \mid x_1)} \\
\alpha^\beta_2(a_2, x_2) &:= \sum_{\ell = 1}^N\E\left[\alpha^\beta_1(A_1, X_1) \mid X_2 = x_2\right]\partial_\ell \smax^\beta_k Q_2^\ast(k, x_2) \cdot \frac{\mathbbm{1}\{a_2 = \ell\}}{p_2^\ast(\ell \mid x_2)}.
\end{align*}
Then, $\Psi^\beta$ is Neyman orthogonal and $V^\beta = \E\left[ \Psi^\beta(Z; Q_1^\beta, Q_2^\ast, \alpha_1^\beta, \alpha_2^\beta)\right]$.
\end{prop}

\begin{proof}
For simplicity, we let $g^\beta = (Q_1^\beta, Q_2^\ast, \alpha_1^\beta, \alpha_2^\beta)$. First, we check that $V^\beta = \E\left[\Psi^\beta(Z; g^\beta)\right]$. To show this, it suffices to argue that $\E\left[\alpha_1^\beta(A_1, X_1)\left(\smax^\beta_{a_2}Q_2^\ast(a_2, X_2) - Q_1^\beta(A_1, X_1)\right) \right] = 0$ and $\E\left[\alpha^\beta_2(A_2, X_2)(Y - Q_2^\ast(A_2, X_2))\right] = 0$. The first term vanishes from the definition of $Q_1^\beta$ after applying the tower rule for conditional expectation, as we have
\begin{align*}
&\E\left[\alpha_1^\beta(A_1, X_1)\left(\smax^\beta_{a_2}Q_2^\ast(a_2, X_2) - Q_1^\beta(A_1, X_1)\right) \right] \\
&\qquad = \E\left[\alpha_1^\beta(A_1, X_1)\left\{\E\left(\smax^\beta_{a_2} Q_2^\ast(a_2, X_2) \mid A_1, X_1\right) - Q_1^\beta(A_1, X_1)\right\}\right] \\
&\qquad = 0.
\end{align*}
Likewise, we have 
\begin{align*}
\E\left[\alpha_2^\beta(A_2, X_2)(Y - Q_2^\ast(A_2, X_2))\right] = \E\left[\alpha_2^\beta(A_2, X_2)(\E(Y \mid A_2, X_2) - Q_2^\ast(A_2, X_2))\right] = 0.
\end{align*}
Now we check the Neyman orthogonality of the score. We do this by taking the Gateaux derivatives of expected score with respect to each nuisance component. First, we evaluate the derivative with respect to the first Q-function $Q_1$. Letting $\Delta \in L^2(P_{W_1})$ be arbitrary, where $W_1 = (A_1, X_1)$, we have
\begin{align*}
D_{Q_1}\E\left[\alpha^\beta_1(A_1, X_1)Q_1^\beta(A_1, X_1)\right](\Delta) &= \frac{\partial}{\partial t} \E\left[\alpha^\beta_1(A_1, X_1)\left\{ Q_1^\beta(A_1, X_1) + t \Delta(A_1, X_1)\right\}\right]\Big|_{t = 0} \\
&= \E\left[\alpha^\beta_1(A_1, X_1)\Delta(A_1, X_1)\right] \\
&= \E\left[\sum_{\ell = 1}^N \partial_\ell \smax_k^\beta Q^\beta_1(k, X_1) \frac{\mathbbm{1}\left\{A_1 = \ell\right\}}{p^\ast_1(\ell \mid X_1)} \Delta(A_1, X_1)\right] \\
&=\E\left[\sum_{\ell = 1}^N \partial_\ell \smax_k^\beta Q^\beta_1(k, X_1) \frac{\mathbbm{1}\left\{A_1 = \ell\right\}}{p^\ast_1(\ell \mid X_1)} \Delta(\ell, X_1)\right] \\
&=\E\left[\sum_{\ell = 1}^N \partial_\ell \smax_k^\beta Q^\beta_1(k, X_1)  \Delta(\ell, X_1)\right].
\end{align*}

We also see that
\begin{align*}
D_{Q_1}\E\left[\smax^\beta_\ell Q^\beta_1(\ell, X_1)\right](\Delta) &= \frac{\partial}{\partial t}\E\left[\smax^\beta_\ell\left\{Q_1^\beta(\ell, X_1) + t \Delta(\ell, X_1)\right\}\right]\Big|_{t = 0} \\
&= \E\left[\frac{\partial}{\partial t} \smax^\beta_\ell \left\{Q_1^\beta(\ell, X_1) + t \Delta(\ell, X_1)\right\}\big|_{t = 0}\right] \\
&= \E\left[\sum_{\ell = 1}^N \partial_\ell \smax_k^\beta Q^\beta_1(k, X_1)  \Delta(\ell, X_1)\right],
\end{align*}
where we can exchange the derivative and expectation using the same argument used in the proof of Proposition~\ref{prop:general_score} in Appendix~\ref{app:irregular}. Thus the derivative with respect to $Q_1$ vanishes. 

We now check the Gateaux derivative with respect to $Q_2$. In particular, we have 
\begin{align*}
&D_{Q_2}\E\left[\alpha^\beta_1(A_1, X_1) \smax^\beta_\ell Q^\ast_2(\ell, X_2)\right](\Delta) = \frac{\partial}{\partial t}\E\left[\alpha^\beta_1(A_1, X_1)\smax^\beta_\ell\left\{Q_2^\ast(\ell, X_2) + t \Delta(\ell, X_2)\right\}\right]\Big|_{t = 0} \\
&\qquad = \E\left[\alpha^\beta_1(A_1, X_1)\frac{\partial}{\partial t} \smax^\beta_\ell\left\{Q_2^\ast(\ell, X_2) + t \Delta(\ell, X_2)\right\}\Big|_{t = 0}\right] \\
&\qquad = \E\left[\alpha^\beta_1(A_1, X_1)\sum_{\ell = 1}^N \partial_\ell \smax^\beta_k \{Q_2^\ast(k, X_2)\} \Delta(k, X_2)\right] \\
&\qquad = \E\left[\E\left(\alpha^\beta_1(A_1, X_1) \mid X_2\right)\sum_{\ell = 1}^N \partial_\ell \smax^\beta_k\{Q_2^\ast(k, X_2)\}\Delta(k, X_2)\right] \\
&\qquad = \E\left[\E\left(\alpha^\beta_1(A_1, X_1) \mid X_2\right) \sum_{\ell = 1}^N\frac{\mathbbm{1}\{A_2 = \ell\}}{p^\ast_2(\ell \mid X_2)}\partial_\ell\smax^\beta_k\{Q_2^\ast(k, X_2)\}\Delta(A_2, X_2) \right] \\
&\qquad = \E\left[\alpha^\beta_2(A_2, X_2)\Delta(A_2, X_2)\right]\\
&\qquad = D_{Q_2}\E\left[\alpha^\beta_2(A_2, X_2)Q_2^\ast(A_2, X_2)\right](\Delta).
\end{align*}
The fourth from last equality follows from applying the tower rule given $X_2$ and the second last equality follows from the definition of $\alpha^\beta_2(A_2, X_2)$. Thus, the Gateaux derivative with respect to $Q_2$ vanishes as well.

We conclude by ensuring the Gateaux derivatives with respect to the Riesz representers vanish. First, we have that
\begin{align*}
&D_{\alpha_1} \E\left[\alpha^\beta_1(A_1, X_1)\left\{\smax^\beta_k Q^\ast_2(k, X_2) - Q_1^\beta(A_1, X_1)\right\}\right](\Delta) \\
&\qquad = \E\left[\Delta(A_1, X_1) \left\{\smax^\beta_k Q^\ast_2(k, X_2) - Q_1^\beta(A_1, X_1)\right\}\right] \\
&\qquad = \E\left[\Delta(A_1, X_1)\left\{\E\left(\smax^\beta_kQ^\ast(k, X_2) \mid A_1, X_1\right) - Q^\beta_1(A_1, X_1)\right\}\right] \\
&\qquad = 0
\end{align*}
where the final equality follows from the definition of $Q^\beta_1$. A similar argument yields that
\begin{align*}
D_{\alpha_2}E\left[\alpha^\beta_2(A_2, X_2)(Y - Q_2^\ast(A_2, X_2)\right](\Delta) &= \E\left[\Delta(A_2, X_2)(Y - Q_2^\ast(A_2, X_2))\right] \\
&= \E\left[\Delta(A_2, X_2)(\E(Y \mid A_2, X_2) - Q_2^\ast(A_2, X_2))\right]\\
&= 0.
\end{align*}

Thus, since we have shown all Gateaux derivatives vanish, we have shown Neyman orthogonality of the score, as desired.
\end{proof}

\begin{theorem}
\label{thm:normal_dynamic} 
Let $\beta_n$ be a smoothing parameter. Let $Z_1, \dots, Z_n$ be i.i.d.\ random variables generated according to Assumption~\ref{ass:seq} (or equivalently, generated per the causal diagram in Figure~\ref{fig:cg}). Additionally, let $\wh{Q}_{1,n}, \wh{Q}_{2, n}, \wh{\alpha}_{1, n},$ and $\wh{\alpha}_{2, n}$ be nuisance estimates for $Q_1^{\beta_n}$, $Q_2^\ast$, $\alpha_1^{\beta_n}$, and $\alpha_2^{\beta_n}$ that are independent of $Z_1, \dots, Z_n$. Suppose the following conditions hold.
\begin{enumerate}
    \item \textit{(Controlled Density Near Zero)} We assume that, for any $k \in [N]$, first and second stage sub-optimality gaps $\Delta_{1, k} := \max_{\ell \in [N]}Q^\ast_1(\ell, X_1) - Q^\ast_1(k, X_1)$ and $\Delta_{2, k} := \max_{\ell \in [N]}Q^\ast_2(\ell, X_2) - Q_2^\ast(k, X_2)$ satisfy Assumption~\ref{ass:margin} with constants $c, H, \delta > 0$.
    \item \textit{(Growing Smoothing Parameters)} We assume the smoothing parameters $\beta_n$ satisfies $\beta_n = \omega\left(n^{\frac{1}{2(1 + \delta)}}\right)$.
    \item \textit{(Nuisance Convergence)} We assume the nuisance estimates converge at sufficiently fast rates, in the particular that
    \begin{itemize}
    \item $\|\wh{Q}_{1, n} - Q^{\beta_n}_1\|_{L^2(P_{W_1})},\;\; \|\wh{Q}_{2, n} - Q^\ast_2\|_{L^2(P_{W_2})} = o_\P\left(n^{-1/4}\beta_n^{-1/2}\right)$,
    \item $\|\wh{\alpha}_{1, n} - \alpha^{\beta_n}_1\|_{L^2(P_{W_1})}\cdot\|\wh{Q}_{n, 1} - Q^{\beta_n}_1\|_{L^2(P_{W_1})} = o_\P(n^{-1/2})$, and 
    \item $\|\wh{\alpha}_{2, n} - \alpha^{\beta_n}_2\|_{L^2(P_{W_2})}\cdot \|\wh{Q}_{2, n} - Q^\ast_2\|_{L^2(P_{W_2})} = o_\P(n^{-1/2})$.
    \end{itemize}
    Further, we assume all nuisance estimates are consistent in $L^2$.
    \item \textit{(Boundedness)} There is some universal constant $G > 0$ such that all nuisances, nuisance estimates, and the outcome $Y$ are bounded by $G$ in absolute value almost surely.
    \end{enumerate}
    Then, letting $\wh{V}_n := \P_n \Psi^\beta\big(Z; \wh{Q}_1, \wh{Q}_2, \wh{\alpha}_1, \wh{\alpha}_2\big)$ we have the following asymptotic linearity result:
    \[
    \sqrt{n}(\wh{V}_n - V^\ast) = \frac{1}{\sqrt{n}}\sum_{i = 1}^n \rho_V(Z_i) + o_\P(1),
    \]
    where $\rho_V(Z_i) := \Psi^\ast(Z_i; Q_1^\ast, Q_2^\ast, \alpha_1^\ast, \alpha_2^\ast) - V^\ast$, $\Psi^\ast$ is given by
    \begin{align*}
    \Psi^\ast(Z; Q_1, Q_2, \alpha_1, \alpha_2) &= \max_{\ell}Q_1(\ell, X_1)  + \alpha_1(A_1, X_1)(\max_\ell Q_2(\ell, X_2)  -  Q_1(A_1, X_1)) \\
    &\qquad + \alpha_2(A_2, X_2)(Y - Q_2(A_2, X_2)),
    \end{align*}
    and the limiting nuisances are $Q_1^\ast, Q_2^\ast$  and representers $\alpha_1^\ast, \alpha_2^\ast$ given respectively by 
    \begin{align*}
    \alpha_1^\ast(A_1, X_1) &:= \sum_{\ell = 1}^N \fraks_\ell(Q_1^\ast(\cdot, X_1))\frac{\mathbbm{1}\{A_k = \ell\}}{p^\ast_1(\ell \mid X_1)}\\
    \alpha_2^\ast(A_2, X_2) &:= \E[\alpha_1^\ast(A_1, X_1) \mid X_2]\sum_{\ell = 1}^N \fraks_\ell(Q_2^\ast(\cdot, X_2))\frac{\mathbbm{1}\{A_2 = \ell\}}{p^\ast_2(\ell \mid X_2)}.
    \end{align*}
    Consequently, if $\Var[\rho_V(Z)] > 0$, we have the following asymptotic normality result:
    \[
    \sqrt{n}(\wh{V}_n - V_\ast) \Rightarrow \calN(0, \Var[\rho_V(Z)]).
    \]
\end{theorem}

\begin{proof}
The proof of this result is largely the same as the proof of Theorem~\ref{thm:normal_irregular}. The main complicating factor is that we must take extra care to handle the bias introduced from two applications of softmax smoothing --- one from smoothing the first round Q-function and the other from smoothing the maximum used in defining the optimal policy value. To account for the additional bias introduced from double smoothing, we define the ``naively smoothed'' parameter
\[
V^\beta_\ast := \E\left[\smax^\beta_k Q^\ast_1(k, X_1)\right].
\]
Using the $V^\beta_\ast$, we can bound the bias above as
\[
\Bias(\beta) := |V^\beta - V^\ast| \leq |V^\beta - V^\beta_\ast| + |V^\beta_\ast - V^\ast|.
\]
First, we bound the difference $|V^\beta - V^\beta_\ast|$. In what follows, we let $Q_1^\beta(\cdot, X_1) := (Q_1^\beta(1, X_1), \dots, Q_1^\beta(N, X_1))$ be the vector of partial evaluations of $Q_1^\beta$ and define $Q_1^\ast(\cdot, X)$ similarly. Since the softmax operator is Lipschitz in a point-wise sense (Lemma~\ref{lem:softmax} Point~\ref{pt:softmax_lip}), we have
\begin{align*}
|V^\beta - V^\beta_\ast| &= \left|\E\left[\smax^\beta_k Q_1^\beta(k, X_1) - \smax^\beta_k Q_1^\ast(k, X_1)\right]\right| \\
&\leq \E\left|\smax^\beta_k Q_1^\beta(k, X_1) - \smax^\beta_k Q_1^\ast(k, X_1)\right| \\
&\lesssim \E\left\|Q_1^\beta(\cdot, X_1) - Q^\ast_1(\cdot, X_1)\right\|_2 &(\text{Lemma~\ref{lem:softmax} Point~\ref{pt:softmax_lip}})\\
&\leq \E\left\|Q_1^\beta(\cdot, X_1) - Q_1^\ast(\cdot, X_1)\right\|_1 &(\|x\|_2 \leq \|x\|_1) \\
&\leq \sum_{\ell = 1}^N \E\left|Q_1^\beta(\ell, X_1) - Q_1^\ast(\ell, X_1)\right| &(\text{Triangle Inequality})\\
&\lesssim \E\left|Q_1^\beta(A_1, X_1) - Q_1^\ast(A_1, X_1)\right| &(\text{Strong Positivity})\\
&= \E\left|\E\left[\smax^\beta_k Q_2^\ast(k, X_2) - \max_k Q_2^\ast(k, X_2) \mid A_1, X_1\right]\right| &(\text{Definitions of }Q_1^\ast, Q_1^\beta) \\
&\leq \E\left|\smax^\beta_k Q_2^\ast(k, X_2) - \max_k Q_2^\ast(k, X_2)\right| &(\text{Jensen's Inequality + Tower Rule})\\
&\leq \E\left[\sum_{k = 1}^N \Delta_{2, k} \exp\left\{-\beta \Delta_{2, k}\right\}\right] &(\text{Lemma~\ref{lem:margin}}) \\
&\lesssim \left(\frac{1}{\beta}\right)^{1 + \delta} &(\text{Also Lemma~\ref{lem:margin}}),
\end{align*}
where the final inequality holds for $n  \geq 1$ large enough such that $\beta_n \geq \beta^\ast$, where $\beta^\ast$ is some value that depends only on the parameters $c, H, \delta$.

Next, we can bound the other bias term, $|V^\beta_\ast - V^\ast|$. We have 
\begin{align*}
|V^\beta_\ast - V^\ast| &= \left|\E\left[\smax^\beta_k Q_1^\ast(k, X_1) - \max_k Q_1^\ast(k, X_1)\right]\right| \\
&\leq \E\left|\smax^\beta_k Q_1^\ast(k, X_1) - \max_k Q_1^\ast(k, X_1)\right| \\
&\leq \E\left[\sum_{k = 1}^N \Delta_{1, k}\exp\left\{-\beta\Delta_{1,k}\right\}\right] &(\text{Lemma~\ref{lem:margin}}) \\
&\lesssim \left(\frac{1}{\beta}\right)^{1 + \delta} &(\text{Also Lemma~\ref{lem:margin}}).
\end{align*}
Thus, since we have bounded both terms in the upper bound of the bias, we have 
\[
\Bias(\beta) \lesssim \left(\frac{1}{\beta}\right)^{1 + \delta}.
\]
Now, we have 
\begin{align*}
\sqrt{n}(\wh{V}_n - V^\ast) &= \sqrt{n}(\wh{V}_n - V^\beta) + \sqrt{n}(V^\beta - V^\ast) \\
&= \sqrt{n}(\wh{V}_n - V^\beta) + o_\P(1),
\end{align*}
where the final line follows because
\begin{align*}
\sqrt{n}|V^\ast - V^\beta| = \sqrt{n}\Bias(\beta) \lesssim \sqrt{n}\left(\frac{1}{\beta}\right)^{1 + \delta} = o_\P(1),
\end{align*}
which follows from our above bound on $\Bias(\beta)$ plus our assumption that $\beta \equiv \beta_n = \omega\left(n^{\frac{1}{2(1 + \delta)}}\right)$.

Thus, to conclude the proof, we just need to show that
\[
\sqrt{n}(\wh{V}_n - V^\beta) = \frac{1}{\sqrt{n}}\sum_{i = 1}^n \left\{\Psi^\ast(Z_i; Q_1^\ast, Q_2^\ast, \alpha_1^\ast, \alpha^\ast) - V^\ast\right\} + o_\P(1),
\]
which will follow from checking the relevant conditions of Theorem~\ref{thm:smooth_clt}.

\paragraph{Checking Theorem~\ref{thm:smooth_clt}, Condition~\ref{cond:simp_neyman}}

We have already showed this condition holds in Proposition~\ref{prop:ortho_dynamic}.

\paragraph{Checking Theorem~\ref{thm:smooth_clt}, Condition~\ref{cond:simp_hessian}}
Letting $\wh{g} := (\wh{Q}_1, \wh{Q}_2, \wh{\alpha}_1, \wh{\alpha}_2)$, $g^\beta := (Q_1^\beta, Q_2^\ast, \alpha^\beta_1, \alpha^\beta_2)$, and $\wb{g} \in [g^\beta, \wh{g}]$ be arbitrary, we see we have 
\begin{align*}
D_g^2\E_Z\left[\Psi^\ast(Z; \wb{g})\right](\wh{g} - g^\beta) &= \underbrace{D_{Q_1}^2\E_Z\left[\smax^{\beta}_k \wb{Q}_1(k, X_1)\right](\wh{Q}_1 - Q_1^\beta)}_{=: T_1} \\
&\qquad - 2 \underbrace{D_{Q_1, \alpha_1}\E_Z\left[\wb{\alpha}_1(A_1, X_1) \wb{Q}_1(A_1, X_1)\right](\wh{Q}_1 - Q_1^\beta, \wh{\alpha}_1 - \alpha^\beta_1)}_{=: T_2} \\
&\qquad + \underbrace{D_{Q_2}^2\E_Z\left[\wb{\alpha}_1(A_1, X_1)\smax^\beta_k \wb{Q}_2(k, X_2)\right](\wh{Q}_2 - Q_2^\ast)}_{=:T_3} \\
&\qquad - 2\underbrace{D_{Q_2, \alpha_2}\E_Z\left[\wb{\alpha}_2(A_2, X_2)\wb{Q}_2(A_2, X_2)\right](\wh{Q}_2 - Q_2^\ast, \wh{\alpha}_2 - \alpha^\beta_2)}_{=:T_4}\\
&\qquad + 2\underbrace{D_{Q_2, \alpha_1}\E_Z\left[\wb{\alpha}_1(A_1, X_1)\smax^\beta_k \wb{Q}_2(k, X_2)\right](\wh{Q}_2 - Q_2^\ast, \wh{\alpha}_1 - \alpha_1)}_{=: T_5}
\end{align*}
We now show that each of the terms in the right hand side of the above expression ($T_1, T_2, T_3, T_4$, and $T_5$) are $o_\P(n^{-1/2})$ under the assumptions of the theorem. First, observe that 
\begin{align*}
T_1 &= \frac{\partial^2}{\partial t^2}\E_Z\left[\smax^\beta_k \left\{\wb{Q}_1(k, X_1) + t(\wh{Q}_1 - Q_1^\beta)(k, X_1)) \right\}\right]\Big|_{t = 0} \\
&= \E_Z\left[\frac{\partial^2}{\partial t^2}\smax^\beta_k \left\{\wb{Q}_1(k, X_1) + t (\wh{Q}_1 - Q_1^\beta)(k, X_1))\right\}\Big|_{t = 0}\right] \\
&= \E_Z\left[(\wh{Q}_1(\cdot, X_1) - Q_1^\beta(\cdot, X_1))^\top \nabla_u^2 \smax_k^\beta\{u\}\Big|_{u = \wb{Q}_1(\cdot, X_1)}(\wh{Q}_1(\cdot, X_1) - Q_1^\beta(\cdot, X_1))\right] \\
&\leq \sup_u\|\nabla_u^2 \smax^\beta_k u\|_{op}\cdot\E_Z\left\|\wh{Q}_1(\cdot, X_1) - Q_1^\beta(\cdot, X_1)\right\|_2^2 \hspace{3.5cm}(|x^\top A x| \leq \|A\|_{op}\|x\|_2^2) \\
&\lesssim \beta \E_Z\left\|\wh{Q}_1(\cdot, X_1) - Q_1^\beta(\cdot, X_1)\right\|_2^2 \hspace{6cm}(\text{Lemma~\ref{lem:softmax} Point~\ref{pt:softmax_hess}}) \\
&= \beta \sum_{\ell = 1}^N\E_Z\left[(\wh{Q}_1(\ell, X_1) - Q_1^\beta(\ell, X_1))^2\right]\\
&\lesssim \beta\E_Z\left[(\wh{Q}_1(A_1, X_1) - Q_1^\beta(A_1, X_1))^2\right]\hspace{5cm}(\text{Strong Positivity}) \\
&=  \beta \|\wh{Q}_1 - Q_1^\beta\|_{L^2(P_{W_1})}^2 \\
&= o_\P(n^{-1/2}) \hspace{7cm} (\text{Since } \|\wh{Q}_1 - Q_1^\beta\|_{L^2(P_{W_1})} = o_\P(n^{-1/4}\beta_n^{-1/2}))
\end{align*}
In the above, the interchange of the second derivative and expectation is justified by the assumption that $\wh{Q}_1$ and $Q_1^\beta$ (and hence $\wb{Q}_1$) are bounded by some absolute constant alongside the fact that the first two derivatives of the softmax function are uniformly bounded (Lemma~\ref{lem:softmax} Points~\ref{pt:softmax_grad} and \ref{pt:softmax_hess}).

An analogous argument can be used to control the growth of $T_3$. Namely, we have
\begin{align*}
T_3 &= \frac{\partial^2}{\partial t^2}\E_Z\left[\wb{\alpha}_1(A_1, X_1)\smax^\beta_k\left\{ \wb{Q}_2(k, X_2) + t(\wh{Q}_2 - Q_2^\ast)(k, X_2)\right\}\right]\Big|_{t = 0} \\
&= \E_Z\left[\wb{\alpha}_1(A_1, X_1)(\wh{Q}_2(\cdot, X_2) - Q_2^\ast(\cdot, X_2))^\top \nabla_u^2 \smax_k^\beta\{u\}\Big|_{u = \wb{Q}_2(\cdot, X_2)}(\wh{Q}_2(\cdot, X_2) - Q_2^\ast(\cdot, X_2))\right]\\
&\leq \|\wb{\alpha}_1\|_{L^\infty(P_{W_1})}\E_Z\left[(\wh{Q}_2(\cdot, X_2) - Q_2^\ast(\cdot, X_2))^\top \nabla_u^2 \smax_k^\beta\{u\}\Big|_{u = \wb{Q}_2(\cdot, X_2)}(\wh{Q}_2(\cdot, X_2) - Q_2^\ast(\cdot, X_2))\right]\\
&\lesssim \beta\|\wb{\alpha}_1\|_{L^\infty(P_{W_1})} \E_Z\left\|\wh{Q}_2(\cdot, X_2) - Q_2^\ast(\cdot, X_2)\right\|_2^2 \\
&\lesssim \beta \E_Z\left[(\wh{Q}_2(A_2, X_2) - Q_2^\ast(A_2, X_2))^2\right] \\
&= o_\P(n^{-1/2}) \hspace{5cm} (\text{Since } \|\wh{Q}_2 - Q_2^\ast\|_{L^2(P_{W_2})} = o_\P(n^{-1/4}\beta^{-1/2})),
\end{align*}
where the first inequality follows from using the $L^1/L^\infty$ Holder's inequality, the second inequality follows from the uniform boundedness of the operator norm of the softmax operator (again, Lemma~\ref{lem:softmax} Point~\ref{pt:softmax_hess}), and the second to last line follows from strong positivity and the uniform boundedness of $Q_2^\ast$, which implies $\wb{\alpha}_1$ is bounded by an absolute constant independent of $\beta$.

We now check the cross-derivatives. We start by bounding the absolute value of $T_2$. We have:
\begin{align*}
|T_2| &= \left|D_{Q_1, \alpha_1}\E_Z\left[\wb{\alpha}_1(A_1, X_1)\wb{Q}_1(A_1, X_1)\right](\wh{Q}_1 - Q_1^\beta, \wh{\alpha}_1 - \alpha^\beta_1)\right| \\
&= \left|\frac{\partial^2}{\partial s \partial t}\E_Z\left[(\wb{\alpha}_1 + t(\wh{\alpha}_1 - \alpha_1^\beta))(\wb{Q}_1 + s(\wh{Q}_1 - Q^\beta_1)\right]\Big|_{s, t = 0}\right| \\
&= \left|\E_Z\left[(\wh{\alpha}_1 - \alpha_1^\beta)(A_1, X_1)(\wh{Q}_1 - Q_1^\beta)(A_1, X_1)\right]\right| \\
&= \left|\langle \wh{\alpha}_1 - \alpha_1^\beta, \wh{Q}_1 - Q_1^\beta \rangle_{L^2(P_{W_1})}\right| \\
&\leq \left\|\wh{\alpha}_1 - \alpha^\beta_1\right\|_{L^2(P_{W_1})}\cdot \left\|\wh{Q}_1 - Q_1^\beta\right\|_{L^2(P_{W_1})} \\
&= o_\P(n^{-1/2}),
\end{align*}
where the final line follows from the assumption made on nuisance convergence rates. An analogous argument allows us to show that 
\begin{align*}
|T_4| &= \left|\langle \wh{\alpha}_2 - \alpha_2^\beta, \wh{Q}_2 - Q_2^\ast\rangle_{L^2(P_{W_2})}\right| \leq \left\|\wh{\alpha} - \alpha^\beta_2\right\|_{L^2(P_{W_2})} \cdot \left\|\wh{Q}_2 - Q_2^\ast\right\|_{L^2(P_{W_2})} = o_\P(n^{-1/2}).
\end{align*}
Finally, we analyze $T_5$. We have 
\begin{align*}
|T_5| &= \left|D_{Q_2, \alpha_1}\E_Z\left[\wb{\alpha}_1(A_1, X_2)\smax^\beta_k \wb{Q}_2(k, X_2)\right](\wh{Q}_2 - Q_2^\ast, \wh{\alpha}_1 - \alpha_1^\beta)\right| \\
&= \left|\frac{\partial^2}{\partial s \partial t}\E_Z\left[(\wb{\alpha}_1(A_1, X_1) + s(\wh{\alpha}_1 - \alpha_1^\beta)(A_1, X_1))\cdot\smax^\beta_k\left\{ \wb{Q}_2(k, X_2) + t(\wh{Q}_2 - Q_2^\ast)(k, X_2)\right\}\right]\Big|_{s, t = 0}\right| \\
&= \left|\E_Z\left[\frac{\partial^2}{\partial s \partial t}(\wb{\alpha}_1(A_1, X_1) + s(\wh{\alpha}_1 - \alpha_1^\beta)(A_1, X_1))\cdot\smax^\beta_k\left\{ \wb{Q}_2(k, X_2) + t(\wh{Q}_2 - Q_2^\ast)(k, X_2)\right\}\Big|_{s, t = 0}\right]\right| \\
&= \left|\E_Z\left[(\wh{\alpha}_1 - \alpha_1^\beta)(A_1, X_1)\sum_{k = 1}^N(\wh{Q}_2 - Q_2^\ast)(k, X_2)\partial_k \smax^\beta_\ell \wb{Q}_2(\ell, X_2)\right]\right| \\
&\leq \E_Z\left|(\wh{\alpha}_1 - \alpha_1^\beta)(A_1, X_1)\sum_{k = 1}^N(\wh{Q}_2 - Q_2^\ast)(k, X_2)\partial_k \smax^\beta_\ell \wb{Q}_2(\ell, X_2)\right| \qquad \qquad \qquad \qquad (\text{Jensen's Inequality})\\
&\leq \sum_{k= 1}^N\sup_u |\partial_k \smax^\beta_\ell \{u\}| \cdot \E_Z\left|(\wh{\alpha}_1 - \alpha_1^\beta)(A_1, X_1) (\wh{Q}_2 - Q_2^\ast)(k, X_2)\right| \qquad \qquad \qquad \qquad (\text{Triangle Inequality}) \\
&\leq  \sum_{k= 1}^N\sup_u |\partial_k \smax^\beta_\ell \{u\}| \cdot \|\wh{\alpha}_1 - \alpha_1^\beta\|_{L^2(P_{W_1})}\cdot \|\wh{Q}_2(k, X_2) - Q_2^\ast(k, X_2)\|_{L^2(P_{X_2})} \hspace{1.8cm}
(\text{Cauchy-Schwarz}) \\
&\lesssim \sum_{k = 1}^N \sup_u |\partial_k \smax^\beta_\ell \{u\}| \cdot \|\wh{\alpha}_1 - \alpha_1^\beta\|_{L^2(P_{W_1})}\cdot \|\wh{Q}_2 - Q_2^\ast\|_{L^2(P_{W_2})} \hspace{3cm} (\text{Strong Positivity}) \\
&\leq N \sup_u \|\nabla_u \smax^\beta \{u\}\|_{\infty} \|\wh{\alpha}_1 - \alpha_1^\beta\|_{L^2(P_{W_1})} \cdot \|\wh{Q}_2 - Q_2^\ast\|{L^2(P_{W_2})} \\
&\lesssim \|\wh{\alpha}_1 - \alpha_1^\beta\|_{L^2(P_{W_1})} \cdot \|\wh{Q}_2 - Q^\ast\|_{L^2(P_{W_2})} \\
&= o_\P(n^{-1/2}).
\end{align*}
In the above, the second to last inequality follows since the maximum coordinate of the gradient of the softmax operator is bounded by a constant that does not depend on $\beta$ (Lemma~\ref{lem:softmax} Point~\ref{pt:softmax_grad}) and the final equality follows from our assumption on nuisance convergence rates.

\paragraph{Checking Theorem~\ref{thm:smooth_clt}, Condition~\ref{cond:simp_score}} We show the stronger result that $\Psi^\beta(Z; g^\beta) \rightarrow \Psi^\ast(Z; g^\ast)$ almost surely, where we have let $g^\ast := (Q_1^\ast, Q_2^\ast, \alpha_1^\ast, \alpha_2^\ast)$. First, we argue that all $\beta$-dependent nuisance components converge almost surely to their natural un-smoothed analogues. First, observe that
\begin{align*}
\lim_{n \rightarrow \infty}Q_1^\beta(A_1, X_1) &= \lim_{n \rightarrow \infty}\E\left[\smax^\beta_k Q_2^\ast(k, X_2)  \mid A_1, X_1\right] \\
&=\E\left[\lim_{n \rightarrow \infty}\smax^\beta_k Q_2^\ast(k, X_2) \mid A_1, X_1\right] &(\text{Bounded Convergence Theorem}) \\
&= \E\left[\max_k Q_2^\ast(k, X_2) \mid A_1, X_1\right]  \\
&= Q_1^\ast(A_1, X_1),
\end{align*}
where the second to last equality follows from Lemma~\ref{lem:softmax} Point~\ref{pt:softmax_lim}. Next, observe that
\begin{align*}
\lim_{n \rightarrow \infty}\alpha_1^\beta(A_1, X_1) &= \lim_{n \rightarrow \infty}\sum_{k = 1}^N \frac{\mathbbm{1}\{A_1 = k\}}{p^\ast(k \mid X_1)}\partial_k \smax^\beta_\ell Q_1^\beta(\ell, X_1) \\
&= \sum_{k = 1}^N \frac{\mathbbm{1}\{A_1 = k\}}{p^\ast(k \mid X_1)}\lim_{n \rightarrow \infty}\partial_k \smax^\beta_\ell Q_1^\beta(\ell, X_1) \\
&= \sum_{k = 1}^N \frac{\mathbbm{1}\{A_1 = k\}}{p^\ast(k \mid X_1)}\fraks_k( Q_1^\ast(\cdot, X_1)) \\
&= \alpha_1^\ast(A_1, X_1),
\end{align*}
where the second to last equality follows from the fact that $\lim_{\beta \rightarrow \infty}\partial_k \smax^\beta_\ell\{u\} = \fraks_k(u)$ (Lemma~\ref{lem:softmax} Point~\ref{pt:softmax_grad_lim}) and the fact that $Q_1^\beta(A_1, X_1) \xrightarrow[n \rightarrow \infty]{a.s.} Q_1^\ast(A_1, X_1)$. Lastly, observe that 
\begin{align*}
\lim_{n \rightarrow \infty}\alpha_2^\beta(A_2, X_2) &= \lim_{n \rightarrow \infty}\E\left[\alpha_1^\beta(A_1, X_1) \mid X_2\right]\sum_{k = 1}^N \frac{\mathbbm{1}\{A_2 = k\}}{p_2^\ast(k \mid X_2)}\partial_k\smax^\beta_\ell Q_2^\ast(\ell, X_2) \\
&= \left(\lim_{n \rightarrow \infty} \E\left[\alpha_1^\beta(A_1, X_1) \mid X_2\right]\right) \cdot \left(\sum_{k = 1}^N \frac{\mathbbm{1}\{A_k = k\}}{p^\ast_2(k \mid X_2)}\lim_{n \rightarrow \infty}\partial_k\smax^\beta_\ell Q_2^\ast(\ell, X_2)\right) \\
&= \E\left[\alpha_1^\ast(A_1, X_1) \mid X_2\right]\sum_{k = 1}^N\frac{\mathbbm{1}\{A_2 = k\}}{p_2^\ast(k \mid X_2)}\fraks_k(Q_2^\ast(\cdot, X_2)) \\
&= \alpha_2^\ast(A_2, X_2),
\end{align*}
where we can interchange the limit and expectation in the second equality by applying the bounded convergence theorem and we again invoke Lemma~\ref{lem:softmax} Point~\ref{pt:softmax_grad_lim} in computing the limit of the partial derivative of the softmax function. 

To show almost sure convergence of the scores, it now suffices to show that $\smax^\beta_k Q_1^\beta(k, X_1) \xrightarrow[n \rightarrow \infty]{a.s.} \max_k Q_1^\ast(k, X_1)$ and that $\smax^\beta_k Q_2^\ast(k, X_2) \xrightarrow[n \rightarrow \infty]{a.s.} \max_k Q_2^\ast(k, X_2)$. The latter directly follows from Lemma~\ref{lem:softmax} Point~\ref{pt:softmax_lim} and the former follows from applying the same point of the lemma alongside the earlier shown result that $Q_1^\beta(A_1, X_1) \xrightarrow[n \rightarrow \infty]{a.s.} Q_1^\ast(A_1, X_1)$. 

\paragraph{Checking Theorem~\ref{thm:smooth_clt}, Condition~\ref{cond:simp_equi}}
Again, we proceed as in the proof of Theorem~\ref{thm:normal_irregular}. Let $g = (Q_1, Q_2, \alpha_1, \alpha_2)$ and $g' = (Q_1', Q_2', \alpha_1', \alpha_2')$ be vectors of nuisances with $L^\infty$ norm bounded by the constant $G > 0$ mentioned in Condition 4 of Theorem~\ref{thm:normal_dynamic}. We have the inequality
\begin{align*}
\E\left|\G_n \Psi^\beta(Z; g) - \G_n \Psi^\beta(Z; g')\right|^2 &= \frac{1}{n}\sum_{i = 1}^n \Var[\Psi^\beta(Z; g) - \Psi^\beta(Z; g')] \\
&\leq \frac{1}{n}\sum_{i = 1}^n \E\left[\left(\Psi^\beta(Z_i; g) - \Psi^\beta(Z_i; g')\right)^2\right] \\
&= \E\left[\left(\Psi^\beta(Z; g) - \Psi^\beta(Z; g')\right)^2\right] \\
&= \left\|\Psi^\beta(Z; g) - \Psi^\beta(Z; g')\right\|_{L^2(P_Z)}^2
\end{align*}
We can now upper bound the final quantity in the previous display using the parallelogram law as
\begin{align*}
&\left\|\Psi^\beta(Z; g) - \Psi^\beta(Z; g')\right\|_{L^2(P_Z)}^2 \lesssim \left\|\smax^\beta_\ell Q_1(\ell, X_1) - \smax^\beta_\ell Q_1'(\ell, X_1)\right\|_{L^2(P_Z)}^2 \\
&\qquad + \left\|\alpha_1(A_1, X_1)\left(\smax^\beta_\ell Q_2(\ell, X_2) - Q_1(A_1, X_1)\right) - \alpha_1'(A_1, X_1)\left(\smax^\beta_\ell Q_2'(\ell, X_2) - Q_1'(A_1, X_1)\right)\right\|_{L^2(P_Z)}^2 \\
&\qquad + \left\|\alpha_2(A_2, X_2)(Y - Q_2(A_2, X_2)) - \alpha_2'(A_2, X_2)(Y - Q_2'(A_2, X_2))\right\|_{L^2(P_Z)}^2.
\end{align*}
Letting $L$ denote the Lipschitz constant of the softmax function (which we showed was Lipschitz in Lemma~\ref{lem:softmax} Point~\ref{pt:softmax_lip}), we can bound the first term above as
\begin{align*}
\left\|\smax^\beta_\ell Q_1(\ell, X_1) - \smax^\beta_\ell Q_1'(\ell, X_1)\right\|_{L^2(P_Z)}^2 &\leq L^2\left\|Q_1(\cdot, X_1) - Q_1'(\cdot, X_1)\right\|_{L^2(P_Z)}^2  \\
&= L^2 \sum_{\ell = 1}^N \|Q_1(\ell, X_1) - Q_1'(\ell, X_1)\|_{L^2(P_{X_1})}^2\\
&\lesssim \left\|Q_1(A_1, X_1) - Q_1'(A_1, X_1)\right\|_{L^2(P_Z)}^2
\end{align*}
where again $Q(\cdot, x_1) := (Q(1, x_1), \dots, Q(N, x_1))$ for any function $Q : [N] \times \calX \rightarrow \R$. The final inequality above follows from strong positivity. The second term can be similarly bounded. In particular, adding and subtracting $\alpha_1(A_1, X_1) (\smax^\beta_\ell Q_2'(\ell, X_2) - Q_1'(A_1, X_1))$ within the norm and applying the parallelogram inequality yields
\begin{align*}
&\left\|\alpha_1(A_1, X_1)\left(\smax^\beta_\ell Q_2(\ell, X_2) - Q_1(A_1, X_1)\right) - \alpha_1'(A_1, X_1)\left(\smax^\beta_\ell Q_2'(\ell, X_2) - Q_1'(A_1, X_1)\right)\right\|_{L^2(P_Z)}^2 \\
&\qquad \leq \left\|\alpha_1(A_1, X_1)(\smax^\beta_\ell Q_2(\ell, X_2) - \smax^\beta Q_2'(\ell, X_2))\right\|_{L^2(P_Z)}^2 + \left\|\alpha_1(A_1, X_2)(Q_1(A_1, X_1) - Q_1'(A_1, X_1))\right\|_{L^2(P_Z)}^2 \\
&\qquad \qquad + \left\|(\alpha_1(A_1, X_1) - \alpha_1'(A_1, X_1))(\smax^\beta Q_2'(\ell, X_2) - Q_1(A_1, X_1))\right\|_{L^2(P_Z)}^2 \\
&\qquad \lesssim \left\|\smax^\beta_\ell Q_2(\ell, X_2) - \smax^\beta_\ell Q_2'(\ell, X_2)\right\|_{L^2(P_Z)}^2 + \left\|Q_1(A_1, X_1) - Q_1'(A_1, X_1)\right\|_{L^2(P_Z)}^2\\
&\qquad \qquad + \left\|\alpha_1(A_1, X_1) - \alpha_1'(A_1, X_1)\right\|_{L^2(P_Z)}^2 \\
&\qquad \lesssim \left\|Q_2(A_2, X_2) - Q_2'(A_2, X_2)\right\|_{L^2(P_Z)}^2 + \left\|Q_1(A_1, X_1) - Q_1'(A_1, X_1)\right\|_{L^2(P_Z)}^2 \\
&\qquad \qquad + \left\|\alpha_1(A_1, X_1) - \alpha_1'(A_1, X_1)\right\|_{L^2(P_Z)}^2,
\end{align*}
where the second to last inequality follows because all nuisances and nuisance estimates are assumed to be bounded. The final inequality follows from the same argument used in the paragraphs above. Similarly, we have:
\begin{align*}
 &\left\|\alpha_2(A_2, X_2)(Y - Q_2(A_2, X_2)) - \alpha_2'(A_2, X_2)(Y - Q_2'(A_2, X_2))\right\|_{L^2(P_Z)}^2 \\
 &\qquad \lesssim \left\|\alpha_2(A_2, X_2) - \alpha_2'(A_2, X_2)\right\|_{L^2(P_Z)}^2 + \left\|Q_2(A_2, X-2) - Q_2'(A_2, X_2)\right\|_{L^2(P_Z)}^2.
\end{align*}
Thus, in aggregate, we have that 
\begin{align*}
\E\left|\G_n \Psi^\beta(Z; g) - \G_n \Psi^\beta(Z; g')\right|^2 &\lesssim \underbrace{\left\|Q_1 - Q_1'\right\|_{L^2(P_Z)}^2 + \left\|Q_2 - Q_2'\right\|_{L^2(P_Z)}^2 + \left\|\alpha_1 - \alpha_1'\right\|_{L^2(P_Z)}^2 + \left\|\alpha_2 - \alpha_2'\right\|_{L^2(P_Z)}^2}_{\Delta(g, g')}
\end{align*}
Now, we show the above empirical process converge in probability to zero. Given $\delta > 0$, we have 
\begin{align*}
\lim_{n \rightarrow \infty}\P\left(\left|\G_n \Psi^\beta(Z; \wh{g}) - \G_n \Psi^\beta(Z; g^\beta)\right| > \delta \right) &= \lim_{n \rightarrow \infty}\E\left[\P_Z\left(\left|\G_n \Psi^\beta(Z; \wh{g}) - \G_n \Psi^\beta(Z; g^\beta)\right| > \delta\right)\right] \\
&\leq \lim_{n \rightarrow \infty}\frac{1}{\delta^2}\E\left[\E_Z\left|\G_n \Psi^\beta(Z; \wh{g}) - \G_n \Psi^\beta(Z; g^\beta)\right|^2\right] \\
&\leq \lim_{n \rightarrow \infty}\frac{1}{\delta^2}\E\left[\E_Z\left| \Psi^\beta(Z; \wh{g}) - \Psi^\beta(Z; g^\beta)\right|^2\right] \\
&\lesssim \lim_{n \rightarrow \infty}\frac{1}{\delta^2}\E\left[\Delta(\wh{g}, g^\beta)^2\right] \\
&= 0.
\end{align*}

The final evaluation of the above limit follows because $\Delta(\wh{g}, g^\beta)$ is bounded by an absolute constant and hence uniformly integrable and because $\Delta(\wh{g}, g^\beta) = o_\P(1)$, which allows us to apply Proposition~\ref{prop:vitali}. This completes the proof of stochastic equicontinuity.

\paragraph{Checking Theorem~\ref{thm:smooth_clt}, Condition~\ref{cond:simp_reg}}
Finally, the boundedness of the score follows immediately from the boundedness of the outcome $Y$ and all nuisances and nuisance estimates. Thus, this implies a finite $(2 + \epsilon)$th moment for any $\epsilon > 0$. This completes the proof of Theorem~\ref{thm:normal_dynamic}.

\end{proof}

\subsection{Inference Under Semi-Parametric Restrictions}

As in Section~\ref{sec:static}, we now consider performing inference on the value of the optimal treatment policy when we make semi-parametric restrictions on the blip functions/blip effects.  In the dynamic setting with two rounds of treatment, the first and second round blip effects are given in terms of the Q-functions as 
\begin{align*}
\gamma_1^\ast(a_1, x_1) &:= Q_1^\ast(a_1, x_1) - Q_1^\ast(a^\ast_1, x_1) \\
\gamma^\ast_2(a_2, x_2) &:= Q_2^\ast(a_2, x_2) - Q_2^\ast(a^\ast_2, x_2),
\end{align*}
where $a^\ast_1, a^\ast_2\in [N]$ are some arbitrary, fixed reference actions (e.g.\ a control treatment). Under Assumption~\ref{ass:seq}, the value of the optimal dynamic treatment policy is identified as:
\begin{equation}
\label{eq:ident-blip-dynamic}
V^\ast := \E\left[\max_{\ell}\gamma_1^\ast(\ell, X_1) - \gamma_1^\ast(A_1, X_1)\right] + \E\left[\max_\ell\gamma_2^\ast(\ell, X_2) - \gamma_2^\ast(A_2, X_2)\right] + \E\left[Y\right]
\end{equation}
We now state our semi-parametric assumption of the blip effect $\gamma_1^\ast$ and $\gamma_2^\ast$.

\begin{assumption}
\label{ass:dynamic_param}
We assume that the blip effects $\gamma^\ast_1(a_1, x_1)$ and $\gamma^\ast_2(a_2, x_2)$ are linear in known feature embeddings of $(a_1, x_1)$ and $(a_2, x_2)$ respectively, i.e.\ that
\[
\gamma^\ast_1(a_1, x_1) = \theta_0^\top \phi_1(a_1, x_1) \text{ for some } \theta_0 \in \R^{d_1} \quad \text{and} \quad \gamma^\ast_2(a_2, x_2) = \psi_0^\top \phi_2(a_2, x_2) \text{ for some }\psi_0 \in \R^{d_2}.
\]

\end{assumption}

Under Assumption~\ref{ass:dynamic_param}, the identification established in Equation~\eqref{eq:ident-blip-dynamic} simplifies to the following:
\begin{equation}
\label{eq:ident-blip-param}
V^\ast = \E\left[\max_{\ell}\theta_0^\top \phi_1(\ell, X_1)  - \theta_0^\top \phi_1(A_1, X_1)  + \max_\ell\psi_0^\top \phi_2(\ell, X_2) - \psi_0^\top \phi_2(A_2, X_2) + Y\right].
\end{equation}
Given this identification, our strategy for performing inference on the value $V^\ast$ is as follows. First, we develop Neyman orthogonal scores for estimating the structural parameters $\theta_0$ and $\psi_0$ (Proposition~\ref{prop:ortho_dynamic_param}). Using these scores, we then prove an asymptotic linearity/normality result for natural estimates of these parameters (Theorem~\ref{thm:normal_dynamic_param}). We then show how these asymptotically linear estimates can be used to construct an asymptotically linear plug-in estimate for the value $V^\ast$. We start by constructing our Neyman orthogonal scores.

\begin{prop}
\label{prop:ortho_dynamic_param}
Consider the scores $\Psi_2$ and $\Psi_1$ defined respectively as 
\[
\Psi_2(Z; \mu_2, r_2, \psi) := \left\{\epsilon_2(Z; \psi) - \mu_2(X_2) + \psi^\top r_2(X_2)\right\}(\phi_2(A_2, X_2) - r_2(X_2))
\]
and 
\[
\Psi_1^\beta(Z; \mu_1, r_1, q_1, \psi, \theta) := \left\{\epsilon_1^\beta(Z; \psi, \theta) - \mu_1(X_1) - q_1(X_1) + \theta^\top r_1(X_1)\right\}(\phi_1(A_1, X_1) - r_1(X_1))
\]
where $\epsilon_1^\beta(Z; \psi, \theta) := Y + \smax^\beta_\ell \psi^\top \phi_2(\ell, X_2) - \psi^\top \phi_2(A_2, X_2) - \theta^\top \phi_1(A_1, X_1)$ and $\epsilon_2(Z;\psi) := Y - \psi^\top \phi_2(A_2, X_2)$, and the true nuisances are given by 
\begin{align*}
\mu_1^\ast(x_1) &:= \E[Y \mid X_1 = x_1] & \mu_2^\ast(x_2) &:= \E[Y \mid X_2 = x_2] \\
r_1^\ast(x_1) &:= \E[\phi_1(A_1, X_1) \mid X_1 = x_1] &  r_2^\ast(x_2) &:= \E[\phi_2(A_2, X_2) \mid X_2 = x_2] \\
q^\beta(x_1)&:= \E[\smax^\beta_\ell \psi_0^\top \phi_2(\ell, X_2) - \psi_0^\top \phi_2(A_2, X_2) \mid X_1 = x_1].
\end{align*}
Then, both $\Psi_1$ and $\Psi_2$ are Neyman orthogonal. Furthermore, under the assumption that $\E[\Cov(\phi_2(A_2, X_2) \mid X_2)] \succ 0$, $\psi_0$ is the unique solution to the estimating equation $0 = \E\left[\Psi_2(Z; \mu_2^\ast, r_2^\ast, \psi_0)\right]$. Likewise, assuming $\E[\Cov(\phi_1(A_1, X_1) \mid X_1)] \succ 0$, there is a unique solution $\theta^\beta$ (not necessarily equal to $\theta_0$) to the estimating equation $0 = \E\left[\Psi_1^\beta(Z; \mu_1^\ast, r_1^\ast, q^\beta, \psi_0, \theta^\beta)\right]$.
\end{prop}

\begin{proof}
Showing the Neyman orthogonality of $\Psi_2$ and that $\psi_0$ is the unique solution to the estimating equation $0 = \E\left[\Psi_2(Z; \mu_2^\ast, r_2^\ast, \psi)\right]$ is exactly analogous to the proof of Proposition~\ref{prop:ortho_static_param}, so we do not repeat the argument here. We now show orthogonality $\Psi^\beta_1$ for any $\beta > 0$. Letting $\Delta : \calX_1 \rightarrow \R$ be some square integrable function, direct computation yields that
\begin{align*}
D_{\mu_1}\E\left[\Psi_1^\beta(Z; \mu_1^\ast, r_1^\ast, q^\beta, \psi_0, \theta^\beta)\right](\Delta) &= -\frac{\partial}{\partial t} \E\left[(\mu_1 + t \Delta)(X_1) (\phi_1(A_1, X_1) - r_1^\ast(X_1))\right]\Big|_{t = 0} \\
&= -\E\left[\Delta(X_1)(\phi_1(A_1, X_1) - r_1^\ast(X_1))\right] \\
&= -\E\left[\Delta(X_1)\left(\E(\phi(A_1, X_1) \mid X_1) - r_1^\ast(X_1)\right)\right] \\
&= 0,
\end{align*}
where the final equality follows from the definition of $r_1^\ast$ and the second to last equality follows from an application of the tower rule for conditional expectation. An identical argument yields that 
\[
D_{q}\E\left[\Psi_1^\beta(Z; \mu_1^\ast, r_1^\ast, q^\beta, \psi_0, \theta^\beta)\right](\Delta) = 0.
\]
We conclude by checking the Gateaux derivative with respect to $r_1^\ast$. Let $\omega : \calX_1 \rightarrow \R^{d_1}$ be an arbitrary square-integrable function. Observe that we have 
\begin{align*}
&D_{r_1}\E\left[\Psi_1^\beta(Z; \mu_1^\ast, r_1^\ast, q^\beta, \psi_0, \theta^\beta)\right](\omega) \\
&\qquad = D_{r_1}\E\left[\left\{(Y - \mu_1^\ast(X_1)) + \smax^\beta_\ell\{ \psi_0^\top (\phi_2(\ell, X_2) - \phi_2(A_2, X_2))\} - q^\beta(X_1)\right\}(\phi_1(A_1, X_1) - r_1^\ast(X_1))\right](\omega) \\
&\qquad\qquad  + D_{r_1}\E\left[(\theta^\beta)^\top(\phi_1(A_1, X_1) - r_1^\ast(X_1))(\phi_1(A_1, X_1) - r_1^\ast(X_1))\right](\omega) \\
&\qquad = -\E\left[\omega(X_1)\left\{(Y - \mu_1^\ast(X_1)) + \smax^\beta_\ell \psi_0^\top (\phi_2(\ell, X_2) - \phi_2(A_2, X_2)) - q^\beta(X_1)\right\}\right] \\
&\qquad \qquad + (\theta^\beta)^\top\E\left[\omega(X_1)(\phi_1(A_1, X_1) - r_1^\ast(X_1))\right] + (\theta^\beta)^\top\E\left[(\phi_1(A_1, X_1) - r_1^\ast(X_1))\omega(X_1)\right]\\
&\qquad = 0, 
\end{align*}
where the final equality follows again from a an application of the tower rule for conditional expectation given $X_1$. Thus, we have shown Neyman orthogonality.

What remains now is for us to show that there is a unique solution $\theta^\beta$ to the estimating equation $0 = \E\left[\Psi_1^\beta(Z; \mu_1^\ast, r_1^\ast, q^\beta, \psi_0, \theta^\beta)\right]$. Rearranging the zero moment condition, we arrive at 
\begin{align*}
&\theta^\top\E\left[\Cov(\phi_1(A_1, X_1) \mid X_1)\right] \\
&\qquad = \E\left[\left\{(Y - \mu_1^\ast(X_1)) + \smax^\beta_\ell\{\psi_0^\top(\phi_2(\ell, X_2) - \phi_2(A_2, X_2))\} - q^\beta(X_1)\right\}(\phi_1(A_1, X_1) - r_1^\ast(X_1))\right],
\end{align*}
which always has a solution. The solution is unique precisely when $\E[\Cov(\phi_1(A_1, X_1) \mid X_1)] \succ 0$, which is the condition noted in the proposition. 
\end{proof}

Note that in establishing orthogonal estimating equations for the first stage structural parameter, we smooth the maximum of the second round blip effects in order to guarantee twice differentiability. As the smoothing parameter $\beta > 0$ grows towards infinity, the softmax function will converge to the regular maximum function (as argued in Lemma~\ref{lem:softmax}). Thus, we define ``un-smoothed'' analogues of the quantities $\Psi^\ast_1$, $\epsilon_1$, and $q^\ast$, since these will be useful in establishing influence functions for asymptotic linearity in the sequel. We define these parameters as follows.
\begin{align*}
\epsilon_1(Z; \psi, \theta) &:= Y + \max_\ell \psi^\top \phi_2(\ell, X_2) - \psi^\top \phi_2(A_2, X_2) - \theta^\top \phi_1(A_1, X_1) \\
q^\ast(x_1) &:= \E\left[\max_\ell \psi_0^\top \phi_2(\ell, X_2) - \psi_0^\top \phi_2(A_2, X_2) \mid X_1 = x_1\right] \\
\Psi^\ast_1(Z; \mu_1, r_1, q, \psi, \theta) &:= \left\{\epsilon_1(Z; \psi, \theta)  -\mu_1(X_1) - q(X_1) + \theta^\top r_1(X_1)\right\}(\phi_1(A_1, X_1) - r_1(X_1)).
\end{align*}
With these definitions, we can state our main theorem in the semi-parametric setting.

\begin{theorem}[Normality of Structural Parameters]
\label{thm:normal_dynamic_param}
Let $\beta_n$ be a smoothing parameter. Let $Z_1, \dots, Z_n$ be i.i.d.\ random variables satisfying Assumption~\ref{ass:seq}, and let $\wh{\mu}_{b,n}$, $\wh{r}_{b, n}$, and $\wh{q}_n$ be nuisance estimates for $\mu_b^\ast$, $r_b^\ast$, and $q^\beta$ (where $b \in \{1, 2\}$), which are assumed to be independent of the observations $Z_1, \dots, Z_n$. Suppose the following assumptions hold.
\begin{enumerate}
    \item \textit{(Controlled Density Near Zero)} We assume that, for any $k \in [N]$, that the second stage sub-optimality gaps $\Delta_{2, k} := \max_{\ell}\psi_0^\top \phi_2(\ell, X_2) - \psi_0^\top \phi_2(k, X_2)$ satisfy Assumption~\ref{ass:margin} with constants $c, H, \delta > 0$.
    \item \textit{(Smoothing Parameter Growth)} We assume the smoothing parameter $\beta_n$ satisfies $\beta_n = \omega\left(n^{\frac{1}{2(1 + \delta)}}\right)$ and $\beta_n = o(n^{1/2})$.
    \item \textit{(Nuisance Convergence)} We assume the nuisance estimates convergence at sufficiently fast rates, in the particular that
    \[
    \|\wh{\mu}_{b, n} - \mu_b^\ast\|_{L^2(P_{X_b})}, \|\wh{r}_{b, n} - r_b^\ast\|_{L^2(P_{X_b})}, \|\wh{q}_n - q^{\beta_n}\|_{L^2(P_{X_1})} = o_\P(n^{-1/4}),
    \]
    where $b \in \{1, 2\}$. Further, we assume all nuisance estimates are consistent in $L^2$.
    \item \textit{(Boundedness)} There is some universal constant $G > 0$ such that, for each $k \in [N]$ and $b \in \{1, 2\}$:
    \[
    \|\phi_b(k, \cdot)\|_{L^\infty(P_{Z})}, \|\mu_b^\ast\|_{L^\infty(P_Z)}, \|\wh{\mu}_{b, n}\|_{L^\infty(P_{Z})}, \|r_b^\ast\|_{L^\infty(P_Z)}, \|\wh{r}_{b, n}\|_{L^\infty(P_Z)}, \|Y\|_{L^\infty(P_Y)} \leq G.
    \]
    Additionally, we assume the structural parameters $\theta_0, \psi_0$ are bounded in $\ell_2$ norm by $G$, and that for any $\beta > 0$ 
    \[
\|\wh{q}_{n}\|_{L^\infty(P_Z)}, \|q^\beta\|_{L^\infty(P_Z)}, \|\theta^\beta\|_2 \leq G.
    \]
    
\end{enumerate}
Then, letting $\wh{\psi}_n$ and $\wh{\theta}_n$ be the solution to the empirical estimating equations 
\[
0 = \P_n \Psi_2(Z; \wh{\mu}_{2, n},\wh{r}_{2, n}, \wh{\psi}_n) \quad \text{and}
 \quad 0 = \P_n \Psi^\beta_1(Z; \wh{\mu}_{1, n}, \wh{r}_{1, n}, \wh{q}_n, \wh{\psi}_n, \wh{\theta}_n)
 \]
respectively, 
we have the following asymptotic linearity results:
\[
\sqrt{n}(\wh{\psi}_n  - \psi_0) = \frac{1}{\sqrt{n}}\sum_{i =1 }^n \rho_\psi(Z_i)+ o_\P(1) \quad \text{and} \quad \sqrt{n}(\wh{\theta}_n - \theta_0) = \frac{1}{\sqrt{n}}\sum_{i = 1}^n \rho_\theta(Z_i) + o_\P(1)
\]
where the influence functions $\rho_\psi$ and $\rho_\theta$ are respectively given by
\begin{align*}
\rho_\psi(Z) &:= \E\left[\Cov(\phi_2(A_2, X_2) \mid X_2)\right]^{-1}\Psi_2(Z; \mu_2^\ast, r_2^\ast, \psi_0) \\
\rho_\theta(Z) &:= \E\left[\Cov(\phi_1(A_1, X_1) \mid X_1)\right]^{-1}\left\{\Psi^\ast_1(Z; \mu^\ast_1, r^\ast_1, q^\ast, \psi_0, \theta_0) + J_\ast^\top \rho_\psi(Z)\right\}
\end{align*}
where $J_\ast := \E\left[(\phi_2^\infty(X_2) - \phi_2(A_2, X_2))(\phi_1(A_1, X_1) - r_1^\ast(X_1))\right]$,  $\phi_2^\infty(x_2) := \frac{1}{|\calM_2(x_2)|}\sum_{\ell \in \calM_2(x_2)} \phi(\ell, x_2)$, and $\calM_2(x_2) := \arg\max_{\ell \in [N]}\psi_0^\top \phi_2(\ell, x_2)$. Consequently, asymptotic normality also holds:
\[
\sqrt{n}(\wh{\psi}_n - \psi_0) \Rightarrow \calN\left(0, \Sigma_2\right) \quad \text{and} \quad \sqrt{n}(\wh{\theta}_n - \theta_0) \Rightarrow \calN\left(0, \Sigma_1\right)
\]
when $\Sigma_2 := \Cov(\rho_\psi(Z_i)) > 0$ and $\Sigma_1 := \Cov(\rho_\theta(Z_i)) > 0$.
\end{theorem}

\begin{proof}
Again, we only prove results pertaining to the first stage structural parameter $\theta_0$, as the proof of asymptotic linearity (and hence asymptotic normality) for $\psi_0$ is exactly analogous to the proof of Theorem~\ref{thm:normal_static_param}. Our recipe for proving the present result is largely the same as the one used in the rest of the paper. First, we show that under the assumptions of the theorem, the bias in the first-stage estimate decays at a rate of $o(n^{-1/2})$, and thus doesn't impact asymptotic linearity. Then, we check the conditions for Theorem~\ref{thm:smooth_clt_generic}, which yields asymptotic linearity. 

To apply the theorem, we decompose the score $\Psi(Z; g, \psi, \theta)$ as 
\[
\Psi(Z; g, \psi, \theta) := a(Z; r_1) \theta + \nu^\beta(Z; \mu_1, r_1, q, \psi),
\]
where, defining $\zeta_1^\beta(Z; \mu_1, q, \psi) := Y + \smax^\beta_\ell \psi^\top \phi_2(\ell, X_2) - \psi^\top \phi_2(A_2, X_2) - \mu_1(X_1) - q(X_1)$, we have
\begin{align*}
a(Z; r_1) &:= - (\phi_1(A_1, X_1) - r_1(X_1))^{\otimes 2} \\
\nu^\beta(Z; \mu_1, q, r_1, \psi) &:= \zeta_1^\beta(Z; \mu_1, q, \psi)(\phi_1(A_1, X_1) - r_1(X_1)).
\end{align*}
We also define $A(r_1) := \E_Z[a(Z; r_1)]$ and $\calV^\beta(\mu_1, r_1, q, \psi) := \E_Z[\nu^\beta(Z; \mu_1, r_1, q, \psi)]$ as the expected values of the constituent scores.

\paragraph{Bounding the Bias}

As in previous proofs, we write
\[
\sqrt{n}(\wh{\theta} - \theta_0) = \sqrt{n}(\wh{\theta} - \theta^\beta) + \sqrt{n}(\theta^\beta - \theta_0) 
\]
and show that $\sqrt{n}(\theta^\beta - \theta_0) = o(1)$.
Note that we can write $\theta_0$ and $\theta^\beta$ respectively as zeros of the non-orthogonal scores
\begin{align*}
\wt{\Psi}^\ast(Z; r_1, q, \psi, \theta) &= \epsilon_1(Z; \phi, \theta) (\phi_1(A_1, X_1) - r_1(X_1)) \\
\wt{\Psi}^\beta(Z; r_1, q, \psi, \theta) &= \epsilon_1^\beta(Z; \phi, \theta) (\phi_1(A_1, X_1) - r_1(X_1)).
\end{align*}
Rearranging the moment conditions, we see that we can write 
\begin{align*}
(\theta^\beta)^\top \Phi_1 =  \E\left[\left\{Y + \smax^\beta_\ell \psi_0^\top \phi_2(\ell, X_2) - \psi_0^\top \phi_2(A_2, X_2) \right\}(\phi_1(A_1, X_1) - r_1^\ast(X_1))\right] \\
\theta_0^\top \Phi_1 =  \E\left[\left\{Y + \max_\ell \psi_0^\top \phi_2(\ell, X_2) - \psi_0^\top \phi_2(A_2, X_2) \right\}(\phi_1(A_1, X_1) - r_1^\ast(X_1))\right]
\end{align*}
where we have let $\Phi_1 := \E\left[\Cov(\phi_1(A_1, X_1) \mid X_1)\right]$ for succinctness. Thus, we have 
\begin{align*}
\|\theta^\beta - \theta_0\|_2 &= \left\|\Phi_1^{-1}\E\left[\left\{\max_\ell \psi_0^\top \phi_2(\ell, X_2) - \smax^\beta_\ell \psi_0^\top \phi_2(\ell, X_2)\right\}(\phi_1(A_1, X_1) - r_1^\ast(X_1))\right]\right\|_2 \\
&\lesssim \left\|\E\left[\left\{\max_\ell \psi_0^\top \phi_2(\ell, X_2) - \smax^\beta_\ell \psi_0^\top \phi_2(\ell, X_2)\right\}(\phi_1(A_1, X_1) - r_1^\ast(X_1))\right]\right\|_2 \\
&\lesssim \E\left|\max_\ell \psi_0^\top \phi_2(\ell, X_2) - \smax^\beta_\ell \psi_0^\top \phi_2(\ell, X_2)\right|\left\|\phi_1(A_1, X_1) - r_1^\ast(X_1))\right\|_{L^\infty(P_Z)} \\
&\lesssim \E\left|\max_\ell \psi_0^\top \phi_2(\ell, X_2) - \smax^\beta_\ell \psi_0^\top \phi_2(\ell, X_2)\right| \\
&\leq \sum_{k = 1}^N \E\left[\Delta_{2, k}\exp\left\{-\beta \Delta_{2, k}\right\}\right] \\
&\lesssim \left(\frac{1}{\beta}\right)^{1 + \delta} \\
&= o(n^{-1/2}),
\end{align*}
where the first equality comes from subtracting the two equations in the previous display and right multiplying by $\Phi^{-1}$ (which exists by assumption). The first inequality follows since positive definiteness of $\Phi_1$ implies the maximum eigenvalue of $\Phi_1^{-1}$ is bounded above by a constant. The second inequality follows from an application of the $L^1/L^\infty$ Holder inequality in each coordinate of the expectation on the previous line. The third inequality follows from the assumption of boundedness of $\phi_1$ and $r_1^\ast$. The fourth inequality follows from Lemma~\ref{lem:margin}, and the fifth follows from Lemma~\ref{lem:margin} alongside the assumed density condition of the theorem. Lastly, the final equality follows since by assumption $\beta_n = \omega\left(n^{\frac{1}{{2(1 + \delta)}}}\right)$.

\paragraph{Checking Theorem~\ref{thm:smooth_clt_generic}, Condition~\ref{cond:neyman}} This was shown in Proposition~\ref{prop:ortho_dynamic_param}.

\paragraph{Checking Theorem~\ref{thm:smooth_clt_generic}, Condition~\ref{cond:hessian}}

We let $g^\beta := (\mu_1^\ast, r_1^\ast, q^\beta)$, $\wh{g} := (\wh{\mu}_1, \wh{r}_1, \wh{q})$, and $\wb{g} \in [g^\beta, \wh{g}]$ for notational convenience. We are interested in bounding
\begin{align*}
D_g^2\E\left[\Psi_1^\beta(Z; \wb{g}, \theta^\beta, \psi_0)\right](\wh{g} - g^\beta) &= 2D_{\mu_1, r_1}\E\left[\wb{\mu}_1(X_1)\wb{r}_1(X_1)\right](\wh{\mu}_1 - \mu_1^\ast, \wh{r}_1 - r_1^\ast) \\
&+ 2D_{q_1, r_1}\E\left[\wb{q}(X_1)\wb{r}_1(X_1)\right](\wh{q} - q^\beta, \wh{r}_1 - r_1^\ast) \\
&- D_{r_1}^2\E\left[(\theta^\beta)^\top \wb{r}_1(X_1)\wb{r}_1(X_1)\right](\wh{r}_1 - r_1^\ast).
\end{align*}
We can just compute the above cross-Gateaux derivatives and second Gateaux derivatives separately. The cross-derivatives are analogous to ones appearing earlier in this work, and so we have 
\begin{align*}
\big\|D_{\mu_1, r_1}\E[\wb{\mu}_1(X_1)\wb{r}_1(X_1)](\wh{\mu}_1 - \mu_1^\ast, \wh{r}_1 - r_1^\ast)\big\|_2 &= \big\|\E\left[(\wh{\mu}_1 - \mu_1^\ast)(X_1)(\wh{r}_1 - r_1^\ast)(X_1)\right]\big\|_2 \\
&= \sqrt{\sum_{k = 1}^d \E\left[(\wh{\mu}_1 - \mu_1^\ast)(X_1)(\wh{r}_{1, k} - r_{1, k}^\ast)(X_1)\right]^2}\\
&\leq \sqrt{\sum_{k = 1}^d \|\wh{\mu}_1 - \mu_1^\ast\|_{L^2(P_{X_1})}^2 \|\wh{r}_{1, k} - r_{1, k}^\ast\|_{L^2(P_{X_1})}^2} \\
&= \|\wh{\mu}_1 - \mu_1^\ast\|_{L^2(P_{X_1})}\|\wh{r}_1 - r_1^\ast\|_{L^2(P_{X_1})} \\
&= o_\P(n^{-1/2}),
\end{align*}
where the first inequality follows from applying Cauchy-Schwarz to each term within the sum, and the final line follows from the assumed nuisance estimation rates.
The exact same argument yields that
\[
\left\|D_{q, r_1}\E[\wb{q}(X_1)\wb{r}_1(X_1)](\wh{q} - q^\beta, \wh{r}_1 - r_1^\ast)\right\|_2 = o_\P(n^{-1/2}).
\]
We now compute the second Gateaux derivative with respect to $r_1$. Letting $\wh{\Delta} := \wh{r}_1 - r_1^\ast$, we have
\begin{align*}
D_{r_1}^2\E\left[r_1(X_1)r_1(X_1)^\top\right](\wh{\Delta}) &= \frac{\partial^2}{\partial t^2}\E\left[(\wb{r}_1 + t \wh{\Delta})(X_1)(\wb{r}_1 + t\wh{\Delta})(X_1)^\top\right]\big|_{t = 0} \\
&= \E\left[\frac{\partial^2}{\partial t^2}(\wb{r}_1 + t \wh{\Delta})(X_1)(\wb{r}_1 + t \wh{\Delta})(X_1)^\top\big|_{t = 0}\right] \\
&= 2\E\left[\wh{\Delta}(X_1)\wh{\Delta}(X_1)^\top\right].
\end{align*}
Where the exchanging of the second partial derivative and expectation is justified by the fact that $\wh{\Delta}$ and $\wb{r}_1$ are both bounded.Thus, we have 
\begin{align*}
\left\|D_{r_1}^2\E\left[(\theta^\beta)^\top \wb{r}_1(X_1)\wb{r}_1(X_1)\right](\wh{\Delta})\right\|_2 &= 2\left\|\E\left[\wh{\Delta}(X_1)\wh{\Delta}(X_1)^\top\right]\theta^\beta\right\|_2 \\
&\leq 2 \left\|\E\left[\wh{\Delta}(X_1)\wh{\Delta}(X_1)^\top\right]\right\|_{op}\left\|\theta^\beta\right\|_2 \\
&= 2\sup_{\|u\| = 1}\E\left[u^\top \wh{\Delta}(X_1)\wh{\Delta}(X_1)^\top u\right] \|\theta^\beta\|_2 \\
&\lesssim \E\|\wh{\Delta}(X_1)\|_2^2 \\
&= \|\wh{r}_1 - r_1^\ast\|_{L^2(P_{X_1})}^2 \\
&= o_\P(n^{-1/2}),
\end{align*}
where the first inequality follows from the fact that $\|A x\|_2 \leq \|A\|_{op}\|x\|_2$, the second equality follows from the definition of the operator norm, the second inequality follows from pushing the supremum inside the expectation and noting that $\|\theta^\beta\|_2$ is bounded by a constant by assumption, and the final equality follows from the assumption of nuisance estimation rates. This completes checking Condition~\ref{cond:hessian}.

\paragraph{Checking Theorem~\ref{thm:smooth_clt_generic}, Condition~\ref{cond:score}}

The only nuisance that depends on the smoothing parameter $\beta > 0$ is $q^\beta(x_1) : = \E\left[\smax^\beta_\ell \psi_0^\top\phi_2(\ell, X_2) - \psi_0^\top \phi_2(A_2, X_2) \mid X_1 = x_1\right]$. We show that this converges almost surely to $q^\ast(x_1) := \E\left[\max_\ell \psi_0^\top\phi_2(\ell, X_2) - \psi_0^\top \phi_2(A_2, X_2) \mid X_1 = x_1\right]$. Observe that the bounded convergence theorem for conditional expectations (which we can apply since we have assumed $\psi_0$ and $\phi_2(\ell, X_2)$ are bounded) yields that
\begin{align*}
\lim_{n \rightarrow \infty}\left|q^\beta(X_1) - q^\ast(X_1)\right| &= \lim_{n \rightarrow \infty}\left|\E\left[\smax^\beta_\ell \psi_0^\top\phi_2(\ell, X_2) - \max_\ell \psi_0^\top \phi_2(\ell, X_2) \mid X_1\right]\right| \\
&\leq \lim_{n \rightarrow \infty}\E\left[\left|\smax^\beta_\ell \psi_0^\top\phi_2(\ell, X_2) - \max_\ell \psi_0^\top \phi_2(\ell, X_2)\right| \mid X_1\right] \\
&= \E\left[\lim_{n \rightarrow \infty}\left|\smax^\beta_\ell \psi_0^\top\phi_2(\ell, X_2) - \max_\ell \psi_0^\top \phi_2(\ell, X_2)\right| \mid X_1\right] \\
&= 0,
\end{align*}
where the final equality follows from the point-wise convergence of the softmax function (Lemma~\ref{lem:softmax} Part~\ref{pt:softmax_lim}). The same part of the aforementioned lemma alongside the fact that $\theta^\beta \xrightarrow[n \rightarrow \infty]{} \theta_0$ (shown above) also implies that
\begin{align*}
\epsilon_1^\beta(Z; \psi_0, \theta^\beta) \xrightarrow[n \rightarrow \infty]{a.s.} \epsilon_1(Z; \psi_0, \theta_0).
\end{align*}
The above two observations imply that
\[
\Psi_1^\beta(Z; g^\beta, \theta^\beta, \psi_0) \xrightarrow[n \rightarrow \infty]{a.s.} \Psi_1^\ast(Z; g^\ast, \theta_0, \psi_0).
\]
Since $a^\beta(Z; r_1) \equiv a^\ast(Z; r_1) := -(\phi_1(A_1, X_1) - r_1(X_1))(\phi_1(A_1, X_1) - r_1(X_1))^\top$ does not depend on $\beta > 0$, there isn't anything to show regarding the convergence of $A^\beta$ to $A^\ast$ in probability.

\paragraph{Checking Theorem~\ref{thm:smooth_clt_generic}, Condition~\ref{cond:equi}}

We now check stochastic equicontinuity, first for $A(r_1) := \E[a(Z; r_1)]$, and next for $\calV^\beta(\mu_1, r_1, q, \psi) := \E\left[\nu^\beta(Z; \mu_1, r_1, q, \psi)\right]$, where we recall
\[
\nu^\beta(Z; \mu_1, r_1, q) := \left\{Y + \smax^\beta_\ell \psi^\top \phi_2(\ell, X_2) - \psi^\top \phi_2(A_2, X_2) - q(X_1)  - \mu_1(X_1)\right\}(\phi_1(A_1, X_1) - r_1(X_1))
\]

Using an argument exactly analogous to the one used to show stochastic equicontinuity in the proof of Theorems~\ref{thm:normal_irregular} and Theorem~\ref{thm:normal_static_param}, we have
\begin{align*}
\E\left\|\G_n a(Z; r_1) - \G_n a(Z; r_1')\right\|_F^2 \lesssim \E\|a(Z; r_1) - a(Z; r_1')\|_F^2.
\end{align*}
We have, for arbitrary $r_1, r_1'$ that are bounded in $L^\infty$ by the constant $G > 0$ noted in Assumption 4 of Theorem~\ref{thm:normal_dynamic_param}
\begin{align*}
&\E\left\|a(Z; r_1) - a(Z; r_1')\right\|_F^2  = \E\left\|(\phi_1(A_1, X_1) - r_1(X_1))^{\otimes 2} -(\phi_1(A_1, X_1) - r_1'(X_1))^{\otimes 2}\right\|_F^2\\
&\qquad = \E\Big\|(\phi_1(A_1, X_1) - r_1(X_1))^{\otimes 2} \pm (\phi_1(A_1, X_1) - r_1(X_1))(\phi_1(A_1, X_1) \\
&\qquad \qquad - r_1'(X_1))^\top - (\phi_1(A_1, X_1) - r_1'(X_1))^{\otimes 2}\Big\|_F^2 \\
&\qquad \lesssim \E\left\|(\phi_1(A_1, X_1) - r_1(X_1))(r_1(X_1) - r_1'(X_1))^\top\right\|_F^2 + \E\left\|(\phi_1(A_1, X_1) - r_1'(X_1))(r_1(X_1) - r_1'(X_1))^\top\right\|_F^2 \\
&\qquad =\E\Big[\big\{\|\phi_1(A_1, X_1) - r_1(X_1)\|_2^2 + \|\phi_1(A_1, X_1) - r_1'(X_1)\|_2^2\big\}\|r_1(X_1) - r_1'(X_1)\|_2^2 \Big] \\
&\qquad \lesssim \E\left\|r_1(X_1) - r_1'(X_1)\right\|_2^2. 
\end{align*}
Where the first equality follows from the definition of $a(Z; r_1)$, the second follows from adding and subtracting $(\phi_1(A_1, X_1) - r_1(X_1))(\phi_1(A_1, X_1) - r_1'(X_1))^\top$, the first inequality follows from the parallelogram inequality, the third equality follows from the fact that $\|ab^\top\|_F^2 = \|a\|_2^2\|b\|_2^2$, and the final inequality follows from the fact that $r_1, r_1'$ have bounded $L^\infty$ norm (and hence their point-wise $\ell_2$ norms are bounded by an absolute constant as well).
The above reasoning allows us to conclude
\[
\E_Z\|\G_n a(Z; \wh{r}_1) - \G_n a(Z; r_1^\ast)\|_F^2 \lesssim \|\wh{r}_1 - r_1^\ast\|_{L^2(P_{X_1})}^2 = o_\P(1)
\]
where the last equality follows by nuisance consistency. As in the proofs of Theorem~\ref{thm:normal_irregular} and Theorem~\ref{thm:normal_static_param}, conditionally applying Chebyshev's inequality and applying Vitali's Theorem (Proposition~\ref{prop:vitali}) gives 
\[
\|\G_n a(Z; \wh{r}_1) - \G_n a(Z; r_1^\ast)\|_{op} \leq \|\G_n a(Z; \wh{r}_1) - \G_n a(Z; r_1^\ast)\|_F = o_\P(1).
\]

Similarly, letting $h_1 = (\mu_1, r_1, q)$ and $\zeta_1^\beta(Z; \mu_1, q, \psi) := Y + \smax^\beta_\ell \psi^\top \phi_2(\ell, X_2) - \psi^\top \phi_2(A_2, X_2) - \mu_1(X_1) - q(X_1)$, we have 
\begin{align*}
\E\|\G_n \nu^\beta(Z; h_1, \psi_0) - \G_n \nu^\beta(Z; h_1', \psi_0)\|_2^2 \lesssim \E\|\nu^\beta(Z; h_1,\psi_0) - \nu^\beta(Z; h_1', \psi_0)\|_2^2. 
\end{align*}
We further have 
\begin{align*}
&\E\|\nu^\beta(Z; h_1,\psi_0) - \nu^\beta(Z; h_1', \psi_0)\|_2^2 \\
&\qquad = \E\Big\|\zeta_1^\beta(Z; \mu_1, q, \psi_0) (\phi_1(A_1, X_1) - r_1(X_1)) - \zeta_1^\beta(Z; \mu_1', q', \psi_0)(\phi_1(A_1, X_1) - r_1'(X_1))\Big\|_2^2 \\
&\qquad = \E\Big\|\zeta_1^\beta(Z; \mu_1, q, \psi_0) (\phi_1(A_1, X_1) - r_1(X_1)) \pm \zeta_1^\beta(Z; \mu_1, q, \psi_0)(\phi_1(A_1, X_1) - r_1'(X_1)) \\
&\qquad \qquad - \zeta_1^\beta(Z; \mu_1', q', \psi_0)(\phi_1(A_1, X_1) - r_1'(X_1))\Big\|_2^2 \\
&\qquad \lesssim \E\Big\|\zeta_1^\beta(Z; \mu_1, q, \psi_0)(r_1(X_1) - r_1'(X_1))\Big\|_2^2 \\
&\qquad \qquad + \E\Big\|\left\{\zeta_1^\beta(Z; \mu_1, q, \psi_0) - \zeta_1^\beta(Z; \mu_1', q', \psi_0)\right\}(\phi_1(A_1, X_1) - r_1'(X_1))\Big\|_2^2 \\
&\qquad \lesssim \E\left\|r_1(X_1) - r_1'(X_1)\right\|_2^2 + \E\left|\zeta_1^\beta(Z; \mu_1, q, \psi_0) - \zeta_1^\beta(Z; \mu_1', q', \psi_0)\right|^2,
\end{align*}
which follows in the same manner as the above arguments since $Y, r_1, r_1', q, q', r_1$, and $r_1'$ are assumed to be bounded. We have
\begin{align*}
\E\left|\zeta_1^\beta(Z; \mu_1, q, \psi_0) - \zeta_1^\beta(Z; \mu_1', q', \psi_0)\right|^2 &= \E\left|(\mu_1 - \mu_1')(X_1) + (q - q')(X_1)\right|^2 \\
&\lesssim \E\left|\mu_1(X_1) - \mu_1'(X_1)\right|^2 + \E\left|q(X_1) - q'(X_1)\right|^2,
\end{align*}
which follows from the parallelogram inequality. Since the outcome $Y$, all nuisances, and all nuisance estimates are bounded, this implies
\begin{align*}
\E_Z\left\|\G_n \nu^\beta(Z; \hat{h}_1, \psi_0) - \G_n \nu^\beta(Z; h_1^\beta, \psi_0)\right\|_2^2 &\lesssim \|\wh{\mu}_1 - \mu_1^\ast\|_{L^2(P_{X_1})}^2 + \|\wh{r}_1 - r_1^\ast\|_{L^2(P_{X_1})}^2 + \|\wh{q} - q^\beta\|_{L^2(P_{X_1})}^2 \\
&= o_\P(1)
\end{align*}
where the final line follows from nuisance consistency. Again, a conditional application of Chebyshev's inequality along with the bounded convergence theorem alongside Vitali's Theorem (Proposition~\ref{prop:vitali}) as in the proofs of Theorems~\ref{thm:normal_irregular} and \ref{thm:normal_static_param} yields 
\[
\left\|\G_n \nu^\beta(Z; \hat{h}_1, \psi_0) - \G_n \nu^\beta(Z; h_1^\beta, \psi_0)\right\|_2 = o_\P(1).
\]
This concludes the proof of stochastic equicontinuity.

\paragraph{Checking Theorem~\ref{thm:smooth_clt_generic}, Condition~\ref{cond:regularity}}
Since the outcomes $Y$ and all nuisances/nuisance estimates are bounded, it follows that there exists an absolute constant $D > 0$ such that
\[
    \|\theta^\beta\|_2, \sup_{1 \leq j, k \leq p}\left\|a^\beta_{j, k}(Z; g^\beta)\right\|_{L^{\infty}(P_Z)}, \sup_{1 \leq j \leq p}\left\|\nu^\beta_j(Z; g^\beta, h_0)\right\|_{L^{\infty}(P_Z)} \leq D.
\]
The above bounds on the $L^\infty$ norms directly imply the bounds on the $L^{2 + \epsilon}$ norms as well, by monotonicity of $L^p$ norms. The invertibility of $A(r_1)$ follows from the assumption that $\E\left[\Cov(\phi_1(A_1, X_1) \mid X_1)\right] \succ 0$. We now check mean-squared continuity of $A(r_1)$. Using an analogous argument to the one leveraged to show stochastic equicontinuity of $a(Z; r_1)$, we have
\begin{align*}
\|A(\wh{r}_1) - A(r_1^\ast)\|_{op} & \leq \|A(\wh{r}_1) - A(r_1^\ast)\|_{F} \\
&=\left\|\E\left[(\phi_1(A_1, X_1) - \wh{r}_1(X_1))^{\otimes 2} \right] - \E\left[(\phi_1(A_1, X_1) - r_1^\ast(X_1))^{\otimes 2}\right]\right\|_F \\
&\leq \left\|\E\left[(\phi_1(A_1, X_1) - \wh{r}_1(X_1))(\wh{r}_1(X_1) - r_1^\ast(X_1))^\top\right]\right\|_F \\
&\qquad + \left\|\E\left[(\phi_1(A_1, X_1) - r_1^\ast(X_1))(\wh{r}_1(X_1) - r_1^\ast(X_1))^\top\right]\right\|_F \\
&\leq \E\left\|(\phi_1(A_1, X_1) - \wh{r}_1(X_1))(\wh{r}_1(X_1) - r_1^\ast(X_1))^\top\right\|_F \\
&\qquad + \E\left\|(\phi_1(A_1, X_1) - r_1^\ast(X_1))(\wh{r}_1(X_1) - r_1^\ast(X_1))^\top\right\|_F \\
&=\E\left[\left\|\phi_1(A_1, X_1) - \wh{r}_1(X_1)\right\|_2\left\|\wh{r}_1(X_1) - r_1^\ast(X_1)\right\|_2\right] \\
&\qquad + \E\left[\left\|\phi_1(A_1, X_1) - r_1^\ast(X_1)\right\|_2\left\|\wh{r}_1(X_1) - \wh{r}_1^\ast(X_1)\right\|_2\right] \\
&\lesssim  \E\left\|\wh{r}_1(X_1) - r_1^\ast(X_1)\right\|_2 \\
&\leq \|\wh{r}_1 - r_1^\ast\|_{L^2(P_{X_1})},
\end{align*}
where the first inequality follows as $\|A\|_{op} \leq \|A\|_F$, the second inequality follows from the triangle inequality, the third follows from Jensen's inequality, and the second equality again follows from the equality $\|a b^\top\|_F = \|a\|_2 \|b\|_2$. The second to last inequality follows from the assumed boundedness of $\phi_1(A_1, X_1), r_1^\ast(X_1)$, and $\wh{r}_1$, and the final line follows from an application of Jensen's inequality again. This shows mean-squared continuity.

\paragraph{Checking Theorem~\ref{thm:smooth_clt_generic}, Condition~\ref{cond:h_conditions}}
First, we note that \textbf{Condition~\ref{cond:h_diff}} follows from the differentiability of the softmax function. To check mean-squared continuity (\textbf{Condition~\ref{cond:h_cont}}) of the Jacobian, observe that we have
\[
\partial_h \Psi_1^\beta(Z; g, \psi_0, \theta^\beta)  = (\phi_1(A_1, X_1) - r_1(X_1))\left\{\sum_{k = 1}^N \left(\partial_k \smax^\beta_\ell \psi_0^\top \phi_2(\ell, X_2)\right)\phi_2(k, X_2) - \phi_2(A_2, X_2)\right\}^\top.
\]
Using this, we can check mean-square continuity. Noting that the Jacobian above only depends on the nuisance $r_1$ and the structural parameter $\psi_0$, we can omit dependence on other parameters and write
\begin{align*}
\left\|\partial_\psi \Psi_1^\beta(Z; r_1, \psi_0) - \partial_\psi \Psi_1^\beta(Z; r_1', \psi_0)\right\|_{F} &= \left\|(r_1(X_1) - r_1'(X_1))\left\{\sum_{k = 1}^N\left( \partial_k \smax^\beta_\ell \psi_0^\top \phi_2(\ell, X_2)\right)\phi_2(k, X_2) - \phi_2(A_2, X_2)\right\}^\top\right\|_F\\
&\leq \sum_{k = 1}^N\left|\partial_k \smax^\beta_\ell \psi_0^\top \phi_2(\ell, X_2)\right|\left\|(r_1(X_1) - r_1'(X_1))\phi_2(k, X_2)\right\|_F \\
&\qquad + \left\|(r_1(X_1) - r_1'(X_1))\phi_2(A_2, X_2)^\top\right\|_F \\
&\leq \sum_{k = 1}^N\left|\partial_k \smax^\beta_\ell \psi_0^\top \phi_2(\ell, X_2)\right|\left\|r_1(X_1) - r_1'(X_1)\right\|_2 \left\|\phi_2(k, X_2)\right\|_2 \\
&\qquad + \left\|r_1(X_1) - r_1'(X_1)\right\|_2\left\|\phi_2(A_2, X_2)^\top\right\|_2 \\ 
&\lesssim \|r_1(X_1) - r_1'(X_1)\|_2,
\end{align*}
which follows because $\|\phi_2(k, X_2)\|_2$ is assumed to be bounded and the first partial derivatives of the softmax function are uniformly bounded by a constant that does not depend on $\beta$ (Lemma~\ref{lem:softmax} Point~\ref{pt:softmax_grad}). Thus, we have 
\begin{align*}
\E\left\|\partial_\psi \Psi_1^\beta(Z; r_1, \psi_0) - \partial_\psi \Psi_1^\beta(Z; r_1', \psi_0)\right\|_{F}^2 \lesssim \E\left\|r_1(X_1) - r_1'(X_1)\right\|_2^2,
\end{align*}
showing the desired continuity result.

We now check \textbf{Condition~\ref{cond:h_jacobian}}. The Jacobian of $\calV^\beta(h_1^\beta, \psi_0)$ is precisely the same as the Jacobian of $\E\left[\Psi^\beta_1(Z; g^\beta, \psi_0, \theta^\beta)\right]$, which we have computed above. Thus we see that
\begin{align*}
&\left\|\partial_\psi \calV^\beta(h_1^\beta, \psi_0) - J^\ast\right\|_{op} = \left\|\E\left[(\phi_1(A_1, X_1) - r_1^\ast(X_1))\left\{\sum_{k = 1}^N \left(\partial_k \smax^\beta_\ell \psi_0^\top \phi_2(\ell, X_2)\right)\phi_2(k, X) - \phi_2^\infty(X_2)\right\}^\top\right]\right\|_{op} \\
&\qquad \leq \E\left[\sum_{k \in \calM(X_2)} \left|\partial_k \smax^\beta_\ell \psi_0^\top \phi_2(\ell, X_2) - \frac{1}{|\calM(X_2)|}\right|\Big\|(\phi_1(A_1, X_1) - r_1^\ast(X_1))\phi_2(k, X_2)^\top\Big\|_{op}\right] \\
&\qquad \qquad + \E\left[\sum_{k \notin \calM(X_2)}\left|\partial_k \smax^\beta_\ell \psi_0^\top \phi_2(\ell, X_2)\right|\Big\|(\phi_1(A_1, X_1) - r_1^\ast(X_1))\phi_2(k, X_2)^\top\Big\|_{op}\right] \\
&\qquad \lesssim \E\left[\sum_{k \in \calM(X_2)}\left|\partial_k \smax^\beta_\ell \psi_0^\top \phi_2(\ell, X_2) - \frac{1}{|\calM(X_2)|}\right|\right] +\E\left[\sum_{k \notin \calM(X_2)}\left|\partial_k \smax^\beta_\ell \psi_0^\top \phi_2(\ell, X_2)\right|\right],
\end{align*}
where the first inequality follows from first pushing the operator norm into the expectation and then applying the triangle inequality and the second inequality follows from an application of $L^1/L^\infty$ Holder's inequality alongside the boundedness of $\phi_1, \phi_2,$ and $r_1^\ast$. Thus, we have
\begin{align*}
\lim_{n \rightarrow \infty}\left\|\partial_\psi \calV^\beta(h_1^\beta, \psi_0) - J^\ast\right\|_{op} &= \lim_{n \rightarrow \infty}\E\left[\sum_{k \in \calM(X_2)}\left|\partial_k \smax^\beta_\ell \psi_0^\top \phi_2(\ell, X_2) - \frac{1}{|\calM(X_2)|}\right|\right] \\
&\qquad + \lim_{n \rightarrow \infty}\E\left[\sum_{k \notin \calM(X_2)}\left|\partial_k \smax^\beta_\ell \psi_0^\top \phi_2(\ell, X_2)\right|\right] \\
&= \E\left[\lim_{n \rightarrow \infty}\sum_{k \in \calM(X_2)}\left|\partial_k \smax^\beta_\ell \psi_0^\top \phi_2(\ell, X_2) - \frac{1}{|\calM(X_2)|}\right|\right] \\
&\qquad + \E\left[\lim_{n \rightarrow \infty}\sum_{k \notin \calM(X_2)}\left|\partial_k \smax^\beta_\ell \psi_0^\top \phi_2(\ell, X_2)\right|\right] \\
&= 0,
\end{align*}
where we are able to exchange limit and expectation via the bounded convergence theorem and the final equality follows from the point-wise convergence of the derivative of the softmax function (Lemma~\ref{lem:softmax} Point~\ref{pt:softmax_grad_lim}). Thus, we have shown almost sure convergence of the Jacobian, which is strictly stronger than the convergence in probability result desired.

Next, we check \textbf{Condition~\ref{cond:h_hessian}}. The almost sure boundedness of the coordinates of $\partial_\psi \nu^\beta(z; h_1, \psi_0)$ for $h_1 \in [h_1^\ast, \wh{h}_1]$ follows from the fact that the gradient of the softmax function is bounded (Lemma~\ref{lem:softmax} Point~\ref{pt:softmax_grad}) alongside the boundedness of nuisances and nuisance estimates. For a given coordinate $j \in \{1, \dots, d\}$, we can directly compute and bound the Hessian in terms of the smoothing parameter $\beta > 0$. The Hessian $\partial_\psi^2 \nu_j^\beta(Z; h_1, \psi)$ of the $j$th coordinate of $\nu^\beta$ is given by
\[
\partial_\psi^2 \nu_j^\beta(Z; h_1, \psi) = (\phi_{1, j}(A_1, X_1) - r_1(X_1))\bfphi_2(X_2)\nabla_u^2 \smax^\beta_\ell \{u\}\vert_{u = \psi^\top \phi_2(\ell, X_2)} \bfphi_2(X_2)^\top
\]
where $\bfphi_2 : \calX_2 \rightarrow \R^{d \times N}$ is given by $\bfphi_2(x_2) := (\phi_2(1, x_2), \dots, \phi_2(N, x_2))$. The operator norm is thus bounded as
\begin{align*}
\left\|\partial_\psi^2 \nu_j^\beta(Z; h_1, \psi)\right\|_{op} &= \big|(\phi_{1, j}(A_1, X_1) - r_1(X_1))\big|\left\|\bfphi_2(X_2)\nabla_u^2 \smax^\beta_\ell\{ \psi^\top \phi_2(\ell, X_2) \}\bfphi_2(X_2)^\top\right\|_{op} \\
&\lesssim \left\|\bfphi_2(X_2)\nabla_u^2 \{\smax^\beta_\ell\{ \psi^\top\} \phi_2(\ell, X_2)\} \bfphi_2(X_2)^\top\right\|_{op} \\
&\leq \left\|\bfphi_2(X_2)\right\|_{op}^2\left\|\nabla_u^2 \smax^\beta_\ell\{ \psi^\top \phi_2(\ell, X_2)\}\right\|_{op} \\
&\lesssim \left\|\nabla_u^2 \smax^\beta_\ell \psi^\top \phi_2(\ell, X_2) \right\|_{op} \\
&= O(\beta_n)  \\
&= o(n^{1/2})
\end{align*}
The first inequality follows because $\phi_1$ is assumed to be bounded and $r_1 \in [r_1^\ast, \wh{r}_1]$, and $r_1^\ast$ and $\wh{r}_1$ are also both assumed to be bounded. The second inequality follows since $\|AB\|_{op} \leq \|A\|_{op}\|B\|_{op}$, the third inequality follows from the boundedness of $\phi_2$. The second to last line follows since the operator norm of the Hessian of the softmax function is bounded (up to an absolute constant) by  $\beta$ (Lemma~\ref{lem:softmax} Point~\ref{pt:softmax_hess}). The final line follows since we have assumed that $\beta_n = o(n^{1/2})$. This concludes our verification of Condition~\ref{cond:h_hessian}.

Lastly, we note that \textbf{Condition~\ref{cond:h_linearity}}, which concerns verifying  that the finite-dimensional nuisance parameters are asymptotically linear, follows from the first part of the theorem regarding the second stage structural parameters. This concludes the proof of the theorem.
\end{proof}

Showing asymptotic linearity (and hence normality) of the finite-dimensional structural parameters is the most difficult part of this section. Using Theorem~\ref{thm:normal_dynamic_param}, we can now show asymptotic linearity of the plug-in estimate for the value of the optimal treatment policy $V^\ast$. The following theorem provides this.

\begin{theorem}
\label{thm:val_dynamic_param}
Assume the same setup as Theorem~\ref{thm:normal_dynamic_param} and let $\wh{\psi}_n, \wh{\theta}_n$ denote respectively the estimates of the second and first stage structural parameters $\psi_0, \theta_0$. Define $\phi_1^\infty(x_1) := \frac{1}{|\calM_1(x_1)|}\sum_{\ell \in \calM_1(x_1)}\phi_1(\ell, x_1)$ and $\calM_1(x_1) := \arg\max_{\ell \in [N]}\theta_0^\top \phi_1(\ell, x_1) \subset [N]$. Then, defining the plug-in policy value estimate $\wh{V}_n$ as 
\[
\wh{V}_n := \P_n\Phi^\beta(Z; \wh{\psi}_n, \wh{\theta}_n),
\]
where 
\[
\Phi^\beta(Z; \psi, \theta) := \smax_\ell^\beta \theta^\top \phi_1(\ell, X_1) - \theta^\top \phi_1(A_1, X_1) + \smax^\beta_\ell \psi^\top \phi_2(\ell, X_2) - \psi^\top \phi_2(A_2, X_2) + Y,
\]
we have 
\begin{align*}
\sqrt{n}(\wh{V}_n - V^\ast) = \frac{1}{\sqrt{n}}\sum_{i = 1}^n \rho_V(Z_i) + o_\P(1),
\end{align*}
where the influence function $\rho_V$ is given by
\begin{align*}
\rho_V(Z) := \Phi^\ast(Z; \psi_0, \theta_0) - V^\ast + \E\left[\phi_2^\infty(X_2) - \phi_2(A_2, X_2)\right]^\top \rho_\psi(Z) + \E\left[\phi_1^\infty(X_1) - \phi_1(A_1, X_1)\right]^\top \rho_\theta(Z).
\end{align*}
Consequently, we have the following asymptotic normality result
\[
\sqrt{n}(\wh{V}_n - V^\ast) \Rightarrow \calN(0, \sigma_V^2) \quad \text{where }\sigma_V^2 := \Var[\rho_V(Z)].
\]
This implies that if $z_\delta$ denotes the $\delta$th quantile of the standard normal distribution and $\wh{\sigma}_n^2$ is a consistent estimator of $\sigma_V^2$, then 
\[
C_{1 - \delta} := \left[\wh{V}_n - \frac{\wh{\sigma}_n}{\sqrt{n}}z_{\delta/2}, \wh{V}_n + \frac{\wh{\sigma}_n}{\sqrt{n}}z_{\delta/2}\right]
\]
is an asymptotically valid $1 - \delta$ confidence interval for $V^\ast$.
\end{theorem}

\begin{proof}
The proof of asymptotic linearity of the estimate $\wh{V}$ is largely a simplified version of the proof of asymptotic linearity of $\wh{\theta}$ presented in Theorem~\ref{thm:normal_dynamic_param}. We still present a full proof for completeness, but present abbreviated arguments to avoid repetition. We first bound the bias from softmax approximation and then check the relevant conditions of Theorem~\ref{thm:smooth_clt_generic}. In particular, we need not check Conditions~\ref{cond:neyman}, \ref{cond:hessian}, \ref{cond:equi}, and \ref{cond:h_cont}.

\paragraph{Bounding the Bias}
We can first write 
\begin{align*}
\sqrt{n}(\wh{V} - V^\ast) &= \sqrt{n}(\wh{V} - V^\beta) + \sqrt{n}(V^\beta - V^\ast)
\end{align*}
and then further write
\begin{align*}
|V^\beta - V^\ast| &\leq  \E\left|\smax^\beta_\ell \psi_0^\top \phi_2(\ell, X_2) - \max_\ell \psi_0^\top \phi_2(\ell, X_2)\right| + \E\left|\smax^\beta_\ell \theta_0^\top \phi_1(\ell, X_1) - \max_\ell \theta_0^\top \phi_1(\ell, X_1)\right| \\
&\lesssim \sum_{k = 1}^N \E\left[\Delta_{2, k}\exp\{-\beta\Delta_{2, k}\}\right] + \sum_{k = 1}^N \E\left[\Delta_{1, k}\exp\{-\beta \Delta_{1, k}\}\right] \\
&\lesssim \left(\frac{1}{\beta}\right)^{1 + \delta} \\
&= o(n^{-1/2}),
\end{align*}
where again the inequality follows from an application of the triangle inequality alongside Jensen's, the second inequality follows from applying the first part of Lemma~\ref{lem:margin}, the next inequality follows from applying the second part of Lemma~\ref{lem:margin}, and the final line follows from the assumption that $\beta_n = \omega\left(n^{\frac{1}{2(1 + \delta)}}\right)$.

\paragraph{Checking Theorem~\ref{thm:smooth_clt_generic}, Condition~\ref{cond:score}}
Once again, by leveraging the point-wise convergence of the soft-max function (Lemma~\ref{lem:softmax} Point~\ref{pt:softmax_lim}), we have 
\begin{align*}
\lim_{n \rightarrow \infty}\left|\Phi^\beta(Z; \psi_0, \theta_0) - \Phi^\ast(Z; \psi_0, \theta_0)\right| &\leq \lim_{n\rightarrow \infty}\left|\smax^\beta_\ell \psi_0^\top\phi_2(\ell, X_2) - \max_\ell \psi_0^\top \phi_2(\ell, X_2) \right| \\
&\qquad + \lim_{n \rightarrow \infty}\left|\smax^\beta_\ell \theta_0^\top\phi_1(\ell, X_1) - \max_\ell \theta_0^\top \phi_1(\ell, X_1) \right| \\
&= 0.
\end{align*}

\paragraph{Checking Theorem~\ref{thm:smooth_clt_generic}, Condition~\ref{cond:regularity}}
We only need to argue boundedness, since the first part of this condition is not applicable as there is no infinite-dimensional nuisance component. But since we have assumed $\theta_0, \psi_0, \phi_1, \phi_2, Y$ and all nuisance/nuisance estimates are bounded, it follows that both $V^\beta$ and $\Phi^\beta$ are bounded almost surely. This in particular implies that the random variable $\Phi^\beta(Z; \psi_0, \theta_0)$ is also bounded by the same constant in $L^{2 + \epsilon}(P_Z)$ for any $\epsilon > 0$.

\paragraph{Checking Theorem~\ref{thm:smooth_clt_generic}, Condition~\ref{cond:h_conditions}}
First, \textbf{Condition~\ref{cond:h_diff}} is immediate as the softmax function is infinitely differentiable for any fixed $\beta > 0$. Likewise, there is nothing we need to check for \textbf{Condition~\ref{cond:h_cont}}, as there is no infinite-dimensional nuisance component.

We now check convergence of the Jacobian (\textbf{Condition~\ref{cond:h_jacobian}}). First we observe that we have 
\[
\partial_{\psi, \theta}\calV^\beta(\psi, \theta) = \begin{pmatrix} \partial_\psi \calV^\beta( \psi, \theta) \\ \partial_\theta \calV^\beta( \psi, \theta)
\end{pmatrix}
\]
Straightforward computation yields that 
\begin{align*}
\lim_{n \rightarrow \infty}\left\|\partial_\psi \calV^\beta( \psi_0, \theta_0) - \partial_\psi \calV^\ast(\psi_0, \theta_0)\right\|_2  &\leq  \lim_{n \rightarrow \infty}\E\left\|\sum_{k = 1}^N\left(\partial_k\smax^\beta_\ell \psi_0^\top \phi_2(\ell, X_2)\right)\phi_2(k, X_2) - \phi_2^\infty(X_2)\right\|_2.
\end{align*}
 We have
\begin{align*}
&\lim_{n \rightarrow \infty}\E\left\|\sum_{k = 1}^N\left(\partial_k\smax^\beta_\ell \psi_0^\top \phi_2(\ell, X_2)\right)\phi_2(k, X_2) - \phi_2^\infty(X_2)\right\|_2 \\
&\qquad \leq  \lim_{n \rightarrow \infty}\E\left[\sum_{k \in \calM(X_2)}\left|\partial_k\smax^\beta_\ell \psi_0^\top \phi_2(\ell, X_2) - \frac{1}{|\calM_2(X_2)|}\right|\left\|\phi_2(k, X_2)\right\|_2\right]  \\
&\qquad \qquad + \lim_{n \rightarrow \infty}\E\left[\sum_{k \notin \calM_2(X_2)}\left|\partial_k\smax^\beta_\ell \psi_0^\top \phi_2(\ell, X_2)\right|\left\|\phi_2(k, X_2)\right\|_2\right]\\
&\qquad =  \E\left[\lim_{n \rightarrow \infty}\sum_{k \in \calM(X_2)}\left|\partial_k\smax^\beta_\ell \psi_0^\top \phi_2(\ell, X_2) - \frac{1}{|\calM_2(X_2)|}\right|\left\|\phi_2(k, X_2)\right\|_2\right]  \\
&\qquad \qquad + \E\left[\lim_{n \rightarrow \infty}\sum_{k \notin \calM_2(X_2)}\left|\partial_k\smax^\beta_\ell \psi_0^\top \phi_2(\ell, X_2)\right|\left\|\phi_2(k, X_2)\right\|_2\right] \\
&= 0
\end{align*}
where the first inequality follows from the definition of $\phi_2^\infty$ and the triangle inequality, the first equality follows from the bounded convergence theorem, and the final equality follows from the point-wise convergence of $\partial_k\smax^\beta_\ell u \xrightarrow[\beta \rightarrow \infty]{} \fraks_k(u)$ (Lemma~\ref{lem:softmax} Point~\ref{pt:softmax_grad_lim}). It thus follows that $\partial_\psi \calV^\beta(\psi_0, \theta_0) \xrightarrow[n \rightarrow \infty]{}J_\psi^\ast$. An exactly identical argument yields that $\partial_\theta \calV^\beta(\psi_0, \theta_0) \xrightarrow[n \rightarrow \infty]{} J_\theta^\ast$.

We now check \textbf{Condition~\ref{cond:h_hessian}}. Again, the boundedness of the components of the Jacobian (i.e.\ gradient) follows from the assumption that $\phi_1, \phi_2$ are bounded with along with the fact that the gradient of the softmax function is bounded (Lemma~\ref{lem:softmax} Point~\ref{pt:softmax_grad}). The Hessian $\partial_{\psi, \theta}^2 \nu^\beta(Z; \psi, \theta)$ is given as
\begin{align*}
\partial_{\psi, \theta}^2\nu^\beta(Z; \psi, \theta) = \begin{pmatrix}
\partial_\psi^2 \nu^\beta(Z; \psi, \theta) & \mathbf{0} \\
\mathbf{0} & \partial_\theta^2\nu^\beta(Z; \psi, \theta).
\end{pmatrix}
\end{align*}
We just bound the operator norm of $\partial_\psi^2\nu^\beta(Z; \psi, \theta)$. Bounding $\|\partial_\theta^2 \nu^\beta(Z; \psi, \theta)\|_{op}$ is exactly analogous, and $\|\partial_{\psi, \theta}^2 \nu^\beta(Z; \psi, \theta)\|_{op} \leq \|\partial_\psi^2\nu^\beta(Z; \psi, \theta)\|_{op} + \|\partial_\theta^2 \nu^\beta(Z; \psi, \theta)\|_{op}$. We have 
\begin{align*}
\left\|\partial_\psi^2 \calV^\beta(\psi, \theta)\right\|_{op} &= \left\|\bfphi_2(X_2) \nabla_u^2\smax^\beta_\ell\{ \psi^\top \phi_2(\ell, X_2) \}\bfphi_2(X_2)^\top\right\|_{op} \\
&= o(n^{1/2}),
\end{align*}
which follows from the same computation/reasoning used in the validation of Condition~\ref{cond:h_hessian} in the proof of Theorem~\ref{thm:normal_dynamic_param}. This completes the point of Theorem~\ref{thm:val_dynamic_param}
\end{proof}

\section{A Smoothed Central Limit Theorem}
\label{app:clt}

In this section, we prove a generic central limit theorem (CLT) for parameters that are specified as zeros of sequence of smooth M-estimation problems. These scores can depend on both infinite-dimensional and finite-dimensional nuisance parameters that must be estimated from data. We require the scores to be Neyman orthogonal with respect to the infinite-dimensional/non-parametric nuisance functions, but relax orthogonality for finite-dimensional nuisance components, which can be estimated at parametric $O_\P(n^{-1/2})$ rates. In our applications (See Appendices~\ref{app:treatment} and \ref{app:irregular}), the degree of smoothness will be governed by the temperature parameter $\beta$ of the softmax function. However, the scores discussed in this appendix can be obtained via any abstract smoother, so long as the smoothed score obeys certain continuity and regularity assumptions. 

We are interested in estimating the population solution to the following $M$-estimation problem. Let $Z$ denote an abstract observation taking values in some measurable space $\calZ$, $\beta > 0$ denote some smoothing/indexing parameter, $g \in L^2(P_W)$ a square-integrable function depending on a subset of features $W \subset Z$, and $h \in \R^q$ a finite dimensional parameter. Our goal is to understand the behavior of the solution $\wh{\theta}$ to the following empirical M-estimation problem:
\begin{align*}
M^\beta_n(\wh{\theta}^\beta, \wh{g}, \wh{h}) &= 0 &\text{where} & &M^\beta_n(\theta, g, h) := \P_n m^\beta(Z; \theta, g, h)
\end{align*}
where, $m^\beta(Z; \theta, g, h)$ denotes some score depending on both finite and infinite-dimensional nuisances as well as the a smoothing parameter $\beta > 0$, and $\wh{g}$ and $\wh{h}$ represent estimated nuisance components. If $\beta > 0$ were fixed, under standard conditions, $\wh{\theta}$ would be asymptotically linear around the solution to the population estimating equations, i.e.\ the value $\theta^\beta$ specified via
\begin{align*}
 M^\beta(\theta^\beta, g^{\beta}, h_0) &=  0 &\text{where} & &M^\beta(\theta, g, h) &:= \E_Z[m^\beta(Z; \theta, g, h)],
\end{align*}
where $g^\beta$ denotes a true, unknown infinite-dimensional nuisance parameter that depends on the smoothing parameter $\beta > 0$ and $h_0$ denotes a true, finite-dimensional nuisance which is assumed to not depend on the particular index $\beta > 0$. However, we are not interested in the solution to the above population equations. Rather, assuming $m^\beta(Z; \theta^\beta, g^\beta, h_0)$ converges in probability to some appropriate limiting score $m^\ast(Z; \theta_0, g_0, h_0)$ with limiting nuisance $g_0 \in L^2(P_W)$ and true solution $0 = M^\ast(\theta_0, g_0, h_0) := \E\left[m^\ast(Z; \theta_0, g_0, h_0)\right]$, our goal is to understand under what conditions $\wh{\theta}$ is asymptotically linear about $\theta_0$. The main theorem of this appendix will detail how to select smoothing parameters $(\beta_n)_{n \geq 1}$ that tend towards infinity as $n$ grows large in order to guarantee said asymptotic linearity.

To simplify the regularity assumptions required for asymptotic normality, we focus on the case where $m^\beta(Z; \theta, g, h)$ is linear in $\theta$ for each $\beta$ in the following sense:
\begin{equation}
\label{eq:linear_score}
    m^\beta(Z; \theta, g, h) = a^\beta(Z;g)\theta + \nu^\beta(Z;g,h)
\end{equation}
where $a^\beta(Z;g)\in \mathbb{R}^{p\times p}$ is a $p\times p$ matrix and $\nu^\beta(Z;g,h)\in \mathbb{R}^p$ is a $p$-vector. We correspondingly define:
\begin{align*}
    A^\beta(g) &:= \E_Z[a^\beta(Z;g)] & 
    A_n^\beta(g) &:=  \P_n[a^\beta(Z;g)]\\
    \calV^\beta(g,h) &:= \E_Z[\nu^\beta(Z;g,h)] & 
    \calV_n^\beta(g,h) &:= \P_n[\nu^\beta(Z;g,h)].
\end{align*}
We now formally present the main theorem of this appendix.

\begin{theorem}
\label{thm:smooth_clt_generic}
Suppose that, for any $\beta > 0$, $m^\beta(Z; \theta, g, h)$ is a score as outlined in Equation~\eqref{eq:linear_score}. Let $h_0 \in \R^q$ denote the true finite-dimensional nuisance and, for each $\beta > 0$, let $g^\beta \in L^2(P_W)$ denote the true infinite-dimensional nuisance. Let $(\beta_n)_{n \geq 1}$ satisfy $\beta_n \rightarrow \infty$, let $(Z_n)_{n \geq 0}$ be a sequence of i.i.d.\ draws from some distribution $P_Z$, let $(\wh{g}_n)_{n \geq 1}$ and $(\wh{h}_n)_{n \geq 1}$ be sequences of random nuisance estimates, where we assume $\wh{g}_n$ is independent of $Z_1, \dots, Z_n$ for each $n \geq 1$, and define the sequence $(\wh{\theta}_n)_{n \geq 1}$ as the solution to the empirical estimation equation
\[
0 = M_n^\beta(\wh{\theta}_n, \wh{g}_n, \wh{h}_n) = A_n^\beta(\wh{g}_n)\wh{\theta}_n + \calV_n^\beta(\wh{g}, \wh{h}_n).
\]
Suppose the following hold:
\begin{enumerate}
    \item \textit{(Neyman Orthogonality)} For any $\beta > 0$, the score $m^\beta$ is Neyman orthogonal, i.e.\ for all $g \in L^2(P_W)$
    \[
    D_g \E_Z[m^\beta(Z; \theta^\beta, g^\beta, h_0)](g - g^{\beta}) = 0.
    \] \label{cond:neyman}
    \item \textit{(Nuisance Convergence)} The second-order errors shrink sufficiently quickly, i.e.\ $\forall \wb{g} \in [g^{\beta_n}, \wh{g}],$
    \[
    D_g^2\E_{Z}\left[m^\beta(Z; \theta^\beta, \wb{g}, h_0)\right](\wh{g}_n - g^{\beta_n}) = o_\P(n^{-1/2}). 
    \]
    Further, we assume $L^2$ nuisance consistency, i.e.\ $\|\wh{g}_n - g^{\beta_n}\|_{L^2(P_W)} = o_\P(1)$.
    \label{cond:hessian}
    \item \textit{(Convergence of Score)} The scores appropriately converge as $\beta$ grows large, i.e.
    \[
    m^{\beta}(Z; \theta^\beta, g^\beta, h_0) \xrightarrow[\beta \rightarrow \infty]{\P} m^\ast(Z; \theta_0, g_0, h_0)
    \]
    and 
    \[
    A^\beta(g^\beta) \xrightarrow[\beta \rightarrow \infty]{} A^\ast(g_0)
    \]
    for some limiting scores $m^\ast$, $A^\ast$, and some limiting nuisance $g_0 \in L^2(P_W)$.\label{cond:score}
    \item \textit{(Stochastic Equicontinuity)} We have 
    \[
    \left\|\G_n a^{\beta_n}(Z; \wh{g}_n) - \G_n a^{\beta_n}(Z; g^{\beta_n})\right\|_{op} = o_\P(1) \quad \text{and} \quad \left\|\G_n \nu^{\beta_n}(Z; \wh{g}_n, h_0) - \G_n \nu^{\beta_n}(Z; g^{\beta_n}, h_0)\right\|_2 = o_\P(1),
    \]
    where $\G_n := \sqrt{n}(\P_n - \E_Z)$.\label{cond:equi}
    \item \textit{(Regularity)} We assume the limiting score $A^\ast(g_0)$ is invertible, and that $A^\beta(g)$ is mean-squared continuous in $g$ in the sense that
    \[
    \|A^{\beta_n}(\wh{g}_n) - A^{\beta_n}(g^{\beta_n})\|_{op} \lesssim \|\wh{g}_n - g^{\beta_n}\|_{L^2(P_W)} \quad \forall n \geq 1
    \]
     almost surely, where $\lesssim$ indicates inequality up to an absolute constant that does not depend on $\beta > 0$. We also assume there exists $\epsilon > 0$ such that, for any $\beta > 0$
    \[
    \|\theta^\beta\|_2, \sup_{1 \leq j, k \leq p}\left\|a^\beta_{j, k}(Z; g^\beta)\right\|_{L^{2 + \epsilon}(P_Z)}, \sup_{1 \leq j \leq p}\left\|\nu^\beta_j(Z; g^\beta, h_0)\right\|_{L^{2 + \epsilon}(P_Z)} \leq D,
    \]
    where $D > 0$ is some absolute constant. Further, we assume $\left\|m^\ast(Z; \theta_0, g_0, h_0)\right\|_{L^{2 + \epsilon}(P_Z)} \leq D$ as well. \label{cond:regularity}
    \item \textit{(Influence of $h$)} We make the following assumptions about the relationship between $h$ and $m^\beta(Z; g, h)$.\label{cond:h_conditions}
    \begin{enumerate}
        \item \textit{(Continuity)} For any $g \in L^2(P_W)$ and $z \in \calZ$, $\nu^\beta(z;  g, h)$ is twice continuously differentiable in $h$.
        \label{cond:h_diff}
        \item \textit{(Mean-Squared Continuity of Jacobian)}The Jacobian $\partial_h m^\beta(Z;  g, h_0)$ is continuous in $g$ in the sense that
        \[
        \left\|\partial_h m^{\beta_n}(Z; \wh{g}_n, h_0) - \partial_h m^{\beta_n}(Z; g^{\beta_n}, h_0)\right\|_{L^2(P_Z)}^2 \lesssim \|\wh{g}_n - g^{\beta_n}\|_{L^2(P_W)}^2, \quad \forall n \geq 1
        \]
        almost surely, where $\lesssim$ indicates inequality up to an absolute constant that does not depend on $\beta > 0$.
        \label{cond:h_cont}
        \item \textit{(Convergent Jacobian)} There is some matrix $J^\ast \in \R^{p \times q}$ such that 
        \[
        \partial_h \calV^\beta(g^\beta, h_0) = J^\ast + o(1).
        \]
        \label{cond:h_jacobian}
        \item \textit{(Bounded Derivatives)} For any $1 \leq j, k \leq p$,  the $(j, k)$th component of the Jacobian of $\nu^\beta$ with respect to $h$ is uniformly bound, i.e.\ there is some constant $D > 0$ such that
        \[
        \sup_{\beta > 0, g \in [\wh{g}, g^\beta]}\left|\left(\partial_h \nu^\beta(Z; g, h_0)\right)_{j, k}\right| \leq D
        \]
        almost surely. Further, for each coordinate $1 \leq j \leq p$, the Hessian of $\nu^\beta_j$ with respect to $h$ is bounded as
        \[
        \sup_{h \in [\wh{h}, h_0], g \in [\wh{g}, g^\beta]}\left\|\nabla_h^2 \nu^{\beta_n}_j(Z; g, h)\right\|_{op} = o(n^{1/2})
        \]
        almost surely.
        \label{cond:h_hessian}
        \item \textit{(Asymptotic Linearity)} The sequence $(\wh{h}_n)$ is asymptotically linear around about $h_0$ with influence function $f_h(z)$ satisfying $\E f_h(Z) = 0$, i.e.
        \[
        \wh{h}_n - h_0 = \frac{1}{n}\sum_{i = 1}^n f_h(Z_i) + o_\P(n^{-1/2}).
        \]
        \label{cond:h_linearity}
    \end{enumerate}
\end{enumerate}
Then, we have the following asymptotic linearity result:
\[
\sqrt{n}(\wh{\theta}_n - \theta^{\beta_n}) = -\frac{1}{\sqrt{n}}\sum_{i = 1}^n \underbrace{A^\ast(g_0)^{-1}\left\{ m^\ast(Z_i; \theta_0, g_0, h_0)  + J^\ast f_h(Z_i)\right\}}_{=:\rho_\theta(Z_i)} + o_\P(1).
\]
In particular, this implies
\[
\sqrt{n}(\wh{\theta}_n - \theta^{\beta_n}) \Rightarrow \calN(0, \Sigma^\ast)
\]
whenever $\Sigma^\ast := \Cov[\rho_\theta(Z)]$ is positive definite.
\end{theorem}

Often, throughout our work, we are interested in performing inference on parameters of the form $\theta^\beta = \E[m^\beta(Z; g^\beta)]$, where $m^\beta$ is a score that depends only on an observation $Z \in \calZ$ and an infinite-dimensional nuisance component $g^\beta$. This setting can be captured a special case of Theorem~\ref{thm:smooth_clt_generic} where (1) there is no finite-dimensional nuisance component $h_0$, (2) $a^\beta \equiv I_d$, where $\theta \in \R^d$ and $I_d$ is the $d \times d$ identity matrix, and (3) $\nu^\beta(Z; g) = -m^\beta(Z; g)$. We actually use this simplified CLT to prove our main results (namely Theorem~\ref{thm:normal_irregular}). The general theorem is only used in semi-parametric settings. We provide a short proof of this simplified result later in this appendix. We believe this result result may be of broader applicability in the causal inference/semi-parametric inference literature.

\begin{theorem}
\label{thm:smooth_clt}
Suppose, for any $\beta > 0$, there is a score $m^\beta(Z; g)$. Define $\theta_\beta := \E m^\beta(Z; g^{\beta_n})$, where $g^\beta$ is some true nuisance parameter. Let $(\beta_n)_{n \geq 0}$ satisfy $\beta_n \rightarrow \infty$, let $(Z_n)_{n \geq 0}$ be a sequence of i.i.d.\ draws from some distribution $\P_Z$, let $(\wh{g}_n)$ be a sequence of random nuisance estimates independent of $(Z_n)$, and define
\[
\wh{\theta}_n := \P_n m^{\beta_n}(Z; \wh{g}_n).
\]
Suppose the following hold:
\begin{enumerate}
    \item \textit{(Neyman Orthogonality)} For any $\beta > 0$, the score $m^\beta$ is Neyman orthogonal, i.e. for all $g \in L^2(P_W)$
    \[
    D_g \E_Z[m^\beta(Z; g^\beta)](g - g^{\beta}) = 0.
    \]
   \label{cond:simp_neyman}
    \item \textit{(Second Order Gateaux Derivatives)} The second-order errors shrink sufficiently quickly, i.e.\ $\forall \wb{g} \in [g^{\beta_n}, \wh{g}]$
    \[
    D_g^2\E_{Z}\left[m^{\beta_n}(Z; \wb{g})\right](\wh{g}_n - g^{\beta_n}) = o_\P(n^{-1/2}). 
    \]
    \label{cond:simp_hessian}
    \item \textit{(Convergence of Score)} We have, as $\beta \rightarrow \infty$,
    \[
    m^{\beta}(Z; g^\beta) \xrightarrow[]{\P} m^\ast(Z; g_0)
    \]
    for some limiting score $m^\ast$ and some nuisance $g_0 \in L^2(P_W)$.\label{cond:simp_score}
    \item \textit{(Stochastic Equicontinuity)} We have 
    \[
    |\G_n m^{\beta_n}(Z; \wh{g}_n) - \G_n m^{\beta_n}(Z; g^{\beta_n})| = o_\P(1),
    \]
    where $\G_n := \sqrt{n}(\P_n - \E_Z)$.\label{cond:simp_equi}
    \item \textit{(Regularity)} There exists $D > 0$ and $\epsilon > 0$ such that, for any $\beta > 0$, we have
    \[
    \|m^\beta(Z; g^\beta)\|_{L^{2 + \epsilon}(P_Z)} \leq D.
    \]\label{cond:simp_reg}
    Further, we assume $\|m^\ast(Z; g_0)\|_{L^{2 + \epsilon}(P_Z)} \leq D$ as well.
\end{enumerate}

Then, we have the following asymptotic linearity result:
\[
\sqrt{n}(\wh{\theta}_n - \theta^{\beta_n}) = \frac{1}{\sqrt{n}}\sum_{m = 1}^n (m^\ast(Z; g_0) - \theta^\ast) + o_\P(1).
\]
In particular, this implies
\[
\sqrt{n}(\wh{\theta}_n - \theta^{\beta_n}) \Rightarrow \calN(0, \Sigma^\ast)
\]
where $\Sigma^\ast := \Var[m^\ast(Z; g_0)]$.
\end{theorem}

\subsection{Proof of Theorem~\ref{thm:smooth_clt_generic}}

Before proving the result, we state Vitali's theorem, a result that relates convergence of a sequence of random variables in $L^p$ to convergence in probability and uniform integrability.

\begin{prop}[\citet{bogachev2007measure}]
\label{prop:vitali}
Suppose $(X_n)_{n \geq 1}$ are a sequence of random variables and $X$ is a random variable in a measure space $(\Omega, \calF, P)$ such that $X_n \in L^p(P)$ for each $n$ and $X \in L^p(P)$. The following are equivalent.
\begin{enumerate}
    \item $\|X_n - X\|_{L^p(P)} \xrightarrow[n \rightarrow \infty]{} 0$, i.e.\ $(X_n)_{n \geq 0}$ converge in $L^p$ to $X$.
    \item $X_n \xrightarrow[n \rightarrow \infty]{\P} X$ and $(|X_n|^p)_{n \geq 1}$ is uniformly integrable.\footnote{We say a collection of random variables $(X_\alpha)_{\alpha \in A}$ is uniformly integrable if, for any $\delta > 0$, there exists a constant $K_\delta > 0$ such that
    \[
    \sup_{\alpha \in A}\E\left[|X_\alpha|\mathbbm{1}\left\{|X_\alpha| > K_\delta\right\}\right] \leq \delta.
    \]}
\end{enumerate}
\end{prop}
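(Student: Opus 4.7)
The plan is to prove the two implications of Vitali's theorem separately, working in the standard order (1) $\Rightarrow$ (2) then (2) $\Rightarrow$ (1). For the forward direction, convergence in probability is immediate from Markov's inequality: for any $\delta > 0$, $P(|X_n - X| > \delta) \leq \delta^{-p}\|X_n - X\|_{L^p(P)}^p \to 0$. To establish uniform integrability of $(|X_n|^p)_{n \geq 1}$, I would use the convexity inequality $|X_n|^p \leq 2^{p-1}(|X_n - X|^p + |X|^p)$ together with the fact that a single $L^1$ random variable is uniformly integrable, and that an $L^1$-convergent sequence is uniformly integrable (the latter is a short exercise: given $\delta$, pick $N$ such that $\E|X_n - X|^p < \delta/2$ for $n \geq N$, then choose $K$ large enough to control the first $N$ terms and $|X|^p$ simultaneously).

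For the reverse direction, the goal is to show $\E|X_n - X|^p \to 0$. First, convergence in probability $X_n \to X$ implies $|X_n - X|^p \to 0$ in probability by the continuous mapping theorem. Next, I would verify that $(|X_n - X|^p)_{n \geq 1}$ is itself uniformly integrable, again via $|X_n - X|^p \leq 2^{p-1}(|X_n|^p + |X|^p)$ and the assumed uniform integrability of $(|X_n|^p)$. So it suffices to prove the sub-lemma that if $Y_n \geq 0$ satisfies $Y_n \to 0$ in probability and $(Y_n)$ is uniformly integrable, then $\E Y_n \to 0$.

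To prove this sub-lemma (the real technical heart of the theorem), I would fix $\epsilon > 0$ and split: for any threshold $K > 0$ and event $A_n := \{Y_n > \epsilon\}$,
\[
\E Y_n = \E[Y_n \mathbbm{1}\{Y_n > K\}] + \E[Y_n \mathbbm{1}\{Y_n \leq K, A_n\}] + \E[Y_n \mathbbm{1}\{Y_n \leq K, A_n^c\}].
\]
Uniform integrability lets me choose $K$ large enough so that the first term is below $\epsilon$ for all $n$. The second term is bounded by $K \cdot P(A_n) \to 0$ by convergence in probability. The third term is at most $\epsilon$. Hence $\limsup_n \E Y_n \leq 2\epsilon$, and since $\epsilon$ was arbitrary, $\E Y_n \to 0$, which gives $\|X_n - X\|_{L^p(P)} \to 0$.

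The main obstacle is really the sub-lemma in the reverse direction: the two-sided truncation argument requires balancing three pieces (tail mass via uniform integrability, probability mass via convergence in probability, and the pointwise $\epsilon$-bound). Everything else is a straightforward application of Markov's inequality and the parallelogram/convexity bound. Since this is a classical textbook theorem (see, e.g., \citet{bogachev2007measure}), in the paper I would simply cite the reference rather than reproducing the full argument.
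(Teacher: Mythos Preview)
Your proposal is correct, and you anticipate exactly what the paper does: the proposition is stated with a citation to \citet{bogachev2007measure} and no proof is given. Your sketched argument is the standard textbook proof and would be fine if one wanted to include it, but the paper simply cites the reference, as you suggest at the end.
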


We also use the following result which can be used to relate uniformly bounded higher moments of a sequence of random variables/scores to uniform integrability.

\begin{prop}[\citet{poussin1915integrale,dellacherie2011probabilities}]
\label{prop:pous}
A sequence of integrable random variables $(X_n)_{n \geq 1}$ is uniformly integrable if and only if there is an increasing convex function $\Phi : \R_{\geq 0} \rightarrow \R_{\geq 0}$ such that
\[
\lim_{t \rightarrow \infty}\frac{\Phi(t)}{t} = \infty \quad \text{and} \quad \sup_n \E\left[\Phi(|X_n|)\right] < \infty.
\]

\end{prop}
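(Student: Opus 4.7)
The plan is to prove the two directions of this classical de la Vallée Poussin criterion separately, since they require qualitatively different arguments.

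For the ``if'' direction, the idea is a straightforward truncation. Given such a $\Phi$, for any $K > 0$ and any $n$, I would bound
\[
\E\left[|X_n|\mathbbm{1}\{|X_n| > K\}\right] \leq \left(\sup_{t > K}\frac{t}{\Phi(t)}\right) \cdot \E[\Phi(|X_n|)\mathbbm{1}\{|X_n| > K\}] \leq \left(\sup_{t > K}\frac{t}{\Phi(t)}\right)\cdot \sup_n \E[\Phi(|X_n|)].
\]
Since $\Phi(t)/t \to \infty$, the supremum $\sup_{t > K}t/\Phi(t) \to 0$ as $K \to \infty$, and the second factor is finite by hypothesis. Taking $K$ sufficiently large gives uniform integrability.

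For the ``only if'' direction, the heart of the argument is an explicit construction of $\Phi$. Using uniform integrability, I would inductively extract an increasing sequence $0 < a_1 < a_2 < \cdots$ with $a_k \to \infty$ such that $\sup_n \E[|X_n|\mathbbm{1}\{|X_n| > a_k\}] \leq 2^{-k}$. Then define the non-decreasing step function $g(t) := \sum_{k \geq 1}\mathbbm{1}\{t > a_k\}$, which tends to infinity as $t \to \infty$, and set $\Phi(t) := \int_0^t g(s)\,ds$. The function $\Phi$ is non-negative, increasing, and convex (since its right-derivative $g$ is non-decreasing). To verify $\Phi(t)/t \to \infty$, one observes $\Phi(t)/t \geq g(t/2)/2$ for $t > 0$, and the right side tends to infinity. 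Finally, Tonelli's theorem together with $(|X_n| - a_k)^+ \leq |X_n|\mathbbm{1}\{|X_n| > a_k\}$ yields
\[
\E[\Phi(|X_n|)] = \sum_{k \geq 1}\E[(|X_n| - a_k)^+] \leq \sum_{k \geq 1}\E[|X_n|\mathbbm{1}\{|X_n| > a_k\}] \leq \sum_{k \geq 1} 2^{-k} = 1,
\]
uniformly in $n$, completing the construction.

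The main obstacle is the necessity direction, specifically getting the right rate of decay in the extracted thresholds $(a_k)$ and then verifying both the growth condition $\Phi(t)/t \to \infty$ and the uniform moment bound for the same $\Phi$. The key trick is choosing a summable sequence (I use $2^{-k}$) so that the moment bound telescopes, while simultaneously ensuring $a_k \uparrow \infty$ drives the slope $g(t)$ of $\Phi$ to infinity. A minor technical point to handle carefully is the convexity of $\Phi$: since $g$ is a right-continuous step function rather than continuous, one should note that the integral of a non-decreasing non-negative function is automatically convex (e.g.\ by the fundamental theorem of calculus for monotone functions), which bypasses any need to smooth $g$.
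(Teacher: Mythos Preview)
Your proof is correct and follows the standard argument for the de la Vall\'ee Poussin criterion. However, the paper does not actually prove this proposition: it is stated with attribution to \citet{poussin1915integrale} and \citet{dellacherie2011probabilities} and invoked as a classical black-box result in the proof of Theorem~\ref{thm:smooth_clt}, so there is no paper proof to compare against.

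One very minor remark on your construction: since $g(t) = 0$ for $t \leq a_1$, your $\Phi$ is constant (equal to zero) on $[0, a_1]$ and thus only non-decreasing, not strictly increasing. If the statement's ``increasing'' is read strictly, you can repair this at no cost by taking $\Phi(t) := t + \int_0^t g(s)\,ds$; this preserves convexity, the growth condition $\Phi(t)/t \to \infty$, and the uniform moment bound (now $\sup_n \E[\Phi(|X_n|)] \leq 1 + \sup_n \E|X_n| < \infty$, finiteness of the latter being a consequence of uniform integrability).
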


With the above two lemmas along with the various assumptions posited above, we can prove Theorem~\ref{thm:smooth_clt_generic}.
\begin{proof}
Throughout the proof, we suppress dependence of various parameters on the sample size $n \geq 1$ to reduce clutter. By the linearity of the moment with respect to $\theta$, we have
\begin{align*}
    A^\beta(\wh{g}) \left(\wh{\theta} -\theta^\beta\right) =~&  M^\beta(\wh{\theta},  
    \wh{g}, h_0) - M^\beta(\theta^\beta,  \wh{g},h_0)\\
    =~& M^\beta(\theta^\beta, g^{\beta},h_0) - M^\beta(\theta^\beta, \wh{g},h_0)\\
    +~& M^\beta(\wh{\theta}, \wh{g},h_0) - M_n^\beta(\wh{\theta}, \wh{g},h_0)\\
    +~& M_n^\beta(\wh{\theta}, \wh{g},h_0) - M_n^\beta(\wh{\theta}, \wh{g},\wh{h})\\
    +~& M_n^\beta(\wh{\theta}, \wh{g},\wh{h}) - M^\beta(\theta^\beta, g^\beta,h_0)
\end{align*}
Note that by definitions of $\wh{\theta}$
 and $\theta^\beta$, we have $M^\beta(\theta^\beta, g_0^\beta,h_0) ~=~ 0$ and $M_n^\beta(\wh{\theta}, \wh{g},\wh{h}) ~=~ 0$.
Therefore, we can simplify the above expression to obtain \begin{align*}
    A^\beta(\wh{g})  \left(\wh{\theta} -\theta^\beta\right) 
    =~& \underbrace{M^\beta(\theta^\beta, g^{\beta},h_0) - M^\beta(\theta^\beta, \wh{g},h_0)}_{=:I_{1,n}} +  \underbrace{M^\beta(\wh{\theta}, \wh{g},h_0) - M_n^\beta(\wh{\theta}, \wh{g},h_0)}_{=:I_{2, n}}\\
    +~& \underbrace{M_n^\beta(\wh{\theta}, \wh{g},h_0) - M_n^\beta(\wh{\theta}, \wh{g},\wh{h})}_{ =:I_{3, n}}
\end{align*} 

In the following, we analyze the asymptotic behaviors of $I_{1,n}, I_{2,n}, I_{3,n}$ separately, starting with $I_{1,n}$. 

\paragraph{Asymptotic Behavior of $\mathbf{I_{1, n}}$.}
By performing a second order Taylor expansion with mean value theorem remainder and exploiting Neyman orthogonality (Condition~\ref{cond:neyman}), we obtain that:
\begin{align*}
     M^{\beta}(\theta^\beta, \wh{g},h_0) - M^\beta(\theta^\beta, g^{\beta},h_0) &= D_g M^\beta(\theta^\beta, g^\beta, h_0;\beta)(\wh{g} -g^\beta) + \frac{1}{2}D_g^2 M^{\beta}(\theta^\beta, \wb{g}, h_0)(\wh{g} - g^\beta) \\
    &= \frac{1}{2}D_g^2 M^{\beta}(\theta^\beta, \wb{g}, h_0)(\wh{g} - g^\beta) \\
    &= o_\P(n^{-1/2}),
\end{align*}
where the final equality follows from the convergence rates of nuisances (Condition~\ref{cond:hessian}). Thus, we have $I_{1, n} = o_\P(n^{-1/2})$.

\paragraph{Asymptotic Behavior of $\mathbf{I_{2, n}}$.}
We now move on to analyzing $I_{2,n}$. In this goal, define the scaled empirical process $G_n^\beta(\theta, g, h) := M^\beta(\theta, g,h) - M_n^\beta(\theta, g,h)$. Likewise, define the limiting scaled empirical process as $G^\ast_n(\theta, g, h) := M^\ast(\theta, g, h) - M_n^\ast(\theta, g, h)$, where $M^\ast(\theta, g, h) := \E_Z[m^\ast(Z; \theta, g, h)]$ and $M_n^\ast := \P_n m^\ast(Z; \theta, g, h)$ and $m^\ast$ is the limiting score defined above. $I_{2,n}$ can be re-expressed in terms of the empirical processes $G_n^\beta$ and $G^\ast_n$ as 
\begin{align*}
I_{2,n}=G_n^\beta\left(\wh{\theta}, \wh{g}, h_0\right) =~& G_n^\ast(\theta_0, g_0, h_0) 
+
\underbrace{\left(G_n^\beta(\theta^\beta, g^\beta, h_0) - G_n^\ast(\theta_0, g_0, h_0)\right)}_{I_{2, n}^{(1)}} \\
+~& 
\underbrace{\left(G_n^\beta(\theta^\beta, \wh{g}, h_0) - G_n^\beta(\theta^\beta, g^\beta, h_0)\right)}_{I_{2, n}^{(2)}} \\
+~& 
\underbrace{\left(G_n^\beta(\wh{\theta}, \wh{g}, h_0) - G_n^\beta(\theta^\beta, \wh{g}, h_0)\right)}_{I_{2, n}^{(3)}}.\label{last_term}
\end{align*}
\\
First, we note that, for each $n$, $I_{2, n}^{(1)}$ is precisely a sum of i.i.d.\ mean zero random variables:
\begin{align*}
    -I_{2, n}^{(1)} &= \frac{1}{n}\sum_{i=1}^{n}\left\{\left(m^\ast(Z_i;\theta_0,g_0,h_0)-m^\beta(Z_i;\theta^\beta,g^\beta,h_0)\right)-\E\left[m^\ast(Z_i;\theta_0,g_0,h_0)-m^\beta(Z_i;\theta^\beta,g^\beta,h_0)\right]\right\}
\end{align*}
Note then that we have
\begin{align*}&
    \left\|\frac{1}{n}\sum_{i=1}^{n}\left\{\left(m^\ast(Z_i;\theta_0,g_0,h_0)-m^\beta(Z_i;\theta^\beta,g^\beta,h_0)\right)-\E\left[m^\ast(Z_i;\theta_0,g_0,h_0)-m^\beta(Z_i;\theta^\beta,g^\beta,h_0)\right]\right\}\right\|_{L^2(P_Z)}\\
    &\qquad =\Var\left(\frac{1}{n}\sum_{i=1}^{n}\left\{m^\ast(Z_i;\theta_0,g_0,h_0)-m^\beta(Z_i;\theta^\beta,g^\beta,h_0)\right\}\right)^{1/2}\\
    &\qquad =\frac{1}{\sqrt{n}}\Var\Big(m^\ast(Z;\theta_0,g_0,h_0)-m^\beta(Z; \theta^\beta,g^\beta,h_0)\Big)^{1/2} \qquad \qquad \mbox{(by independence)}\\
    &\qquad \leq \frac{1}{\sqrt{n}}\left\|m^\ast(Z;\theta_0,g_0,h_0)-m^\beta(Z;\theta^\beta,g^\beta,h_0)\right\|_{L^2(P_Z)} \qquad \qquad (\Var[X] \leq \E X^2) \\
    &\qquad = o(n^{-1/2}).
\end{align*}
We now argue that our assumptions imply the final equality. Since we already know that $m^{\beta}(Z; \theta^\beta, g^\beta, h_0) - m^\ast(Z; \theta_0, g_0, h_0) = o_\P(1)$, it suffices to show that $\|m^{\beta}(Z; \theta^\beta, g^\beta, h_0) - m^\ast(Z; \theta_0, g_0, h_0)\|_2^2$ (implicitly indexed by sample size $n$) are uniformly integrable, as then Vitali's Theorem (Proposition~\ref{prop:vitali}) implies the final equality. To check uniform integrability, we apply Proposition~\ref{prop:pous} with $\Phi(t) := t^{1 + \frac{\epsilon}{2}}$ (where $\epsilon$ is as in Condition~\ref{cond:regularity}), which yields
\begin{align*}
\sup_{n \geq 1}\E\left[\Phi\left(\|m^{\beta}(Z; \theta^\beta, g^\beta, h_0) - m^\ast(Z; \theta_0, g_0, h_0)\|_2^2\right)\right] &= \sup_{n \geq 1}\E\left\|m^{\beta}(Z; \theta^\beta, g^\beta, h_0) - m^\ast(Z; \theta_0, g_0, h_0)\right\|^{2 + \epsilon}_2 \\
&\leq \sup_{n\geq 1} d^{\epsilon/2}\E\left\|m^{\beta}(Z; \theta^\beta, g^\beta, h_0) - m^\ast(Z; \theta_0, g_0, h_0)\right\|_{2 + \epsilon}^{2 + \epsilon} \\
&\leq d^{\epsilon/2}(2D)^{2 + \epsilon}
\end{align*}
where the final inequality follows from Condition~\ref{cond:regularity} and the second to last inequality follows from the inequality $\|x\|_q \leq d^{\frac{1}{q} - \frac{1}{p}} \|x\|_p$ for $x \in \R^d$ and $p > q \geq 1$. 
Thus, an application of Chebyshev's inequality yields the desired convergence in probability result for $I_{2, n}^{(1)}$, i.e.\ that
\begin{align*}
G_n^\beta(\theta^\beta, g^\beta, h_0) - G_n^\ast(\theta_0, g_0, h_0) =o_{\P}(n^{-1/2}).
\end{align*}

\noindent By the linearity of the moment $m^\beta$, the last term $I_{2, n}^{(3)}$ can be bounded as
\begin{equation}\label{tr}
    \begin{aligned}
    \left\|I_{2, n}^{(3)}\right\|_2 &= \left\|G_n^\beta(\wh{\theta}, \wh{g}, h_0) - G_n^\beta(\theta^\beta, \wh{g}, h_0)\right\|_2 = \left\|\left(A^\beta(\wh{g}) - A_n^\beta(\wh{g})\right)^\top(\wh{\theta}-\theta^\beta)\right\|_2 \\
    &\leq \left\|A^\beta(\wh{g}) - A^\beta_n(\wh{g})\right\|_{op}\|\wh{\theta} - \theta^\beta\|_2.
    \end{aligned}
\end{equation}

\noindent Therefore to successfully upper-bound \cref{tr} we need to show  $\left\|A^\beta( \wh{g}^{\beta}) - A_n^\beta(\wh{g}^{\beta})\right\|_{op} $ converges in probability to zero. Note that an application of the triangle inequality yields
\begin{equation}\label{tr2}
\begin{aligned}
   \left\|A^\beta(\wh{g}) - A_n^{\beta}(\wh{g})\right\|_{op} &\leq \left\|A^\beta(g^\beta)-A_n^{\beta}(g^\beta)\right\|_{op} \\
&\quad +  \left\|A^\beta(\wh{g}) - A^\beta(g^\beta) - \left(A_n^\beta(\wh{g}) - A_n^\beta(g^\beta)\right)\right\|_{op}
\end{aligned} 
\end{equation}
We can prove that each term on the right-hand side of \cref{tr2} is negligible. Stochastic equicontinuity (Condition~\ref{cond:equi}) yields that the second summand on the right hand side is $o_\P(n^{-1/2})$. To bound the first summand, note that we have
\begin{align*}
&\E\left[\left\|A^\beta(g^\beta) - A_n^\beta(g^\beta)\right\|_{op}^2\right] \leq \E\left[\left\|A^\beta(g^\beta) - A^\beta_n(g^\beta)\right\|_F^2\right] \\
&\qquad = \E\left[\sum_{j, k = 1}^p\left\{\P_n a^\beta_{j, k}(Z_i; g^\beta) - \E a^\beta_{j, k}(Z; g^\beta)\right\}^2\right] \\
&\qquad\leq p^2 \max_{j, k} \E\left[\left\{(\P_n - \E)a_{j, k}^\beta(Z_i; g^\beta)\right\}^2\right] \\
&\qquad =p^2\max_{j, k}\Var\left(\frac{1}{n}\sum_{i\le n}a_{j,k}^\beta(Z_i; g^\beta)\right) \\
&\qquad =\frac{p^2}{n} \max_{j, k}\Var\left(a_{j,k}^\beta(Z; g^\beta)\right) \\
&\qquad\leq  \frac{p^2}{n} \max_{j, k}\E\left[a_{j,k}^\beta(Z; g^\beta)^2\right] \\
&\qquad = O(1/n),
\end{align*}
where the last equality follows from the fact that 
\[
\sup_{j, k, \beta}\E\left[a^\beta_{j, k}(Z; g^\beta)^2\right] \leq \sup_{j, k, \beta} \left(\E\left[a^\beta_{j, k}(Z; g^\beta)^{2 + \epsilon}\right]\right)^{\frac{2}{2 + \epsilon}}\ \leq D < \infty.
\]
Again, this coupled with Chebyshev's inequality yields that $\left\|A^\beta(g^\beta) - A_n^\beta(g^\beta)\right\|_{op} = o_\P(1)$. Putting this all together, we get that
\[
\left\|I_{2, n}^{(3)}\right\|_2 = o_\P(1) \cdot \|\wh{\theta} - \theta^\beta\|_2 = o_\P\left(\|\wh{\theta} - \theta^\beta\|_2\right). 
\]
Now, we show $\|I_{2, n}^{(2)}\|_2 = o_\P(n^{-1/2})$. Now, expanding the definition of $I_{2, n}^{(2)}$ and applying the triangle inequality yields
\begin{align*}
    \left\|I_{2, n}^{(2)}\right\|_2 &= \left\|G_n^\beta(\theta^\beta, \wh{g}, h_0) - G_n^\beta(\theta^\beta, g^\beta, h_0)\right\|_2\\
    &\leq \left\|A^\beta(\wh{g}) - A^\beta(g^\beta) - \left(A_n^\beta(\wh{g}) - A_n^\beta(g^\beta)\right)\right\|_{op}\, \|\theta_0^\beta\|_2 \\
    &\qquad + \left\|\calV^\beta(\wh{g}^{\beta},h_0) - \calV^\beta(g^\beta, h_0) - \left(\calV_n^\beta(\wh{g}^{\beta},h_0) - \calV_n^\beta(g^\beta, h_0)\right)\right\|_2\\
    &= o_{\P}(n^{-1/2}),
\end{align*}
where the inequality follows due to stochastic equicontinuity of $A^\beta$ and $\calV^\beta$ (Condition~\ref{cond:equi}) and from the assumption
$\sup_{\beta>0}\|\theta_0^\beta\|_2\leq D < \infty$ (Condition~\ref{cond:regularity}). Altogether, the above analysis yields that
\[
I_{2,n} = G_n^\ast(\theta_0, g_0, h_0) + o_{\P}(n^{-1/2} + \|\wh{\theta}- \theta^\beta\|_2).
\]

\paragraph{Asymptotic Behavior of $\mathbf{I_{3, n}}$}

First, observe that we can perform a second order Taylor expansion with mean value remainder to obtain
\begin{align*}
I_{3, n} &:= M_n^\beta(\wh{\theta}, \wh{g}, h_0) - M_n^\beta(\wh{\theta}, \wh{g}, \wh{h}) \\
&= \partial_h M_n^\beta(\wh{\theta}, \wh{g}, h_0)(h_0 - \wh{h}) + \frac{1}{2}\partial_h^2 M_n^\beta(\wh{\theta}, \wh{g}, (\wt{h}_1, \dots, \wt{h}_q))(h_0 - \wh{h}, h_0 - \wh{h})\\
&= \partial_h M_n^\beta(\wh{\theta}, \wh{g}, h_0)(h_0 - \wh{h}) + o(n^{1/2}) \|h_0 - \wh{h}\|_2^2 \\
&= \partial_h M_n^\beta(\wh{\theta}, \wh{g}, h_0)(h_0 - \wh{h}) + o_\P(n^{-1/2}).
\end{align*}
The second equality follows from a second order Taylor expansion with mean-value theorem remainder. In more detail, we apply the mean-value theorem component-wise: $\wt{h}_1, \dots, \wt{h}_q \in [h_0, \wh{h}]$ and $\partial_h^2M_n^\beta(\wh{\theta}, \wh{g}, (\wt{h}_1, \dots, \wt{h}_q)) : \R^q \times \R^q \rightarrow \R^p$ is the quadratic form whose $k$th component (for $k \in [q]$) is given by $v^\top \partial_h^2 M_n^\beta(\wh{\theta}, \wh{g}, \wt{h_q}) v$ for any $v \in \R^q$. The third line follows from the fact we assumed the Hessian of each component has operator norm bounded above by $o(n^{1/2})$ uniformly in $g \in [\wh{g}, g^\beta]$ and $h \in [\wh{h}, h_0]$ (Condition~\ref{cond:h_hessian}). The final line follows since $\wh{h}$ is asymptotically linear about $h_0$ (Condition~\ref{cond:h_linearity}), which implies $\|\wh{h} - h_0\|_2^2 = O_\P(n^{-1})$. Next, since we have assumed $m^\beta(z; \theta, g, h) = a^\beta(z; g) \theta + \nu^\beta(z; g, h)$, we have the identity 
\[
\partial_h M_n^\beta(\wh{\theta}, \wh{g}, h_0)(h_0 - \wh{h}) = \partial_h \calV_n^\beta(\wh{g}, h_0)(h_0 - \wh{h}).
\]
Using this, we can rewrite $I_{3, n}$ as 
\begin{align*}
I_{3, n} &= \partial_h \calV^\beta_n(\wh{g}, h_0)(h_0 - \wh{h}) + o_\P(n^{-1/2}) \\
&= \Big(\underbrace{\left\{\partial_h \calV^\beta_n(\wh{g}, h_0) - \partial_h \calV^\beta_n(g^\beta, h_0)\right\}}_{=: I_{3, n}^{(1)}} + \underbrace{\left\{\partial_h \calV^\beta_n(g^\beta, h_0)- \partial_h\calV^\beta(g^\beta, h_0)\right\}}_{=: I_{3, n}^{(2)}}\Big)(h_0 - \wh{h})   \\
&\qquad + \partial_h\calV^\beta(g^\beta, h_0)(h_0 - \wh{h}) + o_\P(n^{-1/2}) \\
&= I_{3, n}^{(1)}(h_0 - \wh{h}) + I_{3, n}^{(2)}(h_0 - \wh{h}) + J^\ast (h_0 - \wh{h}) + o_\P(n^{-1/2}),
\end{align*}
where the second equality comes from adding and subtracting $\partial_h \calV^\beta_n(g^\beta, h_0)$ and $\partial_h \calV^\beta(g^\beta, h_0)$, and the final equality follows from the assumption that $\partial_h \calV^\beta(g^\beta, h_0) = J^\ast + o(1)$ (Condition~\ref{cond:h_jacobian}) and the fact that $\|h_0 -\wh{h}\|_2^2 = O_\P(n^{-1/2})$ (implied by asymptotic linearity, Condition~\ref{cond:h_linearity}). We aim to show asymptotic linearity of the form
\[
I_{3, n} = -\frac{1}{n}\sum_{i = 1}^n J^\ast f_h(Z_i) + o_\P(n^{-1/2}),
\]
and to do this it suffices to argue that $I_{3, n}^{(1)}, I_{3, n}^{(2)} = o_\P(1)$, since we know $\wh{h} - h_0 = n^{-1}\sum_i f_h(Z_i) + o_\P(1)$ (Condition~\ref{cond:h_linearity}). To show the first point, we leverage mean-squared continuity of $\partial_h \calV^\beta(g, h_0)$ in $g \in L^2(P_Z)$ (Condition~\ref{cond:h_cont}). We see we have
\begin{align*}
\E_{Z_1, \dots, Z_n}[\|I_{3, n}^{(1)}\|_{F}] &= \E_Z\left[\left\|\frac{1}{n}\sum_{i = 1}^n\left\{\partial_h\nu^\beta(Z_i; \wh{g}, h_0) - \partial_h\nu^\beta(Z_i; g^\beta, h_0)\right\}\right\|_F\right] \\
&\leq \E_Z\left[\frac{1}{n}\sum_{i = 1}^n\left\|\partial_h\nu^\beta(Z_i; \wh{g}, h_0) - \partial_h\nu^\beta(Z_i; g^\beta, h_0)\right\|_F\right] \\
&\leq \E_Z\left[\|\partial_h \nu^\beta(Z; \wh{g}, h_0) - \partial_h \nu^\beta(Z; g^\beta, h_0)\|_F^2\right]^{1/2} \\
&\lesssim \|\wh{g} - g^\beta\|_{L^2(P_W)} \land \sqrt{2p^2D}\\
&= o_\P(1),
\end{align*}
where the first inequality follows from a repeated application of the triangle inequality and $D$ is an absolute constant that exists because we have assumed a uniform $L^\infty$ bound of $D > 0$ on the components of the Jacobian for any $j, k \in [p]$ and $\beta > 0$ (Condition~\ref{cond:h_jacobian}). The final line follows from nuisance consistency, which is assumed in Condition~\ref{cond:hessian}.
Now, conditionally applying Chebyshev's inequality yields, for any $\epsilon >0$, that
\begin{align*}
\lim_{n \rightarrow \infty}\P\left(\left\|I_{3, n}^{(1)} \right\|_F \geq \epsilon\right)  = \lim_{n \rightarrow \infty}\E\left[\P_Z\left(\left\|I_{3, n}^{(1)} \right\|_F \geq \epsilon\right)\right]\lesssim \lim_{n \rightarrow \infty}\frac{1}{\epsilon}\E\left[\|\wh{g} - g^\beta\|_{L^2(P_X)} \land (2p^2D)\right] = 0
\end{align*}
where the final limit holds by Proposition~\ref{prop:vitali}, since the collection of random variables $\Big(\|\wh{g} - g^\beta\|_{L^2(P_X)} \land (2p^2 D)\Big)_{n \geq 1}$ is bounded and hence uniformly integrable. Thus, we have $I_{3, n}^{(1)} = o_\P(1)$. Next, note that 
\begin{align*}
\E\left\|I_{3, n}^{(2)}\right\|_F^2 &= \E\left[\left\|\frac{1}{n}\sum_{i = 1}^n \left\{\partial_h\nu^\beta(Z_i; g^\beta, h_0) - \E\partial_h\nu^\beta(Z_i; g^\beta, h_0)\right\}\right\|_F^2\right] \\
&= \frac{1}{n}\E\left[\left\|\partial_h\nu^\beta(Z; g^\beta, h_0) - \E\partial_h\nu^\beta(Z; g^\beta, h_0)\right\|_F^2\right] \\
&\lesssim \frac{1}{n},
\end{align*}
where the final inequality follows from the fact that the  second moments of the scores are uniformly bounded (which in turn follows from Condition~\ref{cond:regularity} after an application of Jensen's inequality). Thus, we have $I_{3, n}^{(2)} = o_\P(1)$ from an application of Chebyshev's inequality, unconditionally.

\paragraph{Concluding the proof.}
Thus far, we have shown that 
\begin{align*}
A^\beta(\wh{g})(\wh{\theta} - \theta^\beta) &= I_{1, n} + I_{2, n} + I_{3, n} \\
&= -\frac{1}{n}\sum_{i = 1}^n m^\ast(Z_i;\theta_0, g_0, h_0) - \frac{1}{n}\sum_{i = 1}
^n J^\ast f_h(Z_i) + o_\P(n^{-1/2}) + o_\P\left(\|\wh{\theta} - \theta^\beta\|_2\right).
\end{align*}
Since we have assumed that $A^\beta(g^\beta) = A^\ast(g_0) + o(1)$ (Condition~\ref{cond:score}) and since $A^\beta(g)$ is assumed to be continuous in $g$ (in the sense of Condition~\ref{cond:regularity}), we have 
\begin{align*}
A^\beta(\wh{g}) &= A^\ast(g_0) + (A^\beta(g^\beta) - A^\ast(g_0)) + (A^\beta(\wh{g}) - A^\beta(g^\beta)) \\
&= A^\ast(g_0) + o_\P(1),
\end{align*}
where the final equality follows from the consistency of $\wh{g}$ (Condition~\ref{cond:hessian}).
Therefore, we have the following equality
\begin{align*}
A^\ast(g_0) \left(\wh{\theta} - \theta^\beta\right) &= -\frac{1}{n}\sum_{i = 1}^n\left\{m^\ast(Z_i;\theta_0, g_0, h_0) + 
 J^\ast f_h(Z_i)\right\} + o_\P(n^{-1/2}) + o_\P\left(\|\wh{\theta} - \theta^\beta\|_2\right).
\end{align*}
Left multiplying both sides by $A^\ast(g_0)^{-1}$
(which exists by Assumption~\ref{cond:regularity}) yields
\begin{align*}
\wh{\theta} - \theta^\beta &= -\frac{1}{n}A^\ast(g_0)^{-1}\sum_{i = 1}^n\left\{m^\ast(Z_i;\theta_0, g_0, h_0) + J^\ast f_h(Z_i)\right\} + o_\P(n^{-1/2}) + o_\P\left(\|\wh{\theta} - \theta^\beta\|_2\right) \\
&= -\frac{1}{n}A^\ast(g_0)^{-1}\sum_{i = 1}^n\left\{m^\ast(Z_i;\theta_0, g_0, h_0) + J^\ast f_h(Z_i)\right\} + o_\P(n^{-1/2}).
\end{align*}
To justify the last line, we just need to show that $\|\wh{\theta} - \theta^\beta\|_2 = O_\P(n^{-1/2})$. To demonstrate this, we note that we can explicitly write 
\begin{align*}
\left\|\wh{\theta} - \theta^\beta\right\|_2 \leq \left\|\frac{1}{n}A^\ast(g_0)^{-1}\sum_{i = 1}^n \left\{m^\ast(Z_i; \theta_0, g_0, h_0) + J^\ast f_h(Z_i)\right\}\right\| + r_n + \epsilon_n \|\wh{\theta} - \theta^\beta\|_2,
\end{align*}
where $(\epsilon_n)_{n \geq 1}$ is a non-negative sequence satisfying $\epsilon_n \xrightarrow[n \rightarrow \infty]{\P} 0$ and $(r_n)_{n \geq 1}$ is a non-negative sequence satisfying $n^{1/2}r_n \xrightarrow[n \rightarrow \infty]{\P} 0$. Define the ``good'' event $G_n := \{\epsilon_n \leq 1/2\}$. Convergence in probability yields that $\lim_{n \rightarrow \infty}\P(G_n^c) = 0$. On $G_n$, we have
\begin{align*}
\left\|\wh{\theta} - \theta^\beta\right\|_2 &\leq 2\left(\frac{1}{n}\left\|A^\ast(g_0)^{-1}\sum_{i = 1}^n \left\{m^\ast(Z_i; \theta_0, g_0, h_0) + J^\ast f_h(Z_i)\right\}\right\|_2\right) + 2r_n \\
&= O\left(\frac{1}{n}\left\|A^\ast(g_0)^{-1}\sum_{i = 1}^n \left\{m^\ast(Z_i; \theta_0, g_0, h_0) + J^\ast f_h(Z_i)\right\}\right\|_2\right) + o_\P(n^{-1/2}) \\
&= O_\P\left(n^{-1/2}\right),
\end{align*}
where the final equality follows since $A^\ast(g_0)^{-1}\left\{m^\ast(Z_i, \theta_0, g_0, h_0) + J^\ast f_h(Z_i)\right\}$ are i.i.d.\ random vectors whose components have finite variance.
This thus completes the proof of Theorem~\ref{thm:smooth_clt_generic}
\end{proof}

\begin{proof}[Proof of Theorem~\ref{thm:smooth_clt}]
First, we note that we have the following decomposition.
\begin{align*}
\sqrt{n}(\wh{\theta} - \theta^{\beta}) &= \sqrt{n}\left\{\P_n m^{\beta}(Z; \wh{g}) - \E_Z m^{\beta}(Z; g^{\beta})\right\} \\
&= \sqrt{n}\left\{\P_n m^{\beta}(Z; \wh{g}) \pm \P_n m^{\beta}(Z; g^{\beta}) \pm \E_Z m^{\beta}(Z; \wh{g}) \pm \E_Z m^{\beta}(Z; g^{\beta}) - \E_Z m^{\beta}(Z; g^{\beta})\right\} \\
&= \sqrt{n}\left\{\P_n m^{\beta}(Z; \wh{g}) - \E m^{\beta}(Z; \wh{g}) - \P_n m^{\beta}(Z; g^{\beta}) + \E_Z m^{\beta}(Z; g^{\beta})\right\} \\
&\qquad\qquad + \sqrt{n}\left\{\P_n m^{\beta}(Z; g^{\beta}) - \E_Z m^{\beta}(Z; g^{\beta})\right\} + \sqrt{n}\left\{\E_Z m^{\beta}(Z; \wh{g}) - \E_Z m^{\beta}(Z; g^{\beta})\right\} \\
&= \underbrace{\G_n m^{\beta}(Z; \wh{g}) - \G_n m^{\beta}(Z; g^{\beta})}_{T_1} + \underbrace{\G_n m^{\beta}(Z; g^{\beta})}_{T_2} \\
&\qquad +\underbrace{\sqrt{n}\left\{\E_Z m^{\beta}(Z; \wh{g}) - \E_Z m^{\beta}(Z; g^{\beta})\right\}}_{T_3}.
\end{align*}
We handle each of the three terms separately. Handling the first term follows by assumption, as we have by Condition~\ref{cond:equi}
\[
T_1 = \G_n m^{\beta}(Z; \wh{g}) - \G_n m^{\beta}(Z; g^{\beta}) = o_\P(1).
\]
Handling $T_3$ follows from the assumptions regarding Neyman orthogonality and bounded second order errors. In particular, we have:
\begin{align*}
\frac{1}{\sqrt{n}}T_3 &= \E_Z m^{\beta}(Z; \wh{g}) - \E_Z m^{\beta}(Z; g^{\beta}) \\
&= D_{g}\E_Z[m^{\beta}(Z; g^{\beta})](\wh{g} - g^{\beta}) + \frac{1}{2}D^2_g\E_Z[m^{\beta}(Z; \wb{g})](\wh{g} - g^{\beta}) \\
&= \frac{1}{2}D^2_g\E_Z[m^{\beta}(Z; \wb{g})](\wh{g} - g^{\beta}) &(\text{Condition~\ref{cond:neyman}}) \\
&= o_\P(n^{-1/2}) &(\text{Condition~\ref{cond:hessian}}),
\end{align*}
where $\wb{g} \in [\wh{g}, g^\beta]$ exists by the mean value theorem. Multiplying everything by $\sqrt{n}$ yields that $T_3 = o_\P(1)$. 



The last thing to do is analyze $T_2$. 
We can rewrite $T_2$ as follows:
\[
T_2 = \G_n m^\ast(Z; g_0) + \left\{\G_n m^{\beta}(Z; g^{\beta}) - \G_n m^\ast(Z; g_0)\right\}.
\]
We need to show the last term is $o_\P(1)$. We can control this term by noting
\begin{align*}
\E|\G_n m^{\beta}(Z; g^{\beta}) - \G_n m^\ast(Z; g_0)|^2 &= \Var\left[\frac{1}{\sqrt{n}}\sum_{i = 1}^n\left\{m^\beta(Z_i; g^\beta) - m^\ast(Z_i; g_0)\right\}\right]\\
&=\Var\left[m^{\beta}(Z; g^{\beta}) - m^\ast(Z; g_0)\right] \\
&\leq \E\left|m^{\beta}(Z; g^{\beta}) - m^\ast(Z; g_0)\right|^2 \\
&= o(1).
\end{align*}

We now justify the final equality. In particular, we leverage the two propositions stated before the proof to show that 
\[
\lim_{n \rightarrow \infty}\E\left|m^{\beta}(Z; g^{\beta}) - m^\ast(Z; g_0)\right|^2 = 0.
\]
We want to this by applying Vitali's Theorem (Proposition~\ref{prop:vitali}). Since we already know that $m^{\beta_n}(Z; g^{\beta_n}) - m^\ast(Z; g_0) = o_\P(1)$ by Condition~\ref{cond:simp_score}, it suffices to check that the sequence $\big(\{m^{\beta}(Z; g^{\beta} - m^\ast(Z;g_0)\}^2\big)_{n \geq 1}$ is uniformly integrable. To see this, we take $\Phi(t) := t^{1 + \frac{\epsilon}{2}}$, where $\epsilon > 0$ is as in Condition~\ref{cond:simp_reg}. We have the bound
\begin{align*}
\sup_{n \geq 1}\E\left[\Phi\left(\left|m^{\beta}(Z; g^{\beta}) -m^\ast(Z; g_0)\right|\right)\right] &= \sup_{n \geq 1}\E\left[\left|m^{\beta}(Z; g^{\beta}) - m^\ast(Z; g_0)\right|^{2 + \epsilon}\right] \\
&\leq \sup_{n \geq 1} \left\{\left(\E\left|m^\beta(Z; g^\beta)\right|^{2 + \epsilon}\right)^{\frac{1}{2  + \epsilon}} + \left(\E\left|m^\ast(Z; g_0)\right|^{2 + \epsilon}\right)^{\frac{1}{2 + \epsilon}}\right\}^{2 + \epsilon}\\
&\leq (2D)^{2 + \epsilon},
\end{align*}
where the final inequality follows from the assumed bound $\left\|m^\beta(Z; g^\beta)\right\|_{L^{2 + \epsilon}(P_Z)}, \left\|m^\ast(Z; g_0)\right\|_{L^{2 + \epsilon}(P_Z)} \leq D$ (Condition~\ref{cond:score}). Thus the sequence in uniformly integrable by Proposition~\ref{prop:pous}
From this, we see that we have 
\[
\P\left(\left|\G_n m^{\beta}(Z; g^{\beta}) - \G_n m^\ast(Z; g_0)\right| \geq \delta\right) \leq \frac{\E\left|\G_n m^{\beta}(Z; g^{\beta}) - \G_n m^\ast(Z; g_0)\right|^2 }{\delta} = o(1)
\]
for any $\delta  > 0$. Thus, we have 
\[
T_2 = \G_n m^\ast(Z; g_0) + \{\G_n m^\beta(Z; g^\beta) - \G_n m^\ast(Z; g_0)\} = \frac{1}{\sqrt{n}}\sum_{i = 1}^n m^\ast(Z_i; g_0) + o_\P(1).
\]
Since we have already shown that $T_1, T_3 = o_\P(1)$, this completes the proof.
\end{proof}

\end{document}